\newcommand{\ex}{\mathbb{E}}
\newcommand{\mathbbNzero}{\mathbb{N}_{\geq 0}}
\title{Foraging in Particle Systems via Self-Induced Phase Changes}
\date{}
\author{Shunhao Oh}{School of Computer Science, Georgia Institute of Technology}{ohoh@gatech.edu}{}{}
\author{Dana Randall}{School of Computer Science, Georgia Institute of Technology}{randall@cc.gatech.edu}{}{}
\author{Andr\'ea W.\ Richa}{School of Computing and Augmented Intelligence, Arizona State University}{aricha@asu.edu}{}{}
\authorrunning{S. Oh, D. Randall and A. W. Richa} %TODO mandatory. First: Use abbreviated first/middle names. Second (only in severe cases): Use first author plus 'et al.'
\keywords{Foraging, self-organized particle systems, compression, phase changes} %TODO mandatory; please add comma-separated list of keywords
\begin{document}
\maketitle

\begin{abstract}
%\small
%To demonstrate the concept of distributed, self-induced algorithmic phase changes, 
We study {\em the foraging problem}, where rudimentary particles with very limited computational, communication and movement capabilities reside on the triangular lattice and
a  ``food'' particle may be placed at any point, removed, or shifted  at arbitrary times,  possibly adversarially. We would like the particles  to consistently self-organize to find and tightly gather around the food particle using only local interactions.  
%\marginpar{\tiny SH: is ``using only local interactions'' redundant? The structure of this sentence is a bit awkward (and a little long)}.-- AR: I think I think it is important to note that everything is local; the game changes quite a bit if you allow particles to gather information from any other particle in the system
If a food particle remains in a position for a long enough time, we want the particles to enter a {\em gather phase}, in which they collectively form a single large component with low perimeter and compress around the food particle.
%(i.e., to become $\alpha$-{\em compressed}, for a fixed $\alpha>1$ that determines the desired ratio of the obtained perimeter by the minimum perimeter).
On the other hand, if no food particle has existed recently, we want the particles to undergo a self-induced phase change and switch to a {\em search phase} in which they distribute themselves randomly throughout the lattice region to search for food.
Unlike previous approaches, we want this process to repeat forever - the challenge is to share information locally and autonomously so that eventually most particles learn about and correctly react to the current phase,  so  the algorithm can withstand repeated waves of phase changes 
%in the system 
that may interfere with each other. In our model, the particles themselves decide the appropriate phase based on learning the environment and, like a true phase change,  we want microscopic changes to a single food particle to trigger these macroscopic, system-wide transitions.
The algorithms in this paper are the first to leverage {\em self-induced phase changes as an algorithmic tool}.

In this paper, we present two rigorous solutions that allow the system to iteratively recover from a change in food status.  First, we give an algorithm 
%two rigorous algorithms
%\marginpar{\tiny DR: is there a reason to call it an approach and not an algorithm?} 
for foraging  that 
%: the first algorithm 
is fully stochastic and builds upon the compression algorithm of Cannon et al.\ (PODC'16), motivated by a phase change that occurs in the fixed magnetization Ising model from statistical physics.
When the food is present, particles incrementally  enter a gather phase where they attach and compress around the food; when the food disappears or moves, the particles transition to a search phase where they relax their ferromagnetic attraction 
and instead move according to a simple exclusion process that causes the particles to disperse and explore the domain in search of new food. Second, we present a highly structured alternative algorithm that gathers 
by incrementally building a minimal spiral tightly wrapped around the food particle until the food is depleted. 
%In both algorithms, message passing allows the particles to update their states based on information from neighbors.
A key component of our algorithms is a careful token passing mechanism that ensures that the rate at which the compressed cluster around the food dissipates outpaces the rate at which the cluster may continue to grow, ensuring that the broadcast wave
of the ``dispersion token'' will always outpace that of a compression wave. 

%and we use growth tokens to systematically restrict the systemwide behavior so that particles can resolve ambigious information and %converge to a conformation that is appropriate at the current time.

%Particles in the disperse state lie in the opposite side of the phase change spectrum from the gather state, and move around in a process akin to a simple exclusion process, where they each perform a random walk while avoiding two particles occupying the same site.
%AR: I just commented out this last sentence, since I think we already said this earlier
%Particles enter the gather state when food is found and they consequently individually switch to the compress around the food, upon .  

%\comm{AR: Shall we hint more at why we consider this to a be an "algorithmic phase change," explaining how one can achieve both behaviros by change of $\lambda$ and pretty much same algorithm? Ok with whatever you decide}

\end{abstract}
%\nopagenumber

% keywords, ACM classification and conference information can be omitted for submission
%\newpage
%\setcounter{page}{1} % <---- THIS IS THE PAGE NUMBER HACK
\section{Introduction} 
%\marginpar{\tiny AR: Shunhao, do you recall how to change the page number to 1 for this page? \comm{SH: I'm not sure if we can do that, since the way page numbering works is part of this format, and we've been asked to use this format specifically. It's possible that we might have to include the title page as the first page. See the command above this line. I've commented it out for now.}}
Collective behavior of interacting agents is a fundamental, nearly ubiquitous phenomenon across fields,  reliably producing rich and complex coordination.  Examples at the micro- and nano-scales include coordinating cells (including our own immune system or self-repairing tissue and bacterial colonies), 
micro-scale swarm robotics, and interacting particle systems in physics; at the macro scale it can represent flocks of birds,  coordination of drones,  and societal dynamics such as segregation \cite{Schelling1971}.
Common properties of these disparate systems is that they 1) respond to simple environmental conditions and 2) many undergo {\it phase changes} as parameters of the systems are slowly modified,  allowing collectives to gracefully toggle between two often dramatically different macroscopic states.
Insights from statistical physics, discrete probability, and computer science already have been useful for rigorously describing limiting behavior of large systems undergoing such phase changes in a surprising variety of contexts,  including chemistry \cite{Miracle2011}, biology \cite{alberti},  economics \cite{BMR},  swarm robotics \cite{Li2021-bobbots},  and
%many particle systems in 
physics \cite{lubetzky2013}.
Here, we seek to understand how to design distributed algorithms for collective coordination, such as foraging, by algorithmically inducing phase changes in interacting systems, where the desired macrostate is driven by some small, local environmental signal such as the presence or absence and location of a dynamically changing food source.

Self-organizing collective behaviors are found throughout nature,  e.g.,  shoals of fish aggregating to intimidate predators~\cite{Magurran1990}, fire ants forming rafts to survive floods~\cite{Mlot2011}, and bacteria forming biofilms to share nutrients when metabolically stressed~\cite{Liu2015,Prindle2015}. Inspired by such systems, researchers in swarm robotics and active matter have used many approaches towards enabling ensembles of simple, independent units to cooperatively accomplish complex tasks~\cite{Sahin2005}.
%AR: I commented this out, since I thought it was "too reptitive" and not really central to our proposal
%Both control theoretic and distributed computing approaches have achieved success, but works often rely critically on robots computing and communicating complex state i%nformation, requiring relatively sophisticated hardware that can be prohibitive at small scales: 
Current approaches present inherent tradeoffs, especially as individual robots become smaller and have limited functional capabilities~\cite{Xie2019} or approach the thermodynamic limits of computing~\cite{Wolpert2019} and power~\cite{Dario1992}. 
%Alternatively, statistical physics approaches model swarms as systems being driven away from thermal equilibrium by the interactions and movements of individual robots (see, e.g.,~\cite{Mayya2019,Notomista2019}). Physics tools such as Langevin and Fokker–Planck equations can then be used to analyze the mesoscopic and macroscopic  behaviors~\cite{Hamann2008}. 

{\it Programmable matter} was coined by Toffoli and Margolus  in 1991 to realize a physical computing medium composed of simple,  homogeneous entities that is scalable,  versatile, instantly reconfigurable, safe to handle, and robust to failures~\cite{Toffoli1991}. There are formidable challenges to realizing programmable matter computationally and many researchers in distributed computing and swarm and modular robotics have investigated how such small, simply programmed entities can coordinate to solve complex tasks and exhibit useful emergent collective behavior.  A more ambitious goal is to design programmable matter systems of {\em self-actuated} individuals that autonomously respond to their environment.

To better understand how to model such collective behaviors, the work in~\cite{Cannon2016,Cannon2019,Arroyo2018,Li2021-bobbots}, among others,  uses  {\it self-organizing particle systems} (or SOPS) that allows us to define a formal distributed algorithm and rigorously quantify long-term behavior. {\em Particles} in a SOPS exist on the vertices (or nodes) of a lattice, with at most one particle per vertex, and move between vertices along lattice edges. Each particle is anonymous (unlabeled), interacts only with particles on adjacent lattice vertices, has limited (normally assumed to be constant-size) memory, and does not have access to any global information such as a coordinate system or the total number of particles. 
Recently, particle systems that leverage phase transitions in basic physical systems have proven fruitful for driving basic programmable matter 
%\marginpar{\tiny (AR: define active matter or rephrase this) \ DR: just changed to programmable matater systems which was defined. Ok?} 
systems to accomplish simple tasks, but these tend not to be self-induced or adaptable depending on the current need.  
As an example, a problem of interest is
%asks how to make a collection of interacting autonomous agents gather together or scatter. To be applicable to a general class of nano or microscale collectives with limited capabilities, we are interested in self-actuated systems that utilize strictly local interactions between particles to induce macroscale behaviors. The problems of 
{\it dynamic free aggregation}, where particles gather together without preference for a specific aggregation site (see Section 3.2.1 of~\cite{Bayindir2016}), and {\it dispersion}, its inverse, have been widely studied, but without much effort toward understanding the underlying distributed computational models or the formal algorithmic underpinnings of interacting particles. Many studies in self-actuated systems take inspiration from emergent behavior in social insects, but either lack rigorous mathematical foundations explaining the generality and limitations as sizes scale (see, e.g., \cite{Garnier2005,Garnier2009,Correll2011}) or rely on long-range signaling, such as microphones or line-of-sight sensors \cite{Soysal2005,Fates2010,Fates2011,Ozdemir2018}.

Cannon et al$.$ \cite{Cannon2016} designed a distributed Markov chain algorithm for dynamic free aggregation in the connected setting, referred to as {\em compression}, where a connected SOPS wants to cluster in a configuration of small perimeter while maintaining connectivity at all times. 
Building on a variant of the Ising model from statistical physics, the authors rigorously proved that by adding a ferromagnetic attraction $\lambda$ between particles suffices to stochastically lead the particles in a SOPS to
$\alpha$-{\em compressed} configurations, where the constant $\alpha>1$ determines an upper bound on the ratio of the configuration perimeter by the minimum possible system perimeter.
%compact conformation. 
The Markov chain is defined so that each configuration $\sigma$ appears with probability $\pi(\sigma) = \lambda^{|E(\sigma)|}/Z$ at stationarity, where $|E(\sigma)|$ is the number of edges in $\sigma$ and $Z$ is the normalizing constant.
It is rigorously shown that the SOPS will reach an $\alpha$-compressed configuration at stationarity, for some constant $\alpha>1$, if the attraction force $\lambda$ is strong enough and not if the attraction is sufficiently small.  Moreover, it is also shown that when the attraction forces are  small, the configurations will nearly maximize their perimeter and achieve {\it expansion}. Additional algorithms based on this approach, which also exhibit similar phase changes, found applications to the {\em separation} problem~\cite{Cannon2019}, where colored particles would like to separate into clusters of same color particles, {\em aggregation}~\cite{Li2021-bobbots} where the system would like to compress but no longer needs to remain connected, shortcut bridging~\cite{Arroyo2018}, locomotion~\cite{Savoie2018}, alignment~\cite{kedia2022} and transport~\cite{Li2021-bobbots}. 

We are interested in whether such  phase changes (or bifurcations)  can be self-modulated, whereby a simple parameter governing the system behavior can self-regulate according to environmental cues that are communicated through the collective to induce desirable system-wide behaviors. 
%We illustrate the concept of {\em self-induced phase changes} with a task commonly observed by ants and other organisms.
%~\cite{Need a ciatation here}.  
%In the {\em foraging} problem, ``ants'' (i.e., particles) may initially be 
%in {\em exploration mode}, 
%searching for ``food'' (i.e., any resource in the environment, e.g., an energy source); once a food source is found, the ants that have learned about the food source start informing other ants, and these transition to a {\em compression state}
%
%\marginpar{\tiny AR, post-submission: We can clean this paragraph up; in particular I would rather not introduce yet another concept of "exploration mode" (just say "searching for food"; in general, exploration mode/proccess --> search phase/compression state, etc. Made some edits to this effect.} 
%that leads them to start to gather around the food source to consume the food; however, once the source is depleted, the ants closer to the depleted source start a broadcast wave, gradually informing other ants that they should restart the {\em search phase} again by switching to an internal {\em dispersion state}.
%in search of food.
%All of these "phase change waves" have to be {\em self-induced}, i.e., they are initiated by particles in the system and broadcasted locally  by the particles without any external control.
%In fact, we will also leverage on the {\em aggregation/dispersion algorithm}~\cite{Li2021-bobbots}, a variant of the compression/expansion algorithm that allows particles to disconnect and that exhibits similar phase change characteristics.  
Our goal here is to self-regulate a system-wide adjustment in the bias parameters when one or more particles notice the presence or depletion of food to induce the appropriate global coordination to
%so that we 
provably transition the collective between macro-modes when required.  

\vskip.1in
\underbar{\bf The Foraging Problem:} \ 
In the {\em foraging} problem, ``ants'' (i.e., particles) may initially be 
%in {\em exploration mode}, 
searching for ``food'' (i.e., any resource in the environment, e.g., an energy source); once a food source is found, the ants that have learned about the food source start informing other ants, and these transition to a {\em compression state}
that leads them to start to gather around the food source to consume the food, i.e., to incrementally enter a {\em gather phase}; however, once the source is depleted, the ants closer to the depleted source start a broadcast wave, gradually informing other ants that they should restart the {\em search phase} again by individually switching to an internal {\em dispersion state}.
%in search of food.
%All of these "phase change waves" have to be {\em self-induced}, i.e., they are initiated by particles in the system and broadcasted locally  by the particles without any external control.
The foraging problem is very general and has several fundamental application domains, including search-and-rescue operations in swarms of nano- or micro-robots; health applications (e.g., a collective of nano-sensors that could search for, identify, and gather around a foreign body to isolate or consume it, then resume searching, etc.); and  finding and consuming/deactivating hazards in a nuclear reactor or a minefield.

It is appealing to  use the   { compression/expansion algorithm} of~\cite{Cannon2016}, where a phase change  corresponds to a {\em system-wide change of the bias parameter $\lambda$}:  Small $\lambda$, namely $\lambda< 2.17$, provably corresponds to {\it expansion} mode, which is desirable to search for food,  while  large $\lambda$, namely $\lambda>2+\sqrt{2}$, corresponds to {\em compression (or gather)} mode, desirable when food has been discovered.  (A similar phase change occurs when the particles are allowed to disconnect, as explained in Section~\ref{sec:model}, but the notion of compression becomes more complicated.)
%In fact, we will also leverage on the {\em aggregation/dispersion algorithm}~\cite{Li2021-bobbots}, a variant of the compression/expansion algorithm that allows particles to disconnect and that exhibits similar phase change characteristics.  
Our goal is to perform a system-wide adjustment in the bias parameters when one or more particles notice the presence or depletion of food to induce the appropriate global coordination to
%so that we 
provably transition the collective between macro-modes when required.  Informally, one can imagine individual particles adjusting their $\lambda$ parameter to be high when they are fed, encouraging compression around the food, and making $\lambda$ small when they are hungry, promoting the search for more food. 
%The challenge is to provide enough communication between particles so that we can ensure system-wide agreement on the appropriate value of $\lambda$ at any point in time so that the compression/expansion theorem applies.

%\comm{AR: I do not think we ever define the foraging problem other than in the abstract (at a high level). Also, all the applications of the foraging problem are gone. I think this all went with the respective subsection on foraging that we had... We need to at least say something as simple as "searching for ``food'' (i.e., any resource in the environment, e.g., an energy source)".... I think I would vote for bringing back parts of the foraging subsection we had.}
%\marginpar{\tiny AR: Dana, can I have the two subsections back in the intro? I think the "our results" section should be more prominent on the paper (some of the reviewers may just jump there to have a quick glimpse of what our results are, so I like to make it stand out in the paper.}
%\subsection{Our Results}

\vskip.1in
\underbar{\bf Our Results and Techniques:} \ 
%}  
We present the first rigorous local distributed algorithms for solving the foraging problem:
{\it Adaptive $\alpha$-Compression,}  
based on the stochastic compression algorithm of~\cite{Cannon2016}, 
and  {\it 
 Adaptive Spiraling},
which takes a more structured, incremental approach to compression. 
In both algorithms, there are two main states a particle can be in at any point in time, {\it dispersion} or {\it compression}, which, at the macro-level, should induce the collective to enter the {\em search} or {\em gather} phases respectively.
Particles in the dispersion state move around in a process akin to a simple exclusion process, where they perform a random walk while avoiding two particles occupying the same site.
Particles enter a compression state when food is found and this information is propagated in the system, consequently resulting in the system gathering and compressing around the food.

 Adaptive $\alpha$-Compression  
%relies on the following simple idea: Have the particles start in the dispersion %state and then transition to the compression state upon contact with the food %particle or other particles already in the compression state.
%stochastic approach
%, which is more "organic," 
%We apply 
 uses a variant of the compression algorithm of~\cite{Cannon2016} to function in the case where particles are to compress around a single fixed point (the food particle). 
%This simple idea succeeds in connecting all particles to a food source, but it is %not enough to ensure that multiple phase changes --- and in particular multiple %concurrent phase changes -- will succeed over time.
%but does not solve the problem as particles will remain stuck even when the food moves or is depleted.
While the compression algorithm largely stays the same, the proofs --- in particular, the one showing {\em ergodicity of the underlying Markov Chain} --- are non-trivial and do not follow directly from~\cite{Cannon2016}.
Using this modified compression algorithm, we allow the particles to stochastically reassemble into configurations with low perimeter around the food. We prove the following theorem (which follows from Theorems~\ref{theorem:compressionbasedwithfood} and \ref{theorem:compressionbasednofood}):
%\marginpar{\tiny AR:, post-sub: we shoudl say somewhere that if system starts in valid configuration, it will %stay in a valid configuration forever}
\begin{theorem}
%\label{theorem:compressionbasedwithfood}
Starting from any valid configuration, in the presence of a single food particle that remains static for a sufficient amount of time, our Adaptive $\alpha$-Compression algorithm 
%will reach a configuration where all particles are in the gather %state in a single cluster connected to the food particle. 
%Furthermore, for any $\alpha > 1$, there exists a $\lambda_\alpha > 1$ such that for any $\lambda > \lambda_\alpha$, the algorithm 
will converge to an $\alpha$-compressed configuration, for any $\alpha>1$, connected to the food particle at stationarity with high probability.
%as used in Definition~\ref{defn:compressionmoveprobability}, after a sufficiently long time has passed, 
%the perimeter of the cluster will be at most $\alpha p_{min}$ with high %probability, where $p_{min}$ is the minimum possible perimeter of a cluster %of $n+1$ particles.
%\end{theorem}
%\comm{(it's $n+1$ particles because we have to count the food particle)}
%\begin{theorem}
%\label{theorem:compressionbasednofood}
%Starting from any configuration satisfying the state invariant, 
Conversely, 
%starting from any valid configuration, 
if there are no food particles in the system for a sufficient amount of time,
%polynomial number of steps in expectation, 
all particles will converge to the dispersion state. 
\end{theorem}

%\comm{AR: Call it gather and search phase (macro-level; and compression and dispersion states (micro-level); also in abstract; add informal def od alpha-compressed; move formal def. to Model}

%\comm{AR: Say that we need a variant of compression algorithm, where particles compress around a fixed point; algorithm stays the same, but proofs were non-trivial and did not follow from our PODC paper.}

%\comm{AR: Define cluster (if we have not already); I think connected component may be better. Define alpha-compressed}

The second, more structured, algorithm for compressing around the food has particles incrementally attach to the end of a spiral
%AR: We have not talked about a triangular lattice yet, so the notion of a hexagon does not make any sense
%(or hexagon) 
that is growing predictably and tightly wrapped around the food source.   By marking which particle was the last to join, we can ensure that a minimum perimeter cluster centered at the food source
%hexagon 
grows incrementally as more particles attach.  
This idea is simple in principle, although the correct local conditions must be carefully chosen to ensure that the spiral is the only possible structure that forms, and that all other structures that may remain as a food particle is moved around adversarially will dissipate in a timely manner. While this careful spiral construction has the added advantage of leading to a self-stabilizing algorithm --- in the sense that it will converge to the desired configuration starting from an arbitrary (potentially corrupted) system configuration --- its highly structured, sequential nature slows progress and the method may break down under small perturbations in the environment, such as obstacles and situations where the food location and longevity are dynamic.
%\marginpar{\tiny AR: Is this ok? I combined a couple of pargraphs about the pros and cons of the spiral algorithm. Also, what exactly is persistence in this context?}
%(new food particles) in the direction of growth of the spiral.
%, which is not an issue for Adaptive $\alpha$-Compression, as discussed in the conclusions.
% 
%Moreover, this basic approach becomes more challenging to adapt to situations where the food location and %persistence is dynamic and we want to adapt to changing conditions.  

 We prove the following theorem (which follows from Theorems~\ref{theorem:spiralwithfood} and \ref{theorem:spiralnofood}):

\begin{theorem}
%\label{theorem:spiralwithfood}
From any starting configuration, in the presence of a single food particle that remains static for a sufficient amount of time,
%large polynomial number of steps in expectation, 
the Adaptive Spiraling Algorithm will reach a configuration where all particles form a single spiral tightly centered at the food particle.
%\end{theorem}
%\begin{theorem}
%\label{theorem:spiralnofood}
Conversely, 
%starting from any starting configuration, 
if no food  particles are in the system for a sufficient amount of time, all particles will converge to the dispersion state. 
\end{theorem}

The algorithms in this paper are, to the best of our knowledge, the first to leverage {\em self-induced phase changes as an algorithmic tool.}
We note that the compression algorithm of Cannon et al. \cite{Cannon2016} was not shown rigorously to converge in polynomial time, although they give strong experimental evidence that this is the case.  All other parts of the $\alpha$-Compression Algorithm provably run in expected polynomial time, as does  Adaptive Spiraling.
%\marginpar{AR: talk about runtimes. Dr: tried, AR: I think it is ok...}

For both approaches, the challenge is to share information locally and autonomously so that eventually most particles enter the correct state and the system exhibits the appropriate phase behavior.
%correctly react to the current state.
%in a way that is robust to aversarial actions so that particles do not enter mixed states where particles cannot agree on whether they should be gathering or dispersing.
%
%\comm{SH: Do both algorithms rely on token passing? or just the compression algorithm? }
%To make the algorithms adaptive, 
We rely on token passing for the system to be able to collectively transition between (multiple, possibly overlapping) gather and search phases.  In order to ensure that, our token passing scheme needs to be carefully engineered so that when the food particle moves or vanishes, the {\em rate at which the compressed cluster around the food dissipates (via particles returning to the dispersion state) outpaces the rate at which the cluster may continue to grow (via particles joining the cluster in a compression state)}, ensuring that the broadcast wave of the ``dispersion token'' will always outpace that of a compression wave. We address this formally and in detail in Sections~\ref{sec:adaptivecompression} and \ref{sec:spiraling} .
%\comm{AR: emphasize this as a main technique}
%The Adaptive Spiral Algorithm relies on message passing and carefully designed states to navigate and %correctly respond to changes in the food, as discussed in Section~\ref{sec:spiraling}.

%The novelty for our approach is to  allow the system to transition between the search and gather modes: (1) if the rate of change to the food particle is low compared to the rate at which a phase change propagates through the system (which in turns depends on the diameter of the system, on the time it takes for a round of particle activations to be completed, on the rate of movement by the particles and probably on a few other system characteristics), then our algorithm enforces that the phase change will be successfully completed (or that at least a certain large fraction of the particles will have undergone a phase change); or (2) if the rate of phase changes is higher than what the system can accommodate, then the system may enter a ``chaotic'' state, where the phase changes interfere with each other and may be prevented from completing. 
%Building up on this idea, we formulate the problems below. 

%%%%%%%%%%%%%%%%%%%%%%%%%%%%%%%%%%%%%%%%%%%%%%

\section{Model and Preliminaries}
\label{sec:model}
%\subsection{Particle System Model}
%\marginpar{\tiny AR, post-sub: We need to introduce tokens and how those are passed from particle to particle here; then revise how we present things at algorithm level}
We consider an abstraction of a self-organizing particle system (SOPS) on the triangular lattice. 
%without any synchronization.  
More specifically, particles occupy vetices in $\Lambda = \Lambda_N$, which is a $\sqrt{N} \times \sqrt{N}$  piece of the triangular lattice with periodic boundary conditions, where each vertex is occupied by at most one particle.
We assume that particles have constant-size memory, and have no global orientation or any other global information beyond a common chirality. Particles communicate by sending tokens to its nearest neighbors (i.e., with particles occupying adjacent vertices in the lattice), where a {\em token} is a constant-size piece of information.
 We are given a singular,  stationary food source (which we also refer to as a food particle) that may be placed at any point in space, removed, or shifted around at arbitrary times,  possibly adversarially.

Individual particles are activated according to their own Poisson clocks, possibly with different rates, and perform instantaneous actions upon activation. 
%In an action, a particle can communicate only with their nearest neighbors (i.e., with particles occupying %adjacent vertices in the lattice) and make at most one move to an adjacent unoccupied vertex. 
%In order to decide where to move, a particle can explore an %incident edge to see if the other vertex incident to the edge %is unoccupied and can decide whether or not to move there, if %possible.    
%Particles each have their own Poisson clocks %(\textcolor{blue}possibly with different rates\textcolor{black}) that %cause them to be activated.  
It will be convenient to refer a vertex in $\Lambda$ according to hypothetical global coordinates $(x,y)$, but note that this is just for ease of exposition, since the particles are not aware of any such global coordinate system. In this convention, a vertex $(x,y) \in \Lambda$ has edges to the vertices corresponding to $(x-1,y-1)$, $(x-1,y)$, $(x-1,y)$, $(x+1,y)$, $(x,y+1)$ ,$(x+1,y+1)$, with the arithmetic taken modulo $\sqrt{N}$.
%\comm{AR: what is $N_0$?}.
Particles are aware of their own and their neighbors' current states and when a particle is activated, it may do a bounded amount of computation, send at most one token (not necessarily identical) to each of its neighbors, and choose one of its six neighbors in the lattice to see if it is unoccupied and move there.
%\comm{AR: I do not think the rest of this sentecen belongs here; it is part of the algorithm, not the model: "with probability that depends on the particle's neighbors in the new site being explored and the old site it currently occupies"}
Note that this model can be viewed as a high level abstraction of the (canonical) {\em Amoebot model}~\cite{Daymude2021-canonicalamoebot} under a {\em sequential scheduler}, where at most one particle would be active at any point in time. One should be able to port the model and algorithms presented in this paper to the Amoebot model; however a formal description on how this should be done is beyond the scope of this paper.

%We say that a SOPS configuration is {\em $\alpha$-compressed} if its perimeter (measured by the length of the closed walk around its boundary edges) is at most $\alpha p_{\text{min}}(n)$, for some constant $\alpha > 1$ and where $p_{\text{min}}(n)$ denotes the minimum possible perimeter of a connected system of $n$ particles.

%\subsection{Compression and Aggregation Algorithms}
%\label{sec:compress+aggregate}
%\marginpar{\tiny AR: I really like this section! THanks! Wondering if some of this stuff shoudl be hinted earlier in the text, but we may not have time for that anyway}
The compression algorithm of Cannon et al.~\cite{Cannon2016}, first defined on the infinite lattice, also works in the finite setting, which is how we present it.  We are given a Markov chain $(\mathcal{M}_{\rm COM}, \Omega_{\rm COM}, \pi_{\rm COM}),$ where $\Omega_{\rm COM}$ is the set of simply connected configurations with $n$ particles, $\mathcal{M}_{\rm COM}$ is the transition matrix giving the probabilities of moves, and $\pi_{\rm COM}$ is the stationary distribution over $\Omega_{\rm COM}$.  
A closely related algorithm $(\mathcal{M}_{\rm AGG}, \Omega_{\rm AGG}, \pi_{\rm AGG})$ relaxes the connectivity requirement 
and is defined on a finite graph so that the particles can find each other, with $\Omega_{\rm AGG}$ corresponding to all assignments of particles to distinct vertices~\cite{Li2021-bobbots}, $\mathcal{M}_{\rm AGG}$ the transition probabilities that no longer need to maintain connectivity, and $\pi_{\rm AGG}$ the corresponding stationary distribution.
 In both cases, for any allowable configurations $\sigma$ and $\tau$ differing by the move of a single particle $p$ along one lattice edge, the transition probability is 
 %defined as $P(\sigma, \tau) \propto 
 proportional to $\min(1, \lambda^{n'-n}),$ where $\lambda > 0$ is a bias parameter that is an input to the algorithm, $n$ is the number of neighbors of $p$ in~$\sigma$ and $n'$ is the number of neighbors of $p$ in $\tau$.
These probabilities  are defined so that the Markov chain $\mathcal{M}_{\rm COM}$ 
%and $\mathcal{M}_{\rm AGG}$ both 
converges to the Boltzmann distribution $\pi_{\rm COM}$ such that $\pi_{\rm COM}(\sigma)$ is proportional to $\lambda^{E(\sigma)},$ where $\sigma \in \Omega_{\rm COM}$ and $E(\sigma)$ is the number of nearest neighbor pairs in configuration~$\sigma$.  Likewise, the Markov chain $\mathcal{M}_{\rm AGG}$ converges to $\pi_{\rm AGG}$ such that $\pi_{\rm AGG}(\sigma)$ is also proportional to $\lambda^{E(\sigma)},$ for any $\sigma \in \Omega_{\rm AGG}$.
Both algorithms are known to exhibit phase changes as the parameter $\lambda$ is varied, leading  to aggregation (compression) when $\lambda$ is large and dispersion (expansion) when $\lambda$ is small. 

%\comm{AR: Do we still need these definitions? Where do we put them (I just took this from Our Results and moved it here so that we can check.}
%\commx{DR: Do you mean beta delta aggregated?  I need this to indicate that aggregation also undergoes a phase change.  We don't use it but i think it makes the story nicer.}
%\comm{AR: I mean how we measure perimeter, etc. Shunhao had added this paragraph to our results, but it does not belong there. And maybe we already have all definitions we need in this section, in which case , we can eliminate this paragraph. Let me continue reading/editting to this point}
%\commx{DR: Oh I see.  Certainly he doesn't need to add alpha-compressed.  Maybe we do need to say somewhere that the compression has to include the food particle.  I do think that it is said later and compressed just refers to the shape.  I agree this should be removed.}
%\marginpar{\tiny DR: removed def of components}
%A cluster refers to the connected component consisting of the food particle and the compression state particles. We use the same definition of perimeter used in~\cite{Cannon2016} - the perimeter of a cluster is the length of the walk along the edges around its border. Finally, for $\alpha > 1$, we say the cluster is $\alpha$-compressed if it has a perimeter that is at most $\alpha$ times the minimum possible perimeter for a cluster of the same size.

%More formally, 
Recall that a configuration is {\it $\alpha$-compressed} if  the perimeter (measured by the length of the closed walk around its boundary edges) is at most $\alpha\, p_{\text{min}}(n)$, for some constant $\alpha > 1,$  where $p_{\text{min}}(n)$ denotes the minimum possible perimeter of a connected system with $n$ particles.  When the configurations can be disconnected, we need a more subtle definition of compression.  Namely, a configuration is  {\it $(\beta,\delta)-aggregated$} if there exists a subset $R$ of vertices such that:
    \begin{enumerate}
        \item At most {$\beta \sqrt{N}$} edges have exactly one endpoint in $R$;
        \item The density of particles in $R$ is at least $1-\delta$; and
        \item The density of particles not in $R$ is at most $\delta$.
    \end{enumerate}
We will make use of of the following theorems, which were shown in the non-adaptive settings where there is only one goal and this remains unchanged.

\begin{theorem} \label{thm:compression} {\it Compression/Expansion}\ \cite{Cannon2016}: \  
Let configuration $\sigma$ be drawn from the stationary distribution of $\mathcal{M}_{\rm COM}$ on a bounded, compact region of the triangular lattice, when the number of particles $N$ is sufficiently large.
If $\lambda > 3.24$, then with high probability\footnote{ We use {\em with high probability} in this paper to denote ``with all but an exponentially small probability.''} there exists a constant $\alpha > 0$  such that $\sigma$ will be $\alpha$-compressed.
However, when $\lambda < 2.17$, the configuration $\sigma$ will be expanded with high probability.
\end{theorem}

\begin{theorem} \label{thm:aggregation} {\it Aggregation/Dispersion}\ \cite{Li2021-bobbots}: \  
Let configuration $\sigma$ be drawn from the stationary distribution of $\mathcal{M}_{\rm AGG}$ on a bounded, compact region of the triangular lattice, when the number of particles~$N$ is sufficiently large.
If $\lambda > 5.66$, then with high probability there exist constants $\beta > 0$ and $0 < \delta < 1/2$ such that $\sigma$ will be $(\beta, \delta)$-aggregated.
However, when $0.98 < \lambda < 1.02$, the configuration $\sigma$ will be dispersed with high probability.
\end{theorem}

\section{Adaptive $\alpha$-Compression}
\label{sec:adaptivecompression}
We now show how to build upon the compression and aggregation algorithms so that the particles collectively gather when there is food and disperse to search for food when there is none or they are not aware of any using only local movement and communication.  We would like for the particles to encode their current knowledge of the world with a small number of states so that the majority will collectively perform the appropriate gather or disperse behavior, iteratively self-regulating to adaptively transition between these phases as the food location changes or disappears.  

For convenience, the algorithm we describe here is a fairly straightforward hybrid of both  compression and aggregation algorithms where we carefully toggle between the high $\lambda$ (compression) regime for connected configurations (Theorem~\ref{thm:compression}) and the low $\lambda$ (dispersion) regime for disconnected configurations (Theorem~\ref{thm:aggregation}).  This combined approach allows us to compress into configurations that are simply connected, which helps the token passing protocol, and then to disconnect when searching for food, which allows the particles to provably explore the domain more efficiently.  Interestingly Theorems~\ref{thm:compression} and~\ref{thm:aggregation} still hold in this hybrid setting.  We note that the same results do hold in the more straightforward setting of compression/expansion where we always keep the configurations simply connected, but then the search for food is less efficient.

The Adaptive $\alpha$-Compression Algorithm works by integrating two main mechanisms: a state-based mechanism which each particle uses to determine  whether it is appropriate to disperse or compress and  a stochastic compression/dispersion algorithm, that implements the Markov chain allowing a simply-connected set of particles to compress or disperse.  
%This section summarizes the approach - precise details of the implementation are given in Appendix~\ref{apx:specificimplementationdetails}. %(no longer an appendix)

We start by defining the state-based mechanism which includes six states: {\em four compression states} $\mathscr{C}=\{C, C_G, C_F, C_{GF}\}$, a {\em dispersion state} $D$, and a {\em dispersion token state}
%\marginpar{\tiny AR: If this indeed going to be a state, we better call it that....} 
($DT$). The four compression states allow a particle to encode two additional bits of information: labels
$C_G$ or $C_{GF}$ indicate that the particle currently holds on to a {\em growth token} (which is used to grow the compressed cluster in a controlled way), while $C_F$ or $C_{GF}$ indicate that the particle believes it is currently next to a food particle.  Particles labeled $C$ have neither a growth token nor awareness of food.
%States $C_G$ and $C_{GF}$ indicate that the particle currently holds on to a ``growth token'' (a token bit that allows the compressed cluster to grow in a controlled way), while states $C_F$ and $C_{GF}$ indicate that the particle believes it is currently next to a food particle (the food bit). 
%\marginpar{\tiny AR: We need to at intuitively say what G and GF stand for; also describe state C}

Particles in the dispersion state execute simple random walks, disallowing moves into locations that are currently occupied.
%The compression state is our gather state. However, instead of remaining still like in the Adaptive Spiraling Algorithm, 
The compression states coordinate to implement the gather phase; when activated, these particles execute the compression algorithm from \cite{Cannon2016} by picking a random direction, and if moving in the chosen movement satisfies certain properties (outlined in Section~\ref{section:statemovement}), the move is executed with a probability chosen to converge to a Gibbs measure proportional to $\lambda^E(\sigma)$, where configuration $\sigma$ has $E(\sigma)$ nearest neighbor pairs on the lattice. 
These actions are used to form a low-perimeter cluster around the food particle.
%\marginpar{\tiny AR: need to say what DT is}
The dispersion token state is only used as an intermediate state to propagate the dispersion token, which we will discuss later.
We do not define any movement for these particles as they switch to the dispersion state on activation.

\vskip.1in
\underbar{\textbf{Growth Token:}}
%\marginpar{\tiny AR: Changed ticket --> token and signal --> token, since this is what they are and there is no point in having different names here.}
Our aim is for all particles in the dispersion state to eventually join the cluster comprised by the set of compression state particles connected to the food particle. A particle in the dispersion state can switch to a compression state if it makes direct contact with the food particle. To grow the cluster of particles beyond the initial layer around the food, we introduce a mechanism of {\em growth token}. New growth tokens are constantly generated by particles neighboring the food particle, which are then passed randomly between particles of the cluster. 
If a particle in the dispersion state comes into contact with a particle in the cluster holding a growth token, the growth token is consumed, and the dispersion state particle joins the cluster (by switching to a compression state).
This system of growth tokens intentionally limits the rate of growth of the cluster of particles, which is key to ensuring that when the food particle is removed, the cluster disperses more quickly than it grows.

%Particles in the dispersion state can switch to a compression state if it comes into contact with the food particle. However, we need a mechanism to allow the cluster of particles to grow beyond the initial layer around the food particle. One possible method is to to switch a particle to a compression state when it comes in contact with another particle in a compression state. One flaw with this, however, is that particles not directly adjacent to food would have no way of knowing whether the food continues to be present. This means the cluster can potentially grow indefinitely, even after the food has been removed.

%Instead, we introduce the concept of ``growth tokens''. New growth tokens are constantly generated by the particles neighboring the food particle. Growth tokens are then passed randomly between particles in compression states. A particle in the dispersion state may only join the cluster (convert to a compression state) when an adjacent particle holding a growth token consumes its growth token to introduce the new particle to the cluster. Because growth tokens may only originate from particles neighboring the food particle, this places a limit on the growth of the cluster of gather-state particles after the food particle is removed from the cluster.

\vskip.1in
 \underbar{\textbf{Dispersion Token:}}
The {\em dispersion token} allows for a mechanism to return a cluster of particles to the dispersion state after a food particle has been removed. When a particle neighboring a food particle detects that the food particle has vanished, it generates a dispersion token, which spreads quickly throughout the cluster, returning all particles of the cluster to the dispersion state. This is done through the dispersion token state, which on activation switches the activated particle to the dispersion state and all neighbors to the dispersion token state.

%\subsection{Specific Implementation Details}
\subsection{Particle Actions}
%State Changes and Token Passing}
\label{section:specificimplementationdetails}
When a particle is activated, it can move and change its internal state. We divide this process into two steps, which we call the \emph{state change step} and the \emph{particle movement step}. These two steps are executed one after the other.
In each of these steps, we break down the particle's behavior by the state it is currently in.
%When a particle is activated, it does two actions - it first carries out its state behavior, then carries out its (new) state's movement. This section will specify these two actions, which depend on which state the particle being activated is currently in.
There are two types of constant-size tokens that may be exchanged between two adjacent particles in the algorithm (which we call {\em growth} and {\em dispersion tokens}). For ease of explanation, and since we are under the assumption of a sequential scheduler, we present our token passing scheme by assuming that a particle $u$ who wants to send a token to a neighboring particle $v$ can do so by writing the token directly into the memory of $v$.

\vskip.1in
\noindent \underbar{\textbf{Step 1 - State Change}}
\label{section:statebehavior} % CHECK IF SECTION STILL NEEDED
We describe the behavior of each of the states on activation of a particle $u$. Let $p < \frac{1}{6}$ be a fixed positive constant that will determine the probability of a particle adjacent to the food particle to switch to a compression state, or of generating a new growth token when eligible, as we explain below. 
%\marginpar{\tiny AR: I do not understand what the probability of attaching to the food particle is, since that will be different for particles that are close to the food and those who are far away (in the latter case it would be 0. Even if we use the "or" with the second half of the senetence ("generating a new growth ticket, if eligible", I still cannot understand what exactly you are trying to say...}

%
\begin{itemize}
\item \textbf{Dispersion State ($D$):}
If next to a food particle, switch $u$ to state $C_F$ with probability $p$.
If $u$ is not next to a food particle, and
%$u$ is not a local cut vertex (Definition~\ref{property:notlocalcut}, given below) and 
there is a neighboring compression state particle with a growth token (in state $C_G$ or $C_{GF}$), then with probability $p$, consume that growth token (switching the neighboring particle to state $C$ or $C_F$) and switch $u$ to state $C$.

\item \textbf{Compression State ($C,C_G,C_{GF},C_F$):}
If $u$ has the food bit set (state $C_{GF}$ or $C_F$) but is not next to the food particle, does not have the food bit set but is next to the food particle, or is next to the food particle but shares a compression state neighbor with the food particle without the food bit set, switch $u$ to state $D$ and switch all compression state neighbors of $u$ to state $DT$.

Otherwise, if $u$ has the growth token bit set (state $C_G$ or $C_{GF}$), pick a random direction. If the neighbor $v$ in that direction is in a compression state without the growth bit set, flip the growth bit on both $u$ and $v$ (effectively passing the growth token from $u$ to $v$).

If $u$ does not have the growth token bit set but is next to the food particle, generate a growth token by flipping the corresponding bit with probability $p$.

\item \textbf{Dispersion Token State ($DT$):}
On activation, switch to state $D$ and switch all compression state neighbors of $u$ to the state $DT$.
\end{itemize}
%
%% THIS IS NOT NEEDED UNLESS THERE ARE MULTIPLE FOOD PARTICLES. RELEGATED TO SIMPLE COMMENT IN CONCLUSION.
%\begin{definition}[Not a Local Cut Vertex]
%\label{property:notlocalcut}
%For a particle $v$, let $N_c(v)$ be the set of neighboring particles of $v$ that are in the compressed state. We say that $v$ is not a local cut vertex if all vertices in $N_c(v)$ are connected to each other through paths using only vertices in $N_c(v)$.
%\end{definition}
%This property (Definition~\ref{property:notlocalcut}) ensures that when a particle joins a hole-free cluster of compressed particles, the cluster will remain hole-free.

\vskip.1in
\noindent \underbar{\textbf{Step 2 - Particle Movement}}
\label{section:statemovement}
This step is applied after the state change step.
\begin{itemize}
\item \textbf{Dispersion State ($D$):}
Executes a simple random walk, by picking a direction at random, and moving in that direction if and only if the immediate neighboring site in that direction is unoccupied.

\item \textbf{Compression State ($C,C_G,C_{GF},C_F$):}
Executes a compression movement. It first picks a direction at random to move in. If this move is a valid compression move (Definition~\ref{defn:validcompressionmoves}), we will make the move with the probability given in Definition~\ref{defn:compressionmoveprobability}.

After this, regardless of whether the move is made, we set the food bit to $1$ if the particle is now adjacent to food, and $0$ otherwise.

\item \textbf{Dispersion Token State ($DT$):}
A particle in the dispersion token state will not have the chance to move, as on activation it would have switched to the dispersion state.
\end{itemize}
We use the term \emph{cluster particles} to refer to the food particle or particles in a compression state or the dispersion token state. We restrict the movements of the compression state particles to keep the cluster particles connected.
%Cluster particles: Compression, DT, food
Suppose a compression state particle in location $\ell$ wants to move to an empty adjacent location $\ell'$. Denote by $N(\ell)$ and $N(\ell')$ the sets of lattice sites neighboring the positions $\ell$ and $\ell'$ respectively. Also, $N(\ell \cup \ell')$ is defined to be $N(\ell)\cup N(\ell') \setminus \{\ell, \ell'\}$. Finally, let $\mathbb{S} := N(\ell) \cap N(\ell')$ denote the set of sites adjacent to both $\ell$ and $\ell'$ (thus $|\mathbb{S}| \in \{0,1,2\}$).

\begin{definition}[Valid Compression Moves]
\label{defn:validcompressionmoves}
Consider the following two properties:

\noindent
\underbar{Property 1:} $|\mathbb{S}| \geq 1$ and every cluster particle in $N(\ell \cup \ell')$ is connected to a cluster particle in $\mathbb{S}$ through $N(\ell \cup \ell')$.

\noindent
\underbar{Property 2:} $|\mathbb{S}| = 0$, $\ell$ and $\ell'$ each have at least one neighbor, all cluster particles in $N(\ell) \setminus \{\ell'\}$ are connected by paths within this set, and all cluster particles in $N(\ell') \setminus \{\ell\}$ are connected by paths within this set.

\noindent We say the move from $\ell$ to $\ell'$ is a \emph{valid compression move} if it satisfies both properties, and $N(\ell)$ contains less than $5$ cluster particles.
\end{definition}

\begin{definition}[Local Disconnection]
Consider a cluster particle currently in position $\ell$ trying to move to an adjacent location. We say the movement causes local disconnection if the set of cluster particles in $N(\ell) \cup \{\ell\}$ were connected through cluster particles in the same set, but no longer will be after the movement.
\end{definition}
We can see that valid compression moves do not cause local disconnection.
\begin{definition}[Transition probabilities]
\label{defn:compressionmoveprobability}
Fix $\lambda > 1$.
Even when a movement is valid, we only make the move with the probability $\min\{1,\lambda^{e(\sigma')-e(\sigma)}\}$,
where $\sigma$ and $\sigma'$ are the configurations before and after the movement is made, and $e(\cdot)$ represents the number of edges between cluster particles in a configuration.
It is important to note that even though $e(\cdot)$ is a global property, the difference $e(\sigma')-e(\sigma)$ can be computed locally by only counting the number of neighbors the particle has before and after its movement.
\end{definition}

These movement conditions and probabilities are based on the compression algorithm in \cite{Cannon2016}, giving the algorithm its name. With a far-from-trivial modification of the analysis in~\cite{Cannon2016} to account for the immobile food particle, we show in Appendix~\ref{apx:irreducibilityproof} that this allows the cluster of particles to form a configuration of low-perimeter.

\vskip.1in
\underbar{\textbf{The state invariant:}} \ 
For a fixed configuration, we consider a graph with the vertex set consisting of the set of all particles in a compression state or the dispersion token state. Vertices are adjacent in the graph if their corresponding particles are adjacent in the lattice. In this section, a \emph{component} refers to a connected component of this graph. Notably, food particles are not considered part of any connected component (See Figure~\ref{fig:key_invariant}).

%T: State Invariant
This allows us to define the following state invariant, that we know holds from an initial configuration with every particle in the dispersion state. In Appendix~\ref{apx:proofofkeyinvariant} we show Lemma~\ref{lemma:keyinvariant}, which states that this key state invariant always holds when running the Adaptive $\alpha$-Compression Algorithm, despite the potential adversarial movement of the food particle.

\begin{definition}[State Invariant]
\label{defn:keyinvariant}
We say a component of a configuration satisfies the \emph{state invariant} if it contains at least one particle in the states $C_{GF}$, $C_F$ or $DT$. A configuration satisfies the state invariant if every component in the configuration satisfies the state invariant.
\end{definition}
%\marginpar{\tiny DR: make the labels on fig 1 smaller.  Not legible as is.}

\begin{figure}
\centering
\begin{center}
\begin{tikzpicture}[x=0.55cm,y=0.55cm]
\draw[lightgray] (5.19615,-9) -- (8.66025,-7);
\draw[lightgray] (0,-7) -- (8.66025,-2);
\draw[lightgray] (0,-7) -- (3.4641,-9);
\draw[lightgray] (3.4641,0) -- (8.66025,-3);
\draw[lightgray] (3.4641,0) -- (3.4641,-9);
\draw[lightgray] (7.79423,-0.5) -- (7.79423,-8.5);
\draw[lightgray] (0,-4) -- (6.9282,0);
\draw[lightgray] (0,-4) -- (7.79423,-8.5);
\draw[lightgray] (5.19615,0) -- (8.66025,-2);
\draw[lightgray] (5.19615,0) -- (5.19615,-9);
\draw[lightgray] (0,-1) -- (1.73205,0);
\draw[lightgray] (0,-1) -- (8.66025,-6);
\draw[lightgray] (0,-1) -- (0,-8);
\draw[lightgray] (6.9282,-9) -- (8.66025,-8);
\draw[lightgray] (0,-5) -- (7.79423,-0.5);
\draw[lightgray] (0,-5) -- (6.9282,-9);
\draw[lightgray] (0.866025,-0.5) -- (8.66025,-5);
\draw[lightgray] (0.866025,-0.5) -- (0.866025,-8.5);
\draw[lightgray] (0,-8) -- (8.66025,-3);
\draw[lightgray] (0,-8) -- (1.73205,-9);
\draw[lightgray] (0,-2) -- (3.4641,0);
\draw[lightgray] (0,-2) -- (8.66025,-7);
\draw[lightgray] (2.59808,-0.5) -- (2.59808,-8.5);
\draw[lightgray] (6.9282,0) -- (8.66025,-1);
\draw[lightgray] (6.9282,0) -- (6.9282,-9);
\draw[lightgray] (4.33013,-0.5) -- (4.33013,-8.5);
\draw[lightgray] (6.06218,-0.5) -- (6.06218,-8.5);
\draw[lightgray] (0,-6) -- (8.66025,-1);
\draw[lightgray] (0,-6) -- (5.19615,-9);
\draw[lightgray] (0,-3) -- (5.19615,0);
\draw[lightgray] (0,-3) -- (8.66025,-8);
\draw[lightgray] (3.4641,-9) -- (8.66025,-6);
\draw[lightgray] (1.73205,-9) -- (8.66025,-5);
\draw[lightgray] (8.66025,-1) -- (8.66025,-8);
\draw[lightgray] (1.73205,0) -- (8.66025,-4);
\draw[lightgray] (1.73205,0) -- (1.73205,-9);
\draw[lightgray] (0.866025,-8.5) -- (8.66025,-4);
\draw[black, line width=0.4mm, fill=white] (0.866025,-3.5) circle (0.288);
\draw[black, line width=0.4mm, fill=white] (0.866025,-5.5) circle (0.288);
\draw[black, line width=0.4mm, fill=white] (1.73205,-2) circle (0.288);
\draw[black, line width=0.4mm, fill=white] (1.73205,-3) circle (0.288);
\draw[black, line width=0.4mm, fill=white] (1.73205,-5) circle (0.288);
\draw[black, line width=0.4mm, fill=white] (1.73205,-6) circle (0.288);
\draw[black, line width=0.4mm, fill=white] (2.59808,-1.5) circle (0.288);
\draw[black, line width=0.4mm, fill=white] (2.59808,-2.5) circle (0.288);
\draw[black, line width=0.4mm, fill=white] (2.59808,-3.5) circle (0.288);
\draw[black, line width=0.4mm, fill=white] (2.59808,-5.5) circle (0.288);
\draw[black, line width=0.4mm, fill=white] (2.59808,-6.5) circle (0.288);
\draw[black, line width=0.4mm, fill=white] (2.59808,-7.5) circle (0.288);
\draw[black, line width=0.4mm, fill=white] (3.4641,-1) circle (0.288);
\node[align=left] at (3.4641,-1) {\fontsize{4.5}{4}\selectfont $DT$};
\draw[black, line width=0.4mm, fill=white] (3.4641,-2) circle (0.288);
\node[align=left] at (3.4641,-2) {\fontsize{4.5}{4}\selectfont $DT$};
\draw[black, line width=0.4mm, fill=white] (3.4641,-3) circle (0.288);
\draw[black, line width=0.4mm, fill=white] (3.4641,-6) circle (0.288);
\node[align=left] at (3.4641,-6) {\fontsize{4.5}{4}\selectfont $C_F$};
\draw[black, line width=0.4mm, fill=white] (3.4641,-7) circle (0.288);
\draw[black, line width=0.4mm, fill=white] (3.4641,-8) circle (0.288);
\draw[black, line width=0.4mm, fill=white] (4.33013,-4.5) circle (0.288);
\draw[black, line width=0.4mm, fill=white] (4.33013,-5.5) circle (0.336);
\draw[black, line width=0.32mm] (4.33013,-5.5) circle (0.24);
\node[align=left] at (4.33013,-5.5) {\tiny $f$};
\draw[black, line width=0.4mm, fill=white] (4.33013,-6.5) circle (0.288);
\node[align=left] at (4.33013,-6.5) {\fontsize{4.5}{4}\selectfont $C_F$};
\draw[black, line width=0.4mm, fill=white] (5.19615,-3) circle (0.288);
\draw[black, line width=0.4mm, fill=white] (5.19615,-4) circle (0.288);
\draw[black, line width=0.4mm, fill=white] (5.19615,-5) circle (0.288);
\node[align=left] at (5.19615,-5) {\fontsize{4.5}{4}\selectfont $C_F$};
\draw[black, line width=0.4mm, fill=white] (5.19615,-8) circle (0.288);
\draw[black, line width=0.4mm, fill=white] (6.06218,-3.5) circle (0.288);
\draw[black, line width=0.4mm, fill=white] (6.06218,-4.5) circle (0.288);
\draw[black, line width=0.4mm, fill=white] (6.06218,-6.5) circle (0.288);
\draw[black, line width=0.4mm, fill=white] (6.06218,-7.5) circle (0.288);
\node[align=left] at (6.06218,-7.5) {\fontsize{4.5}{4}\selectfont $C_F$};
\draw[black, line width=0.4mm, fill=white] (6.9282,-3) circle (0.288);
\draw[black, line width=0.4mm, fill=white] (6.9282,-6) circle (0.288);
\draw[black, line width=0.4mm, fill=white] (6.9282,-7) circle (0.288);
\draw[black, line width=0.4mm, fill=white] (7.79423,-5.5) circle (0.288);
\draw[black, line width=0.4mm, fill=white] (7.79423,-6.5) circle (0.288);
\draw[black, line width=0.5mm] (6.32199,-3.35) -- (6.6684,-3.15);
\draw[black, line width=0.5mm] (3.4641,-6.7) -- (3.4641,-6.3);
\draw[black, line width=0.5mm] (3.72391,-6.85) -- (4.07032,-6.65);
\draw[black, line width=0.5mm] (1.12583,-3.35) -- (1.47224,-3.15);
\draw[black, line width=0.5mm] (3.4641,-7.7) -- (3.4641,-7.3);
\draw[black, line width=0.5mm] (4.33013,-6.2) -- (4.33013,-5.8);
\draw[black, line width=0.5mm] (6.06218,-7.2) -- (6.06218,-6.8);
\draw[black, line width=0.5mm] (6.32199,-7.35) -- (6.6684,-7.15);
\draw[black, line width=0.5mm] (2.59808,-7.2) -- (2.59808,-6.8);
\draw[black, line width=0.5mm] (2.85788,-7.35) -- (3.20429,-7.15);
\draw[black, line width=0.5mm] (2.85788,-7.65) -- (3.20429,-7.85);
\draw[black, line width=0.5mm] (1.99186,-5.15) -- (2.33827,-5.35);
\draw[black, line width=0.5mm] (2.59808,-3.2) -- (2.59808,-2.8);
\draw[black, line width=0.5mm] (2.85788,-3.35) -- (3.20429,-3.15);
\draw[black, line width=0.5mm] (4.33013,-5.2) -- (4.33013,-4.8);
\draw[black, line width=0.5mm] (4.58993,-5.35) -- (4.93634,-5.15);
\draw[black, line width=0.5mm] (6.32199,-6.35) -- (6.6684,-6.15);
\draw[black, line width=0.5mm] (6.32199,-6.65) -- (6.6684,-6.85);
\draw[black, line width=0.5mm] (5.45596,-3.15) -- (5.80237,-3.35);
\draw[black, line width=0.5mm] (1.12583,-5.35) -- (1.47224,-5.15);
\draw[black, line width=0.5mm] (1.12583,-5.65) -- (1.47224,-5.85);
\draw[black, line width=0.5mm] (2.59808,-6.2) -- (2.59808,-5.8);
\draw[black, line width=0.5mm] (2.85788,-6.35) -- (3.20429,-6.15);
\draw[black, line width=0.5mm] (2.85788,-6.65) -- (3.20429,-6.85);
\draw[black, line width=0.5mm] (1.99186,-1.85) -- (2.33827,-1.65);
\draw[black, line width=0.5mm] (1.99186,-2.15) -- (2.33827,-2.35);
\draw[black, line width=0.5mm] (7.18801,-5.85) -- (7.53442,-5.65);
\draw[black, line width=0.5mm] (7.18801,-6.15) -- (7.53442,-6.35);
\draw[black, line width=0.5mm] (5.19615,-3.7) -- (5.19615,-3.3);
\draw[black, line width=0.5mm] (5.45596,-3.85) -- (5.80237,-3.65);
\draw[black, line width=0.5mm] (5.45596,-4.15) -- (5.80237,-4.35);
\draw[black, line width=0.5mm] (2.59808,-2.2) -- (2.59808,-1.8);
\draw[black, line width=0.5mm] (2.85788,-2.35) -- (3.20429,-2.15);
\draw[black, line width=0.5mm] (2.85788,-2.65) -- (3.20429,-2.85);
\draw[black, line width=0.5mm] (1.73205,-5.7) -- (1.73205,-5.3);
\draw[black, line width=0.5mm] (1.99186,-5.85) -- (2.33827,-5.65);
\draw[black, line width=0.5mm] (1.99186,-6.15) -- (2.33827,-6.35);
\draw[black, line width=0.5mm] (4.58993,-4.35) -- (4.93634,-4.15);
\draw[black, line width=0.5mm] (4.58993,-4.65) -- (4.93634,-4.85);
\draw[black, line width=0.5mm] (1.73205,-2.7) -- (1.73205,-2.3);
\draw[black, line width=0.5mm] (1.99186,-2.85) -- (2.33827,-2.65);
\draw[black, line width=0.5mm] (1.99186,-3.15) -- (2.33827,-3.35);
\draw[black, line width=0.5mm] (6.9282,-6.7) -- (6.9282,-6.3);
\draw[black, line width=0.5mm] (7.18801,-6.85) -- (7.53442,-6.65);
\draw[black, line width=0.5mm] (3.4641,-1.7) -- (3.4641,-1.3);
\draw[black, line width=0.5mm] (7.79423,-6.2) -- (7.79423,-5.8);
\draw[black, line width=0.5mm] (5.19615,-4.7) -- (5.19615,-4.3);
\draw[black, line width=0.5mm] (5.45596,-4.85) -- (5.80237,-4.65);
\draw[black, line width=0.5mm] (2.85788,-5.65) -- (3.20429,-5.85);
\draw[black, line width=0.5mm] (3.72391,-5.85) -- (4.07032,-5.65);
\draw[black, line width=0.5mm] (3.72391,-6.15) -- (4.07032,-6.35);
\draw[black, line width=0.5mm] (5.45596,-7.85) -- (5.80237,-7.65);
\draw[black, line width=0.5mm] (2.85788,-1.35) -- (3.20429,-1.15);
\draw[black, line width=0.5mm] (2.85788,-1.65) -- (3.20429,-1.85);
\draw[black, line width=0.5mm] (6.06218,-4.2) -- (6.06218,-3.8);
\draw[black, line width=0.5mm] (3.4641,-2.7) -- (3.4641,-2.3);
\end{tikzpicture}
\end{center}
\caption{A configuration with four components that satisfies the state invariant. Particles in the dispersion state are not drawn. The label $f$ refers to the food particle, $DT$ refers to dispersion state particles, and $C_F$ refers to particles with the food bit set.}
\label{fig:key_invariant}
\end{figure}
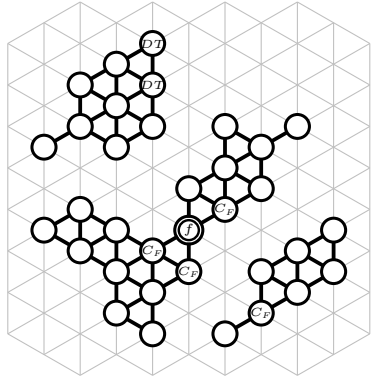

\begin{lemma}
\label{lemma:keyinvariant}
When the Adaptive $\alpha$-Compression Algorithm is run starting from a configuration where every particle is in the dispersion state, the state invariant will hold on every subsequent configuration.
\end{lemma}

\subsection{Gathering and Dispersing}
To verify the behavior of the algorithm in response to environmental changes, we show the following two theorems.
\begin{theorem}
\label{theorem:compressionbasedwithfood}
Starting from any configuration satisfying the state invariant, given that the food particle exists and remains in a fixed position for a sufficient amount of time, we will reach a configuration where all particles are in a compression state in a single cluster around the food particle in a polynomial number of steps in expectation. 

Furthermore, for any $\alpha > 1$, there exists a $\lambda_\alpha > 1$ such that for any $\lambda > \lambda_\alpha$ as used in Definition~\ref{defn:compressionmoveprobability}, after a sufficiently long time has passed, the perimeter of the cluster will be at most $\alpha p_{min}$ with high probability, where $p_{min}$ is the minimum possible perimeter of a cluster of the same size.
\end{theorem}
%\comm{(it's $n+1$ particles because we have to count the food particle)}

\begin{theorem}
\label{theorem:compressionbasednofood}
Starting from any configuration satisfying the state invariant, if there is no food particle in the system for a sufficient amount of time, all particles will return to the dispersion state within a polynomial number of steps in expectation.
\end{theorem}

%T: Elimination of Residual Particles
A key component of the proof is the elimination of {\em residual particles}, which are particles that remain from previous attempts to gather and that we wish to return to the dispersion state.

\begin{definition}[Residual Components and Particles]
A \emph{residual component} is a component that satisfies at least one of the following three criteria:
\begin{enumerate}
\item Or it contains a particle in the dispersion token state;
\item It contains a compression state particle with the food bit set that is not adjacent to food;
\item It contains a compression state particle without the food bit set that is adjacent to food.
\end{enumerate}
We call particles belonging to residual components \emph{residual particles}
\end{definition}

Out of the four components displayed in Figure~\ref{fig:key_invariant}, only the component in the bottom left is not a residual component.
When starting from an arbitrary configuration, there are likely to be a large amount of residual particles that need to cleared out. Residual particles are the remnants of partially built clusters as the food particle moves from place to place prior to the start of our analysis.
Residual particles are problematic as they obstruct existing particles, do not contribute to the main cluster and can cause even more residual particles to form.
%The main problem with residual particles is that dispersion state particles may mistakenly attach to residual particles, creating more residual particles instead of making meaningful progress.
%As long as residual components exist, new residual particles can continue to form, which can be problematic.
Our main strategy to show that all residual particles will eventually vanish is to define a {\em potential function that decreases more quickly than it increases}.
%\comm{when there is no food particle in the respective component}. \marginpar{\tiny AR: Sunhao please make sure that what I added is correct and make it part of the regular text.}

\begin{definition}[Potential]
For a configuration $\sigma$, we define its potential $\Phi(\sigma)$ as:
$$\Phi(\sigma) := \Phi_C(\sigma) + \Phi_{DT}(\sigma) + \Phi_T(\sigma)$$
Where $\Phi_C(\sigma)$, $\Phi_{DT}(\sigma)$ and $\Phi_T(\sigma)$ represent the number of compression state particles, the number of dispersion token particles, and the number of compression state particles with growth tokens respectively.
\end{definition}
We use the following lemma to guarantee that the expected number of steps before all residual particles are removed is polynomial. The proof of this lemma is given in Appendix~\ref{apx:proofofgenerallemma}. This same lemma will be used again later in the proof of the Adaptive Spiraling algorithm.
%
%<INSERT> {Proof of Lemma~\ref{lemma:generallemmafinal}}
%
\begin{lemma}
\label{lemma:generallemmafinal}
Fix integers $n, k \geq 1$, and $0 < \eta < 1$. Consider two random sequences of probabilities $(p_t)_{t \in \mathbbNzero}$ and $(q_t)_{t \in \mathbbNzero}$, with the properties that $\frac{1}{n} \leq p_t \leq 1$ and $0 \leq q_t \leq \eta p_t$, and $p_t + q_t \leq 1$. Now consider a sequence $(X_t)_{t \in \mathbbNzero}$, where $X_0 = k$, and 
%Note: $X_{t+1} \leq X_t-1$ rather than equal. Proofs have been adjusted, check.
\begin{align*}
X_{t+1} \begin{cases}
\leq X_t-1 &\text{with probability $p_t$}\\
= X_t+1 &\text{with probability $q_t$}\\
= X_t &\text{with probability $1-p_t-q_t$}.
\end{cases}.
\end{align*}
Then $\ex\left[\min\{t \geq 0 \mid X_t = 0\}\right] \leq \frac{nk}{1-\eta}$
\end{lemma}
%
%<INSERT> {Elimination of Residual Particles}
%
In this case, $(X_t)_{t \in \mathbb{N}_{\geq 0}}$ 
represents the sequence of potentials, starting from some configuration satisfying the state invariant.
In Appendix~\ref{apx:compressioneliminatingresidual}, we make use of this to show that assuming no change in the food, we reach a configuration with no residual components within a polynomial number of steps in expectation.
%Assuming no movement of the food, the lemma allows us to show that we will reach a configuration with no residual components within a polynomial number of steps in expectation. The details are given in Appendix~\ref{apx:compressioneliminatingresidual}.
In addition, we show that once all residual components are eliminated, no new ones will be generated, and all compression state particles must form a single cluster (defined formally as a connected component when we include both compression state particles and the food particle as vertices) around the food particle.

In the case where no food particle exists, this necessarily implies that all particles will be in the dispersion state, giving us Theorem~\ref{theorem:compressionbasednofood}.
%
%T: Switching all particles to a compression state
In the case where a food particle exists and remains in place, we need to show next that all dispersion state particles eventually switch to the compression state. Lemma~\ref{lemma:allswitchtocompressionstate} implies this as without residual particles, no compression state particle can switch back to the dispersion state.

\begin{lemma}
\label{lemma:allswitchtocompressionstate}
Suppose that the food particle exists and remains in a fixed location, the state invariant holds, and that there are no residual particles in the configuration. Then the number of steps before the next particle switches from the dispersion to some compression state is polynomial in expectation.
\end{lemma}
The proof of this uses bounds on hitting times of random walks on evolving connected graphs~\cite{RandomWalksDynamicGraphs}, and is given in Appendix~\ref{apx:compressionstatelemmaproof}. This gives us the first part of Theorem~\ref{theorem:compressionbasedwithfood}.

%<INSERT> {Proof of Lemma~\ref{lemma:allswitchtocompressionstate}
We note that the ability to appropriately gather and disperse does not rely on the specific movements behavior defined for the compression state particles. The token passing mechanism functions as long as they only make moves that maintain local connectivity. In this case however, as we desire a low perimeter configuration of the cluster for the second part of Theorem~\ref{theorem:compressionbasedwithfood}, we apply the movement behaviors from the compression algorithm of~\cite{Cannon2016}.

%\subsection{Achieving a Low-perimeter Configuration}
\paragraph*{Achieving $\alpha$-Compression.}
%Particles in the compression state execute movements from the compression algorithm of~\cite{Cannon2016}.
Assuming irreducibility of the Markov chain, the results of \cite{Cannon2016} guarantee that for any $\alpha > 1$, there exists a sufficiently large constant $\lambda$ such that at stationarity, the perimeter of the cluster is at most $\alpha$ times as large as its minimum possible perimeter with high probability.
However, while the Metropolis-Hastings algorithm allows us to obtain the same stationary distribution and hence the same low-perimeter clusters, irreducibility of the Markov chain is not a given, despite using the same set of moves.
This is as the food particle, which we do not have control over, is part of the cluster. This is described in the following theorem:

%--- OLD

%T: Achieving a Low-perimeter Configuration
%We note that the analysis up to now is independent of the movement behavior defined for the compression state particles. As long as the compression state particles only make moves that maintain local connectivity, the token passing mechanism ensures that particles gather and disperse correctly.

%In this case, we are aiming for a low-perimeter cluster. The results of \cite{Cannon2016} guarantee that for any $\alpha > 1$, there exists a sufficiently large constant $\lambda$ such that at stationarity, the perimeter of the cluster is at most $\alpha$ times as large as its minimum possible perimeter with high probability.
%We do not have guarantees on the mixing time however, and larger values of $\lambda$ typically cause the cluster to take a longer time to reach a low-perimeter configuration.

%However, the analysis of the mixing time at stationarity assumes irreducibility of the state space. In other words, the moves described in Definition~\ref{defn:validcompressionmoves} must connect the state space of possible arrangements for the cluster.
%Despite using the same set of moves, irreducibility is not a given in our algorithm, as the food particle is part of the cluster, and the algorithm does not have the ability to move the food particle. This is stated in the following theorem:

\begin{lemma}
\label{lemma:irreducible}
We consider configurations of particles in compression states with a single food particle.
Suppose that there is enough space on the lattice for all particles to be laid out on a single line on any position and in any direction, without the line intersecting itself.
If the compression state particles can only make the moves given in Definition~\ref{defn:validcompressionmoves} while the food particle remains fixed in place, there exists a sequence of valid moves to transform any connected configuration of particles into any hole-free configuration of particles.
\end{lemma}
We call a configuration hole-free if there exists no closed path over the particles that encircles an empty site of the lattice.
%Unfortunately, the proof that the state space is connected with a single immobile particle is significantly more complex than without one.
The proof, which is given in Appendix~\ref{apx:irreducibilityproof}, builds upon the analysis in \cite{Cannon2016}, which describes a sequence of valid compression moves to transform any connected configuration of particles into a single long line. Unfortunately, having a single immobile particle makes this proof significantly more complex.

The main strategy in the proof is to treat the immobile food particle as the ``center'' of the configuration, and consider the lines extending from the center in each of the six possible directions. These lines, which we call ``spines'', divide the lattice into six regions.
The sequence of moves described in \cite{Cannon2016} is then modified to operate within one of these regions, with limited side effects on the two regions counterclockwise from this region. We call this sequence of moves a ``comb'', and show that there is a sequence of comb operations that can be applied to the configuration, repeatedly going round the six regions in counterclockwise order, until the resulting configuration is a single long line.
We then observe that any valid compression move transforming a hole-free configuration to a hole-free configuration is also valid in the reverse direction, giving us the statement of the lemma.

\section{Adaptive Spiraling}
\label{sec:spiraling}
%\comm{AR: I need to describe what exactly I mean by this -- basically that this algorithm cannot benefit from concurrency at all, while the compression algorithm could. I need to ellaborate on this earlier on.}
In Adaptive Spiraling, particles attempt to construct a minimum perimeter configuration by incrementally building a spiral tightly wrapped around the food particle. Each particle checks for a specific local structure, and we show that if this local structure is satisfied by every particle, the only possible result is the spiral.
If the food particle moves or vanishes, particles dissipate in a similar manner - those closest to the food particle are the first to notice that the food particle is no longer present, and thus switch to the dispersion state. Particles in outer layers of the spiral eventually also switch to the dispersion state when they see their neighbors on the inner side of the spiral switch to the dispersion state.

%NOTE: Required definitions: states, verified(*) states, parent direction

\begin{figure}
\begin{minipage}[t]{.55\columnwidth}
  \begin{center}
  \begin{tikzpicture}[x=0.6cm,y=0.6cm]
  \draw[lightgray] (3.4641,-7) -- (6.06218,-5.5);
\draw[lightgray] (2.59808,-0.5) -- (6.06218,-2.5);
\draw[lightgray] (2.59808,-0.5) -- (2.59808,-6.5);
\draw[lightgray] (0,-2) -- (2.59808,-0.5);
\draw[lightgray] (0,-2) -- (6.06218,-5.5);
\draw[lightgray] (0,-7) -- (6.06218,-3.5);
\draw[lightgray] (1.73205,-1) -- (1.73205,-7);
\draw[lightgray] (0,-3) -- (4.33013,-0.5);
\draw[lightgray] (0,-3) -- (6.06218,-6.5);
\draw[lightgray] (5.19615,-7) -- (6.06218,-6.5);
\draw[lightgray] (0,-4) -- (6.06218,-0.5);
\draw[lightgray] (0,-4) -- (5.19615,-7);
\draw[lightgray] (4.33013,-0.5) -- (6.06218,-1.5);
\draw[lightgray] (4.33013,-0.5) -- (4.33013,-6.5);
\draw[lightgray] (3.4641,-1) -- (3.4641,-7);
\draw[lightgray] (0,-5) -- (6.06218,-1.5);
\draw[lightgray] (0,-5) -- (3.4641,-7);
\draw[lightgray] (0.866025,-0.5) -- (6.06218,-3.5);
\draw[lightgray] (0.866025,-0.5) -- (0.866025,-6.5);
\draw[lightgray] (0,-1) -- (0.866025,-0.5);
\draw[lightgray] (0,-1) -- (6.06218,-4.5);
\draw[lightgray] (0,-1) -- (0,-7);
\draw[lightgray] (1.73205,-7) -- (6.06218,-4.5);
\draw[lightgray] (6.06218,-0.5) -- (6.06218,-6.5);
\draw[lightgray] (5.19615,-1) -- (5.19615,-7);
\draw[lightgray] (0,-6) -- (6.06218,-2.5);
\draw[lightgray] (0,-6) -- (1.73205,-7);
\draw[black, line width=0.4mm, fill=white] (0.866025,-2.5) circle (0.288);
\node[align=left] at (0.866025,-2.5) {\scriptsize $6$};
\draw[black,-{Stealth[length=1.6mm,width=2.2mm]},line width=0.5mm] (1.13449,-2.345) -- (1.46358,-2.155);
\draw[black, line width=0.4mm, fill=white] (0.866025,-3.5) circle (0.288);
\node[align=left] at (0.866025,-3.5) {\scriptsize $6$};
\draw[black,-{Stealth[length=1.6mm,width=2.2mm]},line width=0.5mm] (0.866025,-3.19) -- (0.866025,-2.81);
\draw[black, line width=0.4mm, fill=white] (0.866025,-4.5) circle (0.288);
\node[align=left] at (0.866025,-4.5) {\scriptsize $6$};
\draw[black,-{Stealth[length=1.6mm,width=2.2mm]},line width=0.5mm] (0.866025,-4.19) -- (0.866025,-3.81);
\draw[black, line width=0.4mm, fill=white] (0.866025,-5.5) circle (0.288);
\node[align=left] at (0.866025,-5.5) {\scriptsize $6$};
\draw[black,-{Stealth[length=1.6mm,width=2.2mm]},line width=0.5mm] (0.866025,-5.19) -- (0.866025,-4.81);
\draw[black, line width=0.4mm, fill=white] (1.73205,-2) circle (0.288);
\node[align=left] at (1.73205,-2) {\scriptsize $6$};
\draw[black,-{Stealth[length=1.6mm,width=2.2mm]},line width=0.5mm] (2.00052,-1.845) -- (2.32961,-1.655);
\draw[black, line width=0.4mm, fill=white] (1.73205,-3) circle (0.288);
\node[align=left] at (1.73205,-3) {\scriptsize $6$};
\draw[black,-{Stealth[length=1.6mm,width=2.2mm]},line width=0.5mm] (2.00052,-2.845) -- (2.32961,-2.655);
\draw[black, line width=0.4mm, fill=white] (1.73205,-4) circle (0.288);
\node[align=left] at (1.73205,-4) {\scriptsize $6$};
\draw[black,-{Stealth[length=1.6mm,width=2.2mm]},line width=0.5mm] (1.73205,-3.69) -- (1.73205,-3.31);
\draw[black, line width=0.4mm, fill=white] (1.73205,-5) circle (0.288);
\node[align=left] at (1.73205,-5) {\scriptsize $6$};
\draw[black,-{Stealth[length=1.6mm,width=2.2mm]},line width=0.5mm] (1.73205,-4.69) -- (1.73205,-4.31);
\draw[black, line width=0.4mm, fill=white] (1.73205,-6) circle (0.288);
\node[align=left] at (1.73205,-6) {\scriptsize $6$};
\draw[black,-{Stealth[length=1.6mm,width=2.2mm]},line width=0.5mm] (1.46358,-5.845) -- (1.13449,-5.655);
\draw[black, line width=0.4mm, fill=white] (2.59808,-1.5) circle (0.288);
\node[align=left] at (2.59808,-1.5) {\scriptsize $6$};
\draw[black,-{Stealth[length=1.6mm,width=2.2mm]},line width=0.5mm] (2.86654,-1.655) -- (3.19563,-1.845);
\draw[black, line width=0.4mm, fill=white] (2.59808,-2.5) circle (0.288);
\node[align=left] at (2.59808,-2.5) {\scriptsize $6$};
\draw[black,-{Stealth[length=1.6mm,width=2.2mm]},line width=0.5mm] (2.86654,-2.655) -- (3.19563,-2.845);
\draw[black, line width=0.4mm, fill=white] (2.59808,-3.5) circle (0.288);
\node[align=left] at (2.59808,-3.5) {\fontsize{6.5}{4}\selectfont ~$0^*$};
\draw[black,-{Stealth[length=1.6mm,width=2.2mm]},line width=0.5mm] (2.86654,-3.655) -- (3.19563,-3.845);
\draw[black, line width=0.4mm, fill=white] (2.59808,-4.5) circle (0.288);
\node[align=left] at (2.59808,-4.5) {\fontsize{6.5}{4}\selectfont ~$1^*$};
\draw[black,-{Stealth[length=1.6mm,width=2.2mm]},line width=0.5mm] (2.59808,-4.19) -- (2.59808,-3.81);
\draw[black, line width=0.4mm, fill=white] (2.59808,-5.5) circle (0.288);
\node[align=left] at (2.59808,-5.5) {\scriptsize $6$};
\draw[black,-{Stealth[length=1.6mm,width=2.2mm]},line width=0.5mm] (2.32961,-5.345) -- (2.00052,-5.155);
\draw[black, line width=0.4mm, fill=white] (3.4641,-2) circle (0.288);
\node[align=left] at (3.4641,-2) {\scriptsize $6$};
\draw[black,-{Stealth[length=1.6mm,width=2.2mm]},line width=0.5mm] (3.73257,-2.155) -- (4.06166,-2.345);
\draw[black, line width=0.4mm, fill=white] (3.4641,-3) circle (0.288);
\node[align=left] at (3.4641,-3) {\fontsize{6.5}{4}\selectfont ~$5^*$};
\draw[black,-{Stealth[length=1.6mm,width=2.2mm]},line width=0.5mm] (3.73257,-3.155) -- (4.06166,-3.345);
\draw[black, line width=0.4mm, fill=white] (3.4641,-4) circle (0.336);
\draw[black, line width=0.32mm] (3.4641,-4) circle (0.24);
\node[align=left] at (3.4641,-4) {\tiny $f$};
\draw[black, line width=0.4mm, fill=white] (3.4641,-5) circle (0.288);
\node[align=left] at (3.4641,-5) {\fontsize{6.5}{4}\selectfont ~$2^*$};
\draw[black,-{Stealth[length=1.6mm,width=2.2mm]},line width=0.5mm] (3.19563,-4.845) -- (2.86654,-4.655);
\draw[black, line width=0.4mm, fill=white] (3.4641,-6) circle (0.288);
\node[align=left] at (3.4641,-6) {\scriptsize $6$};
\draw[black,-{Stealth[length=1.6mm,width=2.2mm]},line width=0.5mm] (3.19563,-5.845) -- (2.86654,-5.655);
\draw[black, line width=0.4mm, fill=white] (4.33013,-2.5) circle (0.288);
\node[align=left] at (4.33013,-2.5) {\scriptsize $6$};
\draw[black,-{Stealth[length=1.6mm,width=2.2mm]},line width=0.5mm] (4.59859,-2.655) -- (4.92768,-2.845);
\draw[black, line width=0.4mm, fill=white] (4.33013,-3.5) circle (0.288);
\node[align=left] at (4.33013,-3.5) {\fontsize{6.5}{4}\selectfont ~$4^*$};
\draw[black,-{Stealth[length=1.6mm,width=2.2mm]},line width=0.5mm] (4.33013,-3.81) -- (4.33013,-4.19);
\draw[black, line width=0.4mm, fill=white] (4.33013,-4.5) circle (0.288);
\node[align=left] at (4.33013,-4.5) {\fontsize{6.5}{4}\selectfont ~$3^*$};
\draw[black,-{Stealth[length=1.6mm,width=2.2mm]},line width=0.5mm] (4.06166,-4.655) -- (3.73257,-4.845);
\draw[black, line width=0.4mm, fill=white] (4.33013,-5.5) circle (0.288);
\node[align=left] at (4.33013,-5.5) {\scriptsize $6$};
\draw[black,-{Stealth[length=1.6mm,width=2.2mm]},line width=0.5mm] (4.06166,-5.655) -- (3.73257,-5.845);
\draw[black, line width=0.4mm, fill=white] (5.19615,-3) circle (0.288);
\node[align=left] at (5.19615,-3) {\scriptsize $6$};
\draw[black,-{Stealth[length=1.6mm,width=2.2mm]},line width=0.5mm] (5.19615,-3.31) -- (5.19615,-3.69);
\draw[black, line width=0.4mm, fill=white] (5.19615,-4) circle (0.288);
\node[align=left] at (5.19615,-4) {\scriptsize $6$};
\draw[black,-{Stealth[length=1.6mm,width=2.2mm]},line width=0.5mm] (5.19615,-4.31) -- (5.19615,-4.69);
\draw[black, line width=0.4mm, fill=white] (5.19615,-5) circle (0.288);
\node[align=left] at (5.19615,-5) {\scriptsize $6$};
\draw[black,-{Stealth[length=1.6mm,width=2.2mm]},line width=0.5mm] (4.92768,-5.155) -- (4.59859,-5.345);
  \end{tikzpicture}
  \end{center}
  \captionof{figure}{A spiral surrounding the food particle (labeled $f$). Particles are labeled with their states.}
  \label{fig:spiral}
\end{minipage}%
\hfill
\begin{minipage}[t]{.4\columnwidth}
  \begin{center}
  \begin{tikzpicture}[x=0.3cm,y=0.3cm]
  \draw[lightgray] (0,-7) -- (6.9282,-3);
\draw[lightgray] (0,-7) -- (4.33013,-9.5);
\draw[lightgray] (0,-10) -- (6.9282,-6);
\draw[lightgray] (0,-10) -- (1.73205,-11);
\draw[lightgray] (0,-4) -- (6.9282,0);
\draw[lightgray] (0,-4) -- (6.9282,-8);
\draw[lightgray] (0,-4) -- (0,-12);
\draw[lightgray] (2.59808,-2.5) -- (6.9282,-5);
\draw[lightgray] (2.59808,-2.5) -- (2.59808,-10.5);
\draw[lightgray] (4.33013,-1.5) -- (6.9282,-3);
\draw[lightgray] (4.33013,-1.5) -- (4.33013,-9.5);
\draw[lightgray] (0,-11) -- (6.9282,-7);
\draw[lightgray] (0,-11) -- (0.866025,-11.5);
\draw[lightgray] (0,-5) -- (6.9282,-1);
\draw[lightgray] (0,-5) -- (6.06218,-8.5);
\draw[lightgray] (0,-8) -- (6.9282,-4);
\draw[lightgray] (0,-8) -- (3.4641,-10);
\draw[lightgray] (5.19615,-1) -- (6.9282,-2);
\draw[lightgray] (5.19615,-1) -- (5.19615,-9);
\draw[lightgray] (0.866025,-3.5) -- (6.9282,-7);
\draw[lightgray] (0.866025,-3.5) -- (0.866025,-11.5);
\draw[lightgray] (6.9282,0) -- (6.9282,-8);
\draw[lightgray] (0,-12) -- (6.9282,-8);
\draw[lightgray] (1.73205,-3) -- (6.9282,-6);
\draw[lightgray] (1.73205,-3) -- (1.73205,-11);
\draw[lightgray] (3.4641,-2) -- (6.9282,-4);
\draw[lightgray] (3.4641,-2) -- (3.4641,-10);
\draw[lightgray] (6.06218,-0.5) -- (6.9282,-1);
\draw[lightgray] (6.06218,-0.5) -- (6.06218,-8.5);
\draw[lightgray] (0,-6) -- (6.9282,-2);
\draw[lightgray] (0,-6) -- (5.19615,-9);
\draw[lightgray] (0,-9) -- (6.9282,-5);
\draw[lightgray] (0,-9) -- (2.59808,-10.5);
\draw[black, line width=0.4mm, fill=white] (0,-4) circle (0.288);
\draw[black, line width=0.4mm, fill=white] (0,-5) circle (0.288);
\draw[black, line width=0.4mm, fill=white] (0,-6) circle (0.288);
\draw[black, line width=0.4mm, fill=white] (0,-7) circle (0.288);
\draw[black, line width=0.4mm, fill=white] (0,-8) circle (0.288);
\draw[black, line width=0.4mm, fill=white] (0,-11) circle (0.288);
\draw[black, line width=0.4mm, fill=white] (0.866025,-3.5) circle (0.288);
\draw[black, line width=0.4mm, fill=white] (0.866025,-4.5) circle (0.288);
\draw[black, line width=0.4mm, fill=white] (0.866025,-5.5) circle (0.288);
\draw[black, line width=0.4mm, fill=white] (0.866025,-6.5) circle (0.288);
\draw[black, line width=0.4mm, fill=white] (0.866025,-7.5) circle (0.288);
\draw[black, line width=0.4mm, fill=white] (0.866025,-8.5) circle (0.288);
\draw[black, line width=0.4mm, fill=white] (0.866025,-10.5) circle (0.288);
\draw[black, line width=0.4mm, fill=white] (1.73205,-3) circle (0.288);
\draw[black, line width=0.4mm, fill=white] (1.73205,-4) circle (0.288);
\draw[black, line width=0.4mm, fill=white] (1.73205,-5) circle (0.288);
\draw[black, line width=0.4mm, fill=white] (1.73205,-6) circle (0.288);
\draw[black, line width=0.4mm, fill=white] (1.73205,-7) circle (0.288);
\draw[black, line width=0.4mm, fill=white] (1.73205,-8) circle (0.288);
\draw[black, line width=0.4mm, fill=white] (1.73205,-9) circle (0.288);
\draw[black, line width=0.4mm, fill=white] (2.59808,-2.5) circle (0.288);
\draw[black, line width=0.4mm, fill=white] (2.59808,-3.5) circle (0.288);
\draw[black, line width=0.4mm, fill=white] (2.59808,-4.5) circle (0.288);
\draw[black, line width=0.4mm, fill=white] (2.59808,-5.5) circle (0.288);
\draw[black, line width=0.4mm, fill=white] (2.59808,-6.5) circle (0.288);
\draw[black, line width=0.4mm, fill=white] (2.59808,-7.5) circle (0.288);
\draw[black, line width=0.4mm, fill=white] (2.59808,-8.5) circle (0.288);
\draw[black, line width=0.4mm, fill=white] (2.59808,-9.5) circle (0.288);
\draw[black, line width=0.4mm, fill=white] (3.4641,-2) circle (0.288);
\draw[black, line width=0.4mm, fill=white] (3.4641,-3) circle (0.288);
\draw[black, line width=0.4mm, fill=white] (3.4641,-4) circle (0.288);
\draw[black, line width=0.4mm, fill=white] (3.4641,-5) circle (0.288);
\draw[black, line width=0.4mm, fill=white] (3.4641,-6) circle (0.288);
\draw[black, line width=0.4mm, fill=white] (3.4641,-7) circle (0.288);
\draw[black, line width=0.4mm, fill=white] (3.4641,-8) circle (0.288);
\draw[black, line width=0.4mm, fill=white] (3.4641,-9) circle (0.288);
\draw[black, line width=0.4mm, fill=white] (3.4641,-10) circle (0.288);
\draw[black, line width=0.4mm, fill=white] (4.33013,-2.5) circle (0.288);
\draw[black, line width=0.4mm, fill=white] (4.33013,-3.5) circle (0.288);
\draw[black, line width=0.4mm, fill=white] (4.33013,-4.5) circle (0.288);
\draw[black, line width=0.4mm, fill=white] (4.33013,-5.5) circle (0.288);
\draw[black, line width=0.4mm, fill=white] (4.33013,-6.5) circle (0.288);
\draw[black, line width=0.4mm, fill=white] (4.33013,-7.5) circle (0.288);
\draw[black, line width=0.4mm, fill=white] (4.33013,-8.5) circle (0.288);
\draw[black, line width=0.4mm, fill=white] (4.33013,-9.5) circle (0.288);
\draw[black, line width=0.4mm, fill=white] (5.19615,-1) circle (0.288);
\draw[black, line width=0.4mm, fill=white] (5.19615,-3) circle (0.288);
\draw[black, line width=0.4mm, fill=white] (5.19615,-4) circle (0.288);
\draw[black, line width=0.4mm, fill=white] (5.19615,-5) circle (0.288);
\draw[black, line width=0.4mm, fill=white] (5.19615,-6) circle (0.288);
\draw[black, line width=0.4mm, fill=white] (5.19615,-7) circle (0.288);
\draw[black, line width=0.4mm, fill=white] (5.19615,-8) circle (0.288);
\draw[black, line width=0.4mm, fill=white] (5.19615,-9) circle (0.288);
\draw[black, line width=0.4mm, fill=white] (6.06218,-2.5) circle (0.288);
\draw[black, line width=0.4mm, fill=white] (6.06218,-3.5) circle (0.288);
\draw[black, line width=0.4mm, fill=white] (6.06218,-4.5) circle (0.288);
\draw[black, line width=0.4mm, fill=white] (6.06218,-5.5) circle (0.288);
\draw[black, line width=0.4mm, fill=white] (6.06218,-6.5) circle (0.288);
\draw[black, line width=0.4mm, fill=white] (6.06218,-7.5) circle (0.288);
\draw[black, line width=0.4mm, fill=white] (6.06218,-8.5) circle (0.288);
\draw[black, line width=0.4mm, fill=white] (6.9282,-2) circle (0.288);
\draw[black, line width=0.4mm, fill=white] (6.9282,-4) circle (0.288);
\draw[black, line width=0.4mm, fill=white] (6.9282,-5) circle (0.288);
\draw[black, line width=0.4mm, fill=white] (6.9282,-6) circle (0.288);
\draw[black, line width=0.4mm, fill=white] (6.9282,-7) circle (0.288);
\draw[black, line width=0.4mm, fill=white] (6.9282,-8) circle (0.288);
  \end{tikzpicture}
  \end{center}
  \captionof{figure}{A lattice with insufficient space to build a spiral.}
  \label{fig:nospace}
\end{minipage}
%\caption{}
\end{figure}
%\marginpar{\tiny DR: need caption on this figure}

%\begin{figure}
%\centering
%%\includegraphics[width=.2\linewidth]{diagrams/spiral.png}
%\begin{center}
%\begin{tikzpicture}[x=0.6cm,y=0.6cm]
%\input{diagrams_tex/spiral.tex}
%\end{tikzpicture}
%\end{center}
%\caption{A spiral surrounding the food particle. Particles are labeled with their states. Observe that all particles in the spiral satisfy the attachment property.}
%\label{fig:spiral}
%\end{figure}
%
%\begin{figure}
%\centering
%%\includegraphics[width=.2\linewidth]{diagrams/nospace.png}
%\begin{center}
%\begin{tikzpicture}[x=0.3cm,y=0.3cm]
%\input{diagrams_tex/nospace.tex}
%\end{tikzpicture}
%\end{center}
%\caption{An example of a lattice with insufficient space to build a spiral.}
%\label{fig:nospace}
%\end{figure}

%T: States
Similar to the Adaptive $\alpha$-Compression Algorithm, particles have two main types of states, dispersion and compression. 
The compression states are labeled to ensure that deviations from the desired structure do not happen.
Specifically, we encode sub-states of the compression state, which we label with $0$ to $6$ and $0^*$ to $5^*$, along with one of six possible ``parent'' directions.
States $0$ to $5$ are used for the initial six particles to attach around the food particle in counterclockwise order, while every subsequent particle in the spiral is in state $6$.
Additionally, for the initial six particles, the \emph{verified states} $0^*$ to $5^*$ are used to confirm that all of the six particles are in the correct state and direction before continuing to build the spiral.
%States $0^*$ to $5^*$ are the ``\emph{verified states}'', which are used to verify that each of the initial six particles surrounding the food particle are in the correct state and direction before continuing to build the spiral.
The parent directions indicate the next particle in the spiral in the inwards direction.
More details on these states are given in Appendix~\ref{apx:spiralalgorithm}.%Appendix~\ref{apx:particlestates}

%<INSERT> {Particle states}

%T: State Change Step
A particle in the dispersion state moves as a simple random walk over the underlying lattice, with moves into occupied lattice sites rejected.
%except when a move it wants to make is obstructed by another particle.
A particle in a compression state on the other hand, does not move.
In our model, a particle is able to observe the states and directions of its neighbors. A particle, when activated, uses this information to potentially switch between states, and try to make a move if it is in the dispersion state after.

%||| INTUITION BEHIND THE ALGORITHM |||

Algorithm~\ref{alg:spiralalgorithm} describes the behavior of a particle $v$ when it is activated.
%While there are some subtleties to the algorithm (like verifiable particles),
The basic idea is that a particle switches to some compression state when it satisfies an ``\emph{attachment property}'' (for some state and direction), and to the dispersion state when it does not.
%A stable particle (a particle which satisfies the attachment property in its current state and direction) will maintain its base state and direction, but may switch between a verified and unverified state, depending on whether it is verifiable.
This ``attachment property'' is a predicate that considers a particle, its local neighborhood, a state~$s$ and direction~$d$. A particle satisfies the attachment property under a state~$s$ and a direction~$d$ if it satisfies the specific local conditions described in Appendix~\ref{apx:spiralprotocols}.%Appendix~\ref{attachmentproperty}
Importantly, every particle in a spiral (Figure~\ref{fig:spiral}) satisfies the attachment property with its current state and direction.

%Formally, we define an ``attachment property,'' a predicate that considers a particle, its local neighborhood, a state~$s$ and a direction~$d$. We say the particle satisfies the attachment property under state~$s$ and direction~$d$ if it satisfies the specific local conditions described in Appendix~\ref{apx:attachmentproperty}. This property is the criteria a particle uses to determine whether it should be in the gather or dispersion state.
%Importantly, every particle in a spiral (Figure~\ref{fig:spiral}) will satisfy the attachment property with its current state and direction.

%<INSERT> {Attachment Property}

%While there are some subtleties to the algorithm, the basic idea is that a particle in the dispersion state switches to a compression state when it satisfies the attachment property (for some specific state and direction). A particle which satisfies the attachment property in its current state and direction is called a \emph{stable particle}, and will maintain its state and direction. The full description of the algorithm is given in Appendix~\ref{apx:spiralalgorithmdescription}.

%<INSERT> {Adaptive Spiraling Algorithm Description}

\begin{algorithm}[t]
\caption{Adaptive Spiraling. $\rho \in (0,{1}/{2})$ is a constant.}
\label{alg:spiralalgorithm}
\begin{algorithmic}[1]
\Procedure{Activate}{$v$}
\If{$v$ is a stable particle}
    \State $s \gets$ the base state of $v$.
    \If{$v$ is verifiable}
        \State Set the state of $v$ to $s^*$
    \Else
        \State Set the state of $v$ to $s$.
    \EndIf
\ElsIf{$v$ is an attachable particle for some state $s$ and direction $d$}
    \If{$v$ is currently in the dispersion state}
        \State Switch $v$ to compression state $s$ with parent direction $d$ with probability $\rho$.
    \Else
        \State Switch $v$ to compression state $s$ with parent direction $d$.
    \EndIf
\EndIf
\EndProcedure
\end{algorithmic}
\end{algorithm}

\begin{definition}
We say the \emph{base state} $\overline{s}$ of a particle of state $s$ is $x$ if $s \in \{x,x^*\}$. The following definitions are useful for the description of the algorithm and its proof:
%We define the following types of particles:
\begin{itemize}
\item \textbf{Stable particle}: A particle in a compression state that satisfies the attachment property with its current parent direction $d$ and current state $s$.

\item \textbf{Unstable particle}: A particle in a compression state that does not satisfy the attachment property with its current parent direction $d$ and current state $s$.

\item \textbf{Attachable particle}: A particle that is unstable or in the dispersion state, that would satisfy the attachment property for some direction $d$ and state $s$.

\item \textbf{Verifiable particle}: Consider a stable particle $v$ of base state $\overline{s} \leq 5$. For $v$ to be stable, it must be one of the six particles circling the food particle. If $\overline{s} = 5$, we say $v$ is verifiable. Otherwise, denote by $u$ the particle one step counterclockwise of $v$ around the food particle. We say $v$ is verifiable if $u$ exists, is of compression state $(\overline{s}+1)^*$, and has $v$ as its parent.
\end{itemize}
\end{definition}
%
%(a particle which satisfies the attachment property in its current state and direction)
The algorithm contains a ``Verification'' step, which is used to confirm that all six particles neighboring the food are present in the correct states and directions, before continuing to build the spiral.
To do this, the six particles neighboring the food, while stable, can switch between the verified and unverified states, depending on whether they are verifiable.

If we reach a state where every particle is stable, assuming that there are at least six particles in the system, the only possible end result is a single spiral built around the food particle (Figure~\ref{fig:spiral}).
The end goal of the algorithm is thus for every particle to be stable.

We also define an ``\emph{attachment probability}'' $\rho \in (0,\frac{1}{2})$, which is used when a dispersion state particle is to join the spiral by switching to a compression state.
%new particle is to be switched from the dispersion state to a compression state.
This attachment probability limits the speed of growth of the spiral, so that when a food particle vanishes and the spiral needs to dissipate, it will do so at a greater rate than it can grow.
%unneeded compression state particles will dissipate at a greater rate than they are added, allow

%||| FORMAL ALGORTIHM DESCRIPTION |||

\paragraph*{Analysis of Adaptive Spiraling}
If the food particle remains in a fixed position for a sufficiently long time, our ultimate objective is for all particles in the system to form a single spiral, with the food particle in the center.
If no food particle is present for a sufficiently long time, all particles should return to the dispersion state.
We show that these outcomes are achieved within a polynomial number of time steps in expectation.
%We show that within a polynomial number of time steps from then in expectation, our desired outcomes are achieved.
The \emph{spiral} is our desired final configuration if the food exists and remains in place. We define the spiral 
%(and spiral particles) 
as follows:

\begin{definition}[Spiral]
\label{defn:spiral}
A {\it spiral} is a maximal sequence of compression state particles $v_0, v_1, \dots, v_{k-1}$ such that 
\begin{enumerate}
\item for $0 \leq i \leq 5$, $v_i$ is in state $i^*$. For $6 \leq i \leq k-1$, $v_i$ is in state $6$;
\item the parent of $v_0$ is the food particle, and for each $1 \leq i \leq k-1$, $v_{i-1}$ is the parent of $v_i$;
\item the sequence of particles are in a ``counterclockwise spiral'' configuration originating from the food particle, as illustrated in Figure~\ref{fig:spiral}. 
\end{enumerate}
\end{definition}

\noindent
We call a particle a \emph{spiral particle} if it is part of some spiral.
Note that for a fixed food particle location, there are only six possible spirals that can be formed, each corresponding to a starting position adjacent to the food.
We observe that multiple spirals may start forming in the system,
%exist while there are few spiral particles. However, if there are 
after seven or more  particles join the spiral, only one spiral will continue to grow.
%they must all be part of the same and only spiral.

%T: Challenges
Recall that our algorithm's effectiveness is contingent on there being no change in the food's existence or position for a sufficient amount of time.
Thus, our proof starts from a point of time after which no change in the food particle occurs.
However, the configuration we start from may be completely arbitrary,
%at this point of time the states and positions of the particles are arbitrary,
as the food particle could have been moved adversarially prior to when we begin the analysis, potentially creating difficult starting configurations.
For example, the adversarial movement of the food can cause the construction of partially built spirals, which may continue to grow and possibly self-sustain as information about the incomplete nature of the spiral or lack of food particle in the center takes time to propagate through the system.

%The adversarial movement of the food particle poses technical challenges to the construction of a spiral.
%It can cause the construction of partially built spirals, which may continue to grow and possibly self-sustain as information about the incomplete nature of the spiral or lack of food particle in the center takes time to propagate through the system.

%\subsection{Correctness of the Adaptive Spiral Algorithm}
%T: Proof Summary
%There are a few light assumptions we will make
For the purposes of the proof, we assume that there are at least six particles in the system, and that there is sufficient space for the particles to move around. To be precise, we do not consider configurations with enough particles for the formed spiral to make contact with itself via wraparound, blocking the remaining particles from joining the spiral (Figure~\ref{fig:nospace}).

%T: Breaking Orientation Symmetry with Initial Circle
Suppose that the food particle exists and is static.
To form a single spiral around the food particle, the first challenge to deal with is the lack of global knowledge of orientation.
We resolve this through the initial six particles surrounding the food particle.
We show that the algorithm, by obeying the attachment property, guarantees the formation of what we call a ``complete circle'', a circle of six particles labeled~$0^*$ to~$5^*$ in counterclockwise order around the food particle.
Once this is established, construction of the spiral begins by building a counterclockwise sequence of particles starting from the particle labeled~$5^*$.
Appendix~\ref{apx:spiralprotocols} %\ref{apx:formationofacompletecircle}
explains in detail how this is achieved from any starting configuration.
%The details of how this is achieved from any starting configuration is given in Appendix~\ref{apx:formationofacompletecircle}.
Breaking symmetry in this manner is important because of the highly structured nature of Adaptive Spiraling. On the other hand, Adaptive $\alpha$-Compression does not have to deal with such an issue.

%<INSERT> {Formation of a Complete Circle}

%T: Elimination of Residuals (no food or spiral)
Similar to Adaptive $\alpha$-Compression, a key part of our proof is the eventual elimination of ``residual particles'', particles we do not want in the configuration.
For Adaptive Spiraling, a residual particle is defined as follows:

\begin{definition}[Residual Particle]
\label{dfn:residualparticles}
A \emph{residual particle} in a configuration is a particle
%satisfies the following conditions:
%\begin{enumerate}
%\item 
in a compression state that is not a spiral article but is either in state $6$ or is not adjacent to food.
%\end{enumerate}
\end{definition}

%When starting from an arbitrary configuration, we are likely to have a large amount of residual particles that we need to clear out. Residual particles are often the remnants of partially built spirals as the food particle moved from place to place prior to the start of our analysis.

Residual particles pose two main problems: They obstruct particles in the dispersion state, preventing them from reaching the required locations to extend the spiral. Also, dispersion state particles may mistakenly attach to residual particles, creating more residual particles instead of making meaningful progress.

As in Adaptive $\alpha$-Compression, we apply Lemma~\ref{lemma:generallemmafinal} to argue that the number of residual particles decreases more quickly than it increases.
In this case, the random variable $X_t$ represents the number of residual particles $t$ time steps after some reference point. To show that the preconditions of Lemma~\ref{lemma:generallemmafinal}, for each possible configuration, we define an auxiliary graph with the particles as vertices, and parent directions as directed edges (Definition~\ref{dfn:auxiliarygraph})
The full argument is outlined in Appendix~\ref{apx:spiralprotocols}%\ref{apx:spiralclearingoutresidual}
, which we outline in the following paragraphs.

Under specific conditions, we show that each dispersion state particle that can become a residual particle on activation has a path over the auxiliary graph to either the food particle or an unstable residual particle - one that does not satisfy the attachment property and thus returns to the dispersion state on activation.
By showing that no vertex of the auxiliary graph has in-degree greater than~$1$, this allows us to upper bound the probability of gaining a new residual particle in the next iteration by the probability of losing a residual particle.

Finally, we show that no residual particles will form again after the number of residual particles hits $0$. This, with the assumption that the ``complete circle'' around the food particle as described earlier has fully formed, ensures that the only remaining particles are dispersion state particles are or part of a single spiral with the food particle in the center. Known results on the hitting time of exclusion processes gives us the remainder of the proof.
The elimination of residual particles also applies to the case where no food particle exists, this time guaranteeing that all remaining particles are in the dispersion state.

%<INSERT> {Clearing Out Residual Particles}

The following two theorems constitute the proof of correctness of the Adaptive Spiraling algorithm. The proofs are given in Appendix~\ref{apx:correctnessofspiralalgorithm}.

\begin{theorem}
\label{theorem:spiralwithfood}
From any starting configuration, given that the food particle exists and remains in a fixed position for a sufficient amount of time, we will reach a configuration where all particles form a single spiral around the food particle in a polynomial number of steps in expectation.
\end{theorem}

\begin{theorem}
\label{theorem:spiralnofood}
From any starting configuration, if there is no food particle for a sufficient amount of time, all particles will return to the dispersion state within a polynomial number of steps in expectation.
\end{theorem}

%<INSERT> {Correctness of Adaptive Spiraling}

\begin{figure}[t]
\centering
\includegraphics[width=.35\linewidth]{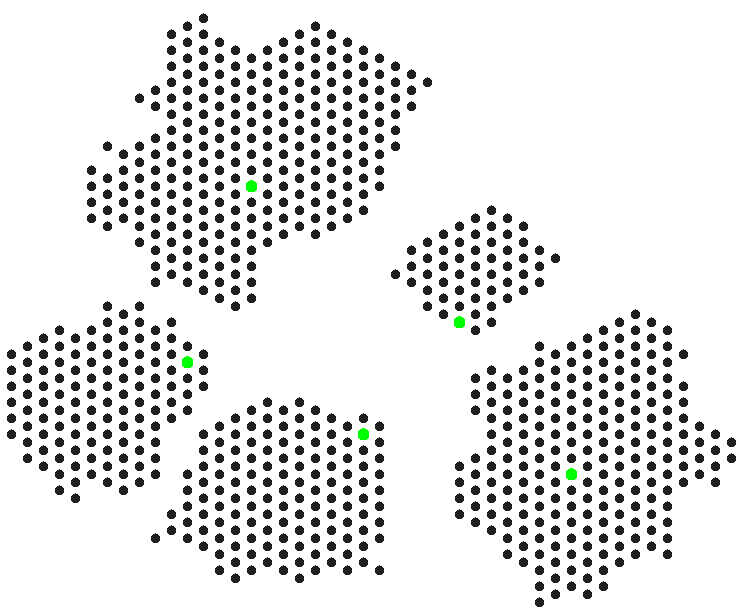}
\caption{Adaptive $\alpha$-Compression running with multiple food particles (shown in green).}
\label{fig:multiple_food_particles}
\end{figure}

\section{Conclusions}
We demonstrate 
through two algorithms 
that self-induced phase changes, here implemented via careful token passing in response to an environmental signal,  can be used as an algorithmic tool to transition between multiple desirable emergent macrostates.  We see this as a promising direction for collective systems of interacting agents in applications such as self-healing and search-and-rescue tasks in robot swarms.  In the first case, the macrostates represent two natural regimes in ferromagnetic systems corresponding to high temperature, where particles disperse, and low temperature, where particles compress.  Since both of these behaviors emerge from changing a simple parameter corresponding to temperature, the Adaptive $\alpha$-Compression algorithm can be thought of as particles adjusting their parameters based on individual experiences, such as whether they have been fed recently, in addition to bringing about a beneficial global outcome for the collective.  We show how careful token passing can overcome potential issues with particles self-actuating autonomously in a distributed setting.

The second algorithm demonstrates that the concept of self-induced phase changes can be used even in highly structured algorithms that are not motivated by simple physical systems of interacting particles.  The macrobehavior is encoded in states and the appropriate collective behavior emerges as the particles learn which phase of the algorithm is appropriate at any given time.  

In addition to seeming more reminiscent of how groups of organisms might gather and search for food when foraging, we see several advantages of Adaptive $\alpha$-Compression over Adaptive Spiraling.
First, the structured Adaptive Spiraling algorithm can fail in the presence of an obstruction, such as dead particles or other objects that prevent particles from occupying certain sites in the lattice.
Obstacles are especially disruptive to Adaptive Spiraling, as it relies on a very specific structure being maintained. If an obstacle is in the path of the spiral being built, the spiral will not be able to grow any further.
On the other hand, in the Adaptive $\alpha$-Compression Algorithm, any particle that has not been completely ``walled off'' by obstacles will eventually join a component containing the food particle. 
%A particle is only completely walled off if there is no path from the particle to the food particle that does not cross through an obstacle.
We note, however, that while all particles will eventually join a component containing a food particle, if there are many obstructions, this may prevent low-perimeter clusters from forming, even in the best situation.

A second advantage to the stochastic approach is
when we have multiple food particles where we  want the particles to form multiple compressed components.
In the Adaptive Spiraling algorithm, we have similar issues to the case of obstacles. If two food particles are close enough to each other for their clusters growing around them to meet, these clusters will obstruct each other, preventing any further growth.
The Adaptive $\alpha$-Compression Algorithm however, with a small modification to prevent clusters from merging,
% COMMENT: the modification is to disallow particles from joining a cluster if they become a ``local cut vertex'' upon joining, i.e. it potentially joins two components from the local information it has. This was Definition~\ref{property:notlocalcut}, which was removed from the paper as it was not essential for a single food particle.
continues to ensure that every particle eventually joins a cluster containing a food particle, although again, a crowded field of deliberately placed food particles may prevent clusters from reaching low perimeter configurations. One might note though, that the cases where undesirable low perimeter configurations are not reached are uncommon. Figure~\ref{fig:multiple_food_particles} illustrates the typical configurations that may form around multiple food particles, and Figure~\ref{fig:multiple_food_stuck} gives an example of food particles placed in specific locations preventing low-perimeter configurations from being reached.  We believe that the proofs for the single food particle case will extend to the multiple food particle setting  and suggest this for future study.

Also as future work, we believe that, when carefully translated to the amoebot model~\cite{Daymude2021-canonicalamoebot}, one can show that the Adaptive Spiral Algorithm to be the first self-stabilizing hexagon formation algorithm in the presence of a seed under this model.
%and the first that provably recovers from changes in the location and existence of such a seed.

%\comm{AR: add multiple food; simulation results in appendix}

%\comm{Note: For the Adaptive $\alpha$-Compression Algorithm to work on multiple food particles, we need to only allow a dispersion state particle to attach to a cluster if it does not connect two different components. This is however not in the current algorithm description, because there is no need for this restriction when there is only one particle.}

\begin{figure}
\begin{subfigure}[b]{.5\linewidth}
  %\centering
  %\includegraphics[width=.6\linewidth]{diagrams/multiple_food_stuck1.png}
  \begin{center}
  \begin{tikzpicture}[x=0.3cm,y=0.3cm]
  \draw[lightgray] (0,-7) -- (11.2583,-0.5);
\draw[lightgray] (0,-7) -- (6.9282,-11);
\draw[lightgray] (0,-10) -- (12.1244,-3);
\draw[lightgray] (0,-10) -- (1.73205,-11);
\draw[lightgray] (3.4641,0) -- (12.1244,-5);
\draw[lightgray] (3.4641,0) -- (3.4641,-11);
\draw[lightgray] (7.79423,-0.5) -- (7.79423,-10.5);
\draw[lightgray] (12.1244,-1) -- (12.1244,-10);
\draw[lightgray] (0,-4) -- (6.9282,0);
\draw[lightgray] (0,-4) -- (11.2583,-10.5);
\draw[lightgray] (5.19615,0) -- (12.1244,-4);
\draw[lightgray] (5.19615,0) -- (5.19615,-11);
\draw[lightgray] (9.52628,-0.5) -- (9.52628,-10.5);
\draw[lightgray] (10.3923,-11) -- (12.1244,-10);
\draw[lightgray] (0,-1) -- (1.73205,0);
\draw[lightgray] (0,-1) -- (12.1244,-8);
\draw[lightgray] (0,-1) -- (0,-10);
\draw[lightgray] (1.73205,-11) -- (12.1244,-5);
\draw[lightgray] (11.2583,-0.5) -- (11.2583,-10.5);
\draw[lightgray] (0.866025,-10.5) -- (12.1244,-4);
\draw[lightgray] (5.19615,-11) -- (12.1244,-7);
\draw[lightgray] (0,-5) -- (8.66025,0);
\draw[lightgray] (0,-5) -- (10.3923,-11);
\draw[lightgray] (0.866025,-0.5) -- (12.1244,-7);
\draw[lightgray] (0.866025,-0.5) -- (0.866025,-10.5);
\draw[lightgray] (0,-8) -- (12.1244,-1);
\draw[lightgray] (0,-8) -- (5.19615,-11);
\draw[lightgray] (3.4641,-11) -- (12.1244,-6);
\draw[lightgray] (0,-2) -- (3.4641,0);
\draw[lightgray] (0,-2) -- (12.1244,-9);
\draw[lightgray] (2.59808,-0.5) -- (2.59808,-10.5);
\draw[lightgray] (6.9282,0) -- (12.1244,-3);
\draw[lightgray] (6.9282,0) -- (6.9282,-11);
\draw[lightgray] (4.33013,-0.5) -- (4.33013,-10.5);
\draw[lightgray] (8.66025,0) -- (12.1244,-2);
\draw[lightgray] (8.66025,0) -- (8.66025,-11);
\draw[lightgray] (6.06218,-0.5) -- (6.06218,-10.5);
\draw[lightgray] (10.3923,0) -- (12.1244,-1);
\draw[lightgray] (10.3923,0) -- (10.3923,-11);
\draw[lightgray] (0,-6) -- (10.3923,0);
\draw[lightgray] (0,-6) -- (8.66025,-11);
\draw[lightgray] (0,-9) -- (12.1244,-2);
\draw[lightgray] (0,-9) -- (3.4641,-11);
\draw[lightgray] (8.66025,-11) -- (12.1244,-9);
\draw[lightgray] (0,-3) -- (5.19615,0);
\draw[lightgray] (0,-3) -- (12.1244,-10);
\draw[lightgray] (6.9282,-11) -- (12.1244,-8);
\draw[lightgray] (1.73205,0) -- (12.1244,-6);
\draw[lightgray] (1.73205,0) -- (1.73205,-11);
\draw[fill=black] (0.866025,-6.5) circle (0.288);
\draw[black, line width=0.4mm, fill=white] (2.59808,-3.5) circle (0.288);
\draw[black, line width=0.4mm, fill=white] (2.59808,-4.5) circle (0.288);
\draw[black, line width=0.4mm, fill=white] (2.59808,-5.5) circle (0.288);
\draw[black, line width=0.4mm, fill=white] (2.59808,-6.5) circle (0.288);
\draw[black, line width=0.4mm, fill=white] (2.59808,-7.5) circle (0.288);
\draw[black, line width=0.4mm, fill=white] (3.4641,-3) circle (0.288);
\draw[black, line width=0.4mm, fill=white] (4.33013,-2.5) circle (0.288);
\draw[black, line width=0.4mm, fill=white] (4.33013,-4.5) circle (0.288);
\draw[black, line width=0.4mm, fill=white] (4.33013,-5.5) circle (0.288);
\draw[black, line width=0.4mm, fill=white] (4.33013,-6.5) circle (0.288);
\draw[black, line width=0.4mm, fill=white] (4.33013,-7.5) circle (0.288);
\draw[black, line width=0.4mm, fill=white] (4.33013,-8.5) circle (0.288);
\draw[black, line width=0.4mm, fill=white] (5.19615,-2) circle (0.288);
\draw[black, line width=0.4mm, fill=white] (5.19615,-4) circle (0.288);
\draw[black, line width=0.4mm, fill=white] (5.19615,-9) circle (0.288);
\draw[black, line width=0.4mm, fill=white] (6.06218,-1.5) circle (0.288);
\draw[black, line width=0.4mm, fill=white] (6.06218,-3.5) circle (0.288);
\draw[black, line width=0.4mm, fill=white] (6.06218,-5.5) circle (0.288);
\draw[black, line width=0.4mm, fill=white] (6.06218,-6.5) circle (0.288);
\draw[black, line width=0.4mm, fill=white] (6.06218,-7.5) circle (0.288);
\draw[black, line width=0.4mm, fill=white] (6.06218,-9.5) circle (0.288);
\draw[black, line width=0.4mm, fill=white] (6.9282,-1) circle (0.288);
\draw[black, line width=0.4mm, fill=white] (6.9282,-3) circle (0.288);
\draw[black, line width=0.4mm, fill=white] (6.9282,-5) circle (0.288);
\draw[black, line width=0.4mm, fill=white] (6.9282,-8) circle (0.288);
\draw[black, line width=0.4mm, fill=white] (6.9282,-10) circle (0.288);
\draw[black, line width=0.4mm, fill=white] (7.79423,-0.5) circle (0.288);
\draw[black, line width=0.4mm, fill=white] (7.79423,-2.5) circle (0.288);
\draw[fill=black] (7.79423,-4.5) circle (0.288);
\draw[fill=black] (7.79423,-6.5) circle (0.288);
\draw[black, line width=0.4mm, fill=white] (7.79423,-8.5) circle (0.288);
\draw[black, line width=0.4mm, fill=white] (7.79423,-10.5) circle (0.288);
\draw[black, line width=0.4mm, fill=white] (8.66025,-1) circle (0.288);
\draw[black, line width=0.4mm, fill=white] (8.66025,-3) circle (0.288);
\draw[black, line width=0.4mm, fill=white] (8.66025,-8) circle (0.288);
\draw[black, line width=0.4mm, fill=white] (8.66025,-10) circle (0.288);
\draw[black, line width=0.4mm, fill=white] (9.52628,-1.5) circle (0.288);
\draw[black, line width=0.4mm, fill=white] (9.52628,-3.5) circle (0.288);
\draw[black, line width=0.4mm, fill=white] (9.52628,-4.5) circle (0.288);
\draw[black, line width=0.4mm, fill=white] (9.52628,-5.5) circle (0.288);
\draw[black, line width=0.4mm, fill=white] (9.52628,-6.5) circle (0.288);
\draw[black, line width=0.4mm, fill=white] (9.52628,-7.5) circle (0.288);
\draw[black, line width=0.4mm, fill=white] (9.52628,-9.5) circle (0.288);
\draw[black, line width=0.4mm, fill=white] (10.3923,-2) circle (0.288);
\draw[black, line width=0.4mm, fill=white] (10.3923,-9) circle (0.288);
\draw[black, line width=0.4mm, fill=white] (11.2583,-2.5) circle (0.288);
\draw[black, line width=0.4mm, fill=white] (11.2583,-3.5) circle (0.288);
\draw[black, line width=0.4mm, fill=white] (11.2583,-4.5) circle (0.288);
\draw[black, line width=0.4mm, fill=white] (11.2583,-5.5) circle (0.288);
\draw[black, line width=0.4mm, fill=white] (11.2583,-6.5) circle (0.288);
\draw[black, line width=0.4mm, fill=white] (11.2583,-7.5) circle (0.288);
\draw[black, line width=0.4mm, fill=white] (11.2583,-8.5) circle (0.288);
\draw[black, line width=0.5mm] (6.32199,-3.35) -- (6.6684,-3.15);
\draw[black, line width=0.5mm] (5.45596,-9.15) -- (5.80237,-9.35);
\draw[black, line width=0.5mm] (4.33013,-6.2) -- (4.33013,-5.8);
\draw[black, line width=0.5mm] (8.05404,-8.35) -- (8.40045,-8.15);
\draw[black, line width=0.5mm] (11.2583,-4.2) -- (11.2583,-3.8);
\draw[black, line width=0.5mm] (6.06218,-7.2) -- (6.06218,-6.8);
\draw[black, line width=0.5mm] (6.32199,-7.65) -- (6.6684,-7.85);
\draw[black, line width=0.5mm] (5.45596,-1.85) -- (5.80237,-1.65);
\draw[black, line width=0.5mm] (11.2583,-8.2) -- (11.2583,-7.8);
\draw[black, line width=0.5mm] (2.59808,-7.2) -- (2.59808,-6.8);
\draw[black, line width=0.5mm] (7.18801,-4.85) -- (7.53442,-4.65);
\draw[black, line width=0.5mm] (8.92006,-3.15) -- (9.26647,-3.35);
\draw[black, line width=0.5mm] (4.33013,-8.2) -- (4.33013,-7.8);
\draw[black, line width=0.5mm] (4.58993,-8.65) -- (4.93634,-8.85);
\draw[black, line width=0.5mm] (10.6521,-8.85) -- (10.9985,-8.65);
\draw[black, line width=0.5mm] (8.92006,-7.85) -- (9.26647,-7.65);
\draw[black, line width=0.5mm] (8.05404,-0.65) -- (8.40045,-0.85);
\draw[black, line width=0.5mm] (10.6521,-2.15) -- (10.9985,-2.35);
\draw[black, line width=0.5mm] (2.85788,-3.35) -- (3.20429,-3.15);
\draw[black, line width=0.5mm] (4.33013,-5.2) -- (4.33013,-4.8);
\draw[black, line width=0.5mm] (9.52628,-5.2) -- (9.52628,-4.8);
\draw[black, line width=0.5mm] (7.18801,-0.85) -- (7.53442,-0.65);
\draw[black, line width=0.5mm] (11.2583,-3.2) -- (11.2583,-2.8);
\draw[black, line width=0.5mm] (6.06218,-6.2) -- (6.06218,-5.8);
\draw[black, line width=0.5mm] (7.18801,-10.15) -- (7.53442,-10.35);
\draw[black, line width=0.5mm] (9.78609,-1.65) -- (10.1325,-1.85);
\draw[black, line width=0.5mm] (11.2583,-7.2) -- (11.2583,-6.8);
\draw[black, line width=0.5mm] (2.59808,-6.2) -- (2.59808,-5.8);
\draw[black, line width=0.5mm] (2.59808,-4.2) -- (2.59808,-3.8);
\draw[black, line width=0.5mm] (5.45596,-3.85) -- (5.80237,-3.65);
\draw[black, line width=0.5mm] (4.58993,-4.35) -- (4.93634,-4.15);
\draw[black, line width=0.5mm] (9.52628,-4.2) -- (9.52628,-3.8);
\draw[black, line width=0.5mm] (6.32199,-1.35) -- (6.6684,-1.15);
\draw[black, line width=0.5mm] (9.78609,-9.35) -- (10.1325,-9.15);
\draw[black, line width=0.5mm] (8.05404,-10.35) -- (8.40045,-10.15);
\draw[black, line width=0.5mm] (11.2583,-6.2) -- (11.2583,-5.8);
\draw[black, line width=0.5mm] (6.32199,-5.35) -- (6.6684,-5.15);
\draw[black, line width=0.5mm] (8.92006,-1.15) -- (9.26647,-1.35);
\draw[black, line width=0.5mm] (2.59808,-5.2) -- (2.59808,-4.8);
\draw[black, line width=0.5mm] (6.32199,-9.65) -- (6.6684,-9.85);
\draw[black, line width=0.5mm] (9.52628,-7.2) -- (9.52628,-6.8);
\draw[black, line width=0.5mm] (7.18801,-2.85) -- (7.53442,-2.65);
\draw[black, line width=0.5mm] (8.92006,-9.85) -- (9.26647,-9.65);
\draw[black, line width=0.5mm] (8.05404,-2.65) -- (8.40045,-2.85);
\draw[black, line width=0.5mm] (4.33013,-7.2) -- (4.33013,-6.8);
\draw[black, line width=0.5mm] (11.2583,-5.2) -- (11.2583,-4.8);
\draw[black, line width=0.5mm] (7.18801,-8.15) -- (7.53442,-8.35);
\draw[black, line width=0.5mm] (3.72391,-2.85) -- (4.07032,-2.65);
\draw[black, line width=0.5mm] (4.58993,-2.35) -- (4.93634,-2.15);
\draw[black, line width=0.5mm] (9.52628,-6.2) -- (9.52628,-5.8);
  \end{tikzpicture}
  \end{center}
  %\caption{asd}
  \label{fig:multiple_food_stuck1}
\end{subfigure}%
\hfill
\begin{subfigure}[b]{.5\linewidth}
  %\centering
  %\includegraphics[width=.5\linewidth]{diagrams/multiple_food_stuck2.png}
  \begin{center}
  \begin{tikzpicture}[x=0.4cm,y=0.4cm]
  \draw[lightgray] (5.19615,-6) -- (7.79423,-4.5);
\draw[lightgray] (2.59808,-0.5) -- (2.59808,-5.5);
\draw[lightgray] (0,-2) -- (3.4641,0);
\draw[lightgray] (0,-2) -- (6.9282,-6);
\draw[lightgray] (6.9282,0) -- (7.79423,-0.5);
\draw[lightgray] (6.9282,0) -- (6.9282,-6);
\draw[lightgray] (0,-3) -- (5.19615,0);
\draw[lightgray] (0,-3) -- (5.19615,-6);
\draw[lightgray] (3.4641,0) -- (7.79423,-2.5);
\draw[lightgray] (3.4641,0) -- (3.4641,-6);
\draw[lightgray] (7.79423,-0.5) -- (7.79423,-5.5);
\draw[lightgray] (4.33013,-0.5) -- (4.33013,-5.5);
\draw[lightgray] (0,-4) -- (6.9282,0);
\draw[lightgray] (0,-4) -- (3.4641,-6);
\draw[lightgray] (6.9282,-6) -- (7.79423,-5.5);
\draw[lightgray] (0,0) -- (7.79423,-4.5);
\draw[lightgray] (0,0) -- (0,-6);
\draw[lightgray] (1.73205,-6) -- (7.79423,-2.5);
\draw[lightgray] (5.19615,0) -- (7.79423,-1.5);
\draw[lightgray] (5.19615,0) -- (5.19615,-6);
\draw[lightgray] (0,-5) -- (7.79423,-0.5);
\draw[lightgray] (0,-5) -- (1.73205,-6);
\draw[lightgray] (0.866025,-0.5) -- (0.866025,-5.5);
\draw[lightgray] (0,-1) -- (1.73205,0);
\draw[lightgray] (0,-1) -- (7.79423,-5.5);
\draw[lightgray] (6.06218,-0.5) -- (6.06218,-5.5);
\draw[lightgray] (3.4641,-6) -- (7.79423,-3.5);
\draw[lightgray] (0,-6) -- (7.79423,-1.5);
\draw[lightgray] (1.73205,0) -- (7.79423,-3.5);
\draw[lightgray] (1.73205,0) -- (1.73205,-6);
\draw[fill=black] (0.866025,-2.5) circle (0.288);
\draw[black, line width=0.4mm, fill=white] (0.866025,-4.5) circle (0.288);
\draw[black, line width=0.4mm, fill=white] (1.73205,-4) circle (0.288);
\draw[black, line width=0.4mm, fill=white] (2.59808,-3.5) circle (0.288);
\draw[fill=black] (2.59808,-5.5) circle (0.288);
\draw[black, line width=0.4mm, fill=white] (3.4641,-3) circle (0.288);
\draw[black, line width=0.4mm, fill=white] (4.33013,-2.5) circle (0.288);
\draw[black, line width=0.4mm, fill=white] (5.19615,-2) circle (0.288);
\draw[black, line width=0.4mm, fill=white] (6.06218,-1.5) circle (0.288);
\draw[fill=black] (6.9282,-1) circle (0.288);
\draw[black, line width=0.5mm] (2.85788,-3.35) -- (3.20429,-3.15);
\draw[black, line width=0.5mm] (6.32199,-1.35) -- (6.6684,-1.15);
\draw[black, line width=0.5mm] (1.12583,-4.35) -- (1.47224,-4.15);
\draw[black, line width=0.5mm] (5.45596,-1.85) -- (5.80237,-1.65);
\draw[black, line width=0.5mm] (3.72391,-2.85) -- (4.07032,-2.65);
\draw[black, line width=0.5mm] (4.58993,-2.35) -- (4.93634,-2.15);
\draw[black, line width=0.5mm] (1.99186,-3.85) -- (2.33827,-3.65);
  \end{tikzpicture}
  \end{center}
  %\caption{asd}
  \label{fig:multiple_food_stuck2}
\end{subfigure}%
\caption{Two configurations where the positions of the food particles (denoted by the black particles) prevent any of the compression state particles from making a move.}
\label{fig:multiple_food_stuck}
\end{figure}
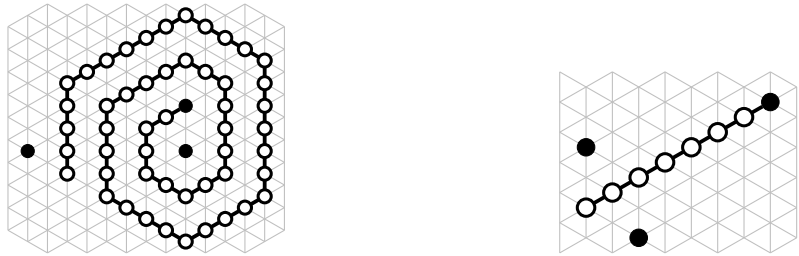

\bibliographystyle{plainurl}
\bibliography{bibfile}

%\appendix
%\newpage
\vskip.5in
\begin{appendices}

\section{Details for the Adaptive $\alpha$-Compression Algorithm}
\label{apx:compressionbasedalgorithm}
We proceed by giving the final details of the algorithm and proofs for the Adaptive $\alpha$-Compression Algorithm that exceeded the space in the main text.

%\subsection{Proof of Lemma~\ref{lemma:keyinvariant}}
\subsection{Establishing the state invariant (Lemma~\ref{lemma:keyinvariant})}\label{apx:proofofkeyinvariant}
%\begin{proofof}{Lemma~\ref{lemma:keyinvariant}
%\begin{proof}
Recall that the state invariant is satisfied if each  component  contains a particle in state $C_{GF}, C_F$ or $DT$. Lemma~\ref{lemma:keyinvariant} states that the state invariant always holds if we start with all particles in the dispersion state.

%\vskip.1in
%\noindent{\bf Proof of Lemma~\ref{lemma:keyinvariant}.} \ 
\begin{proof}[Proof of Lemma~\ref{lemma:keyinvariant}]
We would like to show that the state invariant holds throughout the Adaptive $\alpha$-Compression Algorithm.  Consider any configuration where the state invariant currently holds, and let $u$ be the next particle to be activated. We show that the state invariant continues to hold after both the state change step and the particle movement step of the algorithm.
%each of the two steps of the algorithm, the state change step and the particle movement step.

\vskip.1in
\noindent \underbar{\textbf{State Change Step:}}
%\proofsubparagraph{State Change Step:}
If particle $u$ is in the dispersion state,
if $u$ switches to state $C_F$ when next to a food particle, the component $u$ joins automatically satisfies the invariant. If $u$ is consuming a growth token from a neighboring compression state particle to switch to a compression state, it must be joining an existing component that previously already satisfied the invariant, and thus will continue to do so.

If $u$ is in the dispersion token state, activating $u$ can potentially split the component it is in into multiple components. However, as all neighbors of $u$ will also be set to state $DT$, each of these new components will contain a particle of state $DT$.

If $u$ is in a compression state, particle $u$ transferring a growth token or generating a new growth token does not affect the invariant. Thus we only need to consider the case where $u$ switches to the dispersion state while spreading the dispersion token, in which case the argument used when $u$ is in the dispersion token state applies in the same way.

\vskip.1in
\noindent \underbar{\textbf{Particle Movement Step:}}
%\proofsubparagraph{Particle Movement Step:}
If $u$ is not in a compression state, it is clear that this step will not affect the invariant. Thus we assume $u$ to be in a compression state. To check that the state invariant is satisfied, we only need to look at the components in the new configuration that have been changed in some way by the movement of $u$ - in other words, the component $u$ joins after its movement, and the component(s) $u$ leaves behind after its movement (which may be the same component as the one $u$ joins).
If $u$ joins a component it was not a part of in the original configuration, this component must previously already satisfy the invariant, and will continue to do so even with $u$ attached.
Thus, we only need to check the component(s) that $u$ leaves behind after its movement.

There are two possible ways moving $u$ could cause the components it left behind to no longer satisfy the invariant.
The first is when $u$ bridges two components, separating these components after the move, and only one of these two components has a particle of state $C_F$, $C_{GF}$ or $DT$.
The second possibility is when $u$ is in state $C_F$ or $C_{GF}$ - the component could have no other particle of state $C_F$, $C_{GF}$ of $DT$, and moving $u$ could make it detach from the component or make $u$ unset its food bit.

Since $u$ is currently in a compression state, $u$ could have either been in the dispersion state or a compression state prior to its activation. If $u$ was in the dispersion state, any component(s) $u$ is part of before the movement would have already satisfied the invariant without $u$, and so would continue to satisfy the invariant even after $u$ is moved.

We can thus assume $u$ was in a compression state prior to its activation. As $u$ had passed the state change step of the algorithm without switching to the dispersion state, if it is neighboring the food particle, all compression state neighbors it shares with the food particle must be in states $C_F$ or $C_{GF}$.

In the first case stated above, as $u$ cannot make moves that locally disconnects its neighbors (which can include the food particle), the only way $u$ can disconnect two components is when the food particle continues to connect the neighbors of $u$ corresponding to these components. This implies that each of the two components will have a particle in state $C_F$ or $C_{GF}$.

In the second case,
%if $u$ shares no neighboring compression state particle with the food particle, $u$ will not leave a component behind after the movement. This is because if $u$ is part of some component of size at least $2$, $u$ could not have made a move that separates it from this component, as such a move would locally disconnect a neighboring compression state particle of $u$ from the food particle.
we consider the shared compression state neighbors between $u$ and the food particle, which must be in states $C_F$ or $C_{GF}$.
Any component $u$ leaves behind (disconnects from) after its movement must contain one of these shared neighbors. This is because otherwise, $u$'s movement would locally disconnect this component from the food particle, which it cannot do.
\end{proof}

%%%%%%%%%%%%%%%%%%%%%%%%%%%%%%%%%%%%%

%\subsection{Proof of Lemma~\ref{lemma:generallemmafinal}}
\subsection{Expected time to remove residual particles (Lemma~\ref{lemma:generallemmafinal})}
\label{apx:proofofgenerallemma}
Next, we prove Lemma~\ref{lemma:generallemmafinal}, establishing a polynomial bound on the expected time to remove residual particles, by first showing the
%We build up to the proof of the above Lemma with the 
following two lemmas.
\begin{lemma}
\label{lemma:generallemma1}
Fix $0 < \eta < 1$. For any integer $k \geq 1$, Define a sequence $(W_t^{(k)})_{t \in \mathbbNzero}$, where $W^{(k)}_0 = k$, and 
\begin{align*}
W^{(k)}_{t+1} = \begin{cases}
W^{(k)}_t-1 &\text{with probability ${1}/{(1+\eta)}$}\\
W^{(k)}_t+1 &\text{with probability ${\eta}/{(1+\eta)}$}
\end{cases}.
\end{align*}
Then $\ex\left[\min\{t \geq 0 \mid W^{(k)}_t = 0\}\right] = k {(1+\eta)}/{(1-\eta)}$
\end{lemma}

\begin{proof}
For any $k \in \mathbbNzero$, denote $S_k := \ex\left[\min\{t \geq 0 \mid W^{(k)}_t = 0\}\right]$.
Now consider the random sequence $(W^{(k)}_t)_{t \geq 0}$. We define $T^{(k)}_{k+1} := 0$ and for $i \in \{k,k-1,\dots,1\}$, let $T^{(k)}_i := \min\{t \geq 0 \mid X^{(k)}_t \leq i\} - T^{(k)}_{i+1}$ denote the number of time steps after $T^{(k)}_{i+1}$ before the first time step $t$ where $W^{(k)}_t \leq i$. We observe that each $W^{(k)}_i$ is identically distributed, with $\ex T^{(k)}_i = \ex T^{(k)}_1 = S_1$.
Also, we observe that $S_k = \ex [\sum_{i=1}^{k} T^{(k)}_i]$, so $S_k = k\cdot S_1$ for all $k$.

We compute $S_1$ by conditioning on the first step:
\begin{align*}
S_1 &= \frac{1}{1+\eta}(1) + \frac{\eta}{1+\eta}(S_2+1)\\
&= 1 + \frac{\eta}{1+\eta}\cdot 2S_1\\
S_1 &= \frac{1+\eta}{1-\eta}
\end{align*}
This gives us $\ex\left[\min\{t \geq 0 \mid W^{(k)}_t = 0\}\right] = S_k = k\cdot\frac{1+\eta}{1-\eta}$.
\end{proof}

\begin{lemma}
\label{lemma:generallemma2}
Fix integers $n, k \geq 1$, and $0 < \eta < 1$. Consider a random sequence of probabilities $(r_t)_{t \in \mathbbNzero}$, such that $\frac{1+\eta}{n} \leq r_t \leq 1$ for all $t \in \mathbbNzero$. Now consider a sequence $(Y_t)_{t \in \mathbbNzero}$, where $Y_0 = k$, and 
\begin{align*}
Y_{t+1} = \begin{cases}
Y_t-1 &\text{with probability $r_t /{(1+\eta)}$};\\
Y_t+1 &\text{with probability $r_t {\eta}/{(1+\eta)}$};\\
Y_t &\text{with probability $1-r_t$}.
\end{cases}
\end{align*}
Then $\ex\left[\min\{t \geq 0 \mid Y_t = 0\}\right] \leq {nk}/{(1-\eta)}$
\end{lemma}

\begin{proof}
We observe that the sequence $(Y_t)_{t \in \mathbb{N}_{\geq 0}}$ behaves in the same way as $(W^{(k)}_t)_{t \in \mathbb{N}_{\geq 0}}$ from Lemma~\ref{lemma:generallemma1}, except that on each round, we have a probability $1-r_t$ of doing nothing. As $r_t \geq \frac{1+\eta}{n}$ always, we have $\ex\left[\min\{t \geq 0 \mid Y_t = 0\}\right] \leq {nk}/{(1-\eta)}$.
\end{proof}

This allows us next to bound the time to remove residual particles.
%Lemma~\ref{lemma:generallemmafinal}:

%\vskip.1in
%\noindent{\bf Proof of Lemma~\ref{lemma:generallemmafinal}.} \ 
%\begin{proof}
\begin{proof}[Proof of Lemma~\ref{lemma:generallemmafinal}]
We define random variables $r_t := (1+\eta)p_t$ for all $t \in \mathbbNzero$, which allows us to couple the sequences $(X_t)_{t\in \mathbb{N}_{\geq 0}}$ and $(Y_t)_{t\in \mathbb{N}_{\geq 0}}$ so that $X_t$ decreases by (at least) one if and only if $Y_t$ decreases by one and $X_t$ increases by (at least) one if and only if $Y_t$ increases by one. This shows that for all $t$ and real numbers $x$, $Pr(X_t \geq x) \leq Pr(Y_t \geq x)$ ($X_t$ is stochastically dominated by $Y_t$). Let $T_X := \ex\left[\min\{t \geq 0 \mid X_t = 0\}\right]$ and $T_Y := \ex\left[\min\{t \geq 0 \mid Y_t = 0\}\right]$. We can show that $T_X$ is also stochastically dominated by $T_Y$ as follows:
\begin{align*}
Pr(T_X \geq t) &= Pr(X_{t-1} \geq 1) \\
&\leq Pr(Y_{t-1} \geq 1)\\
&= Pr(T_Y \geq t)
\end{align*}
Hence $\ex \left[T_X\right] = \sum_{t \in \mathbbNzero} Pr(T_X \geq t) \leq \sum_{t \in \mathbbNzero} Pr(T_Y \geq t) = \ex \left[T_Y\right] \leq {nk}/{(1-\eta)}$.
\end{proof}

%\vskip.1in

\subsection{Eliminating residual particles}
\label{apx:compressioneliminatingresidual}
% This is a repeat of what is in the paper. I feel like these conclusions are useful to repeat. I didn't paraphrase it though
Showing the following three lemmas allow us to conclude that if there is no change in the food particle for long enough, we will reach a stage where $DT$-state particles will no longer form, and particles in compression states can only be in components that are in contact with the food particle. If there is no food particle, this necessarily means that all particles will end up in the dispersion state.

\begin{lemma}
\label{lemma:reachpotentialzero}
We start from a configuration where the state invariant currently holds. Assume that there is no change in the food particle from the current point on. Then the expected number of steps before we reach a configuration with no residual components is at most ${2n^2}/{(1-6\rho)}$.
\end{lemma}

\begin{proof}
We consider the sequence of potentials $(\Phi(\sigma_t))_{t \in \mathbbNzero}$ with $\sigma_0$ representing the configuration we start with, and make use of Lemma~\ref{lemma:generallemmafinal}.

By the definition of a residual component, as long as there exists a residual component, there will be at least one particle that when activated, will switch to the dispersion state, reducing the current potential by at least one. Thus, the probability of decreasing the current potential by at least one in any step is at least ${1}/{n}$.

The consumption of a growth token to add a new compression state particle to a residual component does not change the current potential. Switching particles to the dispersion token state also does not affect the potential.

With no movement of the food particle, there are only two ways for the potential to increase. The first is when a new growth token is generated, and the second is when a dispersion state particle switches to a compression state by being next to a food particle. Either of these increase the current potential by exactly $1$.
Both of these can only happen through the activation of particles adjacent to food, and only happen with probability $\rho$. As there are at most $6$ of these, this happens with probability at most ${6\rho}/{n} < {1}/{n}$.

As $\Phi(\sigma_0) \leq 2n$, by Lemma~\ref{lemma:generallemmafinal}, within ${2n^2}/{(1-6\rho)}$ steps in expectation, we will either reach a state $\sigma$ with $\Phi(\sigma) = 0$, or a state with no residual components, whichever comes first. A state with $\Phi(\sigma) = 0$ necessarily also has no residual components, completing the proof.
\end{proof}

\begin{lemma}
We start from a configuration where the state invariant currently holds. Assume that there is no change in the food particle from the current point on.
If there are no residual particles, then no residual particle can be generated from then on.
\end{lemma}

\begin{proof}
With no residual components, all particles in the configuration will have the food bit set if and only if they are adjacent to food. There are also no $DT$ particles, so no particle on activation will switch to the $DT$ state. When a particle moves, it will set its food bit accordingly, so no residual components will form.
\end{proof}

\begin{lemma}
Suppose that the state invariant holds and there are no residual particles. Then all components will neighbor the food particle.
\end{lemma}

\begin{proof}
With no residual components, each component must contain a compression state particle with the food bit set, which must necessarily neighbor the food particle.
\end{proof}

%CAREFUL: TRANSFER OF PARTICLES FROM NONRES TO RES?
%CAREFUL: COMPONENTS CAN MERGE AND SPLIT VIA FOOD!

% (switching all particles to a compression state)
%\subsection{Proof of Lemma~\ref{lemma:allswitchtocompressionstate}}
\subsection{Bounding the time to compress (Lemma~\ref{lemma:allswitchtocompressionstate})}
\label{apx:compressionstatelemmaproof}
We will now verify that particles revert to the compression state if a food particle remains present once there are no residual particles.   We will refer to  connected components consisting of particles in the compression state together with any food particle as {\it clusters}.  (Note that these are different from components.)

%Note that a cluster is different from what we called a component in the previous section. A cluster is a connected component when we include both compression state particles and the food particle as vertices.

%\comm{I can probably break up this long proof into multiple lemmas}
\begin{proof}[Proof of Lemma~\ref{lemma:allswitchtocompressionstate}]
%\vskip.1in
%\noindent{\bf Proof of Lemma~\ref{lemma:allswitchtocompressionstate}.} \ 
We consider the amount of time it takes for a new particle to join the cluster assuming that another particle does not join the cluster first. For a particle to join the cluster, it must be activated while adjacent to a particle in the cluster that is holding a growth token. We can upper bound the expected amount of time before this happens by the amount of time it takes for all particles in the cluster to obtain a growth token, followed by the amount of time it takes for a particle to ``collide'' with the cluster.

%\vskip.1in
%\noindent \underbar{\textbf{Claim:}} {\em 
First, we claim that assuming no new particles join the cluster, then within a polynomial amount of steps, all particles in the cluster will be holding a growth token.
To see this, note that as there are at most $n$ particles in the cluster, we only need an upper bound on the amount of time it takes for a new growth token to be generated, starting from an arbitrary configuration of the cluster. Unfortunately, as a particle can only hold one growth token at a time, new growth tokens can only be generated when a particle without a growth token neighbors the food particle. We thus need to bound the expected amount of time it takes for a compression state particle without a growth token to neighbor the food particle.

Mark a particle that currently does not hold a growth token. If a neighboring particle containing a growth token is activated and transfers its growth token to the marked particle, we move the mark to said neighboring particle. If a neighboring particle without a growth token is activated, for the sake of our analysis we may pretend it does, randomly picking an outgoing direction, and receiving the mark if the marked particle happens to be its neighbor in the picked direction (even though no growth token is transferred).
As we assume no new particle joins the cluster, the set of particles in the cluster (the set of food particles and compression state particles) is fixed. Movement of the particles in the cluster in effect only reconfigures the graph's edges.
%they can be seen as a connected evolving graph, a sequence $(G_t)__{t \in \mathbb{N}_{\geq 0}}$ of connected graphs sharing the same vertex set but with possibly differing edges.
The mark moving in this manner is thus equivalent to following the $d_{max}$-random walk over
a connected evolving graph, which is a sequence $(G_t)_{t \in \mathbb{N}_{\geq 0}}$ of connected graphs sharing the same vertex set but with possibly differing edges, as described in~\cite{RandomWalksDynamicGraphs, RandomWalksRecurringTopologies}.
This gives us an expected hitting time upper bound of $O(n^3 \log n)$~\cite{RandomWalksRecurringTopologies} activations of the marked particle, or $O(n^4 \log n)$ iterations (the asymptotic bound is the same even as we take into account that the probability of generating a new growth token is a non-zero constant $p < 1$).
%This gives us hitting times are polynomial in expectation, which proves our claim.

Note that while in practice it is not possible for the mark to actually transfer to the food particle, we are only concerned with the chain up to the point where the mark neighbors the food particle, so this analysis sufficiently shows that a new growth token will be generated in expected polynomial time.

%\vskip.1in
%\noindent \underbar{\textbf{Claim:}} {\em 
Finally, we claim that if  every particle in the cluster already has a growth token, and if there are still particles not in the compression state, then a
new particle will ``collide'' with the cluster and join it in polynomial expected time.
Note that the particles in the dispersion state follow a simple exclusion process. By considering obstructed movements as swap moves rather than rejected moves, we can analyze the hitting time of a simple exclusion process as independent simple random walks over the lattice.
%which has a hitting time at most that of a simple random walk over the lattice.
%As a regular graph, a simple upper bound is $O(N^2)$ activations~\cite{lovasz1993random}, or $O(N^3)$ iterations.
If the food particle and the particles in the cluster do not exist,
as our triangular lattice is a regular graph, a simple upper bound for the number of steps before some dispersion state particle reaches the current is location of the food particle is $O(N^2)$~\cite{lovasz1993random} activations of any one particle, which translates to $O(N^2n)$ iterations.
%it would take a polynomial number of steps in expectation for some dispersion state particle to reach the current location of the food particle.
In practice however, the dispersion state particle must collide with either the food particle or a compression state particle before this happens, which switches it to the compression state with a constant probability $p$. This gives us a loose upper bound of $O(N^2n + n^5 \log n)$ iterations for a new particle joining the cluster, giving us the claim, and hence the Lemma.
\end{proof}

\section{Details for the Adaptive Spiraling Algorithm}
\label{apx:spiralalgorithm}

The spiral algorithm proceeds in stages, with the most delicate being at the onset when the food particle is first discovered.  Particles in the compression state start attaching to the food in cyclic order until the food is completely surrounded by 6 particles.  These particles are {\it verified} or {\it unverified}, with the verification process ensuring that the particles are consistently numbered and not in a hybrid state where multiple particles are labeled as the initial point on the spiral or particles have detached and reattached out of sequence.  The verification process ensures that particles are labeled $0$ through $5$ in cyclic order.  In the verification process, particles move from label an unverified label $x$ to a verified label $x*$, $x \in \{0,...,5\},$ to indicate that the particle is correctly labeled.  All subsequent particles to join the spiral are labeled $6$ and do not need to be verified because they can only attach in one place.
%\subsection{Particle States}
%\label{apx:particlestates}

Our goal is to use the compression states to make sure the spiral forms with parent nodes indicating the steps in the spiral.  Verification occurs until a complete and verified circle is formed, at which point the spiral can expand sequentially without an opportunity for error.  We divide the progress of the spiral into stages, as follows: 

%\subsubsection{The Stages of the Algorithm}
%If the food particle exists and remains in place, even after achieving consistency, the algorithm has to go through a few more stages before it is ready to start building the spiral. We show that we can progress forward, but not backward through the stages, and that the time taken to advance to the next stage will be reasonable.We can divide the progress of the algorithm into four stages, as follows:

\begin{itemize}
\item \textbf{Stage 1: Inconsistent}: Arbitrary configuration, whether a particle is in a verified state indicates nothing about the existence of a complete circle.
\item \textbf{Stage 2: Before complete circle}: the configuration is consistent (Definition~\ref{defn:consistency}), but has not yet formed a complete circle.
\item \textbf{Stage 3: complete circle:} the configuration is consistent and has a complete circle.
\item \textbf{Stage 4: Verified complete circle:} the configuration is consistent, has a complete circle and all six positions of the complete circle are filled by verified particles.
\end{itemize}

\noindent We will now define the stages more precisely and explain how the algorithm progresses forward through these stages during a compression phase without ever progressing backwards through stages.

\subsection{Protocols for consistent spiral formation}
\label{apx:spiralprotocols}
%Particles have two main types of states, {\it dispersion} and {\it compression}. While there is only one dispersion state, we again define multiple ``compression'' states --- the {\it unverified states} $0$ to $5$, the {\it verified states} $0^*$ to $5^*$, and the state $6$, which we is neutral. States $0$ to $5$ represent the six particles going counterclockwise in order around the food particle, and state $6$ for all other particles subsequently joining the spiral.For a particle in state $s$, its \emph{base state} $\overline{s}$ is $x \in \{0,1,2,3,4,5,6\}$ if $s$ is the compression state $x$ or $x^*$ (i.e., it is the state with the * removed, if it was present).

A particle in a compression state  defines a {\it parent direction} pointing toward one of its six neighbors in the lattice.
We enumerate the possible directions with numbers $0,1,2,3,4,5$, with addition and subtraction defined modulo $6$. Particles do not have a global sense of direction, but they can  set direction with reference to the particles surrounding it, using a common chirality.
We define directions numerically only to simplify our notation for relative directions; for a direction $d$, direction $d+1$ would be clockwise, and $d-1$ would be counterclockwise.
We call the immediate neighbor of this particle in the direction $d$ the \emph{parent} of the particle.
The intention is for the path of parent pointers to traverse the spiral inwards, terminating at the food particle.

\vskip.1in
 \underbar{\bf The attachment property:}%\label{apx:attachmentproperty}
 \ 
%\subsection{Attachment Property}
%\label{apx:attachmentproperty}
In order to forbid moves that violate our definitions, we define an attachment property that restricts  particles can to join the spiral only when they satisfy an {attachment property}, that we now specify.
%\begin{definition}[Attachment Property]
%\label{defn:attachmentproperty}
For any particle $v$ and  direction $d$ and state~$s$, denote by $v_p$ the neighbor of $v$ in direction $d$, if it exists. The neighbors of $v$ clockwise and counterclockwise of $v_p$ are denoted $v_{cw}$ and $v_{ccw}$ respectively.
We denote by $d_p$, $d_{cw}$, $d_{ccw}$ their respective parent directions if they are in a compression state, and we denote by $s_p$, $s_{cw}$, $s_{ccw}$ their states, and these variables are null otherwise. 
%If these particles don't exist or are not in a compression state, these state and direction values are given the value ``None''.
Particle $v$ satisfies the {\it attachment property} with direction $d$ and state $s$ if $v_p$ exists and $v$ meets the following criteria, enumerating the necessary local conditions, typically based on the state of the parent particle:
\begin{itemize}
\item if $v_p$ is not in the dispersion state;

\item if $v_p$ is a food particle, then $s\in\{0,0^*\}$ or $v_{ccw}$ is a particle in a compression state and $\overline{s}_{ccw}=5$;

\item if $\overline{s_p} = 0$, then $\overline{s} = 1$, $d_p = d+2$, and $v_{cw}$ is a food particle;

\item if $\overline{s_p} \in \{1,2,3,4\}$, then $\overline{s} = \overline{s_p}+1$, $d_p = d+1$, and $v_{cw}$ is a food particle;

\item if $\overline{s_p} = 5$, then $s=6$, $d_p = d$, and $s_{cw} = 0^*$;

\item if $\overline{s_p} = 6$, then $s=6$ and $\overline{s}_{cw} \in \{0,1,2,3,4,5,6\}$, and if $\overline{s}_{cw} = 0$, then $d_p = d+1$ and $d_{cw} \in \{d_p, d_p+1\}$; or if $\overline{s}_{cw} \geq 1$, then $d_p \in \{d,d+1\}$, and $d_{cw} = d_p$.
\end{itemize}
%\end{definition}

%\subsection{Formation of a Complete Circle}
%\label{apx:formationofacompletecircle}

%\subsubsection{Consistency}
\vskip.1in
\underbar{\bf Circle formation:} \ We are now prepared to describe how the first 6 particles join the spiral to form the initial circle around the food.
The verified states $0^*$ to $5^*$ are used to confirm that all six particles around the food particle are part of a single spiral, before allowing additional particles to join the spiral.
However, as the food particle may relocate at any time, a verified state may not always present the most up-to-date information of the current state of the spiral.
Figure~\ref{fig:losing_consistency} gives an example of this happening.

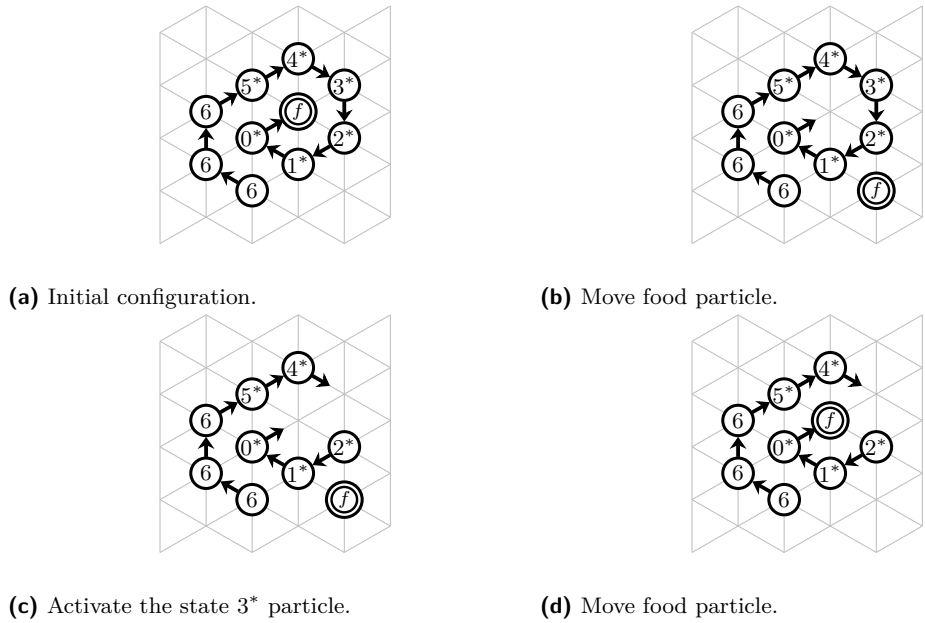
\begin{figure}
\begin{subfigure}[b]{.5\linewidth}
  \begin{center}
  \begin{tikzpicture}[x=0.7cm,y=0.7cm]
  \draw[lightgray] (2.59808,-0.5) -- (4.33013,-1.5);
\draw[lightgray] (2.59808,-0.5) -- (2.59808,-4.5);
\draw[lightgray] (0,-2) -- (2.59808,-0.5);
\draw[lightgray] (0,-2) -- (4.33013,-4.5);
\draw[lightgray] (1.73205,-1) -- (1.73205,-5);
\draw[lightgray] (1.73205,-5) -- (4.33013,-3.5);
\draw[lightgray] (0,-3) -- (4.33013,-0.5);
\draw[lightgray] (0,-3) -- (3.4641,-5);
\draw[lightgray] (4.33013,-0.5) -- (4.33013,-4.5);
\draw[lightgray] (0,-4) -- (4.33013,-1.5);
\draw[lightgray] (0,-4) -- (1.73205,-5);
\draw[lightgray] (3.4641,-1) -- (3.4641,-5);
\draw[lightgray] (3.4641,-5) -- (4.33013,-4.5);
\draw[lightgray] (0,-5) -- (4.33013,-2.5);
\draw[lightgray] (0.866025,-0.5) -- (4.33013,-2.5);
\draw[lightgray] (0.866025,-0.5) -- (0.866025,-4.5);
\draw[lightgray] (0,-1) -- (0.866025,-0.5);
\draw[lightgray] (0,-1) -- (4.33013,-3.5);
\draw[lightgray] (0,-1) -- (0,-5);
\draw[black, line width=0.4mm, fill=white] (0.866025,-2.5) circle (0.288);
\node[align=left] at (0.866025,-2.5) {\footnotesize $6$};
\draw[black,-{Stealth[length=1.6mm,width=2.2mm]},line width=0.5mm] (1.13449,-2.345) -- (1.46358,-2.155);
\draw[black, line width=0.4mm, fill=white] (0.866025,-3.5) circle (0.288);
\node[align=left] at (0.866025,-3.5) {\footnotesize $6$};
\draw[black,-{Stealth[length=1.6mm,width=2.2mm]},line width=0.5mm] (0.866025,-3.19) -- (0.866025,-2.81);
\draw[black, line width=0.4mm, fill=white] (1.73205,-2) circle (0.288);
\node[align=left] at (1.73205,-2) {\footnotesize $5^*$};
\draw[black,-{Stealth[length=1.6mm,width=2.2mm]},line width=0.5mm] (2.00052,-1.845) -- (2.32961,-1.655);
\draw[black, line width=0.4mm, fill=white] (1.73205,-3) circle (0.288);
\node[align=left] at (1.73205,-3) {\footnotesize $0^*$};
\draw[black,-{Stealth[length=1.6mm,width=2.2mm]},line width=0.5mm] (2.00052,-2.845) -- (2.32961,-2.655);
\draw[black, line width=0.4mm, fill=white] (1.73205,-4) circle (0.288);
\node[align=left] at (1.73205,-4) {\footnotesize $6$};
\draw[black,-{Stealth[length=1.6mm,width=2.2mm]},line width=0.5mm] (1.46358,-3.845) -- (1.13449,-3.655);
\draw[black, line width=0.4mm, fill=white] (2.59808,-1.5) circle (0.288);
\node[align=left] at (2.59808,-1.5) {\footnotesize $4^*$};
\draw[black,-{Stealth[length=1.6mm,width=2.2mm]},line width=0.5mm] (2.86654,-1.655) -- (3.19563,-1.845);
\draw[black, line width=0.4mm, fill=white] (2.59808,-2.5) circle (0.336);
\draw[black, line width=0.32mm] (2.59808,-2.5) circle (0.24);
\node[align=left] at (2.59808,-2.5) {\scriptsize $f$};
\draw[black, line width=0.4mm, fill=white] (2.59808,-3.5) circle (0.288);
\node[align=left] at (2.59808,-3.5) {\footnotesize $1^*$};
\draw[black,-{Stealth[length=1.6mm,width=2.2mm]},line width=0.5mm] (2.32961,-3.345) -- (2.00052,-3.155);
\draw[black, line width=0.4mm, fill=white] (3.4641,-2) circle (0.288);
\node[align=left] at (3.4641,-2) {\footnotesize $3^*$};
\draw[black,-{Stealth[length=1.6mm,width=2.2mm]},line width=0.5mm] (3.4641,-2.31) -- (3.4641,-2.69);
\draw[black, line width=0.4mm, fill=white] (3.4641,-3) circle (0.288);
\node[align=left] at (3.4641,-3) {\footnotesize $2^*$};
\draw[black,-{Stealth[length=1.6mm,width=2.2mm]},line width=0.5mm] (3.19563,-3.155) -- (2.86654,-3.345);
  \end{tikzpicture}
  \end{center}
  \caption{Initial configuration.}
\end{subfigure}%
\begin{subfigure}[b]{.5\linewidth}
  \begin{center}
  \begin{tikzpicture}[x=0.7cm,y=0.7cm]
  \draw[lightgray] (2.59808,-0.5) -- (4.33013,-1.5);
\draw[lightgray] (2.59808,-0.5) -- (2.59808,-4.5);
\draw[lightgray] (0,-2) -- (2.59808,-0.5);
\draw[lightgray] (0,-2) -- (4.33013,-4.5);
\draw[lightgray] (1.73205,-1) -- (1.73205,-5);
\draw[lightgray] (1.73205,-5) -- (4.33013,-3.5);
\draw[lightgray] (0,-3) -- (4.33013,-0.5);
\draw[lightgray] (0,-3) -- (3.4641,-5);
\draw[lightgray] (4.33013,-0.5) -- (4.33013,-4.5);
\draw[lightgray] (0,-4) -- (4.33013,-1.5);
\draw[lightgray] (0,-4) -- (1.73205,-5);
\draw[lightgray] (3.4641,-1) -- (3.4641,-5);
\draw[lightgray] (3.4641,-5) -- (4.33013,-4.5);
\draw[lightgray] (0,-5) -- (4.33013,-2.5);
\draw[lightgray] (0.866025,-0.5) -- (4.33013,-2.5);
\draw[lightgray] (0.866025,-0.5) -- (0.866025,-4.5);
\draw[lightgray] (0,-1) -- (0.866025,-0.5);
\draw[lightgray] (0,-1) -- (4.33013,-3.5);
\draw[lightgray] (0,-1) -- (0,-5);
\draw[black, line width=0.4mm, fill=white] (0.866025,-2.5) circle (0.288);
\node[align=left] at (0.866025,-2.5) {\footnotesize $6$};
\draw[black,-{Stealth[length=1.6mm,width=2.2mm]},line width=0.5mm] (1.13449,-2.345) -- (1.46358,-2.155);
\draw[black, line width=0.4mm, fill=white] (0.866025,-3.5) circle (0.288);
\node[align=left] at (0.866025,-3.5) {\footnotesize $6$};
\draw[black,-{Stealth[length=1.6mm,width=2.2mm]},line width=0.5mm] (0.866025,-3.19) -- (0.866025,-2.81);
\draw[black, line width=0.4mm, fill=white] (1.73205,-2) circle (0.288);
\node[align=left] at (1.73205,-2) {\footnotesize $5^*$};
\draw[black,-{Stealth[length=1.6mm,width=2.2mm]},line width=0.5mm] (2.00052,-1.845) -- (2.32961,-1.655);
\draw[black, line width=0.4mm, fill=white] (1.73205,-3) circle (0.288);
\node[align=left] at (1.73205,-3) {\footnotesize $0^*$};
\draw[black,-{Stealth[length=1.6mm,width=2.2mm]},line width=0.5mm] (2.00052,-2.845) -- (2.32961,-2.655);
\draw[black, line width=0.4mm, fill=white] (1.73205,-4) circle (0.288);
\node[align=left] at (1.73205,-4) {\footnotesize $6$};
\draw[black,-{Stealth[length=1.6mm,width=2.2mm]},line width=0.5mm] (1.46358,-3.845) -- (1.13449,-3.655);
\draw[black, line width=0.4mm, fill=white] (2.59808,-1.5) circle (0.288);
\node[align=left] at (2.59808,-1.5) {\footnotesize $4^*$};
\draw[black,-{Stealth[length=1.6mm,width=2.2mm]},line width=0.5mm] (2.86654,-1.655) -- (3.19563,-1.845);
\draw[black, line width=0.4mm, fill=white] (2.59808,-3.5) circle (0.288);
\node[align=left] at (2.59808,-3.5) {\footnotesize $1^*$};
\draw[black,-{Stealth[length=1.6mm,width=2.2mm]},line width=0.5mm] (2.32961,-3.345) -- (2.00052,-3.155);
\draw[black, line width=0.4mm, fill=white] (3.4641,-2) circle (0.288);
\node[align=left] at (3.4641,-2) {\footnotesize $3^*$};
\draw[black,-{Stealth[length=1.6mm,width=2.2mm]},line width=0.5mm] (3.4641,-2.31) -- (3.4641,-2.69);
\draw[black, line width=0.4mm, fill=white] (3.4641,-3) circle (0.288);
\node[align=left] at (3.4641,-3) {\footnotesize $2^*$};
\draw[black,-{Stealth[length=1.6mm,width=2.2mm]},line width=0.5mm] (3.19563,-3.155) -- (2.86654,-3.345);
\draw[black, line width=0.4mm, fill=white] (3.4641,-4) circle (0.336);
\draw[black, line width=0.32mm] (3.4641,-4) circle (0.24);
\node[align=left] at (3.4641,-4) {\scriptsize $f$};
  \end{tikzpicture}
  \end{center}
  \caption{Move food particle.}
\end{subfigure}
\begin{subfigure}[b]{.5\linewidth}
  \begin{center}
  \begin{tikzpicture}[x=0.7cm,y=0.7cm]
  \draw[lightgray] (2.59808,-0.5) -- (4.33013,-1.5);
\draw[lightgray] (2.59808,-0.5) -- (2.59808,-4.5);
\draw[lightgray] (0,-2) -- (2.59808,-0.5);
\draw[lightgray] (0,-2) -- (4.33013,-4.5);
\draw[lightgray] (1.73205,-1) -- (1.73205,-5);
\draw[lightgray] (1.73205,-5) -- (4.33013,-3.5);
\draw[lightgray] (0,-3) -- (4.33013,-0.5);
\draw[lightgray] (0,-3) -- (3.4641,-5);
\draw[lightgray] (4.33013,-0.5) -- (4.33013,-4.5);
\draw[lightgray] (0,-4) -- (4.33013,-1.5);
\draw[lightgray] (0,-4) -- (1.73205,-5);
\draw[lightgray] (3.4641,-1) -- (3.4641,-5);
\draw[lightgray] (3.4641,-5) -- (4.33013,-4.5);
\draw[lightgray] (0,-5) -- (4.33013,-2.5);
\draw[lightgray] (0.866025,-0.5) -- (4.33013,-2.5);
\draw[lightgray] (0.866025,-0.5) -- (0.866025,-4.5);
\draw[lightgray] (0,-1) -- (0.866025,-0.5);
\draw[lightgray] (0,-1) -- (4.33013,-3.5);
\draw[lightgray] (0,-1) -- (0,-5);
\draw[black, line width=0.4mm, fill=white] (0.866025,-2.5) circle (0.288);
\node[align=left] at (0.866025,-2.5) {\footnotesize $6$};
\draw[black,-{Stealth[length=1.6mm,width=2.2mm]},line width=0.5mm] (1.13449,-2.345) -- (1.46358,-2.155);
\draw[black, line width=0.4mm, fill=white] (0.866025,-3.5) circle (0.288);
\node[align=left] at (0.866025,-3.5) {\footnotesize $6$};
\draw[black,-{Stealth[length=1.6mm,width=2.2mm]},line width=0.5mm] (0.866025,-3.19) -- (0.866025,-2.81);
\draw[black, line width=0.4mm, fill=white] (1.73205,-2) circle (0.288);
\node[align=left] at (1.73205,-2) {\footnotesize $5^*$};
\draw[black,-{Stealth[length=1.6mm,width=2.2mm]},line width=0.5mm] (2.00052,-1.845) -- (2.32961,-1.655);
\draw[black, line width=0.4mm, fill=white] (1.73205,-3) circle (0.288);
\node[align=left] at (1.73205,-3) {\footnotesize $0^*$};
\draw[black,-{Stealth[length=1.6mm,width=2.2mm]},line width=0.5mm] (2.00052,-2.845) -- (2.32961,-2.655);
\draw[black, line width=0.4mm, fill=white] (1.73205,-4) circle (0.288);
\node[align=left] at (1.73205,-4) {\footnotesize $6$};
\draw[black,-{Stealth[length=1.6mm,width=2.2mm]},line width=0.5mm] (1.46358,-3.845) -- (1.13449,-3.655);
\draw[black, line width=0.4mm, fill=white] (2.59808,-1.5) circle (0.288);
\node[align=left] at (2.59808,-1.5) {\footnotesize $4^*$};
\draw[black,-{Stealth[length=1.6mm,width=2.2mm]},line width=0.5mm] (2.86654,-1.655) -- (3.19563,-1.845);
\draw[black, line width=0.4mm, fill=white] (2.59808,-3.5) circle (0.288);
\node[align=left] at (2.59808,-3.5) {\footnotesize $1^*$};
\draw[black,-{Stealth[length=1.6mm,width=2.2mm]},line width=0.5mm] (2.32961,-3.345) -- (2.00052,-3.155);
\draw[black, line width=0.4mm, fill=white] (3.4641,-3) circle (0.288);
\node[align=left] at (3.4641,-3) {\footnotesize $2^*$};
\draw[black,-{Stealth[length=1.6mm,width=2.2mm]},line width=0.5mm] (3.19563,-3.155) -- (2.86654,-3.345);
\draw[black, line width=0.4mm, fill=white] (3.4641,-4) circle (0.336);
\draw[black, line width=0.32mm] (3.4641,-4) circle (0.24);
\node[align=left] at (3.4641,-4) {\scriptsize $f$};
  \end{tikzpicture}
  \end{center}
  \caption{Activate the state $3^*$ particle.}
\end{subfigure}%
\begin{subfigure}[b]{.5\linewidth}
  \begin{center}
  \begin{tikzpicture}[x=0.7cm,y=0.7cm]
  \draw[lightgray] (2.59808,-0.5) -- (4.33013,-1.5);
\draw[lightgray] (2.59808,-0.5) -- (2.59808,-4.5);
\draw[lightgray] (0,-2) -- (2.59808,-0.5);
\draw[lightgray] (0,-2) -- (4.33013,-4.5);
\draw[lightgray] (1.73205,-1) -- (1.73205,-5);
\draw[lightgray] (1.73205,-5) -- (4.33013,-3.5);
\draw[lightgray] (0,-3) -- (4.33013,-0.5);
\draw[lightgray] (0,-3) -- (3.4641,-5);
\draw[lightgray] (4.33013,-0.5) -- (4.33013,-4.5);
\draw[lightgray] (0,-4) -- (4.33013,-1.5);
\draw[lightgray] (0,-4) -- (1.73205,-5);
\draw[lightgray] (3.4641,-1) -- (3.4641,-5);
\draw[lightgray] (3.4641,-5) -- (4.33013,-4.5);
\draw[lightgray] (0,-5) -- (4.33013,-2.5);
\draw[lightgray] (0.866025,-0.5) -- (4.33013,-2.5);
\draw[lightgray] (0.866025,-0.5) -- (0.866025,-4.5);
\draw[lightgray] (0,-1) -- (0.866025,-0.5);
\draw[lightgray] (0,-1) -- (4.33013,-3.5);
\draw[lightgray] (0,-1) -- (0,-5);
\draw[black, line width=0.4mm, fill=white] (0.866025,-2.5) circle (0.288);
\node[align=left] at (0.866025,-2.5) {\footnotesize $6$};
\draw[black,-{Stealth[length=1.6mm,width=2.2mm]},line width=0.5mm] (1.13449,-2.345) -- (1.46358,-2.155);
\draw[black, line width=0.4mm, fill=white] (0.866025,-3.5) circle (0.288);
\node[align=left] at (0.866025,-3.5) {\footnotesize $6$};
\draw[black,-{Stealth[length=1.6mm,width=2.2mm]},line width=0.5mm] (0.866025,-3.19) -- (0.866025,-2.81);
\draw[black, line width=0.4mm, fill=white] (1.73205,-2) circle (0.288);
\node[align=left] at (1.73205,-2) {\footnotesize $5^*$};
\draw[black,-{Stealth[length=1.6mm,width=2.2mm]},line width=0.5mm] (2.00052,-1.845) -- (2.32961,-1.655);
\draw[black, line width=0.4mm, fill=white] (1.73205,-3) circle (0.288);
\node[align=left] at (1.73205,-3) {\footnotesize $0^*$};
\draw[black,-{Stealth[length=1.6mm,width=2.2mm]},line width=0.5mm] (2.00052,-2.845) -- (2.32961,-2.655);
\draw[black, line width=0.4mm, fill=white] (1.73205,-4) circle (0.288);
\node[align=left] at (1.73205,-4) {\footnotesize $6$};
\draw[black,-{Stealth[length=1.6mm,width=2.2mm]},line width=0.5mm] (1.46358,-3.845) -- (1.13449,-3.655);
\draw[black, line width=0.4mm, fill=white] (2.59808,-1.5) circle (0.288);
\node[align=left] at (2.59808,-1.5) {\footnotesize $4^*$};
\draw[black,-{Stealth[length=1.6mm,width=2.2mm]},line width=0.5mm] (2.86654,-1.655) -- (3.19563,-1.845);
\draw[black, line width=0.4mm, fill=white] (2.59808,-2.5) circle (0.336);
\draw[black, line width=0.32mm] (2.59808,-2.5) circle (0.24);
\node[align=left] at (2.59808,-2.5) {\scriptsize $f$};
\draw[black, line width=0.4mm, fill=white] (2.59808,-3.5) circle (0.288);
\node[align=left] at (2.59808,-3.5) {\footnotesize $1^*$};
\draw[black,-{Stealth[length=1.6mm,width=2.2mm]},line width=0.5mm] (2.32961,-3.345) -- (2.00052,-3.155);
\draw[black, line width=0.4mm, fill=white] (3.4641,-3) circle (0.288);
\node[align=left] at (3.4641,-3) {\footnotesize $2^*$};
\draw[black,-{Stealth[length=1.6mm,width=2.2mm]},line width=0.5mm] (3.19563,-3.155) -- (2.86654,-3.345);
  \end{tikzpicture}
  \end{center}
  \caption{Move food particle.}
\end{subfigure}
\caption{A sequence of moves causing the verified states to give inaccurate information on the status of six particles surrounding the food. Particles in the dispersion state are not drawn. Note that the food particle can be moved at any time between steps of the algorithm.}
\label{fig:losing_consistency}
\end{figure}

To get around this, we define the notion of \emph{consistency}. Informally, the consistency of a configuration tells us if verified states are giving accurate information on the current state of the spiral. While a configuration may become inconsistent after a food particle relocates, if the food particle remains in place from then on, we will show that the configuration will eventually become consistent, and will remain consistent from then on.
We first make the following definitions:

\begin{definition}[Circles]
A \emph{circle} is defined by its center and one neighboring site, designated position $0$. 
Going counterclockwise around the center starting at 0 defines the positions $1,2,3,4,5$ associated with the circle. 
\end{definition}
%
%Note that on a toroidal lattice region with $N$ vertices, there are exactly $6N$  circles.

\noindent Let $\sigma$ be any configuration of particles with states. For any circle, we say position $x \in \{0,1,\dots,5\}$ is {\it correctly filled} in the circle if position $x$ contains a particle in compression state $x$ or $x^*$, and its parent direction points towards the center, if $x=0$, and pointing towards position $x-1$, if $x > 0$. 
%\end{definition}

\begin{definition}[Inconsistency Value]
 Fix a configuration $\sigma$. If a circle in $\sigma$ has the food particle in the center, then the circle is assigned an {\it inconsistency value} $x+1$, where $x$ is the largest position in $\{0,1,2,3,4,5\}$ that contains a correctly filled particle in compression state $x^*,$ as long as at least one of $x+1,\dots,5$ is not correctly filled. If such a position does not exist, then the circle is assigned the value $0$. 
If the circle does not have the food particle in the center, the {\it inconsistency value} is the number of positions $y$ that are correctly filled with a particle in state $y^*$.
\end{definition}
\noindent The {\it inconsistency value of a configuration} is the sum of the inconsistency values over all possible circles on the lattice.
%\end{definition}

\begin{definition}[Consistency of a Configuration]
\label{defn:consistency}
We say a configuration is {\it consistent} if and only if its inconsistency value is $0$.
\end{definition}

In other words, a configuration is consistent if the only particles in verified states are adjacent to the food particle, 
and for any position adjacent to the food with a particle in state $x^*$, for $x \in \{0,1,2,3,4,5\}$, there is a circle centered at the food with positions $x, x+1, \dots 5$ all correctly filled.

\begin{definition}[Complete Circle]
We say a circle is a \emph{complete circle} if it has the food particle in the center and all six of its positions are correctly filled.
\end{definition}

There can only be one complete circle around a food particle at a time. While the food particle remains fixed, all six adjacent particles on the circle are stable, so they will never switch to the dispersion state.
Note also that in a consistent configuration, a particle in compression state $0^*$ only exists if it is part of a complete circle.

%\subsubsection{Achieving Consistency}
\vskip.1in
\underbar{\bf Achieving consistency:} \ 
%\label{section:achievingconsistency}
We now show that from any starting configuration, assuming the food is constant, we will achieve consistency within an expected polynomial number of steps and the configuration will remain consistent until the food changes.

\begin{lemma}
\label{lem:inconsistencynonincreasing}
The activation of a particle can never increase the inconsistency value of any circle.
\end{lemma}

\begin{proof}
Fix some circle and suppose it has some inconsistency value $z$ in $\{0,1,2,3,4,5\}$.
If the circle has no food particle in the center, its inconsistency value can only increase if it gains a new correctly-filled particle in a verified state. This is however not possible without a food particle in the center, as this new particle would be unstable and hence would instead switch to the dispersion state on activation.

On the other hand, suppose that there is a food particle in the center of the circle. The only way the inconsistency value of the circle can increase is for the new configuration to be correctly filled with a verified particle in position $y > z-1$, but not all of the positions $y,y+1,\dots,5$ are correctly filled. Note that this rules out $y=5$. This gives us two possibilities. Either position $y$ was not previously correctly filled with a verified particle, or it was correctly filled with a verified particle but all of $y,y+1,\dots,5$ were correctly filled but now aren't.

For the first case, the last iteration must have changed the correctly filled particle in some position $y$ from state $y$ to $y^*$. As $y<5$, this can only happen when position $y+1$ contained a correctly filled particle in state $(y+1)^*$ in the previous configuration, which is a contradiction as that would imply the inconsistency value of the circle was at least $y+2$.

For the second case, this must mean some position $x > y$ was previously correctly filled but no longer is. However, this is not possible as with the food particle in the center, the particle in position $x$ was stable in the previous configuration.
\end{proof}

\begin{lemma}
\label{lem:inconsistencydecreases}
If the inconsistency value of a configuration is greater than $0$, the probability that the inconsistency value decreases by at least $1$ on the next iteration is at least ${1}/{n}$.
\end{lemma}

\begin{proof}
Consider any circle with some inconsistency value $z > 0$. It must have some position correctly filled by a particle in a verified state.
If the circle has no food particle in the center, activating this particle will switch it to the dispersion state, reducing the circle's value by $1$.

If the circle has the food particle in the center, the position $z-1$ must contain a correctly filled particle in the verified state $(z-1)^*$, but not all of $z,z+1,\dots,5$ are correctly filled. Thus $z-1$ cannot be $5$, and as $z-1$ is the largest such position, position $z$ will not contain a correctly filled particle in a verified state.
Thus, when the particle in position $z-1$ is activated, it will switch to either the unverified state $z-1$ or the dispersion state, reducing the inconsistency value of the circle by at least $1$. As no circle will increase in inconsistency value by Lemma~\ref{lem:inconsistencynonincreasing}, the inconsistency value of the configuration must strictly decrease.
\end{proof}

Out of the $6N$ possible circles we can define on the lattice, we note that as a configuration only has $n < N$ particles, at most $6\times 6\times n = 36n$ of these circles can contain a particle. The inconsistency value of any circle is at most $5$, so the inconsistency value of any configuration is at most $180n$.
Thus by Lemmas~\ref{lem:inconsistencynonincreasing} and \ref{lem:inconsistencydecreases}, after $O(n^2)$ expected steps, we will reach a consistent configuration, which remains consistent from then on.

%%%%%%%%%%%%%%%%%%%
The following lemma shows that we cannot progress backward through stages. Furthermore, we show in Appendix~\ref{apx:correctnessofspiralalgorithm} that the algorithm progresses through the stages efficiently.

\begin{lemma}
\label{lemma:stageregress}
%Assuming the food particle remains in place,
Once we reach a stage $i \in \{2,3,4\}$, we cannot regress to any earlier stage.
%Once we reach stage $i+1$, we cannot regress to stage $i$.
\end{lemma}

\begin{proof}
We cannot regress to Stage 1 from any later stage, as by Lemma~\ref{lem:inconsistencynonincreasing}, a configuration once consistent remains consistent from then on.
We cannot return to Stage 2 after reaching Stage 3 or 4, as all particles in the six positions defining a complete circle will always be stable, and so can no longer change state or direction beyond switching between their verified and unverified states.
After reaching Stage 4, none of these six particles can return to an unverified state, thus we remain in Stage 4 from then on.
\end{proof}

%\subsection{Clearing Out Residual Particles}
%\label{apx:spiralclearingoutresidual}
\vskip.1in
\underbar{\bf Removing residual particles:} \ 
After achieving consistency, the next important step is the removal of residual particles (Definition~\ref{dfn:residualparticles}), similar to what was necessary for Adaptive $\alpha$-Compression.
Our objective  is to prove the following lemma:

\begin{lemma}
\label{lemma:removeresidual}
Suppose that either of these conditions currently hold, and continue to hold.
\begin{enumerate}
% STAGE 2 or NO FOOD
\item There is no particle in compression state $0^*$ in the current configuration.%, and a particle of $0^*$ never forms.
\item The configuration is in Stage 4 (verified complete circle).
\end{enumerate}
Then we will reach a state with no residual particles
within $\frac{n^2}{1-2\rho} = O(n^2)$ steps.
\end{lemma}

\begin{definition}[Auxiliary Graph]
\label{dfn:auxiliarygraph}
For a given configuration, we define an auxiliary directed graph $H$ with particles as vertices. For each stable particle, we define an edge to its parent. For each attachable particle, we define an edge to each of its potential parents.
\end{definition}

It is important to note that edges of the graph $H$ do not always correspond to the parent directions of particles in the configuration. In particular, unstable particles may not have an outgoing edge in $H$ despite being in a compression state, and attachable particles will have an outgoing edge in $H$ despite not having a parent direction. The following Lemma gives us a crucial property of this auxiliary graph $H$:

\begin{lemma}
\label{lemma:indegree}
Other than the food particle, no vertex of $H$ has in-degree greater than $1$.
\end{lemma}

\begin{proof}
Fix any non-food vertex $u$. If there is an edge into $u$, $u$ must be in a compression state and thus has a parent in some direction $d$. Thus for any vertex $v$ with an edge into $u$, the direction of $u$ from $v$ must be either $d$ or $d-1$, limiting the possible locations of $v$ to two.
Let the particles at these locations be $v_1$ and $v_2$, where $u$ is at directions $d$ from $v_1$, and direction $d-1$ from $v_2$.
If there is an edge from $v_1$ to $u$ in $H$, then $v_1$ satisfies the attachment property in direction $d$. As $v_2$ is the particle counterclockwise of $u$ from the perspective of $v_1$, $v_2$ must either be a food particle or be in a compression state with direction either $d$ or $d+1$, meaning there is no edge from $v_2$ to $u$ in $H$.
Thus, $v_1$ and $v_2$ cannot simultaneously have edges towards $u$.
\end{proof}

This property of auxiliary graphs gives us the following lemma:

\begin{lemma}
\label{lemma:unstablevsattachable}
Suppose that we are in either of these two situations:
\begin{enumerate}
% STAGE 2 or NO FOOD
\item There is no particle in compression state $0^*$ in the current configuration.%, and a particle of $0^*$ never forms.
\item The configuration is in Stage 4 (verified complete circle).
\end{enumerate}
Let $S_a$ denote the set of attachable particles in the dispersion state that can become a residual particle by changing to some compression state in some direction, and let $S_u$ denote the set of unstable residual particles that are not attachable.
Then $|S_a| \leq |S_u|+1$.
%Let $k_a$ be the number of attachable particles (that can become a residual particle by changing to some compression state in some direction).
%Let $k_u$ be the number of unstable residual particles.
\end{lemma}

\begin{proof}
Consider an attachable particle $v$ in the dispersion state. If $v$ can become a residual particle in some compression state $s$ and direction $d$, it has an outward edge in the direction $d$ on the auxiliary graph $H$. We follow the a path out of $v$ over $H$, until we reach a particle $u$ with no outgoing edge.
Suppose $u$ is not the food particle. The predecessor to $u$ in this path was stable or attachable, so $u$ must be in some compression state. As $u$ has no outgoing edge, it must be unstable and not attachable.
Hence $u$ is not a spiral particle.
As $v$ would be a residual particle in compression state $s$ and direction $d$, if $v$ is adjacent to food, we must have $s=6$. If $v$ is not adjacent to food, it cannot be attachable unless $s=6$.

If $u$ is not state $6$, there must have been a particle earlier in the path which is attachable with a parent of state $5^*$. This indicates that there is a particle of state $0^*$ in the configuration ,and so the configuration must be in Stage 4 by the preconditions of this lemma. This means $u$ cannot be adjacent to food as all neighbors of the food particle are stable.
Hence, we can conclude that $u$ is also a residual particle.

As no vertex of $H$ has in-degree greater than 1, this gives an injection from $S_a$ to $S_u \cup \{v_f\}$, where $v_f$ is the food particle, which gives us the required result.
\end{proof}

\begin{lemma}
\label{lemma:newresidualparticle}
Suppose that we are in either of these two situations:
\begin{enumerate}
% STAGE 2 or NO FOOD
\item There is no particle in compression state $0^*$ in the current configuration.%, and a particle of $0^*$ never forms.
\item The configuration is in Stage 4 (verified complete circle).
\end{enumerate}
A new residual particle can only come from an attachable particle being activated.
\end{lemma}

\begin{proof}
Suppose there exists a new residual particle $v$ that was not an attachable particle in the previous step. Thus, in the previous step, $v$ is a stable particle in a compression state. As $v$ was not residual, it was either a spiral particle, or is a particle in a state $s \leq 5$ adjacent to food.

If $v$ was in a state $s \leq 5$ and was adjacent to food, it cannot switch to state $6$ without being an attachable particle, and would remain adjacent to food as it cannot move. Hence it would not become a residual particle on the current step, which is a contradiction.

If $v$ was a spiral particle in the previous step, there must be a particle in compression state $0^*$, so the configuration must be in Stage 4. In Stage 4, $v$ could not have become a non-spiral particle in this step as every particle before it in the spiral is stable and verifiable, and hence could not have changed states.
\end{proof}

\begin{lemma}
\label{lemma:atleastoneunstableresidual}
Suppose that we are in either of these two situations:
\begin{enumerate}
% STAGE 2 or NO FOOD
\item There is no particle in compression state $0^*$ in the current configuration.%, and a particle of $0^*$ never forms.
\item The configuration is in Stage 4 (verified complete circle).
\end{enumerate}
Then if there is a residual particle, then at least one of them is unstable.
\end{lemma}

\begin{proof}
Consider any residual particle.
If it is unstable, we are done. Particles with base states $0$ to $5$ cannot be stable unless adjacent to food, so a stable residual particle must be in state $6$. We start from this particle, and follow its edge out on the auxiliary graph $H$. If we continue following this path, there are multiple possibilities. The path could end on a particle of state $6$, a particle of base state $5$ or less, the food particle, or loop back to the particle we started with.

In the first case, the path ends on a particle in state $6$. This implies this last particle is unstable. This particle will also be a residual particle.

In the second case, the path ends on a particle not in state $6$ (either states $0$-$5$, $0^*$-$5^*$, or the food particle). Let $v$ be the last state $6$ particle in this path. If $v$ is unstable, we are done as $v$ is also a residual particle.
So we may assume $v$ is stable. For it to be stable, the particle clockwise of its parent direction must be in state $0^*$. By the preconditions of this Lemma, our configuration is thus in Stage 4, and as the parent of $v$ must be in state $5^*$, this path must end on the food particle.

Thus, consider the sequence of particles starting from the particle in state $0^*$ in the configuration, following the parent directions in reverse order. Each particle has in-degree at most $1$, so this sequence is well-defined. 
Denote by $u_1$ and $u_2$ the first non-spiral particle in this sequence and last spiral particle in the sequence respectively (both of these will exist, and $u_2$ is the parent of $u_1$). Denote by $u'$ the position where the next particle in the spiral after $u_2$ should have been.
We note that necessarily, $u_1$ is in state $6$ and $u_2$ is in states $5^*$ or~$6$. Let $d_1$, $d_2$ denote the parent directions of $u_1$, $u_2$ respectively.
If $u_1$ is unstable, we are done. We thus assume $u_1$ to be stable, meaning that either $d_1 = d_2$ or $d_1 = d_2-1$. As $u_1$ is not in the next position of the spiral after $u_2$, we must have $d_2 = d_1$, and the direction of $u_2$ from position $u'$ is $d_2-1$. Hence $u'$ is the position clockwise of $u_2$ from the perspective of $u_1$, and as $u_1$ is stable, $u'$ must be occupied by a particle in a compression state.
As we are in Stage 4, $u'$ cannot be in state $0^*$ or $u_1$ would be a spiral particle. If $u'$ is in state $0$, it cannot be adjacent to the food particle, and so is an unstable residual particle, which would mean we are done.
If $u'$ is in any other state, as $u_1$ is stable, $u'$ must have parent direction $d_2$. This is illustrated in Figure~\ref{fig:first_nonspiral}. This would make its parent a particle in the spiral other than $u_2$ (the sites marked by crosses in Figure~\ref{fig:first_nonspiral} must contain particles in the spiral), which necessarily implies that it is unstable as by Lemma~\ref{lemma:indegree}, two stable particles cannot have the same parent particle.

The third and final case is when the path forms a cycle. All particles in this path will be residual particles as they are not spiral particles, so if there are no unstable residual particles, these particles must all be stable and in state $6$.
We start from an arbitrary particle in the cycle and follow the sequence of parent particles one round around the cycle.
For each particle in this sequence, we consider the particle clockwise of its current direction (represented by $v_{cw}$ in the definition of the attachment property). After removing duplicates, these ``clockwise particles'' gives us another sequence $\{p0,...,p_{k-1}\}$ of particles.

Our claim is that the parent of each particle $p_i$ is the next particle $p_{i+1 \text{mod} k}$ in this sequence, modulo $k$. In other words, following the parent directions gives us another cycle.
To prove this claim, we first eliminate the possibility of any of these clockwise particles being of state $0^*$. We first note that at this stage, with one food particle, there can only be one particle of state $0^*$, which will be part of a complete circle of particles of states $0^*$ to $5^*$ going counterclockwise around the food particle. As a clockwise particle being $0^*$ forces $d_p = d+1$ for that particle, any path of particles where $0^*$ is a clockwise particle will inevitably wrap around the $0^*$ particle, which leads to the particle of state $5^*$ of the circle, implying that the path ends in the food particle, rather than forming a cycle.
%On the other hand if there is no food particle, there will be no particle in state $0^*$.

Thus, as every particle in the cycle is stable, its clockwise particle must have the same direction as its parent.
Each particle $p_i$ (with direction $d_{p_i}$) is the clockwise particle of some particle $w_0$ (with direction $d_{w_0}$) in the original cycle. Let $w_1$ and $w_2$ (with directions $d_{w_1}$ and $d_{w_2}$ respectively) be the next two particles following the parent directions in the original cycle. 
As $w_0$ is stable, we have $d_{w_1} \in \{d_{w_0}, d_{w_0}+1\}$. If $d_{w_1} = d_{w_0}$, the clockwise particle of $w_1$ will be $p_{i+1}$ and the parent of $p_i$. If $d_{w_1} = d_{w_0}+1$, the clockwise particle of $w_1$ will also be $p_i$, which implies its parent $w_2$ must have direction $d_{w_2} = d_{w_1}$, which like the first case, would imply that the clockwise particle of $w_2$ will be $p_{i+1}$ and the parent of $p_i$. These two cases are illustrated in Figures~\ref{fig:cycle_in_cycle_c1} and~\ref{fig:cycle_in_cycle_c2} respectively.

%As illustrated in Figure~\ref{fig:cycle_in_cycle}, as $d_p$ can only either be equal $d$ or $d+1$ for each step of the cycle, the parent of each particle $p_i$ will be the next particle $p_{(i+1) \text{mod} k}$.

\begin{figure}
\begin{minipage}[b]{.6\linewidth}
\begin{center}
\begin{subfigure}[b]{0.8\linewidth}
  \begin{center}
  \begin{tikzpicture}[x=0.6cm,y=0.6cm]
  \draw[lightgray] (0,-2) -- (2.59808,-0.5);
\draw[lightgray] (0,-2) -- (2.59808,-3.5);
\draw[lightgray] (0.866025,-0.5) -- (5.19615,-3);
\draw[lightgray] (0.866025,-0.5) -- (0.866025,-3.5);
\draw[lightgray] (0.866025,-3.5) -- (5.19615,-1);
\draw[lightgray] (2.59808,-0.5) -- (5.19615,-2);
\draw[lightgray] (2.59808,-0.5) -- (2.59808,-3.5);
\draw[lightgray] (2.59808,-3.5) -- (5.19615,-2);
\draw[lightgray] (4.33013,-0.5) -- (5.19615,-1);
\draw[lightgray] (4.33013,-0.5) -- (4.33013,-3.5);
\draw[lightgray] (4.33013,-3.5) -- (5.19615,-3);
\draw[lightgray] (0,-1) -- (0.866025,-0.5);
\draw[lightgray] (0,-1) -- (4.33013,-3.5);
\draw[lightgray] (0,-1) -- (0,-3);
\draw[lightgray] (1.73205,-1) -- (1.73205,-3);
\draw[lightgray] (5.19615,-1) -- (5.19615,-3);
\draw[lightgray] (3.4641,-1) -- (3.4641,-3);
\draw[lightgray] (0,-3) -- (4.33013,-0.5);
\draw[lightgray] (0,-3) -- (0.866025,-3.5);
\draw[black, line width=0.4mm, fill=white] (1.73205,-2) circle (0.288);
\draw[black,-{Stealth[length=1.6mm,width=2.2mm]},line width=0.5mm] (2.00052,-1.845) -- (2.32961,-1.655);
\node[align=left] at (1.18645,-1.6) {\footnotesize $u_1$};
\draw[black, line width=0.4mm, fill=white] (2.59808,-1.5) circle (0.288);
\draw[black,-{Stealth[length=1.6mm,width=2.2mm]},line width=0.5mm] (2.86654,-1.345) -- (3.19563,-1.155);
\node[align=left] at (2.05248,-1.1) {\footnotesize $u_2$};
\draw[black, line width=0.4mm, fill=white] (2.59808,-2.5) circle (0.288);
\draw[black,-{Stealth[length=1.6mm,width=2.2mm]},line width=0.5mm] (2.86654,-2.345) -- (3.19563,-2.155);
\node[align=left] at (3.16099,-2.8) {\footnotesize $u'$};
\draw[line width=0.4mm] (3.71866,-0.745442) -- (3.20954,-1.25456);
\draw[line width=0.4mm] (3.20954,-0.745442) -- (3.71866,-1.25456);
\draw[line width=0.4mm] (3.71866,-1.74544) -- (3.20954,-2.25456);
\draw[line width=0.4mm] (3.20954,-1.74544) -- (3.71866,-2.25456);
  \end{tikzpicture}
  \end{center}
  \caption{Illustration of the second case of the proof.}
  \label{fig:first_nonspiral}
\end{subfigure}\\~\\
\begin{subfigure}[b]{.45\linewidth}
  \begin{center}
  \begin{tikzpicture}[x=0.6cm,y=0.6cm]
  \draw[lightgray] (0,-2) -- (2.59808,-0.5);
\draw[lightgray] (0,-2) -- (3.4641,-4);
\draw[lightgray] (0.866025,-0.5) -- (5.19615,-3);
\draw[lightgray] (0.866025,-0.5) -- (0.866025,-3.5);
\draw[lightgray] (0.866025,-3.5) -- (5.19615,-1);
\draw[lightgray] (2.59808,-0.5) -- (5.19615,-2);
\draw[lightgray] (2.59808,-0.5) -- (2.59808,-3.5);
\draw[lightgray] (4.33013,-0.5) -- (5.19615,-1);
\draw[lightgray] (4.33013,-0.5) -- (4.33013,-3.5);
\draw[lightgray] (0,-1) -- (0.866025,-0.5);
\draw[lightgray] (0,-1) -- (4.33013,-3.5);
\draw[lightgray] (0,-1) -- (0,-3);
\draw[lightgray] (1.73205,-4) -- (5.19615,-2);
\draw[lightgray] (1.73205,-1) -- (1.73205,-4);
\draw[lightgray] (5.19615,-1) -- (5.19615,-3);
\draw[lightgray] (3.4641,-1) -- (3.4641,-4);
\draw[lightgray] (3.4641,-4) -- (5.19615,-3);
\draw[lightgray] (0,-3) -- (4.33013,-0.5);
\draw[lightgray] (0,-3) -- (1.73205,-4);
\draw[black, line width=0.4mm, fill=white] (1.73205,-2) circle (0.288);
\draw[black,-{Stealth[length=1.6mm,width=2.2mm]},line width=0.5mm] (2.00052,-1.845) -- (2.32961,-1.655);
\node[align=left] at (1.12583,-1.7) {\footnotesize $w_0$};
\draw[black, line width=0.4mm, fill=white] (2.59808,-1.5) circle (0.288);
\draw[black,-{Stealth[length=1.6mm,width=2.2mm]},line width=0.5mm] (2.86654,-1.345) -- (3.19563,-1.155);
\node[align=left] at (1.99186,-1.2) {\footnotesize $w_1$};
\draw[black, line width=0.4mm, fill=white] (2.59808,-2.5) circle (0.288);
\draw[black,-{Stealth[length=1.6mm,width=2.2mm]},line width=0.5mm] (2.86654,-2.345) -- (3.19563,-2.155);
\node[align=left] at (2.03516,-2.8) {\footnotesize $p_i$};
\draw[black, line width=0.4mm, fill=white] (3.4641,-2) circle (0.288);
\node[align=left] at (4.20022,-2.4) {\footnotesize $p_{i+1}$};
  \end{tikzpicture}
  \end{center}
  \caption{Third case of the proof, when $d_{w_1} = d_{w_0}$.}
  \label{fig:cycle_in_cycle_c1}
\end{subfigure}%
~
\begin{subfigure}[b]{.45\linewidth}
  \begin{center}
  \begin{tikzpicture}[x=0.6cm,y=0.6cm]
  \draw[lightgray] (0,-2) -- (2.59808,-0.5);
\draw[lightgray] (0,-2) -- (3.4641,-4);
\draw[lightgray] (0.866025,-0.5) -- (5.19615,-3);
\draw[lightgray] (0.866025,-0.5) -- (0.866025,-3.5);
\draw[lightgray] (0.866025,-3.5) -- (5.19615,-1);
\draw[lightgray] (2.59808,-0.5) -- (5.19615,-2);
\draw[lightgray] (2.59808,-0.5) -- (2.59808,-3.5);
\draw[lightgray] (4.33013,-0.5) -- (5.19615,-1);
\draw[lightgray] (4.33013,-0.5) -- (4.33013,-3.5);
\draw[lightgray] (0,-1) -- (0.866025,-0.5);
\draw[lightgray] (0,-1) -- (4.33013,-3.5);
\draw[lightgray] (0,-1) -- (0,-3);
\draw[lightgray] (1.73205,-4) -- (5.19615,-2);
\draw[lightgray] (1.73205,-1) -- (1.73205,-4);
\draw[lightgray] (5.19615,-1) -- (5.19615,-3);
\draw[lightgray] (3.4641,-1) -- (3.4641,-4);
\draw[lightgray] (3.4641,-4) -- (5.19615,-3);
\draw[lightgray] (0,-3) -- (4.33013,-0.5);
\draw[lightgray] (0,-3) -- (1.73205,-4);
\draw[black, line width=0.4mm, fill=white] (1.73205,-2) circle (0.288);
\draw[black,-{Stealth[length=1.6mm,width=2.2mm]},line width=0.5mm] (2.00052,-1.845) -- (2.32961,-1.655);
\node[align=left] at (1.12583,-1.7) {\footnotesize $w_0$};
\draw[black, line width=0.4mm, fill=white] (2.59808,-1.5) circle (0.288);
\draw[black,-{Stealth[length=1.6mm,width=2.2mm]},line width=0.5mm] (2.86654,-1.655) -- (3.19563,-1.845);
\node[align=left] at (3.20429,-1.2) {\footnotesize $w_1$};
\draw[black, line width=0.4mm, fill=white] (2.59808,-2.5) circle (0.288);
\draw[black,-{Stealth[length=1.6mm,width=2.2mm]},line width=0.5mm] (2.86654,-2.655) -- (3.19563,-2.845);
\node[align=left] at (2.03516,-2.8) {\footnotesize $p_i$};
\draw[black, line width=0.4mm, fill=white] (3.4641,-2) circle (0.288);
\draw[black,-{Stealth[length=1.6mm,width=2.2mm]},line width=0.5mm] (3.73257,-2.155) -- (4.06166,-2.345);
\node[align=left] at (4.07032,-1.7) {\footnotesize $w_2$};
\draw[black, line width=0.4mm, fill=white] (3.4641,-3) circle (0.288);
\node[align=left] at (4.20022,-3.4) {\footnotesize $p_{i+1}$};
  \end{tikzpicture}
  \end{center}
  \caption{Third case of the proof, when $d_{w_1} = d_{w_0}+1$.}
  \label{fig:cycle_in_cycle_c2}
\end{subfigure}
\end{center}
\end{minipage}%
~
%\caption{Figures used for the proof of Lemma~\ref{lemma:atleastoneunstableresidual}.}
\begin{subfigure}[b]{.35\linewidth}
  \begin{center}
  \begin{tikzpicture}[x=0.6cm,y=0.6cm]
  \draw[lightgray] (2.59808,-0.5) -- (6.9282,-3);
\draw[lightgray] (2.59808,-0.5) -- (2.59808,-7.5);
\draw[lightgray] (0,-2) -- (2.59808,-0.5);
\draw[lightgray] (0,-2) -- (6.9282,-6);
\draw[lightgray] (6.06218,-7.5) -- (6.9282,-7);
\draw[lightgray] (0,-7) -- (6.9282,-3);
\draw[lightgray] (0,-7) -- (0.866025,-7.5);
\draw[lightgray] (1.73205,-1) -- (1.73205,-7);
\draw[lightgray] (2.59808,-7.5) -- (6.9282,-5);
\draw[lightgray] (0,-3) -- (4.33013,-0.5);
\draw[lightgray] (0,-3) -- (6.9282,-7);
\draw[lightgray] (6.9282,-1) -- (6.9282,-7);
\draw[lightgray] (0,-4) -- (6.06218,-0.5);
\draw[lightgray] (0,-4) -- (6.06218,-7.5);
\draw[lightgray] (4.33013,-0.5) -- (6.9282,-2);
\draw[lightgray] (4.33013,-0.5) -- (4.33013,-7.5);
\draw[lightgray] (3.4641,-1) -- (3.4641,-7);
\draw[lightgray] (0,-5) -- (6.9282,-1);
\draw[lightgray] (0,-5) -- (4.33013,-7.5);
\draw[lightgray] (0.866025,-0.5) -- (6.9282,-4);
\draw[lightgray] (0.866025,-0.5) -- (0.866025,-7.5);
\draw[lightgray] (0,-1) -- (0.866025,-0.5);
\draw[lightgray] (0,-1) -- (6.9282,-5);
\draw[lightgray] (0,-1) -- (0,-7);
\draw[lightgray] (6.06218,-0.5) -- (6.9282,-1);
\draw[lightgray] (6.06218,-0.5) -- (6.06218,-7.5);
\draw[lightgray] (5.19615,-1) -- (5.19615,-7);
\draw[lightgray] (4.33013,-7.5) -- (6.9282,-6);
\draw[lightgray] (0,-6) -- (6.9282,-2);
\draw[lightgray] (0,-6) -- (2.59808,-7.5);
\draw[lightgray] (0.866025,-7.5) -- (6.9282,-4);
\draw[black, line width=0.4mm, fill=white] (0.866025,-3.5) circle (0.288);
\draw[black,-{Stealth[length=1.6mm,width=2.2mm]},line width=0.5mm] (1.13449,-3.345) -- (1.46358,-3.155);
\node[align=left] at (0.294449,-3.5) {\scriptsize $C_1$};
\draw[black, line width=0.4mm, fill=white] (0.866025,-4.5) circle (0.288);
\draw[black,-{Stealth[length=1.6mm,width=2.2mm]},line width=0.5mm] (0.866025,-4.19) -- (0.866025,-3.81);
\draw[black, line width=0.4mm, fill=white] (0.866025,-5.5) circle (0.288);
\draw[black,-{Stealth[length=1.6mm,width=2.2mm]},line width=0.5mm] (0.866025,-5.19) -- (0.866025,-4.81);
\draw[black, line width=0.4mm, fill=white] (1.73205,-3) circle (0.288);
\draw[black,-{Stealth[length=1.6mm,width=2.2mm]},line width=0.5mm] (2.00052,-2.845) -- (2.32961,-2.655);
\draw[fill=black] (1.73205,-4) circle (0.288);
\draw[black,-{Stealth[length=1.6mm,width=2.2mm]},line width=0.5mm] (2.00052,-3.845) -- (2.32961,-3.655);
\draw[fill=black] (1.73205,-5) circle (0.288);
\draw[black,-{Stealth[length=1.6mm,width=2.2mm]},line width=0.5mm] (1.73205,-4.69) -- (1.73205,-4.31);
\draw[black, line width=0.4mm, fill=white] (1.73205,-6) circle (0.288);
\draw[black,-{Stealth[length=1.6mm,width=2.2mm]},line width=0.5mm] (1.46358,-5.845) -- (1.13449,-5.655);
\draw[black, line width=0.4mm, fill=white] (2.59808,-2.5) circle (0.288);
\draw[black,-{Stealth[length=1.6mm,width=2.2mm]},line width=0.5mm] (2.86654,-2.345) -- (3.19563,-2.155);
\draw[fill=black] (2.59808,-3.5) circle (0.288);
\draw[black,-{Stealth[length=1.6mm,width=2.2mm]},line width=0.5mm] (2.86654,-3.345) -- (3.19563,-3.155);
\node[align=left] at (2.19797,-4.5) {\scriptsize $C_2$};
\draw[fill=black] (2.59808,-5.5) circle (0.288);
\draw[black,-{Stealth[length=1.6mm,width=2.2mm]},line width=0.5mm] (2.32961,-5.345) -- (2.00052,-5.155);
\draw[black, line width=0.4mm, fill=white] (2.59808,-6.5) circle (0.288);
\draw[black,-{Stealth[length=1.6mm,width=2.2mm]},line width=0.5mm] (2.32961,-6.345) -- (2.00052,-6.155);
\draw[black, line width=0.4mm, fill=white] (3.4641,-2) circle (0.288);
\draw[black,-{Stealth[length=1.6mm,width=2.2mm]},line width=0.5mm] (3.73257,-1.845) -- (4.06166,-1.655);
\draw[fill=black] (3.4641,-3) circle (0.288);
\draw[black,-{Stealth[length=1.6mm,width=2.2mm]},line width=0.5mm] (3.73257,-2.845) -- (4.06166,-2.655);
\draw[fill=black] (3.4641,-6) circle (0.288);
\draw[black,-{Stealth[length=1.6mm,width=2.2mm]},line width=0.5mm] (3.19563,-5.845) -- (2.86654,-5.655);
\draw[black, line width=0.4mm, fill=white] (3.4641,-7) circle (0.288);
\draw[black,-{Stealth[length=1.6mm,width=2.2mm]},line width=0.5mm] (3.19563,-6.845) -- (2.86654,-6.655);
\draw[black, line width=0.4mm, fill=white] (4.33013,-1.5) circle (0.288);
\draw[black,-{Stealth[length=1.6mm,width=2.2mm]},line width=0.5mm] (4.59859,-1.655) -- (4.92768,-1.845);
\draw[fill=black] (4.33013,-2.5) circle (0.288);
\draw[black,-{Stealth[length=1.6mm,width=2.2mm]},line width=0.5mm] (4.59859,-2.655) -- (4.92768,-2.845);
\draw[fill=black] (4.33013,-5.5) circle (0.288);
\draw[black,-{Stealth[length=1.6mm,width=2.2mm]},line width=0.5mm] (4.06166,-5.655) -- (3.73257,-5.845);
\draw[black, line width=0.4mm, fill=white] (4.33013,-6.5) circle (0.288);
\draw[black,-{Stealth[length=1.6mm,width=2.2mm]},line width=0.5mm] (4.06166,-6.655) -- (3.73257,-6.845);
\draw[black, line width=0.4mm, fill=white] (5.19615,-2) circle (0.288);
\draw[black,-{Stealth[length=1.6mm,width=2.2mm]},line width=0.5mm] (5.46462,-2.155) -- (5.79371,-2.345);
\draw[fill=black] (5.19615,-3) circle (0.288);
\draw[black,-{Stealth[length=1.6mm,width=2.2mm]},line width=0.5mm] (5.19615,-3.31) -- (5.19615,-3.69);
\draw[fill=black] (5.19615,-4) circle (0.288);
\draw[black,-{Stealth[length=1.6mm,width=2.2mm]},line width=0.5mm] (5.19615,-4.31) -- (5.19615,-4.69);
\draw[fill=black] (5.19615,-5) circle (0.288);
\draw[black,-{Stealth[length=1.6mm,width=2.2mm]},line width=0.5mm] (4.92768,-5.155) -- (4.59859,-5.345);
\draw[black, line width=0.4mm, fill=white] (5.19615,-6) circle (0.288);
\draw[black,-{Stealth[length=1.6mm,width=2.2mm]},line width=0.5mm] (4.92768,-6.155) -- (4.59859,-6.345);
\draw[black, line width=0.4mm, fill=white] (6.06218,-2.5) circle (0.288);
\draw[black,-{Stealth[length=1.6mm,width=2.2mm]},line width=0.5mm] (6.06218,-2.81) -- (6.06218,-3.19);
\draw[black, line width=0.4mm, fill=white] (6.06218,-3.5) circle (0.288);
\draw[black,-{Stealth[length=1.6mm,width=2.2mm]},line width=0.5mm] (6.06218,-3.81) -- (6.06218,-4.19);
\draw[black, line width=0.4mm, fill=white] (6.06218,-4.5) circle (0.288);
\draw[black,-{Stealth[length=1.6mm,width=2.2mm]},line width=0.5mm] (6.06218,-4.81) -- (6.06218,-5.19);
\draw[black, line width=0.4mm, fill=white] (6.06218,-5.5) circle (0.288);
\draw[black,-{Stealth[length=1.6mm,width=2.2mm]},line width=0.5mm] (5.79371,-5.655) -- (5.46462,-5.845);
  \end{tikzpicture}
  \end{center}
  \caption{In the third case of the proof, as every particle in cycle $C_1$ is stable, the set of ``clockwise particles'' of $C_1$ forms another cycle $C_2$.}
  \label{fig:cycle_in_cycle}
\end{subfigure}
\caption{Illustrations used by the proof of Lemma~\ref{lemma:atleastoneunstableresidual}.}
\end{figure}
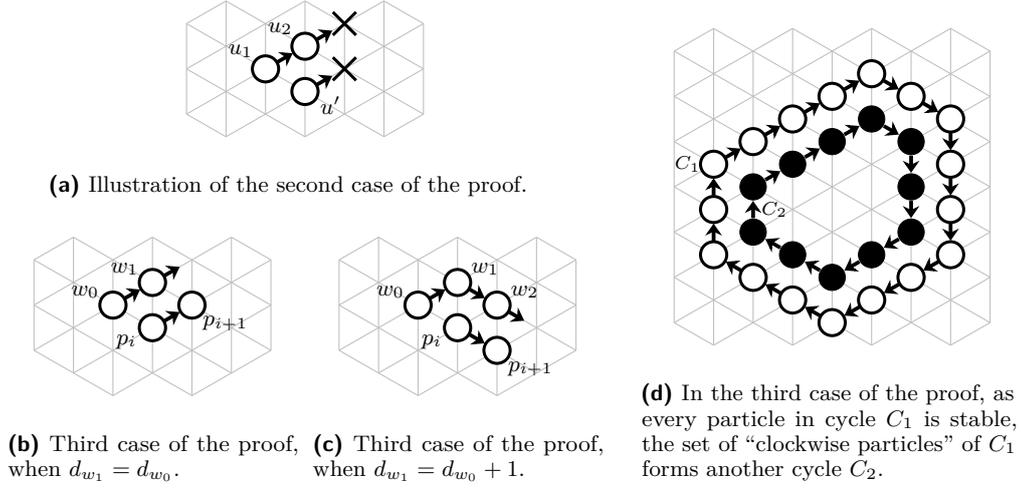

As illustrated in Figure~\ref{fig:cycle_in_cycle}, if every particle in this new cycle is also stable, we can repeat this process to generate yet another cycle. We repeat this to continue generating new cycles until we find one with an unstable particle.
Suppose that this never happens. If the first cycle happens to be a straight line (looping at the boundary of the torus), every subsequent cycle must also be a straight line, necessarily filling the entire lattice with straight lines.
This cannot happen as we do not consider cases where there are enough particles to fill every site of the lattice.
Thus, each of these cycles must have a ``bend'', which is a particle whose parent's direction $d_p$ is one step clockwise from its current direction (rather than being in the same direction as the current particle). This implies that each cycle created will have strictly fewer particles than the previous, creating a contradiction as we cannot do this indefinitely.
\end{proof}

%%%%%% GENERAL LEMMAS ON VANISHING RESIDUAL PARTICLES
%% THERE ARE THREE SITUATIONS WE NEED TO ACCOUNT FOR
% 1: Stage 2
% 2: Stage 4
% 3: No Food Particle

These Lemmas allow us to prove Lemma~\ref{lemma:removeresidual}, which states that under certain conditions, all residual particles will be removed in polynomial expected time:

\begin{proof}[Proof of Lemma~\ref{lemma:removeresidual}]
We make use of Lemma~\ref{lemma:generallemmafinal} for this proof.
%To see how Lemma~\ref{lemma:generallemmafinal} implies Lemma~\ref{lemma:removeresidual}, 
We first denote by $(\sigma_t)_{t \in \mathbbNzero}$ a random process where $\sigma_t$ represents the configuration $t$ steps after the current step. For a configuration $\sigma_t$, let $X(\sigma_t)$ denote the number of residual particles in $\sigma_t$.
On each step, by Lemma~\ref{lemma:newresidualparticle}, $X(\sigma_t)$ can only increase by at most $1$.
The probability $q(\sigma_t)$ of $X(\sigma_t)$ increasing by $1$ is at most $\frac{\rho|S_a|}{n}$, and the probability $p(\sigma_t)$ of $X(\sigma_t)$ decreasing by at least $1$ is at least $\frac{|S_u|}{n}$.
By Lemma~\ref{lemma:atleastoneunstableresidual}, we have $p(\sigma_t) \geq \frac{1}{n}$, and by Lemma~\ref{lemma:unstablevsattachable}, we have $q(\sigma_t) \leq \frac{\rho|S_a|}{n} \leq \frac{\rho(|S_u|+1)}{n} \leq 2\rho\frac{|S_u|}{n} \leq 2\rho p(\sigma_t)$. As $\rho < \frac{1}{2}$ in Algorithm~\ref{alg:spiralalgorithm}, by Lemma~\ref{lemma:generallemmafinal} the expected amount of steps before $X(\sigma_t)$ first reaches $0$ is at most $\frac{nX(\sigma_0)}{1-2\rho} \leq \frac{n^2}{1-2\rho}$.
\end{proof}

\subsection{Correctness of adaptive spiraling}
\label{apx:correctnessofspiralalgorithm}
We are now prepared to show the main theorems (Theorems~\ref{theorem:spiralwithfood} and \ref{theorem:spiralnofood}) to verify the correctness of the Adaptive Spiraling algorithm. We first state and prove two final lemmas.

\begin{lemma}[Stage 4 Lemma]
\label{lem:stage4lemma}
If there are no residual particles and we are currently in Stage 4 (verified complete circle) the following will be true in all subsequent iterations:
\begin{enumerate}
\item There will be no residual particles.
\item All compression state particles are spiral particles.
\item The number of spiral particles can never decrease.
\item Consider the set of possible locations not containing a particle in a compression state. There is exactly one location in this set (the ``end'' of the spiral) where a particle in the dispersion state can be attachable. A particle switching from the dispersion state to a compression state in this location increases the number of spiral particles by $1$.
\end{enumerate}
\end{lemma}

\begin{proof}
Suppose that there are no residual particles in a configuration. As all particles neighboring the food particle are spiral particles, this implies that all compression state particles are spiral particles (2).

In Stage 4, all spiral particles necessarily belong to the same spiral starting from the food particle, and it can be easily checked that every particle in this spiral satisfies the conditions required for stability. Hence, all spiral particles are stable.

This implies that as long as there are no residual particles in the current iteration, no particles will switch from a compression state to the dispersion state in the next iteration, so the spiral remains intact (3).
Furthermore, with a single spiral, there will be exactly one location where a particle in the dispersion state can switch to a compression state, to increase the number of spiral particles by $1$ (4).
Thus, there will also be no residual particles in the next iteration, and consequently from then on (1). This gives us statements (2), (3) and (4) in all subsequent iterations as well.
\end{proof}

\begin{lemma}
\label{lemma:nonewresidual}
Suppose that there is no particle in compression state $0^*$ in the current configuration.%, and a particle of $0^*$ never forms.
Then we cannot gain a new residual particle in the next iteration if none currently exist.
\end{lemma}

\begin{proof}
We show that the configuration in the next iteration will have no residual particle, given that the current configuration has no residual particle.

%Stage 2 (Before first 0*):
%Suppose there is no particle in state $0^*$ in the current configuration. 
If there is no particle in state $0^*$, no particle can be a spiral particle, so the only particles in a compression state must be next to a food particle, and cannot be in state $6$.
As long as there is no particle of state $0^*$ in the current configuration, we cannot add a new compression state particle in state $6$. We also cannot add a new compression state particle in states other than $6$ in locations not neighboring the food particle. Thus, it is not possible to gain a new residual particle.

%If the configuration is in Stage 4, this is a consequence of the Stage 4 lemma (Lemma~\ref{lem:stage4lemma}).
\end{proof}

\begin{lemma}
\label{lemma:stageprogress}
%Assuming the food particle remains in place,
It takes a polynomial number of steps in expectation to progress to the next stage. In particular, a loose upper bound for the expected number of steps to reach Stage 4 from an arbitrary configuration is $O(N^3 n)$.
\end{lemma}

\begin{proof}
As seen in Appendix~\ref{apx:spiralprotocols}%Section~\ref{section:achievingconsistency}
, a configuration in Stage 1 will reach at least Stage 2 by achieving consistency in $O(n^2)$ steps.

In Stage 2, there will be no particle in compression state $0^*$. We can only obtain a particle in compression state $0^*$ after we reach Stage 3. Thus by Lemma~\ref{lemma:removeresidual}, we will reach a state with no residual particles within $O(n^2)$ steps, and by Lemma~\ref{lemma:nonewresidual}, we will never gain any new residual particles as long as we remain in Stage 2. This means that the only particles in compression states will be adjacent to food.
We then note that when a particle switches to some compression state while adjacent to food, it will always either be stable or attachable, and thus will never leave its position. Assuming at least six particles in the system, similar to the analysis in the proof of Lemma~\ref{lemma:allswitchtocompressionstate} with a loose $O(N^3)$~\cite{lovasz1993random} bound on hitting times of simple random walks on graphs of $N$ vertices, we can bound the amount of steps to fill all six positions in the neighborhood of the food particle by $O(6 \cdot N^3 n) = O(N^3 n)$.
%the polynomial hitting time of exclusion processes means the amount of steps taken to fill all six positions in the neighborhood of the food particle is polynomial in expectation.
Once this neighborhood is filled, we note that activating each of the six particles once in counterclockwise order will give us a complete circle. This happens after $O(n)$ steps in expectation, and puts us in Stage 3.

In Stage 3, if the particles in the neighborhood of the food particle corresponding to the states $5,4,3,2,1,0$ are activated in sequence, all of these particles will switch to the states $5^*,4^*,3^*,2^*,1^*,0^*$ respectively, bringing us to Stage 4. It takes $O(n)$ steps for this to happen in expectation.

Combining these bounds and noting that we cannot regress to earlier stages (Lemma~\ref{lemma:stageregress}) gives us a polynomial upper bound of $O(N^3 n + n^2) = O(N^3 n)$ for reaching Stage 4 when starting from an arbitrary configuration.
\end{proof}

%\subsubsection{Proof of Theorem~\ref{theorem:spiralwithfood}}

We are now prepared to prove the two main theorems.

\begin{proof}[Proof of Theorem~\ref{theorem:spiralwithfood}]
%\vskip.1in
%\noindent {\bf Proof of Theorem~\ref{theorem:spiralwithfood}.} \ 
By Lemmas~\ref{lemma:stageregress} and \ref{lemma:stageprogress}, we will reach Stage 4 within $O(N^3 n)$ steps in expectation, and remain in Stage 4 from then on.
By Lemma~\ref{lemma:removeresidual} within $O(n^2)$ steps in expectation, we will reach a configuration with no residual particles.

By Lemma~\ref{lem:stage4lemma}, all compression state particles from then on will be spiral particles, which means they will necessarily all be part of a single spiral.
Furthermore, there is always exactly one position wbere a particle can join the spiral by switching to some compression state. We assume a reasonable amount of extra space for particles in the dispersion state to move around (see Figure~\ref{fig:nospace} for when this is not true).

The particles in the dispersion state follow random walks, and once again with an analysis similar to that in the proof of Lemma~\ref{lemma:allswitchtocompressionstate} but with $O(N^3)$~\cite{lovasz1993random} hitting times of simple random walks,
and taking into account that a new particle only joins the spiral with some constant probability $\rho \in (0,\frac{1}{2})$, a loose upper bound for the expected time for the next particle to join the spiral is $O(\frac{1}{\rho} N^3 n) = O(N^3 n)$.
%so the polynomial expected hitting times of exclusion processes lets us conclude that the expected time for the next particle to join the spiral is polynomial.
Thus with $n$ particles in the configuration, all particles will be part of the spiral after $O(N^3 n^2)$ steps in expectation. Combining this with the time it takes to reach Stage 4 and remove all residual particles gives us a loose upper bound of $O(N^3n^2 + N^3n + n^2) = O(N^3 n^2)$ steps for all particles to gather.
%\vskip.1in
\end{proof}

%\subsubsection{Proof of Theorem~\ref{theorem:spiralnofood}}

\begin{proof}[Proof of Theorem~\ref{theorem:spiralnofood}]
%\vskip.1in
%\noindent{\bf Proof of Theorem~\ref{theorem:spiralnofood}.} \ 
If there is no food particle, all compression state particles will be residual particles. Any particle in state $0^*$ will be unstable and non-attachable, and thus will switch to the dispersion state on activation. No new particles of state $0^*$ can form either. In the worst case over all possible starting configurations, by the analysis of the coupon collector's problem, there will be no more particles in state $0^*$ after $O(n\log n)$ iterations in expectation.

After that, by Lemma~\ref{lemma:removeresidual}, we will reach a state with no residual particles after $O(n^2)$ steps in expectation. This means all particles will be in the dispersion state. Furthermore, with no food particle or particles in compression states, no particle will be able to switch to a compression state after this.
%\vskip.1in
\end{proof}

%\documentclass{article}
%\usepackage[letterpaper]{geometry}
%\usepackage{amsmath}
%\usepackage{amsfonts}
%\usepackage{amsthm}
%\usepackage{subcaption}
%\usepackage{graphicx}

%\newcommand{\floor}[1]{\lfloor #1 \rfloor}
%\newcommand{\ceil}[1]{\lceil #1 \rceil}
%\newcommand{\ex}{\mathbb{E}}

%\newcommand{\com}[1]{\textbf{\textcolor{red}{#1}}}

%\newtheorem{theorem}{Theorem}[section]
%\newtheorem{definition}[theorem]{Definition}
%\newtheorem{remark}[theorem]{Remark}
%\newtheorem{lemma}[theorem]{Lemma}
%\newtheorem{corollary}[theorem]{Corollary}
%\newtheorem{proposition}[theorem]{Proposition}
%\newtheorem{claim}[theorem]{Claim}
%\newtheorem{observation}[theorem]{Observation}
%\newtheorem{fact}[theorem]{Fact}
%\newtheorem{assumption}[theorem]{Assumption}
%\newtheorem{property}[theorem]{Property}
%\newtheorem{procedure}[theorem]{Procedure}

% TERMINOLOGY - START
\newcommand{\spine}{{spine}}
\newcommand{\Spine}{{Spine}}
\newcommand{\spines}{{spines}}
\newcommand{\SpineComb}{{Spine Comb}}
\newcommand{\spinecomb}{{spine comb}}
\newcommand{\spinecombs}{{spine combs}}
\newcommand{\mainspine}{{main spine}}
\newcommand{\anchorparticle}{{anchor particle}}
\newcommand{\anchorparticles}{{anchor particles}}
\newcommand{\tailparticle}{{tail particle}}
\newcommand{\tailparticles}{{tail particles}}
\newcommand{\sourcespine}{{source \spine{}}}
\newcommand{\targetspine}{{target \spine{}}}
\newcommand{\ResidualRegion}{{Residual Region}}
\newcommand{\residualregion}{{residual region}}

% TERMINOLOGY - END

% DIAGRAMS NEEDED
% Illustration of the six spines
% Illustration of tail particles and spine lengths
% Illustration of coordinate system
%   - do this by giving coordinates of anchor particles
%     (maybe illustrate a comb too)
%     (maybe illustrate the residual region too)
%
% Combed - illustration of a combed config from (l,d)
% Comb procedure (shift operation, two cases)
% Comb procedure (line formation, no shift)
% Comb procedure (before line merging)
% Comb procedure (after line merging)
%
% L: Lemma 1.7 (Unaffected Region Above)
% L: Lemma 1.13 (Unenterable Region Below)
%
% Combable sequence (illustrate the empty line above)
% Spine Comb ^ probably use same diag
% Result of a spine comb (with and without gap in line)
%
% Hexagon with a tail
% Cases for making gap from hexagon
%    - corner (upper left, with tail)
%    - side (two cases)

%\begin{document}

\section{Ergodicity of $\mathcal{M}_{\rm COM}$}
\label{apx:irreducibilityproof}

We conclude by proving Lemma~\ref{lemma:irreducible}, showing that the Markov chain $\mathcal{M}_{\rm COM}$ is irreducible (and thus ergodic) by showing that all states are reachable using compression moves in the presence of food (as in Section~\ref{section:statemovement}).  The proof of ergodicity given in Cannon et al. \cite{Cannon2016} without food particles was already fairly involved, but with the addition food particles that cannot themselves move, the proof becomes substantially more challenging.  We overcome this by introducing a ``comb'' operation that organizes particles radially around the food.

% intro of the proof
We will treat the immobile food particle as the ``center'' of the configuration. From the center, there are six directions one can move in a straight line on the triangular lattice - up, down, up-left, down-left, up-right, down-right. We call these six straight lines of particles extending from the food particle \emph{\spines{}}. We refer to particles on the \spines{} as \spine{} particles, and particles not on \spines{} as non-\spine{} particles. We similarly use the names \spine{} and non-\spine{} locations to refer to sites on the triangular lattices.

%Define tail particles
On each \spine{}, we call the furthest out (in terms of distance from the food particle) particle on the \spine{} with no adjacent non-\spine{} particles the \emph{\anchorparticle{}} of the \spine{}. The \spine{} particles further out than the \anchorparticle{} are called \emph{\tailparticles{}}. The \emph{distance} of a \spine{} location from the center refers to its shortest-path distance (which would be along the \spine{}) to the food particle on the triangular lattice. For each integer $r \geq 1$, the hexagon of radius $r$ refers to the regular hexagon with corners defined by the six \spine{} locations of distance $r$ from the center. The distance of a non-\spine{} location from the center would then be the radius of the smallest such hexagon it is contained within.
%Define spine lengths
An important concept that we will use in the proof is the length of a \spine{}. The length of a \spine{} is defined to be the distance of its \anchorparticle{} to the center. If it has no \anchorparticle{}, the length of the \spine{} is $0$.

\begin{figure*}[t]
%\centering
%\includegraphics[width=.4\linewidth]{diagrams_irreducibility/spines.png}
\begin{center}
\begin{tikzpicture}[x=0.6cm,y=0.6cm]
\input{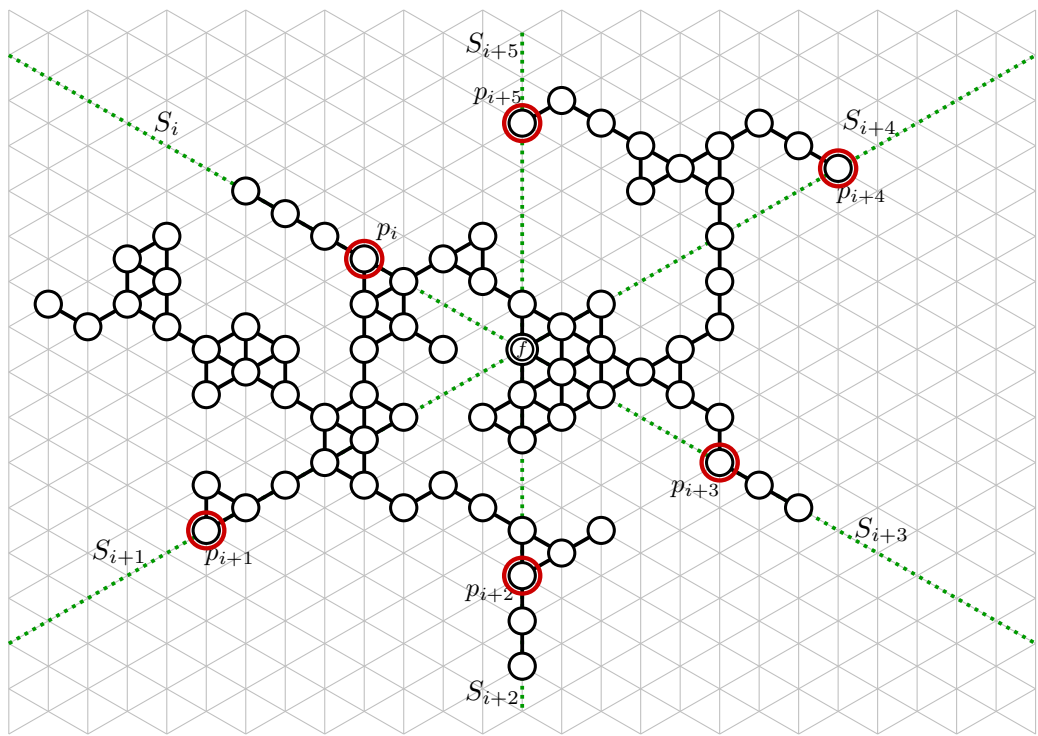}
\end{tikzpicture}
\end{center}
\caption{Illustration of the \spines{} extending from the food particle $f$.
The six \spines{} $S_i,\ldots,S_{i+5}$ have lengths $4,8,5,5,8,5$ respectively. These \spines{} have respective \anchorparticles{} $p_i,\ldots,p_{i+5}$. As an illustration of the coordinate system, these six anchor particles are at coordinates $(4,0)$, $(8,8)$, $(0,5)$, $(-5,0)$, $(-8,-8)$ and $(0,-5)$ respectively.}
\label{fig:spines}
\end{figure*}

%Spine notation
We notate the six \spines{} using one of the \spines{} as a reference \spine{}. If the reference \spine{} is denoted $S_i$, where $i$ is an integer modulo $6$, then $S_{i+1}, S_{i+2}, \dots, S_{i+5}$ denote the subsequent \spines{} in a counterclockwise order from $S_i$.

The proof centers around a specific transformation we call a ``comb'' operation. This comb operation is applied from one \spine{} (which we refer to as the \sourcespine{}) to an adjacent \spine{} (which we refer to as the \targetspine{}), and has the effect of ``pushing'' the particles between the two \spines{} towards the \targetspine{}.

Our system exhibits reflection symmetry and 6-fold rotational symmetry, so this comb operation can be defined in $6\times 2 = 12$ different ways. However, for simplicity of discussion, we will only define the comb operation in one orientation, specifically on the left side, downwards. This is a comb from the \spine{} going in the up-left direction to the \spine{} going into the down-left direction. We rotate or reflect the configuration freely, depending on which pair of adjacent \spines{} we want to comb between. 

% Define Combing
\subsection{The comb operation}
We define our two-dimensional coordinate system $(lane,depth)$ with reference to the \sourcespine{}, assumed to be going in the up-left direction. A position $(\ell,0)$ for $\ell \geq 0$ refers to the position on the \sourcespine{} $\ell$ steps away from the food particle. If $\ell < 0$, this refers to the position $-\ell$ steps in the direction of the \spine{} directly opposite the \sourcespine{}. A position $(\ell,d)$ refers to the location $d$ steps downwards from position $(\ell,0)$. Thus, $d$ denotes the (signed) distance of the position from the \sourcespine{}.

\begin{figure*}
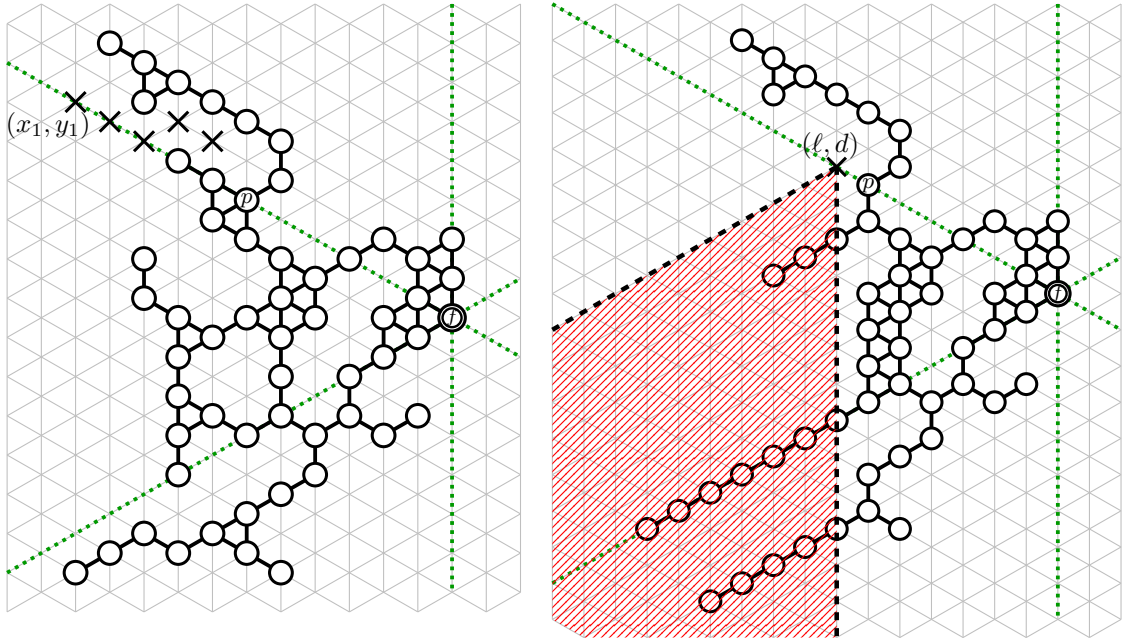

\begin{subfigure}[t]{.48\textwidth}
  %\centering
  %\includegraphics[width=.6\linewidth]{diagrams_irreducibility/spine_comb.png}
  \begin{center}
  \begin{tikzpicture}[x=0.52cm,y=0.52cm]
  \input{diagrams_tex/spine_comb.tex}
  \end{tikzpicture}
  \end{center}
  \caption{Illustration of a \spinecomb{} (Definition~\ref{proc:spinecomb}), which in this case is a comb over the sequence (Definition~\ref{proc:combingasequence}) denoted by crosses in the Figure.}
  \label{fig:spine_comb}
\end{subfigure}%
\hfill
\begin{subfigure}[t]{.48\textwidth}
  %\centering
  %\includegraphics[width=.65\linewidth]{diagrams_irreducibility/combed_residual_region.png}
  \begin{center}
  \begin{tikzpicture}[x=0.48cm,y=0.48cm]
  \input{diagrams_tex/combed_residual_region.tex}
  \end{tikzpicture}
  \end{center}
  \caption{After combing $(\ell,d)$, position $(\ell,d)$ is combed (Definition~\ref{defn:combed}). The shaded region is the \residualregion{} of $(\ell,d)$ (Definition~\ref{defn:residualregion})}
  \label{fig:combed_residual_region}
\end{subfigure}%
\caption{The comb operation is applied in to the five points marked with crosses in Figure~\ref{fig:spine_comb} from left to right in sequence, starting from $(x_1,y_1)$. Figure~\ref{fig:combed_residual_region} illustrates the end result.}
\label{fig:comb_before_after}
\end{figure*}

Before we define the comb operation, the following definitions tells us what can and cannot be combed.

\begin{definition}[\ResidualRegion{}]
\label{defn:residualregion}
Consider a position $(\ell,d)$ and the diagonal half-line extending down-left from $(\ell,d)$, including $(\ell,d)$ itself. The \emph{\residualregion{}} of this position refers to the set of all positions on or below this half-line (Figure~\ref{fig:combed_residual_region}).
\end{definition}

\begin{definition}[Combed]
\label{defn:combed}
For $\ell > 0$ and $d \geq 0$, we say a position $(\ell,d)$ is \emph{combed} (Figure~\ref{fig:combed_residual_region}) if:
\begin{enumerate}
\item All sites directly above a topmost particle of the \residualregion{} of $(\ell,d)$ are empty.
\item All particles in the \residualregion{} of $(\ell,d)$ form straight lines stretching down and left.
\item Consider the column of sites directly to the right of the \residualregion{} of $(\ell,d)$. Each of the abovementioned lines of particles stretches down and left from a particle from this column with no particle directly below.
\end{enumerate}
\end{definition}

\begin{definition}[Combable]
\label{defn:combable}
For $\ell > 0$ and $d \geq 0$, we say a position $(\ell,d)$ is \emph{combable} if:
\begin{enumerate}
\item The position $(\ell+1,d+1)$, which is one step diagonally down-left from $(\ell,d)$, is combed.
\item The site directly above $(\ell,d)$ is empty.
\end{enumerate}
\end{definition}

With this, we can define the comb procedure. 
Aside from operating only below a given depth, this comb procedure is identical to the process used the proof of irreducibility in \cite{Cannon2016}. As this operation is covered in detail in said paper, we will be brief with its explanation here.

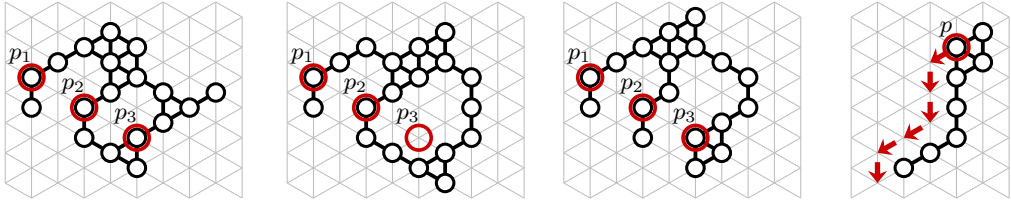
\begin{figure*}
  \begin{subfigure}{.26\textwidth}
    \centering
    \begin{center}
    \begin{tikzpicture}[x=0.4cm,y=0.4cm]
    \draw[lightgray] (2.59808,-0.5) -- (2.59808,-6.5);
\draw[lightgray] (0,-2) -- (3.4641,0);
\draw[lightgray] (0,-2) -- (7.79423,-6.5);
\draw[lightgray] (4.33013,-6.5) -- (7.79423,-4.5);
\draw[lightgray] (6.9282,0) -- (7.79423,-0.5);
\draw[lightgray] (6.9282,0) -- (6.9282,-6);
\draw[lightgray] (0.866025,-6.5) -- (7.79423,-2.5);
\draw[lightgray] (0,-3) -- (5.19615,0);
\draw[lightgray] (0,-3) -- (6.06218,-6.5);
\draw[lightgray] (3.4641,0) -- (7.79423,-2.5);
\draw[lightgray] (3.4641,0) -- (3.4641,-6);
\draw[lightgray] (7.79423,-0.5) -- (7.79423,-6.5);
\draw[lightgray] (6.06218,-6.5) -- (7.79423,-5.5);
\draw[lightgray] (2.59808,-6.5) -- (7.79423,-3.5);
\draw[lightgray] (0,-4) -- (6.9282,0);
\draw[lightgray] (0,-4) -- (4.33013,-6.5);
\draw[lightgray] (4.33013,-0.5) -- (4.33013,-6.5);
\draw[lightgray] (0,0) -- (7.79423,-4.5);
\draw[lightgray] (0,0) -- (0,-6);
\draw[lightgray] (5.19615,0) -- (7.79423,-1.5);
\draw[lightgray] (5.19615,0) -- (5.19615,-6);
\draw[lightgray] (0,-5) -- (7.79423,-0.5);
\draw[lightgray] (0,-5) -- (2.59808,-6.5);
\draw[lightgray] (0.866025,-0.5) -- (0.866025,-6.5);
\draw[lightgray] (0,-1) -- (1.73205,0);
\draw[lightgray] (0,-1) -- (7.79423,-5.5);
\draw[lightgray] (6.06218,-0.5) -- (6.06218,-6.5);
\draw[lightgray] (0,-6) -- (7.79423,-1.5);
\draw[lightgray] (0,-6) -- (0.866025,-6.5);
\draw[lightgray] (1.73205,0) -- (7.79423,-3.5);
\draw[lightgray] (1.73205,0) -- (1.73205,-6);
\draw[black, line width=0.4mm, fill=white] (0.866025,-2.5) circle (0.288);
\node[align=left] at (0.519615,-1.8) {\small $p_1$};
\draw[black, line width=0.4mm, fill=white] (0.866025,-3.5) circle (0.288);
\draw[black, line width=0.4mm, fill=white] (1.73205,-2) circle (0.288);
\draw[black, line width=0.4mm, fill=white] (2.59808,-1.5) circle (0.288);
\draw[black, line width=0.4mm, fill=white] (2.59808,-3.5) circle (0.288);
\node[align=left] at (2.25167,-2.8) {\small $p_2$};
\draw[black, line width=0.4mm, fill=white] (2.59808,-4.5) circle (0.288);
\draw[black, line width=0.4mm, fill=white] (3.4641,-1) circle (0.288);
\draw[black, line width=0.4mm, fill=white] (3.4641,-2) circle (0.288);
\draw[black, line width=0.4mm, fill=white] (3.4641,-3) circle (0.288);
\draw[black, line width=0.4mm, fill=white] (3.4641,-5) circle (0.288);
\draw[black, line width=0.4mm, fill=white] (4.33013,-1.5) circle (0.288);
\draw[black, line width=0.4mm, fill=white] (4.33013,-2.5) circle (0.288);
\draw[black, line width=0.4mm, fill=white] (4.33013,-4.5) circle (0.288);
\node[align=left] at (3.98372,-3.8) {\small $p_3$};
\draw[black, line width=0.4mm, fill=white] (4.33013,-5.5) circle (0.288);
\draw[black, line width=0.4mm, fill=white] (5.19615,-3) circle (0.288);
\draw[black, line width=0.4mm, fill=white] (5.19615,-4) circle (0.288);
\draw[black, line width=0.4mm, fill=white] (6.06218,-3.5) circle (0.288);
\draw[black, line width=0.4mm, fill=white] (6.9282,-3) circle (0.288);
\draw[black, line width=0.5mm] (1.12583,-2.35) -- (1.47224,-2.15);
\draw[black, line width=0.5mm] (5.19615,-3.7) -- (5.19615,-3.3);
\draw[black, line width=0.5mm] (5.45596,-3.85) -- (5.80237,-3.65);
\draw[black, line width=0.5mm] (3.72391,-1.15) -- (4.07032,-1.35);
\draw[black, line width=0.5mm] (0.866025,-3.2) -- (0.866025,-2.8);
\draw[black, line width=0.5mm] (2.85788,-3.35) -- (3.20429,-3.15);
\draw[black, line width=0.5mm] (4.58993,-4.35) -- (4.93634,-4.15);
\draw[black, line width=0.5mm] (4.33013,-5.2) -- (4.33013,-4.8);
\draw[black, line width=0.5mm] (3.72391,-4.85) -- (4.07032,-4.65);
\draw[black, line width=0.5mm] (3.72391,-5.15) -- (4.07032,-5.35);
\draw[black, line width=0.5mm] (6.32199,-3.35) -- (6.6684,-3.15);
\draw[black, line width=0.5mm] (2.85788,-1.35) -- (3.20429,-1.15);
\draw[black, line width=0.5mm] (2.85788,-1.65) -- (3.20429,-1.85);
\draw[black, line width=0.5mm] (4.33013,-2.2) -- (4.33013,-1.8);
\draw[black, line width=0.5mm] (4.58993,-2.65) -- (4.93634,-2.85);
\draw[black, line width=0.5mm] (5.45596,-3.15) -- (5.80237,-3.35);
\draw[black, line width=0.5mm] (3.4641,-2.7) -- (3.4641,-2.3);
\draw[black, line width=0.5mm] (3.72391,-2.85) -- (4.07032,-2.65);
\draw[black, line width=0.5mm] (1.99186,-1.85) -- (2.33827,-1.65);
\draw[black, line width=0.5mm] (3.4641,-1.7) -- (3.4641,-1.3);
\draw[black, line width=0.5mm] (3.72391,-1.85) -- (4.07032,-1.65);
\draw[black, line width=0.5mm] (3.72391,-2.15) -- (4.07032,-2.35);
\draw[black, line width=0.5mm] (2.59808,-4.2) -- (2.59808,-3.8);
\draw[black, line width=0.5mm] (2.85788,-4.65) -- (3.20429,-4.85);
\draw[black!20!red, line width=0.48mm, ] (0.866025,-2.5) circle (0.408);
\draw[black!20!red, line width=0.48mm, ] (2.59808,-3.5) circle (0.408);
\draw[black!20!red, line width=0.48mm, ] (4.33013,-4.5) circle (0.408);
    \end{tikzpicture}
    \end{center}
    \caption{Shift $p_1,p_2$ down-right.}
    \label{fig:shiftable_1}
  \end{subfigure}%
  \hfill
  \begin{subfigure}{.27\textwidth}
    \centering
    \begin{center}
    \begin{tikzpicture}[x=0.4cm,y=0.4cm]
    \draw[lightgray] (2.59808,-0.5) -- (2.59808,-6.5);
\draw[lightgray] (0,-2) -- (3.4641,0);
\draw[lightgray] (0,-2) -- (7.79423,-6.5);
\draw[lightgray] (4.33013,-6.5) -- (7.79423,-4.5);
\draw[lightgray] (6.9282,0) -- (7.79423,-0.5);
\draw[lightgray] (6.9282,0) -- (6.9282,-6);
\draw[lightgray] (0.866025,-6.5) -- (7.79423,-2.5);
\draw[lightgray] (0,-3) -- (5.19615,0);
\draw[lightgray] (0,-3) -- (6.06218,-6.5);
\draw[lightgray] (3.4641,0) -- (7.79423,-2.5);
\draw[lightgray] (3.4641,0) -- (3.4641,-6);
\draw[lightgray] (7.79423,-0.5) -- (7.79423,-6.5);
\draw[lightgray] (6.06218,-6.5) -- (7.79423,-5.5);
\draw[lightgray] (2.59808,-6.5) -- (7.79423,-3.5);
\draw[lightgray] (0,-4) -- (6.9282,0);
\draw[lightgray] (0,-4) -- (4.33013,-6.5);
\draw[lightgray] (4.33013,-0.5) -- (4.33013,-6.5);
\draw[lightgray] (0,0) -- (7.79423,-4.5);
\draw[lightgray] (0,0) -- (0,-6);
\draw[lightgray] (5.19615,0) -- (7.79423,-1.5);
\draw[lightgray] (5.19615,0) -- (5.19615,-6);
\draw[lightgray] (0,-5) -- (7.79423,-0.5);
\draw[lightgray] (0,-5) -- (2.59808,-6.5);
\draw[lightgray] (0.866025,-0.5) -- (0.866025,-6.5);
\draw[lightgray] (0,-1) -- (1.73205,0);
\draw[lightgray] (0,-1) -- (7.79423,-5.5);
\draw[lightgray] (6.06218,-0.5) -- (6.06218,-6.5);
\draw[lightgray] (0,-6) -- (7.79423,-1.5);
\draw[lightgray] (0,-6) -- (0.866025,-6.5);
\draw[lightgray] (1.73205,0) -- (7.79423,-3.5);
\draw[lightgray] (1.73205,0) -- (1.73205,-6);
\draw[black, line width=0.4mm, fill=white] (0.866025,-2.5) circle (0.288);
\node[align=left] at (0.519615,-1.8) {\small $p_1$};
\draw[black, line width=0.4mm, fill=white] (0.866025,-3.5) circle (0.288);
\draw[black, line width=0.4mm, fill=white] (1.73205,-2) circle (0.288);
\draw[black, line width=0.4mm, fill=white] (2.59808,-1.5) circle (0.288);
\draw[black, line width=0.4mm, fill=white] (2.59808,-3.5) circle (0.288);
\node[align=left] at (2.25167,-2.8) {\small $p_2$};
\draw[black, line width=0.4mm, fill=white] (2.59808,-4.5) circle (0.288);
\draw[black, line width=0.4mm, fill=white] (3.4641,-2) circle (0.288);
\draw[black, line width=0.4mm, fill=white] (3.4641,-3) circle (0.288);
\draw[black, line width=0.4mm, fill=white] (3.4641,-5) circle (0.288);
\draw[black, line width=0.4mm, fill=white] (4.33013,-1.5) circle (0.288);
\draw[black, line width=0.4mm, fill=white] (4.33013,-2.5) circle (0.288);
\node[align=left] at (3.98372,-3.8) {\small $p_3$};
\draw[black, line width=0.4mm, fill=white] (4.33013,-5.5) circle (0.288);
\draw[black, line width=0.4mm, fill=white] (5.19615,-1) circle (0.288);
\draw[black, line width=0.4mm, fill=white] (5.19615,-2) circle (0.288);
\draw[black, line width=0.4mm, fill=white] (5.19615,-5) circle (0.288);
\draw[black, line width=0.4mm, fill=white] (5.19615,-6) circle (0.288);
\draw[black, line width=0.4mm, fill=white] (6.06218,-2.5) circle (0.288);
\draw[black, line width=0.4mm, fill=white] (6.06218,-3.5) circle (0.288);
\draw[black, line width=0.4mm, fill=white] (6.06218,-4.5) circle (0.288);
\draw[black, line width=0.5mm] (6.06218,-3.2) -- (6.06218,-2.8);
\draw[black, line width=0.5mm] (0.866025,-3.2) -- (0.866025,-2.8);
\draw[black, line width=0.5mm] (5.19615,-5.7) -- (5.19615,-5.3);
\draw[black, line width=0.5mm] (5.19615,-1.7) -- (5.19615,-1.3);
\draw[black, line width=0.5mm] (5.45596,-2.15) -- (5.80237,-2.35);
\draw[black, line width=0.5mm] (4.58993,-1.35) -- (4.93634,-1.15);
\draw[black, line width=0.5mm] (4.58993,-1.65) -- (4.93634,-1.85);
\draw[black, line width=0.5mm] (1.12583,-2.35) -- (1.47224,-2.15);
\draw[black, line width=0.5mm] (2.85788,-3.35) -- (3.20429,-3.15);
\draw[black, line width=0.5mm] (4.58993,-5.35) -- (4.93634,-5.15);
\draw[black, line width=0.5mm] (4.58993,-5.65) -- (4.93634,-5.85);
\draw[black, line width=0.5mm] (1.99186,-1.85) -- (2.33827,-1.65);
\draw[black, line width=0.5mm] (2.59808,-4.2) -- (2.59808,-3.8);
\draw[black, line width=0.5mm] (2.85788,-4.65) -- (3.20429,-4.85);
\draw[black, line width=0.5mm] (3.72391,-5.15) -- (4.07032,-5.35);
\draw[black, line width=0.5mm] (3.72391,-1.85) -- (4.07032,-1.65);
\draw[black, line width=0.5mm] (3.72391,-2.15) -- (4.07032,-2.35);
\draw[black, line width=0.5mm] (5.45596,-4.85) -- (5.80237,-4.65);
\draw[black, line width=0.5mm] (2.85788,-1.65) -- (3.20429,-1.85);
\draw[black, line width=0.5mm] (6.06218,-4.2) -- (6.06218,-3.8);
\draw[black, line width=0.5mm] (3.4641,-2.7) -- (3.4641,-2.3);
\draw[black, line width=0.5mm] (3.72391,-2.85) -- (4.07032,-2.65);
\draw[black, line width=0.5mm] (4.33013,-2.2) -- (4.33013,-1.8);
\draw[black, line width=0.5mm] (4.58993,-2.35) -- (4.93634,-2.15);
\draw[black!20!red, line width=0.48mm, ] (0.866025,-2.5) circle (0.408);
\draw[black!20!red, line width=0.48mm, ] (2.59808,-3.5) circle (0.408);
\draw[black!20!red, line width=0.48mm, ] (4.33013,-4.5) circle (0.408);
    \end{tikzpicture}
    \end{center}
    \caption{Shift $p_1,p_2$ down-right.}
    \label{fig:shiftable_2}
  \end{subfigure}%
  \begin{subfigure}{.25\textwidth}
    \centering
    \begin{center}
    \begin{tikzpicture}[x=0.4cm,y=0.4cm]
    \draw[lightgray] (2.59808,-0.5) -- (2.59808,-6.5);
\draw[lightgray] (0,-2) -- (3.4641,0);
\draw[lightgray] (0,-2) -- (7.79423,-6.5);
\draw[lightgray] (4.33013,-6.5) -- (7.79423,-4.5);
\draw[lightgray] (6.9282,0) -- (7.79423,-0.5);
\draw[lightgray] (6.9282,0) -- (6.9282,-6);
\draw[lightgray] (0.866025,-6.5) -- (7.79423,-2.5);
\draw[lightgray] (0,-3) -- (5.19615,0);
\draw[lightgray] (0,-3) -- (6.06218,-6.5);
\draw[lightgray] (3.4641,0) -- (7.79423,-2.5);
\draw[lightgray] (3.4641,0) -- (3.4641,-6);
\draw[lightgray] (7.79423,-0.5) -- (7.79423,-6.5);
\draw[lightgray] (6.06218,-6.5) -- (7.79423,-5.5);
\draw[lightgray] (2.59808,-6.5) -- (7.79423,-3.5);
\draw[lightgray] (0,-4) -- (6.9282,0);
\draw[lightgray] (0,-4) -- (4.33013,-6.5);
\draw[lightgray] (4.33013,-0.5) -- (4.33013,-6.5);
\draw[lightgray] (0,0) -- (7.79423,-4.5);
\draw[lightgray] (0,0) -- (0,-6);
\draw[lightgray] (5.19615,0) -- (7.79423,-1.5);
\draw[lightgray] (5.19615,0) -- (5.19615,-6);
\draw[lightgray] (0,-5) -- (7.79423,-0.5);
\draw[lightgray] (0,-5) -- (2.59808,-6.5);
\draw[lightgray] (0.866025,-0.5) -- (0.866025,-6.5);
\draw[lightgray] (0,-1) -- (1.73205,0);
\draw[lightgray] (0,-1) -- (7.79423,-5.5);
\draw[lightgray] (6.06218,-0.5) -- (6.06218,-6.5);
\draw[lightgray] (0,-6) -- (7.79423,-1.5);
\draw[lightgray] (0,-6) -- (0.866025,-6.5);
\draw[lightgray] (1.73205,0) -- (7.79423,-3.5);
\draw[lightgray] (1.73205,0) -- (1.73205,-6);
\draw[black, line width=0.4mm, fill=white] (0.866025,-2.5) circle (0.288);
\node[align=left] at (0.519615,-1.8) {\small $p_1$};
\draw[black, line width=0.4mm, fill=white] (0.866025,-3.5) circle (0.288);
\draw[black, line width=0.4mm, fill=white] (1.73205,-2) circle (0.288);
\draw[black, line width=0.4mm, fill=white] (2.59808,-1.5) circle (0.288);
\draw[black, line width=0.4mm, fill=white] (2.59808,-3.5) circle (0.288);
\node[align=left] at (2.25167,-2.8) {\small $p_2$};
\draw[black, line width=0.4mm, fill=white] (2.59808,-4.5) circle (0.288);
\draw[black, line width=0.4mm, fill=white] (3.4641,-1) circle (0.288);
\draw[black, line width=0.4mm, fill=white] (3.4641,-2) circle (0.288);
\draw[black, line width=0.4mm, fill=white] (3.4641,-3) circle (0.288);
\draw[black, line width=0.4mm, fill=white] (4.33013,-0.5) circle (0.288);
\draw[black, line width=0.4mm, fill=white] (4.33013,-1.5) circle (0.288);
\draw[black, line width=0.4mm, fill=white] (4.33013,-4.5) circle (0.288);
\node[align=left] at (3.98372,-3.8) {\small $p_3$};
\draw[black, line width=0.4mm, fill=white] (4.33013,-5.5) circle (0.288);
\draw[black, line width=0.4mm, fill=white] (5.19615,-2) circle (0.288);
\draw[black, line width=0.4mm, fill=white] (5.19615,-4) circle (0.288);
\draw[black, line width=0.4mm, fill=white] (5.19615,-5) circle (0.288);
\draw[black, line width=0.4mm, fill=white] (6.06218,-2.5) circle (0.288);
\draw[black, line width=0.4mm, fill=white] (6.06218,-3.5) circle (0.288);
\draw[black, line width=0.5mm] (1.12583,-2.35) -- (1.47224,-2.15);
\draw[black, line width=0.5mm] (5.45596,-3.85) -- (5.80237,-3.65);
\draw[black, line width=0.5mm] (3.72391,-0.85) -- (4.07032,-0.65);
\draw[black, line width=0.5mm] (3.72391,-1.15) -- (4.07032,-1.35);
\draw[black, line width=0.5mm] (0.866025,-3.2) -- (0.866025,-2.8);
\draw[black, line width=0.5mm] (4.33013,-1.2) -- (4.33013,-0.8);
\draw[black, line width=0.5mm] (4.58993,-1.65) -- (4.93634,-1.85);
\draw[black, line width=0.5mm] (2.85788,-3.35) -- (3.20429,-3.15);
\draw[black, line width=0.5mm] (4.58993,-4.35) -- (4.93634,-4.15);
\draw[black, line width=0.5mm] (4.58993,-4.65) -- (4.93634,-4.85);
\draw[black, line width=0.5mm] (4.33013,-5.2) -- (4.33013,-4.8);
\draw[black, line width=0.5mm] (4.58993,-5.35) -- (4.93634,-5.15);
\draw[black, line width=0.5mm] (6.06218,-3.2) -- (6.06218,-2.8);
\draw[black, line width=0.5mm] (2.85788,-1.35) -- (3.20429,-1.15);
\draw[black, line width=0.5mm] (2.85788,-1.65) -- (3.20429,-1.85);
\draw[black, line width=0.5mm] (5.45596,-2.15) -- (5.80237,-2.35);
\draw[black, line width=0.5mm] (3.4641,-2.7) -- (3.4641,-2.3);
\draw[black, line width=0.5mm] (1.99186,-1.85) -- (2.33827,-1.65);
\draw[black, line width=0.5mm] (3.4641,-1.7) -- (3.4641,-1.3);
\draw[black, line width=0.5mm] (3.72391,-1.85) -- (4.07032,-1.65);
\draw[black, line width=0.5mm] (2.59808,-4.2) -- (2.59808,-3.8);
\draw[black, line width=0.5mm] (5.19615,-4.7) -- (5.19615,-4.3);
\draw[black!20!red, line width=0.48mm, ] (0.866025,-2.5) circle (0.408);
\draw[black!20!red, line width=0.48mm, ] (2.59808,-3.5) circle (0.408);
\draw[black!20!red, line width=0.48mm, ] (4.33013,-4.5) circle (0.408);
    \end{tikzpicture}
    \end{center}
    \caption{Shift $p_2,p_3$ up-left.}
    \label{fig:shiftable_3}
  \end{subfigure}%
  \begin{subfigure}{.22\textwidth}
    \centering
    \begin{center}
    \begin{tikzpicture}[x=0.4cm,y=0.4cm]
    \draw[lightgray] (3.4641,-7) -- (5.19615,-6);
\draw[lightgray] (2.59808,-0.5) -- (5.19615,-2);
\draw[lightgray] (2.59808,-0.5) -- (2.59808,-6.5);
\draw[lightgray] (0,-2) -- (2.59808,-0.5);
\draw[lightgray] (0,-2) -- (5.19615,-5);
\draw[lightgray] (0,-7) -- (5.19615,-4);
\draw[lightgray] (1.73205,-1) -- (1.73205,-7);
\draw[lightgray] (0,-3) -- (4.33013,-0.5);
\draw[lightgray] (0,-3) -- (5.19615,-6);
\draw[lightgray] (0,-4) -- (5.19615,-1);
\draw[lightgray] (0,-4) -- (5.19615,-7);
\draw[lightgray] (4.33013,-0.5) -- (5.19615,-1);
\draw[lightgray] (4.33013,-0.5) -- (4.33013,-6.5);
\draw[lightgray] (3.4641,-1) -- (3.4641,-7);
\draw[lightgray] (0,-5) -- (5.19615,-2);
\draw[lightgray] (0,-5) -- (3.4641,-7);
\draw[lightgray] (0.866025,-0.5) -- (5.19615,-3);
\draw[lightgray] (0.866025,-0.5) -- (0.866025,-6.5);
\draw[lightgray] (0,-1) -- (0.866025,-0.5);
\draw[lightgray] (0,-1) -- (5.19615,-4);
\draw[lightgray] (0,-1) -- (0,-7);
\draw[lightgray] (1.73205,-7) -- (5.19615,-5);
\draw[lightgray] (5.19615,-1) -- (5.19615,-7);
\draw[lightgray] (0,-6) -- (5.19615,-3);
\draw[lightgray] (0,-6) -- (1.73205,-7);
\draw[black, line width=0.4mm, fill=white] (1.73205,-6) circle (0.288);
\draw[black, line width=0.4mm, fill=white] (2.59808,-5.5) circle (0.288);
\draw[black, line width=0.4mm, fill=white] (3.4641,-2) circle (0.288);
\node[align=left] at (3.11769,-1.4) {\small $p$};
\draw[black, line width=0.4mm, fill=white] (3.4641,-3) circle (0.288);
\draw[black, line width=0.4mm, fill=white] (3.4641,-4) circle (0.288);
\draw[black, line width=0.4mm, fill=white] (3.4641,-5) circle (0.288);
\draw[black, line width=0.4mm, fill=white] (4.33013,-1.5) circle (0.288);
\draw[black, line width=0.4mm, fill=white] (4.33013,-2.5) circle (0.288);
\draw[black, line width=0.5mm] (1.99186,-5.85) -- (2.33827,-5.65);
\draw[black, line width=0.5mm] (3.4641,-4.7) -- (3.4641,-4.3);
\draw[black, line width=0.5mm] (3.4641,-3.7) -- (3.4641,-3.3);
\draw[black, line width=0.5mm] (3.4641,-2.7) -- (3.4641,-2.3);
\draw[black, line width=0.5mm] (3.72391,-2.85) -- (4.07032,-2.65);
\draw[black, line width=0.5mm] (3.72391,-1.85) -- (4.07032,-1.65);
\draw[black, line width=0.5mm] (3.72391,-2.15) -- (4.07032,-2.35);
\draw[black, line width=0.5mm] (4.33013,-2.2) -- (4.33013,-1.8);
\draw[black, line width=0.5mm] (2.85788,-5.35) -- (3.20429,-5.15);
\draw[black!20!red,-{Stealth[length=1.6mm,width=2.5mm]},line width=0.7mm] (0.866025,-5.81) -- (0.866025,-6.5);
\draw[black!20!red,-{Stealth[length=1.6mm,width=2.5mm]},line width=0.7mm] (1.46358,-5.155) -- (0.866025,-5.5);
\draw[black!20!red,-{Stealth[length=1.6mm,width=2.5mm]},line width=0.7mm] (2.59808,-2.81) -- (2.59808,-3.5);
\draw[black!20!red,-{Stealth[length=1.6mm,width=2.5mm]},line width=0.7mm] (2.59808,-3.81) -- (2.59808,-4.5);
\draw[black!20!red,-{Stealth[length=1.6mm,width=2.5mm]},line width=0.7mm] (2.32961,-4.655) -- (1.73205,-5);
\draw[black!20!red, line width=0.48mm, ] (3.4641,-2) circle (0.408);
\draw[black!20!red,-{Stealth[length=1.6mm,width=2.5mm]},line width=0.7mm] (3.19563,-2.155) -- (2.59808,-2.5);
    \end{tikzpicture}
    \end{center}
    \caption{$p$ is non-shiftable.}
    \label{fig:non_shiftable}
  \end{subfigure}%
\caption{Figures~\ref{fig:shiftable_1}, \ref{fig:shiftable_2} and \ref{fig:shiftable_3} illustrate different cases for the ``shift'' operation. In all of these images, sites $p_1$ and $p_2$ are shiftable particles, while $p_3$ is not.}
\end{figure*}

\vskip.1in
\noindent \underbar{\bf The comb operation:}
%\begin{definition}[Comb]
%\label{proc:comb}
The comb procedure (applied to a combable position $(\ell,d)$) has two phases, line formation and line merging.
After the line formation phase, the first two conditions for $(\ell,d)$ to be combed will be satisfied by the configuration (Figure~\ref{fig:line_merging_before}). The line merging phase gives us the third condition (Figure~\ref{fig:line_merging_after}).

\vskip.1in
\noindent \underbar{\textbf{Line formation}}
Let $L$ denote the set of particles on lane $\ell$ on or below $(\ell,d)$. The particles in $L$ can be grouped into connected components within $L$.
The line formation phase operates from top to bottom on $L$, removing the topmost particle of each component with size greater than $1$ at each turn, until every component on $L$ has size $1$. We maintain the invariant that $(\ell+1,d+1)$ is combed after each turn, while reducing the number of particles in $L$ by $1$.

We call a site ``shiftable'' if there is a particle on the site, and it has exactly two neighboring particles, one directly below and one directly up-right of it. In a turn, there are two possible cases. Denote by $p$ the topmost particle of the topmost component with size greater than $1$.

% FIGURES WOULD BE REAAALLY USEFUL HERE
If $p$ is shiftable, we apply what we call a ``shift'', which moves a set of particles on a line either down-right or up-right, so that $p$ either becomes unoccupied or non-shiftable. To apply a shift, we consider the sequence of sites $p = p_1, p_2, \dots$, where each site $p_{i+1}$ is exactly two steps down-right of site $p_i$. Let $k$ be the largest integer such that all of the sites $p_1, p_2, \dots, p_k$ are shiftable. Figures~\ref{fig:shiftable_1}, \ref{fig:shiftable_2} and \ref{fig:shiftable_3} illustrate examples where $k=2$.
Consider the first non-shiftable site $p_{k+1}$ in the sequence, and the sites directly above and directly down-left of $p_{k+1}$, which we will call $p_{k+1}^{U}$ and $p_{k+1}^{DL}$ respectively. If $p_{k+1}$ is unoccupied or either of $p_{k+1}^{U}$ or $p_{k+1}^{DL}$ are occupied, then moving $p_{k}$ one step down-right is a valid move (Figures~\ref{fig:shiftable_1}, \ref{fig:shiftable_2}). We can thus go backwards through the sequence from $p_{k}$ to $p_1$, moving each particle one step down-right, ending with shifting $p = p_1$ one step down-right, so that the site $p$ originally was occupying now becomes unoccupied.
On the other hand, if $p_{k+1}$ is occupied but $p_{k+1}^{U}$ and $p_{k+1}^{DL}$ aren't, as $p_{k+1}$ is non-shiftable, the remaining three neighbors (down, down-right and up-right) must form a single component, meaning moving $p_{k+1}$ one step up-left is a valid move (Figure~\ref{fig:shiftable_3}). We can then subsequently move each particle from $p_{k+1}$ to $p_2$ one step up-left, culminating in $p = p_1$ becoming non-shiftable, leading in to the second case which we will describe next. Note that after a shift, the invariant that $(\ell+1,d+1)$ is combed still holds.

% THIS DEFINITELY NEEDS A FIGURE
If $p$ is not shiftable, by the invariant we maintain, the sites above, up-left and down-left of $p$ must be unoccupied, while the site directly below $p$ is occupied. Thus, if the site up-right of $p$ is occupied, so must the site down-right of $p$. As $(\ell+1,d+1)$ is combed, the component on $L$ $p$ belongs to will have no particles down-left of it, except for potentially one line of particles extending from the bottommost particle of the component. The particle at $p$ can thus be moved down-left, down along the component on $L$ $p$ belongs to, then down-left to reach the end of the beforementioned line if it exists, and down once more to join the end of this line (Figure~\ref{fig:non_shiftable}). In all, this reduces the number of particles in $L$ by $1$, while maintaining the invariant that $(\ell+1,d+1)$ is combed.

Thus, after the line formation phase, every component in $L$ will have exactly $1$ particle, while $(\ell+1,d+1)$ remains combed. As the site above $(\ell,d)$ is empty, the first two conditions for $(\ell,d)$ being combed are satisfied. The line merging phase will give us the third condition. Figures~\ref{fig:line_merging_before} and \ref{fig:line_merging_after} illustrate configurations before and after the line merging phase.

\vskip.1in
\noindent \underbar{\textbf{Line merging}}
Let $C$ denote the column of sites directly to the right of $(\ell,d)$. The lines extending down and left in the residual region of $(\ell,d)$ may extend from particles in $C$ that are not the bottommost particles of their respective components. To fix this, the line merging phase processs these lines from the lowest to the highest. To move a line downwards by one step, the particles of the line are shifted down one by one, starting from the rightmost particle of the line and ending with the particle on the end of the line. These moves are always possible as long as there is no line directly below the current line. If there is a line directly below, we merge the current line into the line below by moving the particles one at the time to the end of the line below with a straightforward sequence of moves, starting with the leftmost particle of the current line.
%\end{definition}

\begin{figure}[h]
\begin{subfigure}{.5\linewidth}
  \centering
  \begin{center}
  \begin{tikzpicture}[x=0.4cm,y=0.4cm]
  \draw[lightgray] (0,-7) -- (6.06218,-3.5);
\draw[lightgray] (0,-7) -- (5.19615,-10);
\draw[lightgray] (0,-10) -- (6.06218,-6.5);
\draw[lightgray] (3.4641,0) -- (6.06218,-1.5);
\draw[lightgray] (3.4641,0) -- (3.4641,-10);
\draw[lightgray] (5.19615,-10) -- (6.06218,-9.5);
\draw[lightgray] (0,-4) -- (6.06218,-0.5);
\draw[lightgray] (0,-4) -- (6.06218,-7.5);
\draw[lightgray] (3.4641,-10) -- (6.06218,-8.5);
\draw[lightgray] (5.19615,0) -- (6.06218,-0.5);
\draw[lightgray] (5.19615,0) -- (5.19615,-10);
\draw[lightgray] (0,-1) -- (1.73205,0);
\draw[lightgray] (0,-1) -- (6.06218,-4.5);
\draw[lightgray] (0,-5) -- (6.06218,-1.5);
\draw[lightgray] (0,-5) -- (6.06218,-8.5);
\draw[lightgray] (0.866025,-0.5) -- (0.866025,-9.5);
\draw[lightgray] (0,-8) -- (6.06218,-4.5);
\draw[lightgray] (0,-8) -- (3.4641,-10);
\draw[lightgray] (0,-2) -- (3.4641,0);
\draw[lightgray] (0,-2) -- (6.06218,-5.5);
\draw[lightgray] (2.59808,-0.5) -- (2.59808,-9.5);
\draw[lightgray] (4.33013,-0.5) -- (4.33013,-9.5);
\draw[lightgray] (6.06218,-0.5) -- (6.06218,-9.5);
\draw[lightgray] (0,-6) -- (6.06218,-2.5);
\draw[lightgray] (0,-6) -- (6.06218,-9.5);
\draw[lightgray] (0,-9) -- (6.06218,-5.5);
\draw[lightgray] (0,-9) -- (1.73205,-10);
\draw[lightgray] (0,-3) -- (5.19615,0);
\draw[lightgray] (0,-3) -- (6.06218,-6.5);
\draw[lightgray] (0,0) -- (6.06218,-3.5);
\draw[lightgray] (0,0) -- (0,-10);
\draw[lightgray] (1.73205,-10) -- (6.06218,-7.5);
\draw[lightgray] (1.73205,0) -- (6.06218,-2.5);
\draw[lightgray] (1.73205,0) -- (1.73205,-10);
\draw[black, line width=0.4mm, fill=white] (2.59808,-5.5) circle (0.288);
\draw[black, line width=0.4mm, fill=white] (3.4641,-3) circle (0.288);
\draw[black, line width=0.4mm, fill=white] (3.4641,-5) circle (0.288);
\draw[black, line width=0.4mm, fill=white] (3.4641,-8) circle (0.288);
\draw[line width=0.4mm] (4.58469,-1.24544) -- (4.07557,-1.75456);
\draw[line width=0.4mm] (4.07557,-1.24544) -- (4.58469,-1.75456);
\node[align=left] at (3.81051,-0.8) {\normalsize $(\ell,d)$};
\draw[black, line width=0.4mm, fill=white] (4.33013,-2.5) circle (0.288);
\draw[black, line width=0.4mm, fill=white] (4.33013,-4.5) circle (0.288);
\draw[black, line width=0.4mm, fill=white] (4.33013,-7.5) circle (0.288);
\draw[black, line width=0.4mm, fill=white] (5.19615,-2) circle (0.288);
\draw[black, line width=0.4mm, fill=white] (5.19615,-3) circle (0.288);
\draw[black, line width=0.4mm, fill=white] (5.19615,-4) circle (0.288);
\draw[black, line width=0.4mm, fill=white] (5.19615,-5) circle (0.288);
\draw[black, line width=0.4mm, fill=white] (5.19615,-7) circle (0.288);
\draw[black, line width=0.4mm, fill=white] (5.19615,-8) circle (0.288);
\draw[black, line width=0.5mm] (5.19615,-3.7) -- (5.19615,-3.3);
\draw[black, line width=0.5mm] (4.58993,-4.35) -- (4.93634,-4.15);
\draw[black, line width=0.5mm] (4.58993,-4.65) -- (4.93634,-4.85);
\draw[black, line width=0.5mm] (5.19615,-7.7) -- (5.19615,-7.3);
\draw[black, line width=0.5mm] (3.72391,-7.85) -- (4.07032,-7.65);
\draw[black, line width=0.5mm] (3.72391,-4.85) -- (4.07032,-4.65);
\draw[black, line width=0.5mm] (4.58993,-7.35) -- (4.93634,-7.15);
\draw[black, line width=0.5mm] (4.58993,-7.65) -- (4.93634,-7.85);
\draw[black, line width=0.5mm] (5.19615,-2.7) -- (5.19615,-2.3);
\draw[black, line width=0.5mm] (3.72391,-2.85) -- (4.07032,-2.65);
\draw[black, line width=0.5mm] (4.58993,-2.35) -- (4.93634,-2.15);
\draw[black, line width=0.5mm] (4.58993,-2.65) -- (4.93634,-2.85);
\draw[black, line width=0.5mm] (5.19615,-4.7) -- (5.19615,-4.3);
\draw[black, line width=0.5mm] (2.85788,-5.35) -- (3.20429,-5.15);
\path [pattern=north east lines, pattern color=yellow] (4.33013,-1.5) -- (0,-4) -- (0,-10) -- (3.4641,-10) -- (4.33013,-9.5) -- cycle;
\draw[dashed, line width=0.6mm] (4.33013,-1.5) -- (0,-4);
\draw[dashed, line width=0.6mm] (4.33013,-1.5) -- (4.33013,-9.5);
  \end{tikzpicture}
  \end{center}
  \caption{After line formation, before line merging.}
  \label{fig:line_merging_before}
\end{subfigure}%
\hfill
\begin{subfigure}{.5\linewidth}
  \centering
  \begin{center}
  \begin{tikzpicture}[x=0.4cm,y=0.4cm]
  \draw[lightgray] (0,-7) -- (6.06218,-3.5);
\draw[lightgray] (0,-7) -- (5.19615,-10);
\draw[lightgray] (0,-10) -- (6.06218,-6.5);
\draw[lightgray] (3.4641,0) -- (6.06218,-1.5);
\draw[lightgray] (3.4641,0) -- (3.4641,-10);
\draw[lightgray] (5.19615,-10) -- (6.06218,-9.5);
\draw[lightgray] (0,-4) -- (6.06218,-0.5);
\draw[lightgray] (0,-4) -- (6.06218,-7.5);
\draw[lightgray] (3.4641,-10) -- (6.06218,-8.5);
\draw[lightgray] (5.19615,0) -- (6.06218,-0.5);
\draw[lightgray] (5.19615,0) -- (5.19615,-10);
\draw[lightgray] (0,-1) -- (1.73205,0);
\draw[lightgray] (0,-1) -- (6.06218,-4.5);
\draw[lightgray] (0,-5) -- (6.06218,-1.5);
\draw[lightgray] (0,-5) -- (6.06218,-8.5);
\draw[lightgray] (0.866025,-0.5) -- (0.866025,-9.5);
\draw[lightgray] (0,-8) -- (6.06218,-4.5);
\draw[lightgray] (0,-8) -- (3.4641,-10);
\draw[lightgray] (0,-2) -- (3.4641,0);
\draw[lightgray] (0,-2) -- (6.06218,-5.5);
\draw[lightgray] (2.59808,-0.5) -- (2.59808,-9.5);
\draw[lightgray] (4.33013,-0.5) -- (4.33013,-9.5);
\draw[lightgray] (6.06218,-0.5) -- (6.06218,-9.5);
\draw[lightgray] (0,-6) -- (6.06218,-2.5);
\draw[lightgray] (0,-6) -- (6.06218,-9.5);
\draw[lightgray] (0,-9) -- (6.06218,-5.5);
\draw[lightgray] (0,-9) -- (1.73205,-10);
\draw[lightgray] (0,-3) -- (5.19615,0);
\draw[lightgray] (0,-3) -- (6.06218,-6.5);
\draw[lightgray] (0,0) -- (6.06218,-3.5);
\draw[lightgray] (0,0) -- (0,-10);
\draw[lightgray] (1.73205,-10) -- (6.06218,-7.5);
\draw[lightgray] (1.73205,0) -- (6.06218,-2.5);
\draw[lightgray] (1.73205,0) -- (1.73205,-10);
\draw[black, line width=0.4mm, fill=white] (0.866025,-7.5) circle (0.288);
\draw[black, line width=0.4mm, fill=white] (1.73205,-7) circle (0.288);
\draw[black, line width=0.4mm, fill=white] (2.59808,-6.5) circle (0.288);
\draw[black, line width=0.4mm, fill=white] (3.4641,-6) circle (0.288);
\draw[black, line width=0.4mm, fill=white] (3.4641,-9) circle (0.288);
\draw[line width=0.4mm] (4.58469,-1.24544) -- (4.07557,-1.75456);
\draw[line width=0.4mm] (4.07557,-1.24544) -- (4.58469,-1.75456);
\node[align=left] at (3.81051,-0.8) {\normalsize $(\ell,d)$};
\draw[black, line width=0.4mm, fill=white] (4.33013,-5.5) circle (0.288);
\draw[black, line width=0.4mm, fill=white] (4.33013,-8.5) circle (0.288);
\draw[black, line width=0.4mm, fill=white] (5.19615,-2) circle (0.288);
\draw[black, line width=0.4mm, fill=white] (5.19615,-3) circle (0.288);
\draw[black, line width=0.4mm, fill=white] (5.19615,-4) circle (0.288);
\draw[black, line width=0.4mm, fill=white] (5.19615,-5) circle (0.288);
\draw[black, line width=0.4mm, fill=white] (5.19615,-7) circle (0.288);
\draw[black, line width=0.4mm, fill=white] (5.19615,-8) circle (0.288);
\draw[black, line width=0.5mm] (1.99186,-6.85) -- (2.33827,-6.65);
\draw[black, line width=0.5mm] (5.19615,-3.7) -- (5.19615,-3.3);
\draw[black, line width=0.5mm] (3.72391,-8.85) -- (4.07032,-8.65);
\draw[black, line width=0.5mm] (3.72391,-5.85) -- (4.07032,-5.65);
\draw[black, line width=0.5mm] (4.58993,-5.35) -- (4.93634,-5.15);
\draw[black, line width=0.5mm] (5.19615,-7.7) -- (5.19615,-7.3);
\draw[black, line width=0.5mm] (5.19615,-2.7) -- (5.19615,-2.3);
\draw[black, line width=0.5mm] (5.19615,-4.7) -- (5.19615,-4.3);
\draw[black, line width=0.5mm] (2.85788,-6.35) -- (3.20429,-6.15);
\draw[black, line width=0.5mm] (1.12583,-7.35) -- (1.47224,-7.15);
\draw[black, line width=0.5mm] (4.58993,-8.35) -- (4.93634,-8.15);
\path [pattern=north east lines, pattern color=yellow] (4.33013,-1.5) -- (0,-4) -- (0,-10) -- (3.4641,-10) -- (4.33013,-9.5) -- cycle;
\draw[dashed, line width=0.6mm] (4.33013,-1.5) -- (0,-4);
\draw[dashed, line width=0.6mm] (4.33013,-1.5) -- (4.33013,-9.5);
  \end{tikzpicture}
  \end{center}
  \caption{After line merging.}
  \label{fig:line_merging_after}
\end{subfigure}%
\caption{The results after the line formation and line merging phases when the comb procedure is applied to a combable position $(\ell,d)$.}
\end{figure}
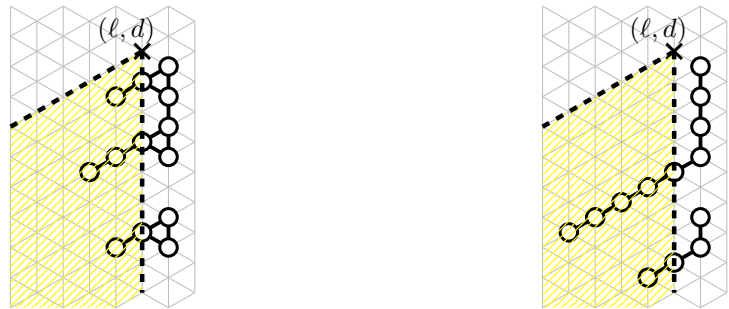

The description of the comb procedure above gives us the following Lemma:

\begin{lemma}
\label{lem:combmakescombed}
After executing a comb operation on a combable position $(\ell,d)$, the position $(\ell,d)$ will be combed (Definition~\ref{defn:combed}).
\end{lemma}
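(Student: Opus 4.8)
The plan is to check that the two phases of the comb procedure, run in sequence starting from a combable position $(\ell,d)$, terminate and leave the configuration satisfying the three conditions of Definition~\ref{defn:combed}. The engine of the argument is the invariant flagged in the description of line formation: after every turn the position $(\ell+1,d+1)$ is still combed and the number $|L|$ of particles of $L$ has strictly decreased. I would first nail down this invariant through the case analysis already sketched. For a turn with $p$ the topmost particle of the topmost size-$>1$ component of $L$: if $p$ is shiftable, follow the maximal run $p=p_1,\dots,p_k$ of shiftable sites (consecutive ones two steps down-right apart) and the first non-shiftable $p_{k+1}$; in the subcase that $p_{k+1}$ is empty or that $p_{k+1}^{U}$ or $p_{k+1}^{DL}$ is occupied, verify that the reverse cascade of down-right moves $p_k,\dots,p_1$ consists of legal moves preserving connectivity and leaving $(\ell+1,d+1)$ combed, so that the site $p$ becomes empty; in the complementary subcase, verify that the up-left cascade $p_{k+1},\dots,p_2$ is legal (using that a non-shiftable occupied $p_{k+1}$ with $p_{k+1}^{U},p_{k+1}^{DL}$ empty has its remaining three neighbours forming one component) and makes $p=p_1$ non-shiftable, reducing to the next case. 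If $p$ is non-shiftable, use the invariant to conclude that the sites above, up-left and down-left of $p$ are empty while the site below $p$ is occupied (so an occupied up-right neighbour forces an occupied down-right neighbour), whence the path ``down-left, down along the $L$-component of $p$, down-left, down'' that appends $p$ to the hanging line is a legal sequence of moves. In either case exactly one particle leaves $L$ and the invariant is restored, and one additionally checks that no move ever touches depth less than $d$, so the site directly above $(\ell,d)$, empty by combability, stays empty; this depth confinement is the only place the reasoning departs from that of \cite{Cannon2016}, which I would cite for the connectivity and move-validity bookkeeping of the cascades themselves.

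Granting the invariant, line formation terminates since $|L|\ge 0$ drops each turn, halting with every component of $L$ a singleton while $(\ell+1,d+1)$ is still combed. From this I would read off conditions~(1) and~(2) of Definition~\ref{defn:combed} for $(\ell,d)$: the \residualregion{} of $(\ell,d)$ is the disjoint union of the \residualregion{} of $(\ell+1,d+1)$ with the sites of lane $\ell$ at depth at least $d$, so on the former part both conditions hold by the invariant; on lane $\ell$ the surviving particles (those of $L$) are now isolated within $L$, hence each is a topmost particle of the \residualregion{} with the site directly above it empty -- using that the site above $(\ell,d)$ is empty by combability -- and each begins a (possibly length-one) line running down and left. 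This is exactly conditions~(1) and~(2).

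Finally I would analyze line merging to obtain condition~(3) without disturbing~(1) or~(2). Processing the down-and-left lines of the \residualregion{} from the lowest to the highest, check that sliding a line down by one step -- moving its particles one at a time from the rightmost to the far end -- is a legal sequence of moves precisely when no line lies immediately below it, and that, when a line does lie immediately below, appending the current line onto it one particle at a time (leftmost first) is legal. Because the lines are handled bottom-up, when the phase ends every surviving line is the down-and-left extension of a particle in the column $C$ immediately to the right of the \residualregion{} that has no particle directly below it, which is condition~(3). Since all of these moves stay inside the \residualregion{} and merely redistribute particles among the down-and-left lines, conditions~(1) and~(2) survive, and $(\ell,d)$ is combed.

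The main obstacle is the bookkeeping in the shiftable branch of a line-formation turn: one must simultaneously verify that each move of the forward or reverse cascade is a valid single-particle move, keeps the configuration connected, and preserves the combedness of $(\ell+1,d+1)$, and that the ``$p_{k+1}$ occupied'' branch genuinely collapses into the non-shiftable case. I expect to lean on \cite{Cannon2016} for the core of these checks, contributing here only the verification that restricting all activity to depths at least $d$ leaves that reasoning intact and additionally secures condition~(1) at $(\ell,d)$ itself.
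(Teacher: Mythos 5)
Your proposal is correct and follows essentially the same route as the paper, which proves this lemma simply by the description of the comb procedure itself: the line-formation invariant (that $(\ell+1,d+1)$ stays combed while $|L|$ strictly decreases) yields conditions (1) and (2) of Definition~\ref{defn:combed}, line merging then secures condition (3), and the low-level move-validity checks are delegated to Cannon et al.\ \cite{Cannon2016}. Your added observation that all activity stays at depth at least $d$ (so the empty site above $(\ell,d)$ is preserved) is exactly the small extra verification the paper's restriction-to-a-depth variant requires.
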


The following Lemma states that combs only affect sites below it.

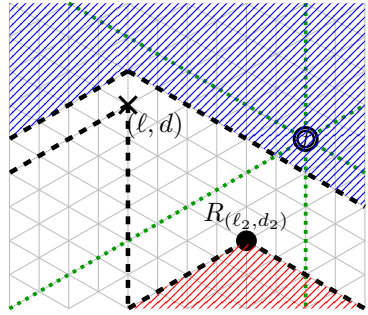
\begin{figure}[h]
%\centering
%\includegraphics[width=.3\linewidth]{diagrams_irreducibility/untouched_region.png}
\begin{center}
\begin{tikzpicture}[x=0.45cm,y=0.45cm]
\draw[lightgray] (5.19615,-9) -- (10.3923,-6);
\draw[lightgray] (0,-7) -- (10.3923,-1);
\draw[lightgray] (0,-7) -- (3.4641,-9);
\draw[lightgray] (3.4641,0) -- (10.3923,-4);
\draw[lightgray] (3.4641,0) -- (3.4641,-9);
\draw[lightgray] (7.79423,-0.5) -- (7.79423,-8.5);
\draw[lightgray] (0,-4) -- (6.9282,0);
\draw[lightgray] (0,-4) -- (8.66025,-9);
\draw[lightgray] (5.19615,0) -- (10.3923,-3);
\draw[lightgray] (5.19615,0) -- (5.19615,-9);
\draw[lightgray] (9.52628,-0.5) -- (9.52628,-8.5);
\draw[lightgray] (0,-1) -- (1.73205,0);
\draw[lightgray] (0,-1) -- (10.3923,-7);
\draw[lightgray] (6.9282,-9) -- (10.3923,-7);
\draw[lightgray] (8.66025,-9) -- (10.3923,-8);
\draw[lightgray] (0,-5) -- (8.66025,0);
\draw[lightgray] (0,-5) -- (6.9282,-9);
\draw[lightgray] (0.866025,-0.5) -- (0.866025,-8.5);
\draw[lightgray] (0,-8) -- (10.3923,-2);
\draw[lightgray] (0,-8) -- (1.73205,-9);
\draw[lightgray] (0,-2) -- (3.4641,0);
\draw[lightgray] (0,-2) -- (10.3923,-8);
\draw[lightgray] (2.59808,-0.5) -- (2.59808,-8.5);
\draw[lightgray] (6.9282,0) -- (10.3923,-2);
\draw[lightgray] (6.9282,0) -- (6.9282,-9);
\draw[lightgray] (4.33013,-0.5) -- (4.33013,-8.5);
\draw[lightgray] (8.66025,0) -- (10.3923,-1);
\draw[lightgray] (8.66025,0) -- (8.66025,-9);
\draw[lightgray] (6.06218,-0.5) -- (6.06218,-8.5);
\draw[lightgray] (10.3923,0) -- (10.3923,-9);
\draw[lightgray] (0,-6) -- (10.3923,0);
\draw[lightgray] (0,-6) -- (5.19615,-9);
\draw[lightgray] (0,-9) -- (10.3923,-3);
\draw[lightgray] (0,-3) -- (5.19615,0);
\draw[lightgray] (0,-3) -- (10.3923,-9);
\draw[lightgray] (3.4641,-9) -- (10.3923,-5);
\draw[lightgray] (1.73205,-9) -- (10.3923,-4);
\draw[lightgray] (0,0) -- (10.3923,-6);
\draw[lightgray] (0,0) -- (0,-9);
\draw[lightgray] (1.73205,0) -- (10.3923,-5);
\draw[lightgray] (1.73205,0) -- (1.73205,-9);
\draw[black!40!green, dotted, line width=0.5mm] (0,-9) -- (10.3923,-3);
\draw[black!40!green, dotted, line width=0.5mm] (1.73205,0) -- (10.3923,-5);
\draw[black!40!green, dotted, line width=0.5mm] (8.66025,0) -- (8.66025,-9);
\draw[line width=0.4mm] (3.71866,-2.74544) -- (3.20954,-3.25456);
\draw[line width=0.4mm] (3.20954,-2.74544) -- (3.71866,-3.25456);
\node[align=left] at (4.24352,-3.6) {\normalsize $(\ell,d)$};
\draw[fill=black] (6.9282,-7) circle (0.288);
\node[align=left] at (6.9282,-6.3) {\normalsize $R_{(\ell_2,d_2)}$};
\draw[black, line width=0.4mm, fill=white] (8.66025,-4) circle (0.336);
\draw[black, line width=0.32mm] (8.66025,-4) circle (0.24);
\node[align=left] at (8.66025,-4) {\scriptsize $f$};
\path [pattern=north east lines, pattern color=blue] (0,-4) -- (3.4641,-2) -- (10.3923,-6) -- (10.3923,0) -- (0,0) -- cycle;
\draw[dashed, line width=0.6mm] (3.4641,-2) -- (0,-4);
\draw[dashed, line width=0.6mm] (3.4641,-2) -- (10.3923,-6);
\draw[dashed, line width=0.6mm] (3.4641,-3) -- (0,-5);
\draw[dashed, line width=0.6mm] (3.4641,-3) -- (3.4641,-9);
\path [pattern=north east lines, pattern color=red] (3.4641,-9) -- (6.9282,-7) -- (10.3923,-9) -- cycle;
\draw[dashed, line width=0.6mm] (6.9282,-7) -- (3.4641,-9);
\draw[dashed, line width=0.6mm] (6.9282,-7) -- (10.3923,-9);
\end{tikzpicture}
\end{center}
\caption{Illustration of the unaffected region above the position to be combed $(\ell,d)$ from Lemma~\ref{lem:aboveunaffected}, and an unenterable region $R_{(\ell_2,d_2)}$ from Lemma~\ref{lem:unenterableregion} corresponding to some position $(\ell_2,d_2)$ strictly to the right of $(\ell,d)$. Both of these regions include the boundaries drawn in the Figure.}
\label{fig:untouched_region}
\end{figure}

\begin{lemma}[Unaffected Region Above]
\label{lem:aboveunaffected}
Consider the two half-lines extending down-left and down-right from a combable position $(\ell,d)$ as in Figure~\ref{fig:untouched_region}. A comb operation on $(\ell,d)$ will not affect (will not move any particle into or out of) any site above these lines, not including the lines themselves.
In addition, if the site $(\ell-1,d)$ (one step directly down-right) is occupied, no site on the half-line going down-right from $(\ell-1,d)$, including $(\ell-1,d)$ itself, will be affected either.
\end{lemma}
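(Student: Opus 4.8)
The plan is to prove the statement by a move-by-move audit of the comb procedure on \((\ell,d)\): I will show that every elementary particle move performed during the comb has \emph{both} of its endpoints outside the region \(U\) that the lemma calls unaffected, and (for the second sentence) that under the stated hypothesis every move also avoids the down-right ray \(R = \{(\ell-1-t,\,d) : t \ge 0\}\). It is cleanest to work with the complement: let \(\mathcal{A}\) be the union of the \residualregion{} of \((\ell,d)\) with the quadrant of positions \((\ell',d')\) having \(\ell' < \ell\) and \(d' \ge d\). A direct check from the coordinate conventions shows that \(\mathcal A\) is exactly the set of positions that are \emph{not} above both half-lines of Figure~\ref{fig:untouched_region} (the \residualregion{} covers everything on or below the down-left half-line; the quadrant \(\ell' < \ell,\ d' \ge d\) covers the remaining positions that lie on or below the down-right half-line). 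So the first sentence is equivalent to the assertion that every move of the comb keeps the moved particle inside \(\mathcal A\).

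For the \textbf{line formation} phase I would treat the two cases of the procedure separately. In a ``shift'', the sites \(p = p_1, p_2, \dots\) all sit on lanes \(\ell, \ell-2, \ell-4, \dots\) at the single depth \(\delta := \mathrm{depth}(p)\), and \(\delta \ge d\) because \(p \in L\) lies on lane \(\ell\) on or below \((\ell,d)\); the down-right variant moves particles from lanes \(\le \ell\) to lanes \(\le \ell-1\), and the up-left variant moves \(p_{k+1},\dots,p_2\) from lanes \(\le \ell-2\) to lanes \(\le \ell-1\) — in every case the moved particle stays on a lane \(\le \ell\) at depth \(\delta \ge d\), hence inside \(\mathcal A\). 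In the non-shiftable case, the particle at \(p = (\ell,\delta)\) is routed down-left, then down alongside its component, then down-left and down to the foot of the trailing line; whenever this particle sits on lane \(\ell + j\) its depth is at least \(\delta + j \ge d + j\), so it never leaves the \residualregion{} of \((\ell,d)\). Thus every line-formation move stays in \(\mathcal A\). For the \textbf{line merging} phase, the objects being slid and merged are the down-left lines \emph{inside} the \residualregion{}, and the particles they are merged onto are appended to the down-left ends of other such lines, while the attachment column \(C\) sits on lane \(\ell - 1\) at depths \(\ge d\); so, once more, every move is internal to \(\mathcal A\). Combining the two phases proves the first sentence.

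The second sentence needs more, since \(R \subseteq \mathcal A\) and the first part does not by itself forbid activity at depth exactly \(d\) on lanes \(\le \ell-1\). I would carry the following invariant through the whole line-formation phase under the hypothesis that \((\ell-1,d)\) is occupied: the site \((\ell-1,d)\) is still occupied, and no particle has been moved into or out of \(R\). The key point is that as long as \((\ell-1,d)\) is occupied the position \((\ell,d)\) cannot be \emph{shiftable}, because its down-right neighbour would then be a third occupied neighbour. Since a ``shift'' is only ever initiated at the topmost particle of a size-\(>1\) component on lane \(\ell\), any shift happens at a depth \(\delta > d\) (a shift at depth \(d\) would have \(p_1 = (\ell,d)\), which is not shiftable), so a shift never touches \(R\) or \((\ell-1,d)\); and the only remaining depth-\(d\) activity of line formation is the non-shiftable relocation of the particle at \((\ell,d)\) itself, which — as analysed above — steps immediately onto lane \(\ell+1\) and thereafter stays on lanes \(\ge \ell\), again touching neither \(R\) nor \((\ell-1,d)\). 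After that relocation lane \(\ell\) is empty at depth \(d\), so no further move occurs at depth \(d\), and the invariant persists to the end of line formation. Hence after line formation \((\ell-1,d)\) is still occupied and \(R\) is untouched.

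It then remains to see that line merging leaves \(R\) alone. Line merging only moves particles that lie in the \residualregion{} (the down-left lines it slides and, when merging, the particles it carries onto the foot of a lower line), so no attachment particle of \(C\) is moved — in particular \((\ell-1,d)\) is not moved — and no particle on a lane \(\le \ell-2\) is moved at all; therefore \(R\) stays untouched through line merging as well. I expect the step requiring the most care to be exactly the invariant in the previous paragraph, namely confirming against the detailed description of line formation (Figures~\ref{fig:shiftable_1}--\ref{fig:non_shiftable}) that no shift can be initiated at depth \(d\) once \((\ell-1,d)\) is occupied, together with the bookkeeping that after line formation at most one of \((\ell,d)\) and \((\ell,d+1)\) can be occupied (otherwise they would form a size-\(\ge 2\) component on lane \(\ell\), contradicting Lemma~\ref{lem:combmakescombed}), so that line merging never has a line attached at \((\ell-1,d)\) that it is forced to slide or merge.
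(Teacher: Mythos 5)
Your proposal is correct and follows essentially the same route as the paper's proof: non-shift moves are confined to the residual region of $(\ell,d)$, shift moves stay at a fixed depth $\delta\ge d$ on lanes $\le\ell$ and never move a particle up-left past lane $\ell-1$, and the down-right ray through $(\ell-1,d)$ can only be touched by a shift initiated at $(\ell,d)$, which is impossible when $(\ell-1,d)$ is occupied since that makes $(\ell,d)$ non-shiftable. You simply carry out the move-by-move bookkeeping more explicitly than the paper does.
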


\begin{proof}
We observe that in the comb procedure, aside from the ``shift'' moves, all of the moves occur only within the residual region of $(\ell,d)$. The shift moves only go down-right or up-left, and if the shift originates from some position $p$, the shift does not move any particle up-left from $p$. Hence, the shifts also do not affect any site above the two half-lines described in the Lemma.

We observe that the only part of the procedure that can affect particles on the half-line going down-right from $(\ell-1,d)$ is a potential shift move on a particle on position $(\ell,d)$. However, if $(\ell-1,d)$ is occupied, $(\ell,d)$ will not be shiftable, and so this shift move will not occur.
\end{proof}

To apply a sequence of comb operations to ``push'' particles down towards the \targetspine{}, we only need to find a (not necessarily straight) ``line'' of vacant positions. The following definition and lemma makes this more formal.

\begin{lemma}[Combable Sequence]
Consider a sequence of pairs $((x_1,y_1),(x_2,y_2),\dots,(x_k,y_k))$, where each item in the sequence represents a $(lane,depth)$ pair. We call this a \emph{combable sequence} if:
\begin{enumerate}
\item $x_1$ vertically coincides with the leftmost particle of the configuration.
\item $x_{i+1} = x_i-1$ for all $i \in \{1,2,\dots,k-1\}$ and $x_k > 0$.
\item $y_i \geq 0$ for all $i \in \{1,2,\dots,k\}$.
\item $y_{i+1} \in \{y_i,y_i-1\}$ for all $i \in \{2,\dots,k\}$.
\item The locations $(x_i,y_i-1)$ are all vacant.
\item For each $i \in \{1,2,\dots,k\}$, if $y_i=0$, then the position $(x_i-1,0)$ must be occupied by a particle.
\end{enumerate}
An example of such a sequence is illustrated in Figure~\ref{fig:spine_comb}.
\end{lemma}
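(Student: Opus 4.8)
The plan is to prove that the comb operation can be applied to the positions $(x_1,y_1),(x_2,y_2),\dots,(x_k,y_k)$ in this order, with each $(x_i,y_i)$ combable (Definition~\ref{defn:combable}) at the moment it is combed, so that the whole sequence realises a legal run of compression moves in the presence of food; I also expect the statement to record that once $(x_i,y_i)$ has been combed it is combed in the sense of Definition~\ref{defn:combed}. I would argue by induction on $i$, carrying the invariant that after combing $(x_1,y_1),\dots,(x_i,y_i)$ the position $(x_i,y_i)$ is combed; this is exactly the conclusion of Lemma~\ref{lem:combmakescombed} applied to a single comb, provided the position was combable when combed. Conditions~2 and~3, together with $x_1>x_2>\dots>x_k$, guarantee $x_i>0$ and $y_i\ge 0$ for every $i$, so that the comb operation is in fact defined at each position.

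For the base case, condition~1 places $x_1$ at the lane of the overall leftmost particle, so the residual region of $(x_1+1,y_1+1)$ contains no particles and is vacuously combed; combined with condition~5 — which is precisely the requirement that the site directly above $(x_1,y_1)$ be vacant — this makes $(x_1,y_1)$ combable. For the inductive step I need $(x_{i+1},y_{i+1})$ to be combable after $(x_1,y_1),\dots,(x_i,y_i)$ have been combed. Since $x_{i+1}=x_i-1$, the down-left neighbour of $(x_{i+1},y_{i+1})$ is $(x_i,y_{i+1}+1)$, and by condition~4 this is either $(x_i,y_i)$ (when $y_{i+1}=y_i-1$), which is combed by the invariant, or $(x_i,y_i+1)$ (when $y_{i+1}=y_i$). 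The second case requires an auxiliary monotonicity fact: if $(\ell,d)$ is combed then so is $(\ell,d+1)$. The remaining combability requirement for $(x_{i+1},y_{i+1})$ is that $(x_{i+1},y_{i+1}-1)$ be vacant; this holds initially by condition~5 and is preserved, because by Lemma~\ref{lem:aboveunaffected} a comb at $(x_j,y_j)$ touches only sites of depth at least $y_j$, while $y_{i+1}-1<y_{i+1}\le y_j$ for every $j\le i$ (the $y$'s are non-increasing by condition~4).

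Next I would verify legality of the moves. Each comb is a composition of the shift and line-rearrangement moves of \cite{Cannon2016}, which are valid compression moves, so the only new point is that the food particle at the centre $(0,0)$ must never move. By Lemma~\ref{lem:aboveunaffected} a comb at $(x_i,y_i)$ affects only sites of depth at least $y_i$; hence when $y_i>0$ the depth-$0$ row — in particular the food and the inner portions of the \spines{} — is untouched automatically. When $y_i=0$, condition~6 provides an occupied site at $(x_i-1,0)$, so the extra clause of Lemma~\ref{lem:aboveunaffected} freezes the entire depth-$0$ half-line running from $(x_i-1,0)$ toward the food, the food included. Thus no comb ever displaces the food, and connectivity of the configuration is inherited from the food-free argument of \cite{Cannon2016}.

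The main obstacle is the auxiliary monotonicity claim that $(\ell,d)$ combed implies $(\ell,d+1)$ combed, which is exactly what the $y_{i+1}=y_i$ transitions of condition~4 force one to use. Passing from the residual region of $(\ell,d)$ to that of $(\ell,d+1)$ deletes precisely the diagonal half-line through $(\ell,d)$, and I expect the argument to turn on condition~3 of Definition~\ref{defn:combed} — that every line of particles in the residual region emanates from a particle of the column just to its right which has no particle directly below it — forcing that deleted diagonal either to be empty or, when occupied, to make the next diagonal empty, after which conditions~1--3 transfer verbatim to the smaller region. This sub-lemma, plus the verification that condition~6 is exactly what is needed to shield the food when a sequence position lies on the \sourcespine{}, is where the real work lies; the remaining bookkeeping — the base case and the depth counts bounding which earlier combs can reach which sites — is routine once Lemma~\ref{lem:aboveunaffected} is read as a statement about the downward cone below $(\ell,d)$.
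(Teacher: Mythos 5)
Your proof is correct and follows essentially the same induction as the paper: the identical base case (no particles in any lane left of $x_1$, so the residual region of $(x_1+1,y_1+1)$ is trivially combed), the same use of Lemma~\ref{lem:aboveunaffected} to show that earlier combs cannot disturb the vacant site $(x_{i+1},y_{i+1}-1)$ (nor the occupied site demanded by condition~6), and the same case split on $y_{i+1}\in\{y_i,y_i-1\}$ with Lemma~\ref{lem:combmakescombed} supplying the inductive invariant. The one genuine divergence is how you close the case $y_{i+1}=y_i$: you route it through a general monotonicity sub-lemma, ``$(\ell,d)$ combed implies $(\ell,d+1)$ combed,'' and flag its proof as the remaining real work. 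That sub-lemma is true, and your sketch has the right mechanism: by condition~3 of Definition~\ref{defn:combed}, if two consecutive diagonals of the residual region both carried particles, their feeding column particles would be vertically adjacent, and the upper one would have a particle directly below it, which is forbidden; so the deleted diagonal and the one below it cannot both be occupied, and conditions~1--3 pass to the smaller region. The paper, however, gets the needed instance for free from a fact you have already established: the vacant, comb-protected site $(x_{i+1},y_{i+1}-1)$ is precisely the up-right neighbour of $(x_i,y_i)$, i.e.\ the unique column site from which any line on the top diagonal of the residual region of $(x_i,y_i)$ would have to stretch; hence that diagonal is empty and $(x_{i+1}+1,y_{i+1}+1)=(x_i,y_i+1)$ is combed with no case analysis. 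So your route is sound but carries an extra (unproven, though provable) lemma where a one-line appeal to condition~5 suffices. Your additional remarks about move legality and the immobility of the food go beyond what this lemma asserts --- validity of the individual moves is part of the comb procedure itself, and condition~6 is really exploited later (Lemma~\ref{lem:spinecomb}) rather than for combability --- but they introduce no errors.
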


\begin{definition}[Combing a Sequence]
\label{proc:combingasequence}
Consider a combable sequence $((x_1,y_1),(x_2,y_2),\dots,(x_k,y_k))$. Combing this sequence refers to combing each pair $(x_i,y_i)$ in succession. The following Lemma justifies that this is always possible.
\end{definition}

\begin{lemma}[Combability of Each Step in a Sequence]
When combing a combable sequence $((x_1,y_1),(x_2,y_2),\dots,(x_k,y_k))$ as described in Definition~\ref{proc:combingasequence}, when $(x_i,y_i)$ is the next position to be combed, $(x_i,y_i)$ will be combable.
\end{lemma}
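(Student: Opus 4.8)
The plan is to argue by induction on the index $i$ of the position currently being combed, checking for each $i$ the two requirements of Definition~\ref{defn:combable} for $(x_i,y_i)$: that the site directly above it is empty, and that the down-left neighbor $(x_i+1,y_i+1)$ is combed. The argument leans on two structural consequences of the combable-sequence hypotheses: the lanes strictly decrease, $x_1>x_2>\cdots>x_k>0$ (condition (2)), and the depths are non-increasing, $y_1\ge y_2\ge\cdots\ge y_k\ge 0$ (conditions (3)--(4)). I also record the elementary identity $(x_i+1,y_i+1)=(x_{i-1},y_i+1)$, which together with condition (4) shows that for $i\ge 2$ the position $(x_i+1,y_i+1)$ is always one of the two positions $(x_{i-1},y_{i-1})$ or $(x_{i-1},y_{i-1}+1)$.

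For the base case $i=1$, nothing has been moved yet, so $(x_1,y_1-1)$---the site directly above $(x_1,y_1)$---is vacant by condition (5). For the ``combed'' requirement, every position of the residual region (Definition~\ref{defn:residualregion}) of $(x_1+1,y_1+1)$ lies on lane $x_1+1$ or higher; but by condition (1) lane $x_1$ already carries the leftmost particle of the configuration, so no particle occupies a lane exceeding $x_1$, hence that residual region is empty and the three clauses of Definition~\ref{defn:combed} hold vacuously. Thus $(x_1,y_1)$ is combable.

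For the inductive step, assume $(x_{i-1},y_{i-1})$ was combable when it was its turn; having combed it, Lemma~\ref{lem:combmakescombed} tells us $(x_{i-1},y_{i-1})$ is now combed, and no moves occur before we process $(x_i,y_i)$. For the ``site above'' requirement, $(x_i,y_i-1)$ was vacant in the initial configuration, and I claim no earlier comb disturbs it: by Lemma~\ref{lem:aboveunaffected}, combing $(x_j,y_j)$ for $j<i$ moves no particle into or out of any site lying strictly above the half-lines running down-left and down-right from $(x_j,y_j)$, and since $x_i<x_j$ and $y_i-1<y_i\le y_j$, the site $(x_i,y_i-1)$ lies strictly above both of those half-lines and is therefore untouched. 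For the ``combed'' requirement, the identity above makes $(x_i+1,y_i+1)$ equal either to $(x_{i-1},y_{i-1})$, which is combed and we are done, or to $(x_{i-1},y_{i-1}+1)$; in the latter case I invoke the monotonicity principle that a combed position stays combed one step below, i.e. $(\ell,d)$ combed implies $(\ell,d+1)$ combed. This holds because the residual region of $(\ell,d+1)$ is exactly that of $(\ell,d)$ with its bounding down-left half-line deleted, and any maximal down-left line of particles either sits entirely on that half-line (and vanishes) or is disjoint from it (and survives intact, along with the particle immediately to its right from which it stretches, whose site directly below stays empty), so each of the three clauses of Definition~\ref{defn:combed} is inherited. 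Applying this to the combed position $(x_{i-1},y_{i-1})$ shows $(x_{i-1},y_{i-1}+1)=(x_i+1,y_i+1)$ is combed. In either case $(x_i,y_i)$ is combable, completing the induction.

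The main obstacle I expect is the geometric bookkeeping rather than the induction scaffold: making the ``site above $(x_i,y_i)$'' argument airtight against every earlier comb given the staircase shape of the sequence---in particular in the case $y_i=0$, where condition (6) together with the ``In addition'' clause of Lemma~\ref{lem:aboveunaffected} is precisely what keeps the portion of the spine inward of $(x_i,0)$ fixed---and carefully verifying that clause (3) of Definition~\ref{defn:combed}, which refers to the column just to the right of the residual region and hence shifts along with it, really does survive the downward shift $(\ell,d)\mapsto(\ell,d+1)$.
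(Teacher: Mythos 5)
Your induction scaffold, your base case, and your use of Lemma~\ref{lem:aboveunaffected} to keep the site $(x_i,y_i-1)$ vacant all match the paper's proof. The genuine problem is the ``monotonicity principle'' you invoke in the case $y_i=y_{i-1}$: it is simply not true in general that $(\ell,d)$ combed implies $(\ell,d+1)$ combed, and your justification only tracks clauses (2) and (3) of Definition~\ref{defn:combed}. The failure is in clause (1): when a maximal line lying on the top diagonal of the residual region of $(\ell,d)$ is ``deleted'' from the region, its particles are still physically present, sitting directly above particles of deeper lines in the same lanes; those deeper particles become the topmost particles of the residual region of $(\ell,d+1)$, so clause (1) breaks. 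Concretely, take one line headed by an occupied site $(\ell-1,d-1)$ (with $(\ell-1,d)$ empty) occupying $(\ell,d),(\ell+1,d+1)$, and a second line headed by an occupied $(\ell-1,d+1)$ (with $(\ell-1,d+2)$ empty) occupying $(\ell,d+2),(\ell+1,d+3)$: this is combed at $(\ell,d)$, but at $(\ell,d+1)$ the topmost region particle in lane $\ell$ is $(\ell,d+2)$, and the occupied site $(\ell,d)$ directly above it violates clause (1).

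The repair is exactly the hypothesis you used only for the other combability condition: the vacancy of $(x_i,y_i-1)=(x_{i-1}-1,y_{i-1}-1)$. By clause (3) of combedness of $(x_{i-1},y_{i-1})$, any particle on the top diagonal of its residual region would belong to a line stretching down-left from a particle at $(x_{i-1}-1,y_{i-1}-1)$; since that site is empty, the entire diagonal through $(x_{i-1},y_{i-1})$ is empty, and only then does deleting it preserve all three clauses, giving that $(x_i+1,y_i+1)=(x_{i-1},y_{i-1}+1)$ is combed. This is precisely how the paper argues this case (and how the proof of Lemma~\ref{lem:spinecomb} handles the analogous step), so your argument is recoverable with the fact you already have in hand; but as written, the downward-transfer step is unsupported and, stated as a general principle, false.
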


\begin{proof}
We first note that for any $i$,
by the definition of a combable sequence,
the location directly above $(x_i,y_i)$ must be empty,
and if $y_i = 0$, then $(x_i-1,0)$ is occupied.
These two conditions continue to be true even as combs $1,\dots,i-1$ are executed, as by Lemma~\ref{lem:aboveunaffected}, none of these prior combs will affect $(x_i,y_i-1)$ or $(x_i-1,y_i)$.
This covers two of the conditions necessary for $(x_i,y_i)$ to be combable.

When $i=1$, $(x_1,y_1)$ is clearly combable as there are no particles to the left of $x_1$, and the site directly above $(x_1,y_1)$ is empty.

For $i \geq 2$, as $(x_{i-1},y_{i-1})$ was combed in the previous step, $(x_{i-1},y_{i-1})$ is combed (Lemma~\ref{lem:combmakescombed}). There are two cases for $y_i$.
If $y_i = y_{i-1}-1$, then $(x_i+1,y_i+1) = (x_{i-1},y_{i-1})$ is combed.
On the other hand, if $y_i = y_{i-1}$, then $(x_i+1,y_i) = (x_{i-1},y_{i-1})$ is combed. As $(x_i,y_i-1)$ is unoccupied, every location starting from $(x_i+1,y_i)$ extending down-left must also be unoccupied, which implies $(x_i+1,y_i+1)$ is combed. This gives us our final condition, so $(x_i,y_i)$ is combable.
\end{proof}

% Properties of the Comb Operation
%\subsection{Properties of the comb operation}
In addition to the two properties of the comb operation given as Lemmas~\ref{lem:combmakescombed} and \ref{lem:aboveunaffected}, we state and show a few more properties of the comb operation that we will use later in the proof.

\begin{lemma}[Combing and \Spine{} Lengths]
\label{lem:combingspinelength}
After executing a comb on some position $(\ell,d)$ where $d < \ell$, the length of the \spine{} going down-left will be at most $\ell-1$.
\end{lemma}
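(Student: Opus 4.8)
The plan is to work in the $(lane,depth)$ coordinates attached to the up-left \sourcespine{} that were introduced above. In these coordinates the \spine{} running down-left --- the \targetspine{} --- is precisely the set of lattice positions $(j,j)$ with $j\ge 1$, and $(j,j)$ lies at distance $j$ from the food. Since the length of a \spine{} is the distance of its \anchorparticle{} from the food (or $0$ if there is none), the assertion is equivalent to: after the comb, no position $(j,j)$ with $j\ge\ell$ holds an \anchorparticle{} --- i.e.\ every such position that is occupied has an adjacent non-\spine{} particle (and if none of them is occupied there is nothing to prove). The first thing I would record is a purely geometric fact: the hypothesis $d<\ell$ is exactly what places the relevant part of the \targetspine{} inside the \residualregion{} of $(\ell,d)$ (Definition~\ref{defn:residualregion}). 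Concretely, when $d<\ell$ every $(j,j)$ with $j>\ell$ lies in the \residualregion{} together with all six of its neighbours, while $(\ell,\ell)$ lies in the \residualregion{} and so do all of its neighbours except $(\ell-1,\ell-1)$ and $(\ell-1,\ell)$, which sit in the column immediately to the right of the \residualregion{} (lane $\ell-1$).

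Next I would apply Lemma~\ref{lem:combmakescombed}: after the operation $(\ell,d)$ is combed, so by Definition~\ref{defn:combed} every particle in its \residualregion{} lies on a straight line running down-left, each such line emanating --- with no particle directly below it --- from a particle in lane $\ell-1$. Any such line that meets a position $(j,j)$ with $j\ge\ell$ runs along the \targetspine{} diagonal and therefore emanates from $(\ell-1,\ell-1)$. Hence if some $(j,j)$ with $j\ge\ell$ is occupied then $(\ell-1,\ell-1)$ is occupied and the entire contiguous run $(\ell-1,\ell-1),(\ell,\ell),\ldots,(b,b)$ is occupied for some $b\ge j$, with $(b+1,b+1)$ empty; and if $(\ell-1,\ell-1)$ is not occupied then no $(j,j)$ with $j\ge\ell$ is occupied, so the \targetspine{} has length at most $\ell-1$ and we are done. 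In the remaining case, condition~(3) of Definition~\ref{defn:combed} applied to the line emanating from $(\ell-1,\ell-1)$ forces the site $(\ell-1,\ell)$ directly below it to be empty. Working through the six neighbours of a point $(j,j)$ of the run then shows that its down neighbour $(j,j+1)$ and its down-right neighbour $(j-1,j)$ must be empty (a line through either of them would have to emanate from $(\ell-1,\ell)$). So $(j,j)$ can fail to be isolated from non-\spine{} particles only through its up neighbour $(j,j-1)$ or its up-left neighbour $(j+1,j)$, and both of these lie on the parallel down-left line one step toward the \sourcespine{}, the line emanating from $(\ell-1,\ell-2)$.

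The statement has thus been reduced to a single claim: that this parallel line is at least as long as the \targetspine{}-diagonal line, i.e.\ it contains a particle in every lane $m$ with $\ell\le m\le b$. Granting this, each $(m,m)$ of the run has the adjacent non-\spine{} particle $(m,m-1)$, hence is not an \anchorparticle{}, so the \anchorparticle{} of the \targetspine{} --- if any --- lies at distance $\le\ell-1$, which is the lemma. I expect this to be the main obstacle, and the one point at which the static Definition~\ref{defn:combed} is not by itself enough: one has to use that the comb procedure genuinely pushes the material of the wedge toward the \targetspine{}, so that the strip on the \sourcespine{} side of --- and adjacent to --- the \targetspine{} diagonal cannot end up emptier than the diagonal itself. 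I would attempt this by an induction running down the run $(\ell,\ell),\ldots,(b,b)$ lane by lane, using condition~(1) of Definition~\ref{defn:combed} (no particle directly above a topmost particle of the \residualregion{}) together with the line-formation and line-merging descriptions to rule out a lane $m$ with $(m,m)$ occupied but both $(m,m-1)$ and $(m+1,m)$ empty.

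It remains to dispose of the boundary cases. If $\ell=1$ then $d=0$, and the claim only asserts that after the operation the \targetspine{} has length $0$; this follows from the same analysis, since the \residualregion{} of $(1,0)$ already contains the whole \targetspine{}. This is also the only regime in which a neighbour $(j\pm1,j)$ or $(j,j\pm1)$ of a far \targetspine{} position could accidentally fall on another \spine{}; for $\ell\ge2$ every neighbour invoked above is a genuine non-\spine{} location, so the bookkeeping of the second paragraph is valid as stated.
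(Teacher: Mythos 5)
There is a genuine gap, and it sits exactly at the step you flag as the ``main obstacle.'' Having (correctly) deduced from condition~(3) of Definition~\ref{defn:combed} that the occupied run on the \targetspine{} diagonal must hang from an occupied particle at $(\ell-1,\ell-1)$ with $(\ell-1,\ell)$ empty, you then need the parallel down-left line through the up-neighbours $(m,m-1)$ --- the one that would emanate from $(\ell-1,\ell-2)$ --- to be occupied in every lane of the run. That claim is false, and the very argument you used for the down and down-right neighbours refutes it: any particle at $(m,m-1)$ or $(m+1,m)$ with $m\ge\ell$ lies in the \residualregion{}, so its line would have to emanate from a particle at $(\ell-1,\ell-2)$ having no particle directly below it; but $(\ell-1,\ell-1)$ is directly below $(\ell-1,\ell-2)$ and is occupied. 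Hence after the comb \emph{all four} non-\spine{} neighbours of every occupied $(j,j)$ with $j\ge\ell$ are empty, and no induction through the line-formation/line-merging phases can rescue the occupancy you want; the proposed route cannot close.

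The root cause is the reading of the \anchorparticle{} definition. Taken literally the text says ``no adjacent non-\spine{} particles,'' but the convention the paper actually uses is the opposite: the \anchorparticle{} is the outermost \spine{} particle that \emph{has} a non-\spine{} neighbour, and the \tailparticles{} beyond it are the bare ones with no non-\spine{} neighbours, which do not count toward the length. This is forced by Figure~\ref{fig:spines} (the downward \spine{} there has bare particles at distances $6,7$ yet length $5$), by the radius-$0$ ``hexagon with a tail'' being a straight line of \spine{} length $0$, and by Definition~\ref{proc:spinecomb}, where the vacancy of the sites above the \tailparticles{} is said to follow ``from the definition of tail particles.'' Under that convention the computation you have already done finishes the lemma and is essentially the paper's proof: if $(\ell,\ell)$ (equivalently $(\ell-1,\ell-1)$ together with the attached run) is empty then no \spine{} site at distance $\ge\ell$ is occupied; otherwise conditions~(2)--(3) give $(\ell-1,\ell)$ empty and, by the argument above, $(j,j\pm1)$ and $(j\pm1,j)$ empty for all $j\ge\ell$, so every occupied \spine{} site at distance $\ge\ell$ is a \tailparticle{} and the \anchorparticle{} lies at distance at most $\ell-1$. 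As written, however, your proposal is aimed at proving the wrong occupancy statement, and its key step is not just missing but false.
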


\begin{proof}
Let $S$ denote the \spine{} going down-left. The coordinates $(\ell,\ell)$ denotes the position on the \spine{} $S$ vertically below $(\ell,d)$. If $(\ell,\ell)$ is empty after the comb, then every site down-left of $(\ell,\ell)$ is also empty, so the \spine{} $S$ has length at most $\ell-1$. If $(\ell,\ell)$ is not empty after the comb, $(\ell,d)$ being combed ensures that $(\ell,\ell)$ and every particle on the \spine{} $S$ down-left of $(\ell,\ell)$ are tail particles, so \spine{} $S$ has length at most $\ell-1$.
\end{proof}

\begin{lemma}[Preservation of the Rightmost extent]
\label{lem:rightmostextent}
Let $\ell$ denote the lane (x-coordinate) of the rightmost particle of a configuration. After a comb operation is applied, the lanes to the right of $\ell$ (sites with lane less than $\ell$) will continue to be empty.
\end{lemma}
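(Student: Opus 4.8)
The plan is to recast the claim as a monotonicity statement and prove it by tracking a single quantity through the comb. Let $\ell$ be the smallest lane (x-coordinate) occupied by any particle; this is the lane of the rightmost particle, and ``the lanes to the right of $\ell$ are empty'' is precisely the assertion that $\ell$ is this minimum. So it suffices to show that during a comb operation no single-particle move ever places a particle at a lane strictly below the current minimum occupied lane; an induction on the number of elementary moves performed so far then keeps the minimum occupied lane $\ge\ell$ from start to finish, which is the lemma.

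First I would fix the dictionary between the six lattice directions and the $(lane,depth)$ coordinates: one step down is $(\ell,d)\mapsto(\ell,d+1)$, one step down-left is $(\ell,d)\mapsto(\ell+1,d+1)$ (as recorded in Definition~\ref{defn:combable}), one step up-left is $(\ell,d)\mapsto(\ell+1,d)$, one step up-right is $(\ell,d)\mapsto(\ell-1,d-1)$, and one step down-right is $(\ell,d)\mapsto(\ell-1,d)$. Reading off the comb procedure, a particle is only ever moved down-right (the chained shifts $p_1,\dots,p_k$), up-left (the alternative shift moves), and down-left or down (the non-shiftable move, and the downward shifts and merges of the line-merging phase). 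Of these, only a down-right step decreases a particle's lane, so the induction reduces to controlling down-right steps.

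For the inductive step: a down, down-left, or up-left move leaves the moving particle's lane unchanged or larger, so the minimum occupied lane is unaffected; a line-merging move that appends a particle to the bottom-left end of the line one step lower lands the particle one step down-left of an already-occupied site, hence at a lane strictly above an occupied one, so again $\ge\ell$ by the inductive hypothesis. The only remaining case is a down-right move, which the comb applies exclusively to a \emph{shiftable} particle; by definition a shiftable particle has a particle directly up-right of it, i.e.\ at a lane exactly one smaller. By the inductive hypothesis that occupied neighbour has lane $\ge\ell$, so the shiftable particle sits at lane $\ge\ell+1$ and after moving down-right it is at lane $\ge\ell$. Hence the minimum occupied lane stays $\ge\ell$ through the whole comb.

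The step that needs the most attention — bookkeeping rather than a genuine difficulty — is verifying that this case list is exhaustive over every elementary move the comb performs. Two sub-points deserve a line of justification: in a chained shift the particles $p_1,\dots,p_k$ lie at lanes $\ell_0,\ell_0-2,\ell_0-4,\dots$, so when $p_j$ is moved the earlier moves in the chain (those on $p_{j+1},\dots,p_k$, which live at least two lanes to the right of $p_j$ and only move further right) have not touched the neighbourhood of $p_j$, so $p_j$ is still shiftable and the argument above applies to it; and the ``straightforward sequences of moves'' inside line merging only slide a particle down and down-left alongside the lower line, never down-right, so they too never drop a particle below an occupied lane. Lemma~\ref{lem:aboveunaffected} is not needed for any of this, though it independently confirms the conclusion on the ``above'' side of the combed position.
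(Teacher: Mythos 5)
Your reduction to ``no elementary move ever lands strictly below the current minimum occupied lane'' is fine, and your treatment of the chained down-right shifts matches the paper's: a shiftable particle moves to the site directly below its (still occupied) up-right neighbour, hence onto an already-occupied lane. The gap is in the claim that this case list is exhaustive, specifically the assertion that the line-merging phase ``only slide[s] a particle down and down-left alongside the lower line, never down-right.'' Nothing in the procedure guarantees that the line directly below extends at least as far down-left as the current line. After line formation the lines hanging from the isolated lane-$\ell$ particles can have arbitrary relative lengths (a bare particle deposited on lane $\ell$ by an earlier shift can sit below a particle carrying a long tail inherited from the previously combed region), and the merging phase pushes the lower, possibly shorter, line into place first. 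When the current line is then merged into a strictly shorter line below, its leftmost particle must travel to the down-left end of that shorter line, which lies to its \emph{right}; any ``straightforward sequence of moves'' realizing this contains down-right steps on a particle that is not shiftable (and/or up-right steps), i.e.\ lane-decreasing moves not covered by your cases. Your invariant is not actually violated there, but your argument no longer certifies it, so the induction does not close as written.

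The paper handles exactly this by a containment argument rather than a direction-by-direction one: every move of the comb other than the shift moves takes place inside the residual region of the combed position $(\ell^*,d)$, which occupies lanes $\ge \ell^*\ge 1$, while the immobile food particle sits at lane $0$, so the rightmost particle of the configuration has lane $\le 0$ and the lanes to its right are untouched by all non-shift moves; only the shift moves need the shiftability argument you gave. You can repair your proof either by importing that observation for the entire line-merging phase (and for the intermediate travel in the non-shiftable case, for uniformity), or by arguing separately that every destination in a merge is adjacent down-left of, or directly below, an occupied site and that the intermediate sites lie on already-occupied lanes; as it stands, the unjustified ``never down-right'' claim is where the proof would fail. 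Note also that your version, if completed, would be slightly more general than the paper's, since it nowhere uses that the comb lane is strictly left of the food particle.
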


\begin{proof}
Consider a comb applied to some position $(\ell^*,d)$. As we enforce that $\ell^* > 0$ for a comb, this position is necessarily to the left of the food particle, while the rightmost particle of the configuration must be either on the same lane as the food particle or further right.
All moves aside from the ``shift'' moves in a comb procedure of a position $(\ell^*,d)$ operate only within the residual region of $(\ell^*,d)$, and so will not affect any site on lane $\ell$ or further right.

In the shift moves, a particle is only moved rightward (down-right) if it is shiftable. A shiftable particle must have a particle directly up-right of it, so a shift move cannot move a particle rightward of $\ell$.
\end{proof}

\begin{lemma}[Unenterable Region Below]
\label{lem:unenterableregion}
Consider a position $(\ell,d)$ and the diagonal half-lines extending down-left and down-right from $(\ell,d)$. Consider the region $R_{\ell,d}$ containing every location on or below these lines (Figure~\ref{fig:untouched_region}).

If the region $R_{\ell,d}$ is unoccupied, if a comb operation is applied on a lane strictly to the left of lane $\ell$, $R_{\ell,d}$ will continue to be unoccupied after the comb.
\end{lemma}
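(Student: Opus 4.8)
The plan is to induct on the elementary moves making up the comb and to maintain the invariant that $R_{\ell,d}$ stays unoccupied. First I would coordinatize as in the definition of the comb and write the comb's position as $(\ell^*,d^*)$; here ``a lane strictly to the left of lane $\ell$'' means, in the sense of Lemma~\ref{lem:rightmostextent}, a lane number $\ell^*>\ell$. Setting $c:=d-\ell$ and recalling that a down-right step decreases the lane while keeping the depth fixed and a down-left step increases both coordinates by one, the region $R_{\ell,d}$ is exactly $\{(\ell',d'):d'\ge d+\max(0,\ell'-\ell)\}$: at lanes $\ell'\ge\ell$ this is the half-plane $d'-\ell'\ge c$ under the down-left half-line, and at lanes $\ell'\le\ell$ it is the half-plane $d'\ge d$ under the depth-constant down-right half-line. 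By the description of the comb (and as already used in the proof of Lemma~\ref{lem:aboveunaffected}) every elementary move is a single step in one of the directions down-left or down (both only inside the residual region of $(\ell^*,d^*)$) or down-right or up-left (the latter two only inside a ``shift''). Since $R_{\ell,d}$ is empty before the comb, at each move the cell the particle leaves is automatically outside $R_{\ell,d}$, so it suffices to check that no elementary move has its destination inside $R_{\ell,d}$.

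For the three non-down directions I would argue as follows. A down-left move preserves $d'-\ell'$ and occurs only inside the residual region of $(\ell^*,d^*)$, hence at lanes $\ge\ell^*>\ell$; its source is occupied, hence outside $R_{\ell,d}$, so it has $d'-\ell'<c$, and therefore so does its destination, which remains at a lane $>\ell$ and so stays outside $R_{\ell,d}$. A down-right or up-left move occurs only inside a shift, and every particle moved by a shift has an occupied cell directly below it. Here I would use the elementary fact that if the down-right (or up-left) neighbour of a cell $q$ lies in $R_{\ell,d}$, then the cell directly below $q$ also lies in $R_{\ell,d}$ --- a unit change in lane changes the membership threshold $d+\max(0,\ell'-\ell)$ by at most $1$, while passing from that neighbour to the cell below $q$ raises the depth by exactly $1$. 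So a shift that pushed a particle into $R_{\ell,d}$ would force the occupied cell below that particle into the (empty) region $R_{\ell,d}$, a contradiction.

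The crux is the down moves, since a down move raises $d'-\ell'$ by one and so is the only kind that can cross the down-left boundary $\{d'-\ell'=c\}$ of $R_{\ell,d}$. Down moves appear in exactly two places: when line merging slides a line down one step, and when the non-shiftable branch of line formation relocates the top particle $p$ of a component on lane $\ell^*$. For line merging I would note that every line has a single value $v=d'-\ell'$ common to all its cells, with anchor in column $\ell^*-1$ (and $\ell^*-1\ge\ell$), and that a line is slid down only when its anchor has an occupied cell directly below it; that cell lies outside the empty $R_{\ell,d}$, which forces $v\le c-2$, so after the step the line has $d'-\ell'=v+1\le c-1<c$ and is still outside $R_{\ell,d}$; moreover once a line reaches $v=c-1$ the cell directly below its anchor already belongs to $R_{\ell,d}$ and is therefore empty, so it is slid no further. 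For the non-shiftable branch, the cell directly below $p$ is occupied (the component has size $>1$), so $d_p-\ell^*\le c-2$; the particle is then moved down the lane $\ell^*+1$ alongside the component and, via down-left moves and one concluding down move, is appended to (or started as) the line hanging from the component's bottom cell $(\ell^*,d_b)$. Since $(\ell^*,d_b)$ is occupied we have $d_b-\ell^*\le c-1$, and one checks that the entire path --- which stays at lanes $\ge\ell^*+1>\ell$ --- has $d'-\ell'\le d_b-\ell^*\le c-1<c$ throughout, the concluding down move in particular landing at $d'-\ell'=d_b-\ell^*<c$.

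The main obstacle is precisely these down moves: unlike the others they can in principle raise $d'-\ell'$ past the threshold $c$, so one has to use the internal mechanics of the comb to cap how deep a downward-moved particle can land. The unifying theme is that every particle the comb pushes downward is propped up by an adjacent occupied particle --- the component on lane $\ell^*$ in line formation, or the particle below a line's anchor in line merging --- and this occupied particle, lying outside the empty region $R_{\ell,d}$, is exactly what bounds $d'-\ell'$ at the destination. I would also dispose at the outset of the degenerate sub-case $d^*-\ell^*\ge c$, in which the residual region of $(\ell^*,d^*)$ is already contained in $R_{\ell,d}$ and hence empty, so the comb performs no moves at all. Combining the cases, no move places a particle in $R_{\ell,d}$, so $R_{\ell,d}$ remains unoccupied.
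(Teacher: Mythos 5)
Your strategy is essentially the one the paper uses: show that no elementary move of the comb can place a particle into the empty region, the key mechanism being that any particle the comb moves downward or rightward is propped up by an occupied neighbour which, lying outside the empty $R_{\ell,d}$, bounds where the moved particle can land. You are more explicit than the paper (coordinatizing $R_{\ell,d}$ as $d'\ge d+\max(0,\ell'-\ell)$ and tracking $d'-\ell'$ move by move), and you diverge in one place: for line merging the paper does not argue per move at all, but simply inspects the end state (straight down-left lines hanging from occupied, hence out-of-region, particles in the column right of the combed lane), which is enough because the lemma only concerns the configuration after the comb. Your per-move treatment is stronger but also where your two slips occur.

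Concretely: (i) your claim that ``every particle moved by a shift has an occupied cell directly below it'' is not guaranteed for the non-shiftable particle $p_{k+1}$ that begins the up-left branch of a shift (its occupied neighbours are only known to lie among the down, down-right and up-right sites). This is harmless, since an up-left step can never enter $R_{\ell,d}$ anyway: it fixes the depth while the membership threshold $d+\max(0,\ell'-\ell)$ is nondecreasing in the lane, so if the destination were in the region the occupied source would be too. (ii) Your assertion that down moves occur in ``exactly two places'' omits the moves used when a line is merged into the line directly below it during line merging; that ``straightforward sequence'' also ends with a downward drop onto the continuation of the lower line. The case is covered by exactly your argument pattern --- the lower line is occupied, hence outside $R_{\ell,d}$, so its value $d'-\ell'$ is at most $c-1$ and the landing cells inherit that bound --- or you can sidestep the under-specified merge moves entirely by arguing from the combed end state as the paper does. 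With these two repairs your proof is correct.
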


\begin{proof}
For the shift movements in the line formation phase, a particle is only moved down-right if it is shiftable, which means it must have a particle directly below it. Thus, $R_{\ell,d}$ cannot be entered from the left side (left of lane $\ell$) by this movement unless there is already a particle in $R_{\ell,d}$.
In addition, as the shift movements only move particles down-right or up-left, it cannot move particles into $R_{\ell,d}$ from the right side (right of lane $\ell$).

For the other movements in the comb operation, we only need to consider possibly entry into $R_{\ell,d}$ from the left side, as these movements occur only within the residual region of the comb, which is strictly to the left of $(\ell,d)$.

In the line formation phase, only down-left and downward movements are used. Down-left movements cannot enter $R_{\ell,d}$, and downward movements only occur when there is a particle directly down-right of the particle to be moved.

In the line merging phase, we simply need to consider the end state of the comb. The end state consists of straight lines stretching down-left from the lane (column) $\ell'$ one lane right of the lane to be combed. All of the particles in this lane are above $R_{\ell,d}$ after the line formation phase, and as $\ell' \leq \ell$, all lines stretching down-left from these particles will also be above $R_{\ell,d}$.
\end{proof}

% M
\subsection{Using combs to show ergodicity}
Our objective is to show that from any (connected) configuration, there exists a sequence of valid moves to transform the configuration into a straight line with the food particle at one end. As valid moves cannot introduce holes into a configuration, and all valid moves between hole-free configurations are reversible, this would imply that one can transform any connected configuration of particles into any hole-free configuration of particles using only valid moves.  We proceed by showing that we can always reduce the minimum spine length of any configuration with a series of moves to reach a straight line of particles.

%\subsubsection{Reducing the Minimum \Spine{} Length}
\begin{lemma}
\label{lem:reducespinelength}
If the minimum \spine{} length of a configuration is at least $1$, there exists a sequence of moves to reduce the minimum \spine{} length of the configuration.
\end{lemma}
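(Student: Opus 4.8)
The plan is to reduce the minimum \spine{} length with a single comb operation --- which is, in particular, a sequence of valid moves --- applied to a carefully chosen combable sequence, and then to read off the reduction from Lemma~\ref{lem:combingspinelength}.

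First I would fix notation: let $m\ge 1$ be the minimum \spine{} length, attained by a \spine{} $S$. Rotating and reflecting the configuration freely, orient it so that $S$ is the \targetspine{} (the one going down-left) and one of the two \spines{} adjacent to $S$ is the \sourcespine{} (going up-left). Since $m$ is the minimum, \emph{every} \spine{}, and in particular the \sourcespine{}, has length at least $m$, so its \anchorparticle{} lies at lane at least $m$; consequently the leftmost particle of the configuration lies at some lane $x_1\ge m$. The whole argument then reduces to the following: produce a combable sequence $((x_1,y_1),\dots,(x_k,y_k))$ whose final entry satisfies $1\le x_k\le m$ and $0\le y_k<x_k$. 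Granting this, Definition~\ref{proc:combingasequence} and the lemma on combability of each step say that combing the sequence is a legal sequence of moves, Lemma~\ref{lem:combmakescombed} says $(x_k,y_k)$ ends up combed, and then Lemma~\ref{lem:combingspinelength} --- applicable because $y_k<x_k$ --- says $S$ now has length at most $x_k-1\le m-1<m$. Hence the minimum \spine{} length has strictly decreased, which is what we want.

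So the substance is constructing the combable sequence. I would build it by following the upper-left profile of the configuration inward from the leftmost particle: at each step the lane drops by one, the depth stays or drops by one, it never goes above depth $0$, the site directly above stays empty (condition~5), and condition~6 is respected on the \sourcespine{} --- which is free of charge once the lane reaches $1$, since the neighbour toward the food is then the (immovable, hence occupied) food particle. Using $x_1\ge m$ one sees there is enough room to descend to lane $m$, and since the depth is non-increasing and $\ge 0$ one can arrange $y_k<x_k$ at the stopping lane. Whenever the profile would ``cap off'' a column and block the walk, I would comb the partial sequence built so far --- by Lemma~\ref{lem:combmakescombed} this leaves the traversed region as lines running down and left with nothing sticking up, removing the obstruction --- and then resume; Lemma~\ref{lem:aboveunaffected} guarantees the portion already built is untouched, and a monovariant (e.g.\ the number of particles strictly above the intended staircase) forces this loop to terminate, while Lemmas~\ref{lem:rightmostextent} and \ref{lem:unenterableregion} keep these auxiliary combs from disturbing anything to the right or below.

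The hard part is exactly this construction, and the thorny sub-case is when the leftmost particle, or the profile the walk must track, lies \emph{above} the \sourcespine{} at negative depth, so that the requirements ``depth $\ge 0$'' and ``site above empty'' conflict. I would resolve it either by switching to the \emph{other} \spine{} adjacent to $S$ as the \sourcespine{}, or by first running a comb in the wedge on the far side of the \sourcespine{} to sweep that region clear, again invoking Lemmas~\ref{lem:aboveunaffected}, \ref{lem:rightmostextent}, and \ref{lem:unenterableregion} to see that nothing essential is lost. Once a valid combable sequence with $x_k\le m$ and $y_k<x_k$ is in hand, the conclusion is immediate from Lemmas~\ref{lem:combmakescombed} and \ref{lem:combingspinelength}.
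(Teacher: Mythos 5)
Your reduction of the lemma to ``exhibit a combable sequence whose last entry $(x_k,y_k)$ satisfies $x_k\le m$ and $y_k<x_k$, then invoke Lemma~\ref{lem:combingspinelength}'' is fine as far as it goes, but the existence of such a sequence is precisely the hard content, and your sketch does not establish it; in fact the claim ``there is enough room to descend to lane $m$'' is false in general. The obstruction is condition (5) of a combable sequence (the site directly above \emph{every} position of the sequence must be vacant) combined with the fact that the depth never increases along the sequence. Consider $m=1$ with the hexagon of radius $2$ completely filled except for one \spine{} site at distance $2$, so that exactly one \spine{} has length $1$ and all others have length at least $2$, and additionally fill the two wedges flanking that minimum \spine{} with columns reaching to within two diagonals of it out to a large radius (mirror-symmetrically on both sides). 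To conclude via Lemma~\ref{lem:combingspinelength} you must comb a position of the form $(1,0)$ in some orientation, and the site directly above any such position is one of the six distance-$2$ non-\spine{} sites adjacent to the food, all of which are occupied; moreover, in the two flanking wedges any combable sequence is forced, lane by lane, to depths at least equal to the lane, so it can only arrive at lane $m$ with $y_k\ge x_k$. Your proposed remedies do not remove this obstruction: switching to the other adjacent \spine{} as \sourcespine{} fails by the mirror symmetry; combing the wedge on the far side of the \sourcespine{} pushes those particles \emph{toward} the region you need to keep clear; and interleaved combs of the partial sequence never vacate the sites above the positions still to be traversed, since by Lemma~\ref{lem:aboveunaffected} a comb does not affect anything above the half-lines through its position --- the very property you rely on to preserve the already-built prefix also guarantees that such combs cannot unblock the suffix. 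The termination ``monovariant'' is likewise only asserted.

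This difficulty is exactly what the paper's proof is organized around, and it is why a \spinecomb{} deliberately stops at lane $r+1$ rather than continuing to lane $r$. The paper does not attempt a single comb toward the minimum \spine{}; it combs \emph{from} the minimum \spine{} and proceeds around all six wedges (seven \spinecombs{}), using the gap/no-gap dichotomy (Lemmas~\ref{lem:reducingusinggap} and \ref{lem:nogapresult}): either a gap appears, in which case one further comb at lane $r$ is licensed and the minimum drops, or after Lemma~\ref{lem:reachinghexagon} the configuration is a hexagon with a tail, where an explicit single-particle move (shifting a corner or side particle of the hexagon) first creates the vacancy above a lane-$r$ position, and only then is Lemma~\ref{lem:combingspinelength} applied. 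To make your route work you would need an argument of comparable global strength that manufactures the vacancies your sequence requires; as written, that step is missing, so the proof has a genuine gap.
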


To reduce the minimum \spine{} length of the configuration, we execute \spinecombs{} in a specific order. Pick a \spine{} of minimum length and denote it as $S_0$. We apply \spinecombs{} in a counterclockwise order, from $S_0$ to $S_1$, followed by $S_1$ to $S_2$, and so on. When applying comb a comb operation from a \sourcespine{} $S_i$ to a \targetspine{} $S_{i+1}$, as always, for ease of analysis, we will treat $S_i$ and $S_{i+1}$ as the \spines{} going in the up-left and down-left directions from the food particle respectively.

\begin{definition}[\SpineComb{}]
\label{proc:spinecomb}
%A \spinecomb{} can either be one or two sequences of combs.
Let $r$ denote the length of the \sourcespine{} $S_0$. Let $r_t$ denote the distance of the furthest out particle on the \sourcespine{} from the food particle (hence the particles of distances $r+1,\dots,r_t$ are the tail particles).

A \spinecomb{} applies a comb on the combable sequence $((x_1,y_1),(x_2,y_2),\dots,(x_k,y_k))$, where $x_1$ vertically coincides with the leftmost particle of the configuration, $x_i = x_1-i+1$ for each $i \in \{2,3,\dots,k\}$, $x_k = r+1$, $y_i = 1$ whenever $x_i > r_t$, and $y_i = 0$ when $r+1 \leq x_i \leq r_t$. From the definition of tail particles one can easily verify that this $(x_i,y_i-1)$ is vacant for each $i \in \{1,2,\dots,k\}$. Figure~\ref{fig:comb_before_after} illustrates configurations before and after a \spinecomb{} is applied.
\end{definition}

\begin{lemma}
\label{lem:spinecomb}
Consider a \spinecomb{} from a \sourcespine{} of length $r$.
After the \spinecomb{}, the position $(r+1,1)$ will be combed.
Also, the region between (and including) the two half-lines extending up-left and down-left indefinitely from the position $(r+1,0)$ will be empty.
\end{lemma}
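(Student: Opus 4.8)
My plan is to derive both halves of the statement from the description of the \spinecomb{} as combing the combable sequence of Definition~\ref{proc:spinecomb} --- whose final element is $(r+1,1)$ if the \sourcespine{} has no \tailparticles{} ($r_t=r$) and $(r+1,0)$ if it does ($r_t>r$) --- together with Lemmas~\ref{lem:combmakescombed} and~\ref{lem:aboveunaffected} and one combinatorial observation: the \anchorparticle{} of the \sourcespine{} sits at $(r,0)$, and since it has no adjacent non-\spine{} particle the site $(r,-1)$ is empty; likewise, for every step of the sequence with $y_i=0$ the site $(x_i,-1)$ directly above $(x_i,0)$ is empty at the moment $(x_i,0)$ is combed, as this is required for $(x_i,0)$ to be combable (the Combability of Each Step lemma).

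For the first assertion I would argue as follows. By the Combability of Each Step lemma and Lemma~\ref{lem:combmakescombed}, combing the sequence is legal and each $(x_i,y_i)$ is combed as soon as its comb is performed; in particular the last comb leaves $(x_k,y_k)$ combed. If $r_t=r$ then $(x_k,y_k)=(r+1,1)$ and we are done. If $r_t>r$ then $(r+1,0)$ is combed, and the crux is to show the down-left half-line $(r+1,0),(r+2,1),(r+3,2),\dots$ is then empty: any particle on it lies in the \residualregion{} of $(r+1,0)$, so by Definition~\ref{defn:combed} it belongs to a straight line stretching down and left from a particle of the column directly to the right of that \residualregion{}, namely column $r$; but a down-left line through $(r+1+t,t)$ meets column $r$ only at $(r,-1)$, which is empty. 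Hence no such particle exists, and since the \residualregion{} of $(r+1,1)$ is exactly that of $(r+1,0)$ with this now-empty diagonal deleted, and the three clauses of Definition~\ref{defn:combed} speak only of occupied sites and the lines through them, each clause passes from $(r+1,0)$ to $(r+1,1)$. Thus $(r+1,1)$ is combed.

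For the second assertion I would prove, by induction on $i$, the invariant $(\star_i)$: after the $i$-th comb, every position $(\ell,d)$ with $\ell\ge x_i$ and $0\le d\le y_i+(\ell-x_i)-1$ is empty. This region lies strictly above both half-lines emanating from $(x_i,y_i)$, so by Lemma~\ref{lem:aboveunaffected} the $i$-th comb leaves it untouched, and it is enough to check it was empty just before that comb. The case $i=1$ holds since $x_1$ is the leftmost lane and $(x_1,y_1-1)$ is vacant. For $i\ge2$, the $y_i$ take only the values $0$ and $1$, so a step with $y_i=y_{i-1}-1$ must have $y_{i-1}=1,\ y_i=0$; then the region of $(\star_i)$ restricted to lanes $\ge x_{i-1}$ coincides with that of $(\star_{i-1})$, and its single new lane $x_i$ imposes no constraint, so $(\star_{i-1})$ suffices. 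A step with $y_i=y_{i-1}$ enlarges the region by exactly one extra diagonal, namely the down-left half-line bounding the \residualregion{} of $(x_{i-1},y_{i-1})$; since $(x_{i-1},y_{i-1})$ is combed after the $(i-1)$-st comb, the argument of the first assertion applies verbatim: a particle on that half-line would have to stretch down-left from $(x_i,y_i-1)$, the (empty) site directly above the combable position $(x_i,y_i)$, so the diagonal is empty. The only remaining point of the new lane is $(x_i,0)$ when $y_i=1$, but $y_i=1$ forces $x_i>r_t$, so $(x_i,0)$ is a \sourcespine{} site beyond the furthest particle --- empty initially, and untouched by the earlier combs (whose non-shift moves all lie at lanes $>x_i$ and whose shifts stay at depth $\ge1$). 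This gives $(\star_k)$, and combining it with the emptiness of the down-left half-line from $(r+1,0)$ established above (already subsumed by $(\star_k)$ when $r_t=r$) shows the entire wedge between the up-left and down-left half-lines from $(r+1,0)$ --- in particular the \sourcespine{} beyond the \anchorparticle{} --- is empty.

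I expect the inductive step of the second assertion to be the main obstacle: one must recognize that the already-cleared region actually \emph{grows}, by precisely the \residualregion{}-boundary diagonal of the preceding comb, and that this new diagonal is genuinely cleared --- which once more comes down to the ``no available anchor'' consequence of the \anchorparticle{} having no non-\spine{} neighbor. Everything else is routine bookkeeping on top of Lemmas~\ref{lem:combmakescombed} and~\ref{lem:aboveunaffected} and the Combability of Each Step lemma.
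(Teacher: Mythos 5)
Your proof is correct and follows essentially the same route as the paper: case on whether the final comb lands at $(r+1,1)$ or $(r+1,0)$, use condition~3 of Definition~\ref{defn:combed} together with the vacancy of the site up-right of each combed position to rule out particles on the bounding diagonal of each residual region, and invoke Lemma~\ref{lem:aboveunaffected} to keep cleared sites clear. The only difference is presentational: for the second assertion the paper argues site-by-site (locating, for each site of the wedge, the last comb whose residual-region boundary passes just below it and arguing subsequent combs leave it alone), whereas you package the same ingredients as an explicit induction on the comb index with a monotonically growing empty region --- your version being, if anything, slightly more careful about why each newly added diagonal is genuinely empty.
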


\begin{proof}
The last comb operation is on position $(r+1,1)$ or $(r+1,0)$. If it is the former, $(r+1,1)$ will be combed (Lemma~\ref{lem:combmakescombed}). If it is on the latter, as none of the comb operations will affect the site $(r,0)$ directly down-right of the last comb position, $(r,0)$ will remain occupied by a particle. As $(r+1,0)$ is combed, there will be no particles on the diagonal stretching down-left from $(r+1,0)$. Hence, $(r+1,1)$ is also combed.

The region described in the lemma can be divided into ``files'', diagonal lines going down-left. Consider any site $(x,y)$ in this region. If $(x,y)$ is in the lowest file of the region (on the diagonal extending down-left from $(r+1,0)$), as $(r+1,1)$ is combed, $(x,y)$ must be unoccupied. Otherwise, if $(x,y+1)$ is in the same file as some position in the combable sequence, let $(x_i,y_i)$ be the last position in the sequence in the same file as $(x,y+1)$. After $(x_i,y_i)$ is combed, $(x,y)$ must be empty. Subsequent combs will not affect $(x,y)$ by Lemma~\ref{lem:aboveunaffected}. If $(x,y+1)$ is not in the same file as any position in the combable sequence, $(x,y)$ must be empty as $(x_1,y_1)$ is vertically aligned with the leftmost particle of the configuration. Similarly by Lemma~\ref{lem:aboveunaffected}, none of the combs will move a particle into $(x,y)$. Hence, $(x,y)$ will be empty after the \spinecomb{} in all cases.
\end{proof}

After a \spinecomb{} is applied, there are two possible cases, having a gap in the line (defined next), and not having a gap in the line. If there is a gap in the line, we show that we can directly reduce the minimum \spine{} length of the configuration from here, giving us the result of Lemma~\ref{lem:reducespinelength}. Hence, we can proceed with the rest of the proof of Lemma~\ref{lem:reducespinelength} assuming that there will never be a gap in the line.

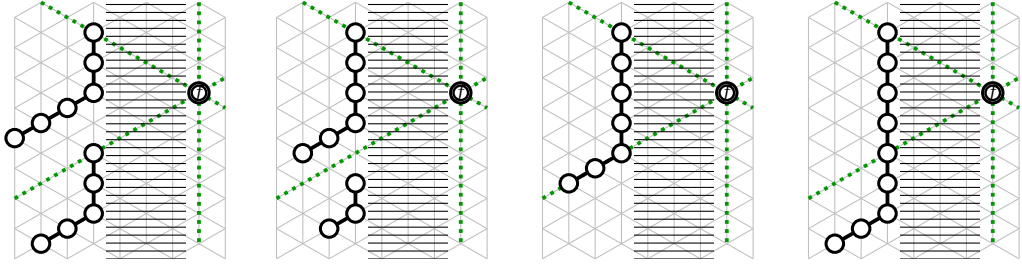
\begin{figure*}
\begin{subfigure}{.25\textwidth}
  %\centering
  %\includegraphics[width=.7\linewidth]{diagrams_irreducibility/gap_yes1.png}
  \begin{center}
  \begin{tikzpicture}[x=0.4cm,y=0.4cm]
  \draw[lightgray] (5.19615,-9) -- (6.9282,-8);
\draw[lightgray] (0,-7) -- (6.9282,-3);
\draw[lightgray] (0,-7) -- (3.4641,-9);
\draw[lightgray] (0,-4) -- (6.06218,-0.5);
\draw[lightgray] (0,-4) -- (6.9282,-8);
\draw[lightgray] (0,-1) -- (0.866025,-0.5);
\draw[lightgray] (0,-1) -- (6.9282,-5);
\draw[lightgray] (0,-1) -- (0,-9);
\draw[lightgray] (1.73205,-1) -- (1.73205,-9);
\draw[lightgray] (3.4641,-1) -- (3.4641,-9);
\draw[lightgray] (0,-5) -- (6.9282,-1);
\draw[lightgray] (0,-5) -- (6.9282,-9);
\draw[lightgray] (0.866025,-0.5) -- (6.9282,-4);
\draw[lightgray] (0.866025,-0.5) -- (0.866025,-8.5);
\draw[lightgray] (0,-8) -- (6.9282,-4);
\draw[lightgray] (0,-8) -- (1.73205,-9);
\draw[lightgray] (5.19615,-1) -- (5.19615,-9);
\draw[lightgray] (0,-2) -- (2.59808,-0.5);
\draw[lightgray] (0,-2) -- (6.9282,-6);
\draw[lightgray] (2.59808,-0.5) -- (6.9282,-3);
\draw[lightgray] (2.59808,-0.5) -- (2.59808,-8.5);
\draw[lightgray] (4.33013,-0.5) -- (6.9282,-2);
\draw[lightgray] (4.33013,-0.5) -- (4.33013,-8.5);
\draw[lightgray] (6.06218,-0.5) -- (6.9282,-1);
\draw[lightgray] (6.06218,-0.5) -- (6.06218,-8.5);
\draw[lightgray] (0,-6) -- (6.9282,-2);
\draw[lightgray] (0,-6) -- (5.19615,-9);
\draw[lightgray] (0,-9) -- (6.9282,-5);
\draw[lightgray] (0,-3) -- (4.33013,-0.5);
\draw[lightgray] (0,-3) -- (6.9282,-7);
\draw[lightgray] (3.4641,-9) -- (6.9282,-7);
\draw[lightgray] (1.73205,-9) -- (6.9282,-6);
\draw[lightgray] (6.9282,-1) -- (6.9282,-9);
\draw[black!40!green, dotted, line width=0.5mm] (0,-7) -- (6.9282,-3);
\draw[black!40!green, dotted, line width=0.5mm] (0.866025,-0.5) -- (6.9282,-4);
\draw[black!40!green, dotted, line width=0.5mm] (6.06218,-0.5) -- (6.06218,-8.5);
\draw[black, line width=0.4mm, fill=white] (0,-5) circle (0.288);
\draw[black, line width=0.4mm, fill=white] (0.866025,-4.5) circle (0.288);
\draw[black, line width=0.4mm, fill=white] (0.866025,-8.5) circle (0.288);
\draw[black, line width=0.4mm, fill=white] (1.73205,-4) circle (0.288);
\draw[black, line width=0.4mm, fill=white] (1.73205,-8) circle (0.288);
\draw[black, line width=0.4mm, fill=white] (2.59808,-1.5) circle (0.288);
\draw[black, line width=0.4mm, fill=white] (2.59808,-2.5) circle (0.288);
\draw[black, line width=0.4mm, fill=white] (2.59808,-3.5) circle (0.288);
\draw[black, line width=0.4mm, fill=white] (2.59808,-5.5) circle (0.288);
\draw[black, line width=0.4mm, fill=white] (2.59808,-6.5) circle (0.288);
\draw[black, line width=0.4mm, fill=white] (2.59808,-7.5) circle (0.288);
\draw[black, line width=0.4mm, fill=white] (6.06218,-3.5) circle (0.336);
\draw[black, line width=0.32mm] (6.06218,-3.5) circle (0.24);
\node[align=left] at (6.06218,-3.5) {\scriptsize $f$};
\draw[black, line width=0.5mm] (2.59808,-2.2) -- (2.59808,-1.8);
\draw[black, line width=0.5mm] (2.59808,-3.2) -- (2.59808,-2.8);
\draw[black, line width=0.5mm] (1.99186,-7.85) -- (2.33827,-7.65);
\draw[black, line width=0.5mm] (1.12583,-4.35) -- (1.47224,-4.15);
\draw[black, line width=0.5mm] (1.12583,-8.35) -- (1.47224,-8.15);
\draw[black, line width=0.5mm] (0.259808,-4.85) -- (0.606218,-4.65);
\draw[black, line width=0.5mm] (2.59808,-6.2) -- (2.59808,-5.8);
\draw[black, line width=0.5mm] (2.59808,-7.2) -- (2.59808,-6.8);
\draw[black, line width=0.5mm] (1.99186,-3.85) -- (2.33827,-3.65);
\path [pattern=horizontal lines, pattern color=black] (3.03109,-0.5) -- (5.62917,-0.5) -- (5.62917,-9) -- (3.03109,-9) -- cycle;
  \end{tikzpicture}
  \end{center}
  \caption{Gap between \spines{}.}
  \label{fig:gap_yes1}
\end{subfigure}%
\begin{subfigure}{.25\textwidth}
  %\centering
  %\includegraphics[width=.7\linewidth]{diagrams_irreducibility/gap_yes2.png}
  \begin{center}
  \begin{tikzpicture}[x=0.4cm,y=0.4cm]
  \draw[lightgray] (5.19615,-9) -- (6.9282,-8);
\draw[lightgray] (0,-7) -- (6.9282,-3);
\draw[lightgray] (0,-7) -- (3.4641,-9);
\draw[lightgray] (0,-4) -- (6.06218,-0.5);
\draw[lightgray] (0,-4) -- (6.9282,-8);
\draw[lightgray] (0,-1) -- (0.866025,-0.5);
\draw[lightgray] (0,-1) -- (6.9282,-5);
\draw[lightgray] (0,-1) -- (0,-9);
\draw[lightgray] (1.73205,-1) -- (1.73205,-9);
\draw[lightgray] (3.4641,-1) -- (3.4641,-9);
\draw[lightgray] (0,-5) -- (6.9282,-1);
\draw[lightgray] (0,-5) -- (6.9282,-9);
\draw[lightgray] (0.866025,-0.5) -- (6.9282,-4);
\draw[lightgray] (0.866025,-0.5) -- (0.866025,-8.5);
\draw[lightgray] (0,-8) -- (6.9282,-4);
\draw[lightgray] (0,-8) -- (1.73205,-9);
\draw[lightgray] (5.19615,-1) -- (5.19615,-9);
\draw[lightgray] (0,-2) -- (2.59808,-0.5);
\draw[lightgray] (0,-2) -- (6.9282,-6);
\draw[lightgray] (2.59808,-0.5) -- (6.9282,-3);
\draw[lightgray] (2.59808,-0.5) -- (2.59808,-8.5);
\draw[lightgray] (4.33013,-0.5) -- (6.9282,-2);
\draw[lightgray] (4.33013,-0.5) -- (4.33013,-8.5);
\draw[lightgray] (6.06218,-0.5) -- (6.9282,-1);
\draw[lightgray] (6.06218,-0.5) -- (6.06218,-8.5);
\draw[lightgray] (0,-6) -- (6.9282,-2);
\draw[lightgray] (0,-6) -- (5.19615,-9);
\draw[lightgray] (0,-9) -- (6.9282,-5);
\draw[lightgray] (0,-3) -- (4.33013,-0.5);
\draw[lightgray] (0,-3) -- (6.9282,-7);
\draw[lightgray] (3.4641,-9) -- (6.9282,-7);
\draw[lightgray] (1.73205,-9) -- (6.9282,-6);
\draw[lightgray] (6.9282,-1) -- (6.9282,-9);
\draw[black!40!green, dotted, line width=0.5mm] (0,-7) -- (6.9282,-3);
\draw[black!40!green, dotted, line width=0.5mm] (0.866025,-0.5) -- (6.9282,-4);
\draw[black!40!green, dotted, line width=0.5mm] (6.06218,-0.5) -- (6.06218,-8.5);
\draw[black, line width=0.4mm, fill=white] (0.866025,-5.5) circle (0.288);
\draw[black, line width=0.4mm, fill=white] (1.73205,-5) circle (0.288);
\draw[black, line width=0.4mm, fill=white] (1.73205,-8) circle (0.288);
\draw[black, line width=0.4mm, fill=white] (2.59808,-1.5) circle (0.288);
\draw[black, line width=0.4mm, fill=white] (2.59808,-2.5) circle (0.288);
\draw[black, line width=0.4mm, fill=white] (2.59808,-3.5) circle (0.288);
\draw[black, line width=0.4mm, fill=white] (2.59808,-4.5) circle (0.288);
\draw[black, line width=0.4mm, fill=white] (2.59808,-6.5) circle (0.288);
\draw[black, line width=0.4mm, fill=white] (2.59808,-7.5) circle (0.288);
\draw[black, line width=0.4mm, fill=white] (6.06218,-3.5) circle (0.336);
\draw[black, line width=0.32mm] (6.06218,-3.5) circle (0.24);
\node[align=left] at (6.06218,-3.5) {\scriptsize $f$};
\draw[black, line width=0.5mm] (2.59808,-2.2) -- (2.59808,-1.8);
\draw[black, line width=0.5mm] (2.59808,-3.2) -- (2.59808,-2.8);
\draw[black, line width=0.5mm] (1.99186,-7.85) -- (2.33827,-7.65);
\draw[black, line width=0.5mm] (1.12583,-5.35) -- (1.47224,-5.15);
\draw[black, line width=0.5mm] (2.59808,-7.2) -- (2.59808,-6.8);
\draw[black, line width=0.5mm] (1.99186,-4.85) -- (2.33827,-4.65);
\draw[black, line width=0.5mm] (2.59808,-4.2) -- (2.59808,-3.8);
\path [pattern=horizontal lines, pattern color=black] (3.03109,-0.5) -- (5.62917,-0.5) -- (5.62917,-9) -- (3.03109,-9) -- cycle;
  \end{tikzpicture}
  \end{center}
  \caption{Gap on \targetspine{}.}
  \label{fig:gap_yes2}
\end{subfigure}%
\begin{subfigure}{.25\textwidth}
  %\centering
  %\includegraphics[width=.7\linewidth]{diagrams_irreducibility/gap_no1.png}
  \begin{center}
  \begin{tikzpicture}[x=0.4cm,y=0.4cm]
  \draw[lightgray] (5.19615,-9) -- (6.9282,-8);
\draw[lightgray] (0,-7) -- (6.9282,-3);
\draw[lightgray] (0,-7) -- (3.4641,-9);
\draw[lightgray] (0,-4) -- (6.06218,-0.5);
\draw[lightgray] (0,-4) -- (6.9282,-8);
\draw[lightgray] (0,-1) -- (0.866025,-0.5);
\draw[lightgray] (0,-1) -- (6.9282,-5);
\draw[lightgray] (0,-1) -- (0,-9);
\draw[lightgray] (1.73205,-1) -- (1.73205,-9);
\draw[lightgray] (3.4641,-1) -- (3.4641,-9);
\draw[lightgray] (0,-5) -- (6.9282,-1);
\draw[lightgray] (0,-5) -- (6.9282,-9);
\draw[lightgray] (0.866025,-0.5) -- (6.9282,-4);
\draw[lightgray] (0.866025,-0.5) -- (0.866025,-8.5);
\draw[lightgray] (0,-8) -- (6.9282,-4);
\draw[lightgray] (0,-8) -- (1.73205,-9);
\draw[lightgray] (5.19615,-1) -- (5.19615,-9);
\draw[lightgray] (0,-2) -- (2.59808,-0.5);
\draw[lightgray] (0,-2) -- (6.9282,-6);
\draw[lightgray] (2.59808,-0.5) -- (6.9282,-3);
\draw[lightgray] (2.59808,-0.5) -- (2.59808,-8.5);
\draw[lightgray] (4.33013,-0.5) -- (6.9282,-2);
\draw[lightgray] (4.33013,-0.5) -- (4.33013,-8.5);
\draw[lightgray] (6.06218,-0.5) -- (6.9282,-1);
\draw[lightgray] (6.06218,-0.5) -- (6.06218,-8.5);
\draw[lightgray] (0,-6) -- (6.9282,-2);
\draw[lightgray] (0,-6) -- (5.19615,-9);
\draw[lightgray] (0,-9) -- (6.9282,-5);
\draw[lightgray] (0,-3) -- (4.33013,-0.5);
\draw[lightgray] (0,-3) -- (6.9282,-7);
\draw[lightgray] (3.4641,-9) -- (6.9282,-7);
\draw[lightgray] (1.73205,-9) -- (6.9282,-6);
\draw[lightgray] (6.9282,-1) -- (6.9282,-9);
\draw[black!40!green, dotted, line width=0.5mm] (0,-7) -- (6.9282,-3);
\draw[black!40!green, dotted, line width=0.5mm] (0.866025,-0.5) -- (6.9282,-4);
\draw[black!40!green, dotted, line width=0.5mm] (6.06218,-0.5) -- (6.06218,-8.5);
\draw[black, line width=0.4mm, fill=white] (0.866025,-6.5) circle (0.288);
\draw[black, line width=0.4mm, fill=white] (1.73205,-6) circle (0.288);
\draw[black, line width=0.4mm, fill=white] (2.59808,-1.5) circle (0.288);
\draw[black, line width=0.4mm, fill=white] (2.59808,-2.5) circle (0.288);
\draw[black, line width=0.4mm, fill=white] (2.59808,-3.5) circle (0.288);
\draw[black, line width=0.4mm, fill=white] (2.59808,-4.5) circle (0.288);
\draw[black, line width=0.4mm, fill=white] (2.59808,-5.5) circle (0.288);
\draw[black, line width=0.4mm, fill=white] (6.06218,-3.5) circle (0.336);
\draw[black, line width=0.32mm] (6.06218,-3.5) circle (0.24);
\node[align=left] at (6.06218,-3.5) {\scriptsize $f$};
\draw[black, line width=0.5mm] (2.59808,-2.2) -- (2.59808,-1.8);
\draw[black, line width=0.5mm] (1.99186,-5.85) -- (2.33827,-5.65);
\draw[black, line width=0.5mm] (2.59808,-3.2) -- (2.59808,-2.8);
\draw[black, line width=0.5mm] (1.12583,-6.35) -- (1.47224,-6.15);
\draw[black, line width=0.5mm] (2.59808,-4.2) -- (2.59808,-3.8);
\draw[black, line width=0.5mm] (2.59808,-5.2) -- (2.59808,-4.8);
\path [pattern=horizontal lines, pattern color=black] (3.03109,-0.5) -- (5.62917,-0.5) -- (5.62917,-9) -- (3.03109,-9) -- cycle;
  \end{tikzpicture}
  \end{center}
  \caption{No gap.}
  \label{fig:gap_no1}
\end{subfigure}%
\begin{subfigure}{.25\textwidth}
  %\centering
  %\includegraphics[width=.7\linewidth]{diagrams_irreducibility/gap_no2.png}
  \begin{center}
  \begin{tikzpicture}[x=0.4cm,y=0.4cm]
  \draw[lightgray] (5.19615,-9) -- (6.9282,-8);
\draw[lightgray] (0,-7) -- (6.9282,-3);
\draw[lightgray] (0,-7) -- (3.4641,-9);
\draw[lightgray] (0,-4) -- (6.06218,-0.5);
\draw[lightgray] (0,-4) -- (6.9282,-8);
\draw[lightgray] (0,-1) -- (0.866025,-0.5);
\draw[lightgray] (0,-1) -- (6.9282,-5);
\draw[lightgray] (0,-1) -- (0,-9);
\draw[lightgray] (1.73205,-1) -- (1.73205,-9);
\draw[lightgray] (3.4641,-1) -- (3.4641,-9);
\draw[lightgray] (0,-5) -- (6.9282,-1);
\draw[lightgray] (0,-5) -- (6.9282,-9);
\draw[lightgray] (0.866025,-0.5) -- (6.9282,-4);
\draw[lightgray] (0.866025,-0.5) -- (0.866025,-8.5);
\draw[lightgray] (0,-8) -- (6.9282,-4);
\draw[lightgray] (0,-8) -- (1.73205,-9);
\draw[lightgray] (5.19615,-1) -- (5.19615,-9);
\draw[lightgray] (0,-2) -- (2.59808,-0.5);
\draw[lightgray] (0,-2) -- (6.9282,-6);
\draw[lightgray] (2.59808,-0.5) -- (6.9282,-3);
\draw[lightgray] (2.59808,-0.5) -- (2.59808,-8.5);
\draw[lightgray] (4.33013,-0.5) -- (6.9282,-2);
\draw[lightgray] (4.33013,-0.5) -- (4.33013,-8.5);
\draw[lightgray] (6.06218,-0.5) -- (6.9282,-1);
\draw[lightgray] (6.06218,-0.5) -- (6.06218,-8.5);
\draw[lightgray] (0,-6) -- (6.9282,-2);
\draw[lightgray] (0,-6) -- (5.19615,-9);
\draw[lightgray] (0,-9) -- (6.9282,-5);
\draw[lightgray] (0,-3) -- (4.33013,-0.5);
\draw[lightgray] (0,-3) -- (6.9282,-7);
\draw[lightgray] (3.4641,-9) -- (6.9282,-7);
\draw[lightgray] (1.73205,-9) -- (6.9282,-6);
\draw[lightgray] (6.9282,-1) -- (6.9282,-9);
\draw[black!40!green, dotted, line width=0.5mm] (0,-7) -- (6.9282,-3);
\draw[black!40!green, dotted, line width=0.5mm] (0.866025,-0.5) -- (6.9282,-4);
\draw[black!40!green, dotted, line width=0.5mm] (6.06218,-0.5) -- (6.06218,-8.5);
\draw[black, line width=0.4mm, fill=white] (0.866025,-8.5) circle (0.288);
\draw[black, line width=0.4mm, fill=white] (1.73205,-8) circle (0.288);
\draw[black, line width=0.4mm, fill=white] (2.59808,-1.5) circle (0.288);
\draw[black, line width=0.4mm, fill=white] (2.59808,-2.5) circle (0.288);
\draw[black, line width=0.4mm, fill=white] (2.59808,-3.5) circle (0.288);
\draw[black, line width=0.4mm, fill=white] (2.59808,-4.5) circle (0.288);
\draw[black, line width=0.4mm, fill=white] (2.59808,-5.5) circle (0.288);
\draw[black, line width=0.4mm, fill=white] (2.59808,-6.5) circle (0.288);
\draw[black, line width=0.4mm, fill=white] (2.59808,-7.5) circle (0.288);
\draw[black, line width=0.4mm, fill=white] (6.06218,-3.5) circle (0.336);
\draw[black, line width=0.32mm] (6.06218,-3.5) circle (0.24);
\node[align=left] at (6.06218,-3.5) {\scriptsize $f$};
\draw[black, line width=0.5mm] (2.59808,-2.2) -- (2.59808,-1.8);
\draw[black, line width=0.5mm] (2.59808,-3.2) -- (2.59808,-2.8);
\draw[black, line width=0.5mm] (1.99186,-7.85) -- (2.33827,-7.65);
\draw[black, line width=0.5mm] (1.12583,-8.35) -- (1.47224,-8.15);
\draw[black, line width=0.5mm] (2.59808,-6.2) -- (2.59808,-5.8);
\draw[black, line width=0.5mm] (2.59808,-7.2) -- (2.59808,-6.8);
\draw[black, line width=0.5mm] (2.59808,-4.2) -- (2.59808,-3.8);
\draw[black, line width=0.5mm] (2.59808,-5.2) -- (2.59808,-4.8);
\path [pattern=horizontal lines, pattern color=black] (3.03109,-0.5) -- (5.62917,-0.5) -- (5.62917,-9) -- (3.03109,-9) -- cycle;
  \end{tikzpicture}
  \end{center}
  \caption{No gap.}
  \label{fig:gap_no2}
\end{subfigure}%
\caption{Illustration of the line between the \spines{} going up-left and down-left from the food particle. Figures~\ref{fig:gap_yes1} and \ref{fig:gap_yes2} have gaps in the line (Definition~\ref{defn:gapintheline}), while Figures~\ref{fig:gap_no1} and \ref{fig:gap_no2} do not. Observe that in the cases with no gap, the length of the \targetspine{} matches that of the \sourcespine{}.}
\end{figure*}

\begin{definition}[Gap in the line]
\label{defn:gapintheline}
Let $r$ be the length of the \sourcespine{} $S_i$. Consider the vertical line segment of sites from the location of the \anchorparticle{} of $S_i$ down to the site on the \targetspine{} $S_{i+1}$ of distance $r$ from the center, including the two \spine{} location endpoints. If there is a site on this line segment that is unoccupied by particles, we say that there is a \emph{gap} in the line from the \sourcespine{} to the \targetspine{}.
\end{definition}

\begin{lemma}[Reducing minimum \spine{} length using a gap]
\label{lem:reducingusinggap}
After a \spinecomb{} is applied from a \sourcespine{} of minimum length, if there is a gap in the line from the \sourcespine{} to the \targetspine{}, there exists a sequence of moves to reduce the minimum \spine{} length of the configuration. 
\end{lemma}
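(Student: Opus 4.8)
The plan is to use the gap to peel the \anchorparticle{} off the \sourcespine{}: I will produce a short sequence of moves (morally, one more comb) that empties the lane-$0$ site at distance $r$ from the food, so that $S_0$ ends up with length at most $r-1$; since $S_0$ was chosen to realize the minimum \spine{} length $r$, this strictly decreases the minimum \spine{} length and proves Lemma~\ref{lem:reducespinelength} in this case. Throughout I keep the coordinates of the \spinecomb{}, so $S_0$ runs up-left along the sites $(\ell,0)$, $\ell\ge 1$, and $S_1$ runs down-left. By Lemma~\ref{lem:spinecomb}, after the \spinecomb{} the position $(r+1,1)$ is combed and the region between the up-left and down-left half-lines emanating from $(r+1,0)$ is empty; in particular the tail of $S_0$ has been cleared, and since every position combed during the \spinecomb{} lies on a lane strictly greater than $r$, Lemma~\ref{lem:aboveunaffected} (with its ``down-right half-line'' addendum) shows the \spinecomb{} never moves any particle on lane $r$ at depth $\le 0$; so lane $0$ is still occupied out to $(r,0)$, and $S_0$ reaches exactly $(r,0)$.

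Next I analyze the gap. In the current coordinates the segment of Definition~\ref{defn:gapintheline} is $(r,0),(r,1),\dots,(r,r)$, with $(r,0)$ occupied (just noted). If the empty site of the gap is $(r,r)$ itself, then $S_1$ fails to reach distance $r$, so $S_1$ already has length less than $r$ and the minimum \spine{} length was reduced by the \spinecomb{} alone; there is nothing more to do. Otherwise the gap is a site $(r,d^*)$ with $1\le d^*\le r-1$, and I take $d^*$ minimal, so $(r,0),\dots,(r,d^*-1)$ are occupied and $(r,d^*)$ is empty. Because $(r+1,1)$ is combed (Definition~\ref{defn:combed}), this pins down the local picture around column $r$: the column-$r$ particle $(r,d^*-1)$ has no particle directly below it, hence heads a straight line of particles $(r,d^*-1),(r+1,d^*),(r+2,d^*+1),\dots$ running down and left; the sites $(r+1,0),\dots,(r+1,d^*-1)$, the site $(r,-1)$, and the site $(r+1,d^*+1)$ are all empty. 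Now I slide the exposed run of column $r$ down one step: move the particle at $(r,d^*-1)$ into $(r,d^*)$, then the particle at $(r,d^*-2)$ into $(r,d^*-1)$, and so on, finishing by moving the particle at $(r,0)$ into $(r,1)$. After these moves $(r,0)$ is empty, so $S_0$ no longer reaches distance $r$ and its length is at most $r-1$; the minimum \spine{} length has strictly decreased. (Equivalently: first clear $(r,d^*-1)$, making $(r,d^*)$ combable, then comb $(r,d^*)$, and appeal to Lemmas~\ref{lem:combmakescombed} and~\ref{lem:combingspinelength}.)

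The main obstacle is verifying that each move in the sliding step is a legal move --- that it creates no hole and disconnects nothing. This is exactly where the post-\spinecomb{} structure is used: the particle just down-left of the one being moved (first $(r+1,d^*)$, then the image of the previously moved particle) is a common neighbor of the vacated and the target site, so each move is of the ``pivot'' type, while the emptiness of the column-$(r+1)$ site directly up-left of each vacated site prevents a hole from forming --- and both of these facts are precisely what ``$(r+1,1)$ is combed'' together with the emptied wedge of Lemma~\ref{lem:spinecomb} guarantee. This is also the reason the gap hypothesis is essential: without it the run $(r,0),\dots,(r,r)$ is a solid wall joining $S_0$ to $S_1$, and there is no empty site for it to slide into. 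A smaller point requiring care is the bookkeeping that the \spinecomb{} leaves lane $0$ occupied out to $(r,0)$ and leaves $(r,-1)$ empty; this is the inductive application of Lemma~\ref{lem:aboveunaffected} sketched above, and Lemma~\ref{lem:unenterableregion} can be used if one prefers to argue it region-wise rather than site-by-site.
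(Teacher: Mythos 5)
Your case split matches the paper's (gap on the \targetspine{} versus gap strictly between the two \spines{}), and your handling of the first case is correct and identical to the paper's. In the main case, however, you take a genuinely different route --- emptying $(r,0)$ to shorten the \sourcespine{} by sliding the occupied run $(r,0),\dots,(r,d^*-1)$ of lane $r$ down into the gap --- and the move-validity argument for that slide has a real gap. First, Definition~\ref{defn:combed} does not give you the down-left line you lean on: its condition 3 says every line in the residual region hangs off a column-$r$ particle with no particle below it, not that every such particle has a nonempty line hanging off it, so $(r+1,d^*)$ may be empty. Second, for every move after the first one, the site down-left of the particle being moved is $(r+1,j)$ with $j\le d^*-1$, which you have yourself shown to be empty, and the ``image of the previously moved particle'' sits directly \emph{below} the target site, so it is not a common neighbor of the vacated and target sites; the pivot structure you invoke is simply not there. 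Third, the only remaining candidate common neighbors are sites $(r-1,\cdot)$ on lane $r-1$, about whose occupancy nothing established after the \spinecomb{} says anything (the comb only organizes lanes $\ge r+1$); one can exhibit post-comb configurations in which a slide move has no occupied common neighbor and its target has no occupied neighbor other than the source, i.e., an outright invalid move. Your parenthetical alternative is also off: ``first clear $(r,d^*-1)$'' is circular, and combing $(r,d^*)$ is not the operation that is needed.

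The paper avoids all of this by working on the other side of the gap: since $(r,d)$ is empty and $(r+1,1)$ is combed, the position $(r,d+1)$ \emph{directly below} the gap is already combable --- the gap supplies the empty site above it, and the forced emptiness of the down-left diagonal from $(r,d)$ supplies the combed condition one step down-left. One further application of the already-verified comb operation, together with Lemma~\ref{lem:combingspinelength}, then reduces the length of the \targetspine{} to at most $r-1$, with no new elementary moves to justify. To salvage your \sourcespine{}-reduction idea you would have to verify each slide against the chain's local move conditions under all possible occupancies of lane $r-1$, which is precisely the case analysis the comb machinery exists to avoid.
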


\begin{proof}
Let $r$ denote the length of the \sourcespine{} $S_i$. Suppose that there is an unoccupied site $(r,d)$ on this line segment.
If the unoccupied site on the line segment is on the \sourcespine{} $S_i$ (which actually never happens), as every site on the \spine{} of distance greater than $r$ will be unoccupied by Lemma~\ref{lem:spinecomb}, the new minimum \spine{} length would be at most $r-1$. If the unoccupied site is on the \targetspine{} $S_{i+1}$, as $(r+1,1)$ is combed, every site of the \targetspine{} of distance greater than this unoccupied site would also be unoccupied. Hence the minimum \spine{} length would also have decreased to at most $r-1$ in this case.

If the unoccupied site $(r,d)$ lies strictly between the \sourcespine{} and the \targetspine{}, we apply one more comb on position $(r,d+1)$. Position $(r,d+1)$ is combable as $(r,d)$ is empty, and $(r+1,1)$ being combed ensures that all sites on the half-line extending down-left from $(r,d)$ are also empty, so $(r+1,d+1)$ is also combed. By Lemma~\ref{lem:combingspinelength}, combing $(r,d+1)$ results in the \targetspine{} having length at most $r-1$.
\end{proof}

Thus, from now on we may assume that whenever a \spinecomb{} is executed from a \sourcespine{} of minimum length, there will be no gaps in the line between the \sourcespine{} and the \targetspine{}.
In addition, we assume that the comb operations do not cause any other \spine{} (in particular the \spine{} going downwards) to end up with a \spine{} length below the current minimum spine length, as in this case we have already achieved the result of Lemma~\ref{lem:reducespinelength}.
The following Lemma shows that such a \spinecomb{} ``pushes'' all of the particles of distance greater than the minimum \spine{} length towards the \targetspine{} ore beyond.

\begin{lemma}[Resulting configuration assuming no gap exists]
\label{lem:nogapresult}
Suppose that a \spinecomb{} is executed from a \sourcespine{} $S_i$ (of minimal length $r$) to a \targetspine{} $S_{i+1}$, and assume that there are no gaps in the line between $S_i$ and $S_{i+1}$.
In the resulting configuration, there will be no particles of distance greater than $r$ strictly between the \sourcespine{} and \targetspine{}, or on the \sourcespine{} itself. Furthermore, the lengths of both the \sourcespine{} and \targetspine{} will now be exactly $r$.
\end{lemma}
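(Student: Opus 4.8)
The plan is to prove the two assertions in turn, leaning on the structure that Lemma~\ref{lem:spinecomb} already supplies together with the standing no-gap hypothesis.

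\emph{Clearing the region.} By Lemma~\ref{lem:spinecomb}, after the \spinecomb{} the position $(r+1,1)$ is combed and the wedge $R=\{(\ell,d):\ell\geq r+1,\ 0\leq d\leq \ell-r-1\}$ (the region between the up-left and down-left half-lines through $(r+1,0)$) is empty. Since a position $(\ell,d)$ lying on $S_i$ or strictly between $S_i$ and $S_{i+1}$ has distance exactly $\ell$, the set of positions the lemma wants emptied is precisely $T=\{(\ell,d):\ell\geq r+1,\ 0\leq d\leq \ell-1\}$, and $T=R\cup(T\setminus R)$ with $T\setminus R=\{(\ell,d):\ell\geq r+1,\ \ell-r\leq d\leq \ell-1\}$. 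As $R$ is already empty, it remains only to show that $T\setminus R$ is empty, and this is where ``$(r+1,1)$ is combed'' meets the no-gap hypothesis.

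\emph{The crux.} Every position of $T\setminus R$ lies in the \residualregion{} of $(r+1,1)$, so if some $(\ell,d)\in T\setminus R$ were occupied, the second condition of Definition~\ref{defn:combed} would put it on a straight line of particles running in the down-left direction, and the third condition would say this line emanates from a particle $q$ in the column immediately to the right of that \residualregion{} — which in $(lane,depth)$ coordinates is the column at lane $r$ — with no particle directly below $q$. Tracing the line back from $(\ell,d)$ to $q$ gives $q=(r,\,d-\ell+r)$, and from $(\ell,d)\in T\setminus R$ one reads off $1\leq d-\ell+r\leq r-1$; hence the site directly below $q$ is $(r,\,d-\ell+r+1)$, one of $(r,2),(r,3),\dots,(r,r)$. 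By the no-gap hypothesis every site of the vertical segment from the \anchorparticle{} of $S_i$ down to the distance-$r$ site of $S_{i+1}$ is occupied, so $(r,\,d-\ell+r+1)$ is occupied — contradicting that $q$ has nothing directly below it. Therefore $T\setminus R$ is empty, which is the first conclusion.

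\emph{Spine lengths.} The last comb executed in the \spinecomb{} (Definition~\ref{proc:spinecomb}) is on $(r+1,0)$ or on $(r+1,1)$, according to whether $S_i$ has \tailparticles{}; in either case it is a comb on a position $(\ell,d)$ with $d<\ell=r+1$, so Lemma~\ref{lem:combingspinelength} gives that the \spine{} going down-left, namely $S_{i+1}$, now has length at most $r$. By the first conclusion, $S_i$ carries no particle of distance greater than $r$, so it too now has length at most $r$. For the reverse inequalities I would invoke the standing assumption of this portion of the argument: if either $S_i$ or $S_{i+1}$ ended up with length strictly below $r$, the minimum \spine{} length would have strictly decreased, which is exactly the conclusion of Lemma~\ref{lem:reducespinelength} and a case that has already been set aside; hence both lengths are at least $r$, so both equal $r$.

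\emph{Main obstacle.} I expect the delicate step to be the geometric bookkeeping in the crux: pinning down, in $(lane,depth)$ coordinates, the \residualregion{} of $(r+1,1)$ and the column ``directly to the right'' of it, and verifying that the one family of down-left lines that could poke into the near-$S_{i+1}$ strip $T\setminus R$ is precisely the family whose heads would have to lie on the (fully occupied, by no-gap) segment from the \anchorparticle{} of $S_i$ to the distance-$r$ site of $S_{i+1}$, with a further particle directly below. Everything else is a direct invocation of the earlier lemmas on what a comb does and does not move (Lemmas~\ref{lem:combmakescombed}, \ref{lem:aboveunaffected}, \ref{lem:combingspinelength}, \ref{lem:spinecomb}).
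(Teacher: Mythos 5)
Your proof is correct and, for the main (region-clearing) claim, is essentially the paper's argument: both rely on Lemma~\ref{lem:spinecomb} together with conditions 2 and 3 of Definition~\ref{defn:combed} applied to $(r+1,1)$, with the no-gap hypothesis forcing every site of the segment $(r,0),\dots,(r,r)$ to be occupied so that no down-left line of particles can have its lane-$r$ head sitting above an empty site; your version merely makes the coordinate bookkeeping explicit. Two small remarks. First, a harmless slip: for $(\ell,d)\in T\setminus R$ one gets $0\le d-\ell+r\le r-1$ (the case $d=\ell-r$ puts the head at $(r,0)$), so the site directly below the head is one of $(r,1),\dots,(r,r)$ rather than $(r,2),\dots,(r,r)$; all of these are still occupied under the no-gap hypothesis, so the contradiction goes through unchanged. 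Second, for the ``exactly $r$'' statement you import the standing assumption that no \spine{} ends up with length below $r$, whereas the paper derives the lower bounds from the no-gap hypothesis itself: the \sourcespine{} has length exactly $r$ because the \anchorparticle{} site $(r,0)$ belongs to the no-gap segment and hence stays occupied while no \spine{} site beyond it is, and the \targetspine{} has length at least $r$ because the site $(r,r)$ is occupied and is not a tail particle (its neighbor $(r,r-1)$ is occupied by no-gap), with the upper bound coming from $(r+1,1)$ being combed. Your appeal to the standing assumption is legitimate in the context in which the lemma is invoked (the surrounding text explicitly sets that case aside), but it makes the ``exactly $r$'' conclusion conditional on a hypothesis not stated in the lemma, so the paper's direct derivation is the cleaner route.
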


\begin{proof}
After the \spinecomb{}, position $(r+1,1)$ will be combed, and the region between the down-left and up-left diagonals extending from $(r+1,0)$ as described in Lemma~\ref{lem:spinecomb} will be empty. As there is no gap in the line from $S_i$ to $S_{i+1}$, the only lines extending left and down in the residual region of $(r+1,1)$ will be on the \targetspine{} or below, giving us the first part of this Lemma.

The length of the \sourcespine{} is $r$ as position $(r,0)$ is occupied while no position on the \sourcespine{} beyond that is. For the length of the \targetspine{}, the position on the \targetspine{} of distance $r$ from the food particle is occupied and is not a tail particle, and by Lemma~\ref{lem:reducespinelength}, $(r+1,1)$ being combed implies that the \targetspine{} has length at most $r$.
\end{proof}

As a \spinecomb{} sets the length of the \targetspine{} $S_{i+1}$ to be the same as that of the \sourcespine{} $S_i$, which has minimum length, we can continue executing \spinecombs{} in a counterclockwise fashion, from $S_{i+1}$ to $S_{i+2}$, followed by $S_{i+2}$ to $S_{i+3}$, and so on. We show that after seven of these \spinecombs{} which do not create gaps, we will reach a type of configuration we will call a \emph{hexagon with a tail}. Figure~\ref{fig:hexagon_tail} illustrates examples of these ``hexagon with a tail'' configurations, though one should note that it is not necessary for all sites on the outer hexagon to be filled.

\begin{figure*}[t]
\begin{subfigure}{.33\textwidth}
  %\centering
  %\includegraphics[width=.8\linewidth]{diagrams_irreducibility/hexagon_tail1.png}
  \begin{center}
  \begin{tikzpicture}[x=0.35cm,y=0.35cm]
  \draw[lightgray] (10.3923,-1) -- (10.3923,-10);
\draw[lightgray] (0,-7) -- (11.2583,-0.5);
\draw[lightgray] (0,-7) -- (6.06218,-10.5);
\draw[lightgray] (0,-10) -- (12.1244,-3);
\draw[lightgray] (0,-10) -- (0.866025,-10.5);
\draw[lightgray] (7.79423,-0.5) -- (12.1244,-3);
\draw[lightgray] (7.79423,-0.5) -- (7.79423,-10.5);
\draw[lightgray] (12.1244,-1) -- (12.1244,-10);
\draw[lightgray] (0,-4) -- (6.06218,-0.5);
\draw[lightgray] (0,-4) -- (11.2583,-10.5);
\draw[lightgray] (9.52628,-0.5) -- (12.1244,-2);
\draw[lightgray] (9.52628,-0.5) -- (9.52628,-10.5);
\draw[lightgray] (11.2583,-10.5) -- (12.1244,-10);
\draw[lightgray] (0,-1) -- (0.866025,-0.5);
\draw[lightgray] (0,-1) -- (12.1244,-8);
\draw[lightgray] (0,-1) -- (0,-10);
\draw[lightgray] (11.2583,-0.5) -- (12.1244,-1);
\draw[lightgray] (11.2583,-0.5) -- (11.2583,-10.5);
\draw[lightgray] (1.73205,-1) -- (1.73205,-10);
\draw[lightgray] (9.52628,-10.5) -- (12.1244,-9);
\draw[lightgray] (0.866025,-10.5) -- (12.1244,-4);
\draw[lightgray] (3.4641,-1) -- (3.4641,-10);
\draw[lightgray] (0,-5) -- (7.79423,-0.5);
\draw[lightgray] (0,-5) -- (9.52628,-10.5);
\draw[lightgray] (0.866025,-0.5) -- (12.1244,-7);
\draw[lightgray] (0.866025,-0.5) -- (0.866025,-10.5);
\draw[lightgray] (0,-8) -- (12.1244,-1);
\draw[lightgray] (0,-8) -- (4.33013,-10.5);
\draw[lightgray] (4.33013,-10.5) -- (12.1244,-6);
\draw[lightgray] (5.19615,-1) -- (5.19615,-10);
\draw[lightgray] (0,-2) -- (2.59808,-0.5);
\draw[lightgray] (0,-2) -- (12.1244,-9);
\draw[lightgray] (2.59808,-0.5) -- (12.1244,-6);
\draw[lightgray] (2.59808,-0.5) -- (2.59808,-10.5);
\draw[lightgray] (2.59808,-10.5) -- (12.1244,-5);
\draw[lightgray] (4.33013,-0.5) -- (12.1244,-5);
\draw[lightgray] (4.33013,-0.5) -- (4.33013,-10.5);
\draw[lightgray] (6.06218,-0.5) -- (12.1244,-4);
\draw[lightgray] (6.06218,-0.5) -- (6.06218,-10.5);
\draw[lightgray] (0,-6) -- (9.52628,-0.5);
\draw[lightgray] (0,-6) -- (7.79423,-10.5);
\draw[lightgray] (0,-9) -- (12.1244,-2);
\draw[lightgray] (0,-9) -- (2.59808,-10.5);
\draw[lightgray] (0,-3) -- (4.33013,-0.5);
\draw[lightgray] (0,-3) -- (12.1244,-10);
\draw[lightgray] (6.9282,-1) -- (6.9282,-10);
\draw[lightgray] (6.06218,-10.5) -- (12.1244,-7);
\draw[lightgray] (7.79423,-10.5) -- (12.1244,-8);
\draw[lightgray] (8.66025,-1) -- (8.66025,-10);
\draw[black!40!green, dotted, line width=0.5mm] (0,-10) -- (12.1244,-3);
\draw[black!40!green, dotted, line width=0.5mm] (0,-1) -- (12.1244,-8);
\draw[black!40!green, dotted, line width=0.5mm] (7.79423,-0.5) -- (7.79423,-10.5);
\draw[black, line width=0.4mm, fill=white] (0.866025,-1.5) circle (0.288);
\draw[black, line width=0.4mm, fill=white] (1.73205,-2) circle (0.288);
\draw[black, line width=0.4mm, fill=white] (2.59808,-2.5) circle (0.288);
\draw[black, line width=0.4mm, fill=white] (3.4641,-3) circle (0.288);
\draw[black, line width=0.4mm, fill=white] (4.33013,-3.5) circle (0.288);
\node[align=left] at (4.50333,-2.8) {\footnotesize $p_1$};
\draw[black, line width=0.4mm, fill=white] (4.33013,-4.5) circle (0.288);
\draw[black, line width=0.4mm, fill=white] (4.33013,-5.5) circle (0.288);
\draw[black, line width=0.4mm, fill=white] (4.33013,-6.5) circle (0.288);
\draw[black, line width=0.4mm, fill=white] (4.33013,-7.5) circle (0.288);
\draw[black, line width=0.4mm, fill=white] (5.19615,-3) circle (0.288);
\draw[black, line width=0.4mm, fill=white] (5.19615,-4) circle (0.288);
\draw[black, line width=0.4mm, fill=white] (5.19615,-8) circle (0.288);
\draw[black, line width=0.4mm, fill=white] (6.06218,-2.5) circle (0.288);
\draw[black, line width=0.4mm, fill=white] (6.06218,-4.5) circle (0.288);
\draw[black, line width=0.4mm, fill=white] (6.06218,-8.5) circle (0.288);
\draw[black, line width=0.4mm, fill=white] (6.9282,-2) circle (0.288);
\draw[black, line width=0.4mm, fill=white] (6.9282,-4) circle (0.288);
\draw[black, line width=0.4mm, fill=white] (6.9282,-6) circle (0.288);
\draw[black, line width=0.4mm, fill=white] (6.9282,-7) circle (0.288);
\draw[black, line width=0.4mm, fill=white] (6.9282,-9) circle (0.288);
\draw[black, line width=0.4mm, fill=white] (7.79423,-1.5) circle (0.288);
\draw[black, line width=0.4mm, fill=white] (7.79423,-4.5) circle (0.288);
\draw[black, line width=0.4mm, fill=white] (7.79423,-5.5) circle (0.336);
\draw[black, line width=0.32mm] (7.79423,-5.5) circle (0.24);
\node[align=left] at (7.79423,-5.5) {\scriptsize $f$};
\draw[black, line width=0.4mm, fill=white] (7.79423,-6.5) circle (0.288);
\draw[black, line width=0.4mm, fill=white] (7.79423,-9.5) circle (0.288);
\draw[black, line width=0.4mm, fill=white] (8.66025,-2) circle (0.288);
\draw[black, line width=0.4mm, fill=white] (8.66025,-7) circle (0.288);
\draw[black, line width=0.4mm, fill=white] (8.66025,-9) circle (0.288);
\draw[black, line width=0.4mm, fill=white] (9.52628,-2.5) circle (0.288);
\draw[black, line width=0.4mm, fill=white] (9.52628,-8.5) circle (0.288);
\draw[black, line width=0.4mm, fill=white] (10.3923,-3) circle (0.288);
\draw[black, line width=0.4mm, fill=white] (10.3923,-8) circle (0.288);
\draw[black, line width=0.4mm, fill=white] (11.2583,-3.5) circle (0.288);
\draw[black, line width=0.4mm, fill=white] (11.2583,-4.5) circle (0.288);
\draw[black, line width=0.4mm, fill=white] (11.2583,-5.5) circle (0.288);
\draw[black, line width=0.4mm, fill=white] (11.2583,-6.5) circle (0.288);
\draw[black, line width=0.4mm, fill=white] (11.2583,-7.5) circle (0.288);
\draw[black, line width=0.5mm] (8.05404,-1.65) -- (8.40045,-1.85);
\draw[black, line width=0.5mm] (4.33013,-6.2) -- (4.33013,-5.8);
\draw[black, line width=0.5mm] (9.78609,-2.65) -- (10.1325,-2.85);
\draw[black, line width=0.5mm] (11.2583,-4.2) -- (11.2583,-3.8);
\draw[black, line width=0.5mm] (7.18801,-9.15) -- (7.53442,-9.35);
\draw[black, line width=0.5mm] (6.32199,-2.35) -- (6.6684,-2.15);
\draw[black, line width=0.5mm] (4.33013,-5.2) -- (4.33013,-4.8);
\draw[black, line width=0.5mm] (8.92006,-2.15) -- (9.26647,-2.35);
\draw[black, line width=0.5mm] (5.45596,-2.85) -- (5.80237,-2.65);
\draw[black, line width=0.5mm] (11.2583,-7.2) -- (11.2583,-6.8);
\draw[black, line width=0.5mm] (1.99186,-2.15) -- (2.33827,-2.35);
\draw[black, line width=0.5mm] (7.18801,-5.85) -- (7.53442,-5.65);
\draw[black, line width=0.5mm] (7.18801,-6.15) -- (7.53442,-6.35);
\draw[black, line width=0.5mm] (1.12583,-1.65) -- (1.47224,-1.85);
\draw[black, line width=0.5mm] (8.92006,-8.85) -- (9.26647,-8.65);
\draw[black, line width=0.5mm] (10.6521,-3.15) -- (10.9985,-3.35);
\draw[black, line width=0.5mm] (10.6521,-7.85) -- (10.9985,-7.65);
\draw[black, line width=0.5mm] (5.19615,-3.7) -- (5.19615,-3.3);
\draw[black, line width=0.5mm] (5.45596,-4.15) -- (5.80237,-4.35);
\draw[black, line width=0.5mm] (2.85788,-2.65) -- (3.20429,-2.85);
\draw[black, line width=0.5mm] (4.33013,-4.2) -- (4.33013,-3.8);
\draw[black, line width=0.5mm] (4.58993,-4.35) -- (4.93634,-4.15);
\draw[black, line width=0.5mm] (7.18801,-1.85) -- (7.53442,-1.65);
\draw[black, line width=0.5mm] (11.2583,-6.2) -- (11.2583,-5.8);
\draw[black, line width=0.5mm] (6.9282,-6.7) -- (6.9282,-6.3);
\draw[black, line width=0.5mm] (7.18801,-6.85) -- (7.53442,-6.65);
\draw[black, line width=0.5mm] (7.79423,-6.2) -- (7.79423,-5.8);
\draw[black, line width=0.5mm] (8.05404,-6.65) -- (8.40045,-6.85);
\draw[black, line width=0.5mm] (4.58993,-3.35) -- (4.93634,-3.15);
\draw[black, line width=0.5mm] (4.58993,-3.65) -- (4.93634,-3.85);
\draw[black, line width=0.5mm] (5.45596,-8.15) -- (5.80237,-8.35);
\draw[black, line width=0.5mm] (9.78609,-8.35) -- (10.1325,-8.15);
\draw[black, line width=0.5mm] (4.33013,-7.2) -- (4.33013,-6.8);
\draw[black, line width=0.5mm] (4.58993,-7.65) -- (4.93634,-7.85);
\draw[black, line width=0.5mm] (8.05404,-9.35) -- (8.40045,-9.15);
\draw[black, line width=0.5mm] (11.2583,-5.2) -- (11.2583,-4.8);
\draw[black, line width=0.5mm] (6.32199,-4.35) -- (6.6684,-4.15);
\draw[black, line width=0.5mm] (3.72391,-3.15) -- (4.07032,-3.35);
\draw[black, line width=0.5mm] (7.79423,-5.2) -- (7.79423,-4.8);
\draw[black, line width=0.5mm] (6.32199,-8.65) -- (6.6684,-8.85);
\draw[black, line width=0.5mm] (7.18801,-4.15) -- (7.53442,-4.35);
\draw[black!20!red, line width=0.36mm, ] (0.866025,-1.5) circle (0.408);
\draw[black!20!red,-{Stealth[length=1.6mm,width=2.5mm]},line width=0.7mm] (0.597558,-1.655) -- (0,-2);
\draw[black!20!red, line width=0.36mm, ] (1.73205,-2) circle (0.408);
\draw[black!20!red,-{Stealth[length=1.6mm,width=2.5mm]},line width=0.7mm] (1.46358,-2.155) -- (0.866025,-2.5);
\draw[black!20!red, line width=0.36mm, ] (2.59808,-2.5) circle (0.408);
\draw[black!20!red,-{Stealth[length=1.6mm,width=2.5mm]},line width=0.7mm] (2.32961,-2.655) -- (1.73205,-3);
\draw[black!20!red, line width=0.36mm, ] (3.4641,-3) circle (0.408);
\draw[black!20!red,-{Stealth[length=1.6mm,width=2.5mm]},line width=0.7mm] (3.19563,-3.155) -- (2.59808,-3.5);
\draw[black!20!red, line width=0.36mm, ] (4.33013,-3.5) circle (0.408);
\draw[black!20!red,-{Stealth[length=1.6mm,width=2.5mm]},line width=0.7mm] (4.06166,-3.655) -- (3.4641,-4);
  \end{tikzpicture}
  \end{center}
  \caption{Moving a corner outward.}
  \label{fig:hexagon_tail1}
\end{subfigure}%
\begin{subfigure}{.33\textwidth}
  %\centering
  %\includegraphics[width=.8\linewidth]{diagrams_irreducibility/hexagon_tail2.png}
  \begin{center}
  \begin{tikzpicture}[x=0.35cm,y=0.35cm]
  \draw[lightgray] (10.3923,-1) -- (10.3923,-10);
\draw[lightgray] (0,-7) -- (11.2583,-0.5);
\draw[lightgray] (0,-7) -- (6.06218,-10.5);
\draw[lightgray] (0,-10) -- (12.1244,-3);
\draw[lightgray] (0,-10) -- (0.866025,-10.5);
\draw[lightgray] (7.79423,-0.5) -- (12.1244,-3);
\draw[lightgray] (7.79423,-0.5) -- (7.79423,-10.5);
\draw[lightgray] (12.1244,-1) -- (12.1244,-10);
\draw[lightgray] (0,-4) -- (6.06218,-0.5);
\draw[lightgray] (0,-4) -- (11.2583,-10.5);
\draw[lightgray] (9.52628,-0.5) -- (12.1244,-2);
\draw[lightgray] (9.52628,-0.5) -- (9.52628,-10.5);
\draw[lightgray] (11.2583,-10.5) -- (12.1244,-10);
\draw[lightgray] (0,-1) -- (0.866025,-0.5);
\draw[lightgray] (0,-1) -- (12.1244,-8);
\draw[lightgray] (0,-1) -- (0,-10);
\draw[lightgray] (11.2583,-0.5) -- (12.1244,-1);
\draw[lightgray] (11.2583,-0.5) -- (11.2583,-10.5);
\draw[lightgray] (1.73205,-1) -- (1.73205,-10);
\draw[lightgray] (9.52628,-10.5) -- (12.1244,-9);
\draw[lightgray] (0.866025,-10.5) -- (12.1244,-4);
\draw[lightgray] (3.4641,-1) -- (3.4641,-10);
\draw[lightgray] (0,-5) -- (7.79423,-0.5);
\draw[lightgray] (0,-5) -- (9.52628,-10.5);
\draw[lightgray] (0.866025,-0.5) -- (12.1244,-7);
\draw[lightgray] (0.866025,-0.5) -- (0.866025,-10.5);
\draw[lightgray] (0,-8) -- (12.1244,-1);
\draw[lightgray] (0,-8) -- (4.33013,-10.5);
\draw[lightgray] (4.33013,-10.5) -- (12.1244,-6);
\draw[lightgray] (5.19615,-1) -- (5.19615,-10);
\draw[lightgray] (0,-2) -- (2.59808,-0.5);
\draw[lightgray] (0,-2) -- (12.1244,-9);
\draw[lightgray] (2.59808,-0.5) -- (12.1244,-6);
\draw[lightgray] (2.59808,-0.5) -- (2.59808,-10.5);
\draw[lightgray] (2.59808,-10.5) -- (12.1244,-5);
\draw[lightgray] (4.33013,-0.5) -- (12.1244,-5);
\draw[lightgray] (4.33013,-0.5) -- (4.33013,-10.5);
\draw[lightgray] (6.06218,-0.5) -- (12.1244,-4);
\draw[lightgray] (6.06218,-0.5) -- (6.06218,-10.5);
\draw[lightgray] (0,-6) -- (9.52628,-0.5);
\draw[lightgray] (0,-6) -- (7.79423,-10.5);
\draw[lightgray] (0,-9) -- (12.1244,-2);
\draw[lightgray] (0,-9) -- (2.59808,-10.5);
\draw[lightgray] (0,-3) -- (4.33013,-0.5);
\draw[lightgray] (0,-3) -- (12.1244,-10);
\draw[lightgray] (6.9282,-1) -- (6.9282,-10);
\draw[lightgray] (6.06218,-10.5) -- (12.1244,-7);
\draw[lightgray] (7.79423,-10.5) -- (12.1244,-8);
\draw[lightgray] (8.66025,-1) -- (8.66025,-10);
\draw[black!40!green, dotted, line width=0.5mm] (0,-10) -- (12.1244,-3);
\draw[black!40!green, dotted, line width=0.5mm] (0,-1) -- (12.1244,-8);
\draw[black!40!green, dotted, line width=0.5mm] (7.79423,-0.5) -- (7.79423,-10.5);
\draw[black, line width=0.4mm, fill=white] (0.866025,-1.5) circle (0.288);
\draw[black, line width=0.4mm, fill=white] (1.73205,-2) circle (0.288);
\draw[black, line width=0.4mm, fill=white] (2.59808,-2.5) circle (0.288);
\draw[black, line width=0.4mm, fill=white] (3.4641,-3) circle (0.288);
\draw[black, line width=0.4mm, fill=white] (4.33013,-3.5) circle (0.288);
\draw[black, line width=0.4mm, fill=white] (4.33013,-4.5) circle (0.288);
\draw[black, line width=0.4mm, fill=white] (4.33013,-5.5) circle (0.288);
\node[align=left] at (3.5074,-5.5) {\footnotesize $p_2$};
\draw[black, line width=0.4mm, fill=white] (4.33013,-6.5) circle (0.288);
\draw[black, line width=0.4mm, fill=white] (4.33013,-7.5) circle (0.288);
\draw[black, line width=0.4mm, fill=white] (5.19615,-3) circle (0.288);
\draw[black, line width=0.4mm, fill=white] (5.19615,-5) circle (0.288);
\draw[black, line width=0.4mm, fill=white] (5.19615,-8) circle (0.288);
\draw[black, line width=0.4mm, fill=white] (6.06218,-2.5) circle (0.288);
\draw[black, line width=0.4mm, fill=white] (6.06218,-5.5) circle (0.288);
\draw[black, line width=0.4mm, fill=white] (6.06218,-8.5) circle (0.288);
\draw[black, line width=0.4mm, fill=white] (6.9282,-2) circle (0.288);
\draw[black, line width=0.4mm, fill=white] (6.9282,-5) circle (0.288);
\draw[black, line width=0.4mm, fill=white] (6.9282,-9) circle (0.288);
\draw[black, line width=0.4mm, fill=white] (7.79423,-1.5) circle (0.288);
\draw[black, line width=0.4mm, fill=white] (7.79423,-3.5) circle (0.288);
\draw[black, line width=0.4mm, fill=white] (7.79423,-4.5) circle (0.288);
\draw[black, line width=0.4mm, fill=white] (7.79423,-5.5) circle (0.336);
\draw[black, line width=0.32mm] (7.79423,-5.5) circle (0.24);
\node[align=left] at (7.79423,-5.5) {\scriptsize $f$};
\draw[black, line width=0.4mm, fill=white] (7.79423,-6.5) circle (0.288);
\draw[black, line width=0.4mm, fill=white] (7.79423,-9.5) circle (0.288);
\draw[black, line width=0.4mm, fill=white] (8.66025,-2) circle (0.288);
\draw[black, line width=0.4mm, fill=white] (8.66025,-4) circle (0.288);
\draw[black, line width=0.4mm, fill=white] (8.66025,-5) circle (0.288);
\draw[black, line width=0.4mm, fill=white] (8.66025,-9) circle (0.288);
\draw[black, line width=0.4mm, fill=white] (9.52628,-2.5) circle (0.288);
\draw[black, line width=0.4mm, fill=white] (9.52628,-8.5) circle (0.288);
\draw[black, line width=0.4mm, fill=white] (10.3923,-3) circle (0.288);
\draw[black, line width=0.4mm, fill=white] (10.3923,-8) circle (0.288);
\draw[black, line width=0.4mm, fill=white] (11.2583,-3.5) circle (0.288);
\draw[black, line width=0.4mm, fill=white] (11.2583,-4.5) circle (0.288);
\draw[black, line width=0.4mm, fill=white] (11.2583,-5.5) circle (0.288);
\draw[black, line width=0.4mm, fill=white] (11.2583,-6.5) circle (0.288);
\draw[black, line width=0.4mm, fill=white] (11.2583,-7.5) circle (0.288);
\draw[black, line width=0.5mm] (8.05404,-1.65) -- (8.40045,-1.85);
\draw[black, line width=0.5mm] (4.33013,-6.2) -- (4.33013,-5.8);
\draw[black, line width=0.5mm] (9.78609,-2.65) -- (10.1325,-2.85);
\draw[black, line width=0.5mm] (11.2583,-4.2) -- (11.2583,-3.8);
\draw[black, line width=0.5mm] (7.18801,-9.15) -- (7.53442,-9.35);
\draw[black, line width=0.5mm] (7.79423,-4.2) -- (7.79423,-3.8);
\draw[black, line width=0.5mm] (8.05404,-4.35) -- (8.40045,-4.15);
\draw[black, line width=0.5mm] (8.05404,-4.65) -- (8.40045,-4.85);
\draw[black, line width=0.5mm] (7.18801,-4.85) -- (7.53442,-4.65);
\draw[black, line width=0.5mm] (7.18801,-5.15) -- (7.53442,-5.35);
\draw[black, line width=0.5mm] (6.32199,-2.35) -- (6.6684,-2.15);
\draw[black, line width=0.5mm] (4.33013,-5.2) -- (4.33013,-4.8);
\draw[black, line width=0.5mm] (4.58993,-5.35) -- (4.93634,-5.15);
\draw[black, line width=0.5mm] (5.45596,-2.85) -- (5.80237,-2.65);
\draw[black, line width=0.5mm] (11.2583,-7.2) -- (11.2583,-6.8);
\draw[black, line width=0.5mm] (1.99186,-2.15) -- (2.33827,-2.35);
\draw[black, line width=0.5mm] (1.12583,-1.65) -- (1.47224,-1.85);
\draw[black, line width=0.5mm] (8.92006,-8.85) -- (9.26647,-8.65);
\draw[black, line width=0.5mm] (10.6521,-3.15) -- (10.9985,-3.35);
\draw[black, line width=0.5mm] (10.6521,-7.85) -- (10.9985,-7.65);
\draw[black, line width=0.5mm] (2.85788,-2.65) -- (3.20429,-2.85);
\draw[black, line width=0.5mm] (4.33013,-4.2) -- (4.33013,-3.8);
\draw[black, line width=0.5mm] (4.58993,-4.65) -- (4.93634,-4.85);
\draw[black, line width=0.5mm] (7.18801,-1.85) -- (7.53442,-1.65);
\draw[black, line width=0.5mm] (8.05404,-3.65) -- (8.40045,-3.85);
\draw[black, line width=0.5mm] (11.2583,-6.2) -- (11.2583,-5.8);
\draw[black, line width=0.5mm] (6.32199,-5.35) -- (6.6684,-5.15);
\draw[black, line width=0.5mm] (7.79423,-6.2) -- (7.79423,-5.8);
\draw[black, line width=0.5mm] (5.45596,-5.15) -- (5.80237,-5.35);
\draw[black, line width=0.5mm] (4.58993,-3.35) -- (4.93634,-3.15);
\draw[black, line width=0.5mm] (8.66025,-4.7) -- (8.66025,-4.3);
\draw[black, line width=0.5mm] (5.45596,-8.15) -- (5.80237,-8.35);
\draw[black, line width=0.5mm] (9.78609,-8.35) -- (10.1325,-8.15);
\draw[black, line width=0.5mm] (4.33013,-7.2) -- (4.33013,-6.8);
\draw[black, line width=0.5mm] (4.58993,-7.65) -- (4.93634,-7.85);
\draw[black, line width=0.5mm] (8.05404,-9.35) -- (8.40045,-9.15);
\draw[black, line width=0.5mm] (11.2583,-5.2) -- (11.2583,-4.8);
\draw[black, line width=0.5mm] (3.72391,-3.15) -- (4.07032,-3.35);
\draw[black, line width=0.5mm] (7.79423,-5.2) -- (7.79423,-4.8);
\draw[black, line width=0.5mm] (8.05404,-5.35) -- (8.40045,-5.15);
\draw[black, line width=0.5mm] (6.32199,-8.65) -- (6.6684,-8.85);
\draw[black, line width=0.5mm] (8.92006,-2.15) -- (9.26647,-2.35);
\draw[black!20!red, line width=0.36mm, ] (4.33013,-5.5) circle (0.408);
\draw[black!20!red,-{Stealth[length=1.6mm,width=2.5mm]},line width=0.7mm] (4.59859,-5.655) -- (5.19615,-6);
  \end{tikzpicture}
  \end{center}
  \caption{Moving a side particle inward.}
  \label{fig:hexagon_tail2}
\end{subfigure}%
\begin{subfigure}{.33\textwidth}
  %\centering
  %\includegraphics[width=.8\linewidth]{diagrams_irreducibility/hexagon_tail3.png}
  \begin{center}
  \begin{tikzpicture}[x=0.35cm,y=0.35cm]
  \draw[lightgray] (10.3923,-1) -- (10.3923,-10);
\draw[lightgray] (0,-7) -- (11.2583,-0.5);
\draw[lightgray] (0,-7) -- (6.06218,-10.5);
\draw[lightgray] (0,-10) -- (12.1244,-3);
\draw[lightgray] (0,-10) -- (0.866025,-10.5);
\draw[lightgray] (7.79423,-0.5) -- (12.1244,-3);
\draw[lightgray] (7.79423,-0.5) -- (7.79423,-10.5);
\draw[lightgray] (12.1244,-1) -- (12.1244,-10);
\draw[lightgray] (0,-4) -- (6.06218,-0.5);
\draw[lightgray] (0,-4) -- (11.2583,-10.5);
\draw[lightgray] (9.52628,-0.5) -- (12.1244,-2);
\draw[lightgray] (9.52628,-0.5) -- (9.52628,-10.5);
\draw[lightgray] (11.2583,-10.5) -- (12.1244,-10);
\draw[lightgray] (0,-1) -- (0.866025,-0.5);
\draw[lightgray] (0,-1) -- (12.1244,-8);
\draw[lightgray] (0,-1) -- (0,-10);
\draw[lightgray] (11.2583,-0.5) -- (12.1244,-1);
\draw[lightgray] (11.2583,-0.5) -- (11.2583,-10.5);
\draw[lightgray] (1.73205,-1) -- (1.73205,-10);
\draw[lightgray] (9.52628,-10.5) -- (12.1244,-9);
\draw[lightgray] (0.866025,-10.5) -- (12.1244,-4);
\draw[lightgray] (3.4641,-1) -- (3.4641,-10);
\draw[lightgray] (0,-5) -- (7.79423,-0.5);
\draw[lightgray] (0,-5) -- (9.52628,-10.5);
\draw[lightgray] (0.866025,-0.5) -- (12.1244,-7);
\draw[lightgray] (0.866025,-0.5) -- (0.866025,-10.5);
\draw[lightgray] (0,-8) -- (12.1244,-1);
\draw[lightgray] (0,-8) -- (4.33013,-10.5);
\draw[lightgray] (4.33013,-10.5) -- (12.1244,-6);
\draw[lightgray] (5.19615,-1) -- (5.19615,-10);
\draw[lightgray] (0,-2) -- (2.59808,-0.5);
\draw[lightgray] (0,-2) -- (12.1244,-9);
\draw[lightgray] (2.59808,-0.5) -- (12.1244,-6);
\draw[lightgray] (2.59808,-0.5) -- (2.59808,-10.5);
\draw[lightgray] (2.59808,-10.5) -- (12.1244,-5);
\draw[lightgray] (4.33013,-0.5) -- (12.1244,-5);
\draw[lightgray] (4.33013,-0.5) -- (4.33013,-10.5);
\draw[lightgray] (6.06218,-0.5) -- (12.1244,-4);
\draw[lightgray] (6.06218,-0.5) -- (6.06218,-10.5);
\draw[lightgray] (0,-6) -- (9.52628,-0.5);
\draw[lightgray] (0,-6) -- (7.79423,-10.5);
\draw[lightgray] (0,-9) -- (12.1244,-2);
\draw[lightgray] (0,-9) -- (2.59808,-10.5);
\draw[lightgray] (0,-3) -- (4.33013,-0.5);
\draw[lightgray] (0,-3) -- (12.1244,-10);
\draw[lightgray] (6.9282,-1) -- (6.9282,-10);
\draw[lightgray] (6.06218,-10.5) -- (12.1244,-7);
\draw[lightgray] (7.79423,-10.5) -- (12.1244,-8);
\draw[lightgray] (8.66025,-1) -- (8.66025,-10);
\draw[black!40!green, dotted, line width=0.5mm] (0,-10) -- (12.1244,-3);
\draw[black!40!green, dotted, line width=0.5mm] (0,-1) -- (12.1244,-8);
\draw[black!40!green, dotted, line width=0.5mm] (7.79423,-0.5) -- (7.79423,-10.5);
\draw[black, line width=0.4mm, fill=white] (0.866025,-1.5) circle (0.288);
\draw[black, line width=0.4mm, fill=white] (1.73205,-2) circle (0.288);
\draw[black, line width=0.4mm, fill=white] (2.59808,-2.5) circle (0.288);
\draw[black, line width=0.4mm, fill=white] (3.4641,-3) circle (0.288);
\draw[black, line width=0.4mm, fill=white] (4.33013,-3.5) circle (0.288);
\draw[black, line width=0.4mm, fill=white] (4.33013,-4.5) circle (0.288);
\draw[black, line width=0.4mm, fill=white] (4.33013,-5.5) circle (0.288);
\node[align=left] at (3.72391,-5.95) {\footnotesize $p_3$};
\draw[black, line width=0.4mm, fill=white] (4.33013,-6.5) circle (0.288);
\draw[black, line width=0.4mm, fill=white] (4.33013,-7.5) circle (0.288);
\draw[black, line width=0.4mm, fill=white] (5.19615,-3) circle (0.288);
\draw[black, line width=0.4mm, fill=white] (5.19615,-5) circle (0.288);
\draw[black, line width=0.4mm, fill=white] (5.19615,-6) circle (0.288);
\draw[black, line width=0.4mm, fill=white] (5.19615,-8) circle (0.288);
\draw[black, line width=0.4mm, fill=white] (6.06218,-2.5) circle (0.288);
\draw[black, line width=0.4mm, fill=white] (6.06218,-4.5) circle (0.288);
\draw[black, line width=0.4mm, fill=white] (6.06218,-8.5) circle (0.288);
\draw[black, line width=0.4mm, fill=white] (6.9282,-2) circle (0.288);
\draw[black, line width=0.4mm, fill=white] (6.9282,-5) circle (0.288);
\draw[black, line width=0.4mm, fill=white] (6.9282,-9) circle (0.288);
\draw[black, line width=0.4mm, fill=white] (7.79423,-1.5) circle (0.288);
\draw[black, line width=0.4mm, fill=white] (7.79423,-3.5) circle (0.288);
\draw[black, line width=0.4mm, fill=white] (7.79423,-4.5) circle (0.288);
\draw[black, line width=0.4mm, fill=white] (7.79423,-5.5) circle (0.336);
\draw[black, line width=0.32mm] (7.79423,-5.5) circle (0.24);
\node[align=left] at (7.79423,-5.5) {\scriptsize $f$};
\draw[black, line width=0.4mm, fill=white] (7.79423,-6.5) circle (0.288);
\draw[black, line width=0.4mm, fill=white] (7.79423,-9.5) circle (0.288);
\draw[black, line width=0.4mm, fill=white] (8.66025,-2) circle (0.288);
\draw[black, line width=0.4mm, fill=white] (8.66025,-4) circle (0.288);
\draw[black, line width=0.4mm, fill=white] (8.66025,-5) circle (0.288);
\draw[black, line width=0.4mm, fill=white] (8.66025,-9) circle (0.288);
\draw[black, line width=0.4mm, fill=white] (9.52628,-2.5) circle (0.288);
\draw[black, line width=0.4mm, fill=white] (9.52628,-8.5) circle (0.288);
\draw[black, line width=0.4mm, fill=white] (10.3923,-3) circle (0.288);
\draw[black, line width=0.4mm, fill=white] (10.3923,-8) circle (0.288);
\draw[black, line width=0.4mm, fill=white] (11.2583,-3.5) circle (0.288);
\draw[black, line width=0.4mm, fill=white] (11.2583,-4.5) circle (0.288);
\draw[black, line width=0.4mm, fill=white] (11.2583,-5.5) circle (0.288);
\draw[black, line width=0.4mm, fill=white] (11.2583,-6.5) circle (0.288);
\draw[black, line width=0.4mm, fill=white] (11.2583,-7.5) circle (0.288);
\draw[black, line width=0.5mm] (8.05404,-1.65) -- (8.40045,-1.85);
\draw[black, line width=0.5mm] (5.19615,-5.7) -- (5.19615,-5.3);
\draw[black, line width=0.5mm] (4.33013,-6.2) -- (4.33013,-5.8);
\draw[black, line width=0.5mm] (4.58993,-6.35) -- (4.93634,-6.15);
\draw[black, line width=0.5mm] (9.78609,-2.65) -- (10.1325,-2.85);
\draw[black, line width=0.5mm] (11.2583,-4.2) -- (11.2583,-3.8);
\draw[black, line width=0.5mm] (7.18801,-9.15) -- (7.53442,-9.35);
\draw[black, line width=0.5mm] (7.79423,-4.2) -- (7.79423,-3.8);
\draw[black, line width=0.5mm] (8.05404,-4.35) -- (8.40045,-4.15);
\draw[black, line width=0.5mm] (8.05404,-4.65) -- (8.40045,-4.85);
\draw[black, line width=0.5mm] (7.18801,-4.85) -- (7.53442,-4.65);
\draw[black, line width=0.5mm] (7.18801,-5.15) -- (7.53442,-5.35);
\draw[black, line width=0.5mm] (6.32199,-2.35) -- (6.6684,-2.15);
\draw[black, line width=0.5mm] (4.33013,-5.2) -- (4.33013,-4.8);
\draw[black, line width=0.5mm] (4.58993,-5.35) -- (4.93634,-5.15);
\draw[black, line width=0.5mm] (4.58993,-5.65) -- (4.93634,-5.85);
\draw[black, line width=0.5mm] (5.45596,-2.85) -- (5.80237,-2.65);
\draw[black, line width=0.5mm] (11.2583,-7.2) -- (11.2583,-6.8);
\draw[black, line width=0.5mm] (1.99186,-2.15) -- (2.33827,-2.35);
\draw[black, line width=0.5mm] (1.12583,-1.65) -- (1.47224,-1.85);
\draw[black, line width=0.5mm] (8.92006,-8.85) -- (9.26647,-8.65);
\draw[black, line width=0.5mm] (10.6521,-3.15) -- (10.9985,-3.35);
\draw[black, line width=0.5mm] (10.6521,-7.85) -- (10.9985,-7.65);
\draw[black, line width=0.5mm] (2.85788,-2.65) -- (3.20429,-2.85);
\draw[black, line width=0.5mm] (4.33013,-4.2) -- (4.33013,-3.8);
\draw[black, line width=0.5mm] (4.58993,-4.65) -- (4.93634,-4.85);
\draw[black, line width=0.5mm] (7.18801,-1.85) -- (7.53442,-1.65);
\draw[black, line width=0.5mm] (8.05404,-3.65) -- (8.40045,-3.85);
\draw[black, line width=0.5mm] (11.2583,-6.2) -- (11.2583,-5.8);
\draw[black, line width=0.5mm] (7.79423,-6.2) -- (7.79423,-5.8);
\draw[black, line width=0.5mm] (5.45596,-4.85) -- (5.80237,-4.65);
\draw[black, line width=0.5mm] (4.58993,-3.35) -- (4.93634,-3.15);
\draw[black, line width=0.5mm] (8.66025,-4.7) -- (8.66025,-4.3);
\draw[black, line width=0.5mm] (5.45596,-8.15) -- (5.80237,-8.35);
\draw[black, line width=0.5mm] (9.78609,-8.35) -- (10.1325,-8.15);
\draw[black, line width=0.5mm] (4.33013,-7.2) -- (4.33013,-6.8);
\draw[black, line width=0.5mm] (4.58993,-7.65) -- (4.93634,-7.85);
\draw[black, line width=0.5mm] (8.05404,-9.35) -- (8.40045,-9.15);
\draw[black, line width=0.5mm] (11.2583,-5.2) -- (11.2583,-4.8);
\draw[black, line width=0.5mm] (6.32199,-4.65) -- (6.6684,-4.85);
\draw[black, line width=0.5mm] (3.72391,-3.15) -- (4.07032,-3.35);
\draw[black, line width=0.5mm] (7.79423,-5.2) -- (7.79423,-4.8);
\draw[black, line width=0.5mm] (8.05404,-5.35) -- (8.40045,-5.15);
\draw[black, line width=0.5mm] (6.32199,-8.65) -- (6.6684,-8.85);
\draw[black, line width=0.5mm] (8.92006,-2.15) -- (9.26647,-2.35);
\draw[black!20!red, line width=0.36mm, ] (4.33013,-5.5) circle (0.408);
\draw[black!20!red,-{Stealth[length=1.6mm,width=2.5mm]},line width=0.7mm] (4.06166,-5.345) -- (3.4641,-5);
  \end{tikzpicture}
  \end{center}
  \caption{Moving a side particle outward.}
  \label{fig:hexagon_tail3}
\end{subfigure}%
\caption{Possible cases for reducing the minimum \spine{} length from a hexagon with a tail.}
\label{fig:hexagon_tail}
\end{figure*}
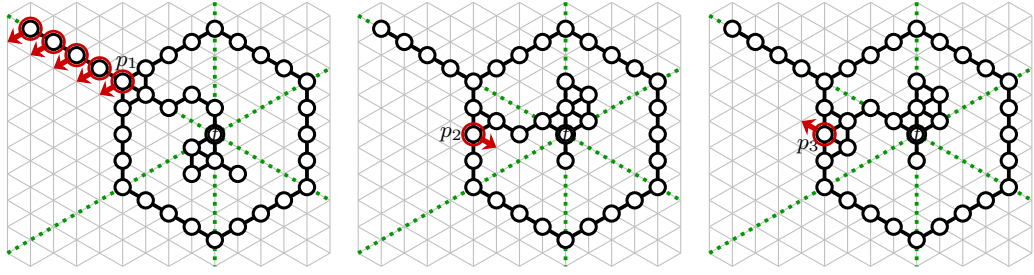

\begin{definition}[Hexagon with a Tail]
We say a configuration forms a has the ``hexagon with a tail'' arrangement of radius $r$ if:
\begin{itemize}
\item All \spines{} have length exactly $r$.
\item There are tail particles on at most one of the \spines{}.
\item Aside from these tail particles, there are no particles of distance greater than $r$ from the center.
\end{itemize}
If $r = 0$, this ``regular hexagon'' comprises of only the food particle. In other words, a hexagon with a tail of radius $r$ has all of the particles extending in a straight line from the food particle.
\end{definition}

\begin{lemma}[Reaching a Hexagon with a Tail]
\label{lem:reachinghexagon}
After seven \spinecombs{} in a counterclockwise order starting from a \spine{} of minimum length $r$, assuming that no gaps in the lines are formed and that no \spine{} ends up with length below $r$ in the process, we will end up with a hexagon with a tail arrangement of radius $r$.
\end{lemma}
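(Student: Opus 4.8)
The plan is an induction over the seven \spinecombs{}. Write $C_1,\dots,C_7$ for them, where $C_k$ is the comb from $S_{(k-1)\bmod 6}$ to $S_{k\bmod 6}$, and recall $S_0$ is a \spine{} of minimum length $r$. After $C_k$ I would maintain the invariant: (a) the minimum \spine{} length is still $r$; (b) every \spine{} that has so far been a \sourcespine{} or \targetspine{} has length exactly $r$; (c) every wedge between consecutive \spines{} that has already been combed through has no particle at distance greater than $r$, with the single exception that right after $C_6$ the wedge between $S_0$ and $S_1$ may again contain such particles; (d) every \spine{} that has so far been a \sourcespine{} is free of \tailparticles{}, except that after $C_6$ the \spine{} $S_0$ and after $C_7$ the \spine{} $S_1$ may carry \tailparticles{}. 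The conclusion of the lemma is exactly the $k=7$ instance of (a)--(d): all six \spines{} have length $r$, no off-\spine{} particle has distance $>r$, and \tailparticles{} occur on at most $S_1$.

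The step that advances the ``current'' wedge is immediate from the lemmas already proved. When $C_k=(S_{k-1}\to S_k)$ is applied, $S_{k-1}$ has length exactly $r$ by the previous step (for $k=1$ this is the choice of $S_0$), hence it is a \sourcespine{} of minimal length by (a); since no gap is created (by hypothesis) and no \spine{} drops below length $r$ (by hypothesis), Lemma~\ref{lem:nogapresult} gives that after $C_k$ both $S_{k-1}$ and $S_k$ have length exactly $r$ and that the wedge strictly between them, together with $S_{k-1}$ itself, is free of particles at distance greater than $r$; Lemma~\ref{lem:spinecomb} additionally makes $(r+1,1)$ combed and empties the region it describes near the outer corner of that wedge. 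This settles the parts of the invariant concerning $S_{k-1}$, $S_k$, and the wedge between them; propagation of the single allowed \tailparticle{} cluster from $S_k$ to $S_{k+1}$ is then automatic, since $C_{k+1}$ has $S_k$ as \sourcespine{}.

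The heart of the argument is non-interference, and it is this that forces the count seven rather than six. A comb $C_m=(S_{m-1}\to S_m)$ touches only the two wedges flanking its \targetspine{}, those between $S_{m-1},S_m$ and between $S_m,S_{m+1}$: Lemma~\ref{lem:aboveunaffected} rules out any effect on sites lying on the $S_{m-1}$-side of the \sourcespine{}, and the remaining moves stay in residual regions on the far side. Tracking (after the appropriate rotation/reflection) which $C_m$ disturbs which wedge, one checks that for $1\le j\le 6$ the wedge between $S_{j-1}$ and $S_j$ is combed clean by $C_j$ and then left alone by $C_{j+1},\dots,C_6$; the only wrap-around is $C_6=(S_5\to S_0)$, whose residual regions can re-deposit distance-$>r$ particles just past the corner of the wedge between $S_0$ and $S_1$ — which is precisely why $C_7$ is needed. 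At the same time, every region that Lemma~\ref{lem:spinecomb} emptied is contained in a region $R_{\ell,d}$ anchored at a \spine{} site whose interior lies strictly to the left of every subsequent comb's lanes, so Lemma~\ref{lem:unenterableregion} keeps it unoccupied forever after. Applying this to $C_7=(S_0\to S_1)$: by Lemma~\ref{lem:nogapresult} it re-combs the wedge between $S_0$ and $S_1$ clean and makes $S_0$ free of \tailparticles{} again, and although its residual regions reach into the wedge between $S_1$ and $S_2$, the relevant part of that wedge — emptied by $C_2$ and untouched since — stays empty by Lemma~\ref{lem:unenterableregion}, while $S_2$'s potential \tailparticles{} lie at lanes below $r+1$ and so are out of $C_7$'s reach entirely. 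Hence after $C_7$ the invariant holds in full, which is the claimed hexagon with a tail of radius $r$.

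I expect the main obstacle to be exactly the bookkeeping of the third paragraph: determining, for each of the twelve possible orientations of the comb, precisely which wedge and which \spine{}-tail its line-formation, shift, and line-merging moves can touch; handling the one wrap-around so that the seventh comb genuinely completes the job without a need for an eighth; and, most delicately, exhibiting at each of the seven steps an appropriate region $R_{\ell,d}$ — \emph{already} unoccupied, with the remaining combs acting strictly to its left — so that Lemma~\ref{lem:unenterableregion} applies, with the interaction between $C_7$ and the wedge between $S_1$ and $S_2$ (and the tail of $S_2$) being the case that requires the most care.
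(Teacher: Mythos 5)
There is a genuine gap, and it sits exactly where you predicted the difficulty would be: your locality claim for a single comb is too strong. You assert that a comb $C_m=(S_{m-1}\to S_m)$ ``touches only the two wedges flanking its \targetspine{}'' because, apart from what Lemma~\ref{lem:aboveunaffected} excludes, ``the remaining moves stay in residual regions on the far side.'' That is false: the ``shift'' moves of the line-formation phase are not confined to the residual regions. They move particles along constant-depth rows arbitrarily far down-right (only bounded by the rightmost existing particle, which is the content of Lemma~\ref{lem:rightmostextent}), so a comb from $S_{m-1}$ to $S_m$ can disturb not just the wedges $S_{m-1}$--$S_m$ and $S_m$--$S_{m+1}$ but also the spine $S_{m+1}$, the wedge $S_{m+1}$--$S_{m+2}$, and the spine $S_{m+2}$ (in the paper's indexing with source $S_i$: everything up to and including $S_{i+3}$). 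Consequently your invariants (b)--(d) are false as stated: for example the fourth comb can affect $S_0$, and the fifth comb $(S_4\to S_5)$ can deposit particles on $S_0$, on $S_1$, and in the wedge between them -- the paper explicitly allows for this and only controls it by showing (via Lemma~\ref{lem:rightmostextent} and the effect of the earlier combs) that nothing ends up further right than the \anchorparticle{} of $S_1$. Your ``single wrap-around at $C_6$'' bookkeeping therefore under-counts the interference, and your final claim that $S_2$'s potential \tailparticles{} are ``out of $C_7$'s reach entirely'' because they lie at lanes below $r+1$ rests on the same incorrect premise, since shifts reach lanes below $r+1$ and even negative lanes.

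The paper's proof has the same skeleton as yours (each \spinecomb{} equalizes source and target lengths and cleans the wedge via Lemma~\ref{lem:nogapresult}; the seventh comb repairs the wrap-around contamination of the wedge $S_0$--$S_1$), but the heart of it is precisely the machinery you omit: a three-wedge-wide affected-region bound derived from Lemma~\ref{lem:aboveunaffected}, the rightmost-extent bound of Lemma~\ref{lem:rightmostextent} applied at the fifth comb, and then concrete unenterable regions $R_{-r-1,0}$ (for the sixth comb) and $R_{r,r+1}$, $R_{-r-1,0}$ (for the seventh) kept empty by Lemma~\ref{lem:unenterableregion}, which together force all distance-$(>r)$ particles onto the \targetspine{} $S_1$ at the end. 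To repair your argument you would need to replace the two-wedge locality claim with this weaker but correct control of the down-right spill-over; as written, the induction step does not go through.
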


\begin{proof}
We will denote the starting \spine{} as $S_0$, and name the remaining spines $S_1$ to $S_5$ in counterclockwise order. The \spinecombs{} hence go from $S_0$ to $S_1$, from $S_1$ to $S_2$ and so on, with the final (seventh) comb being from $S_0$ to $S_1$. \Spine{} $S_0$ is assumed to be a minimum length \spine{}, of length $r$.

By Lemma~\ref{lem:aboveunaffected}, a \spinecomb{} from spines $S_i$ to $S_{i+1}$ will only affect particles on spines $S_i$, $S_{i+1}$, $S_{i+2}$, and the particles between spines $S_i$ and $S_{i+1}$, between spines $S_{i+1}$ and $S_{i+2}$, and between spines $S_{i+2}$ and $S_{i+3}$. Note that this does include the particles on spine $S_{i+3}$. Hence, the first four \spinecombs{} will not affect the result of the first \spinecomb{} from $S_0$ to $S_1$. 

On the fifth \spinecomb{} from $S_4$ to $S_5$, as usual without loss of generality we take $S_4$ to be the \spine{} going up-left and $S_5$ to be the \spine{} going down-left. \Spine{} $S_0$ will thus be going downwards and \spine{} $S_1$ will be going down-right. Due to the effects of the first three combs, there will be no particle further right than the \anchorparticle{} of \spine{} $S_1$. By Lemma~\ref{lem:rightmostextent}, while the fifth \spinecomb{} may move particles onto \spine{} $S_0$ or the region between \spines{} $S_0$ and $S_1$, none of these particles in the resulting configuration will be further right than the \anchorparticle{} of \spine $S_1$.

On the sixth \spinecomb{} from $S_5$ to $S_0$, taking $S_5$ to be going up-left and $S_0$ to be going down-left, consider the position $(-r-1,0)$, which is one particle down-right of the \anchorparticle{} of the down-right \spine{} $S_2$. The region $R_{-r-1,0}$, as defined in Lemma~\ref{lem:unenterableregion}, will be empty after the fifth \spinecomb{}, due to what we have just shown to happen after the fifth \spinecomb{}. By Lemma~\ref{lem:unenterableregion}, this region will continue to be empty after the sixth \spinecomb{}.

On the seventh and final \spinecomb{} from $S_0$ to $S_1$, take $S_0$ to be going up-left and $S_1$ to be going down-left. Consider the positions $(r,0)$ and $(r,r)$, which are on the \sourcespine{} $S_0$ and \targetspine{} $S_1$ respectively, of distance $r$ from the center. As a result of the sixth \spinecomb{} with Lemma~\ref{lem:nogapresult}, all particles of distance greater than $r$ from the center must lie between (inclusive) the two diagonal lines going up-left from the positions $(r,0)$ and $(r,r)$. Now, from the position $(r,r+1)$ which lies directly below $(r,r)$ and the position $(-r-1,0)$, which lies on \spine{} $S_3$ of distance $r+1$ from the center, we consider the two regions $R_{r,r+1}$ and $R_{-r-1,0}$ as in Lemma~\ref{lem:unenterableregion}. Both of these regions are initially empty, and so will remain empty after the seventh comb. By Lemmas~\ref{lem:aboveunaffected} and \ref{lem:nogapresult}, the only place where particles of distance greater than $r$ can be are on the \targetspine{} $S_1$.

As we had assumed that no \spine{} will have ended up with length less than $r$ in the process, this means we have reached a hexagon with a tail arrangement of radius $r$.
\end{proof}

The following Lemma then concludes the proof that we can always reduce the minimum \spine{} length, provided that the current minimum \spine{} length is at least $1$.

\begin{lemma}
From a hexagon with a tail arrangement of radius $r \geq 1$, there exists a sequence of moves to reduce the minimum \spine{} length by $1$.
\end{lemma}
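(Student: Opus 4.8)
The plan is to exhibit a short sequence of valid moves after which some \spine{} has its \anchorparticle{} strictly nearer the food than distance $r$; since the other \spines{} still have length $r$, the minimum \spine{} length then drops to at most $r-1$, which is exactly what we want.

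First I would set up the target. Since \tailparticles{} lie on at most one \spine{}, at least five \spines{} have length exactly $r$ and carry no \tailparticle{}; I would fix such a tailless \spine{} $S$ and, using the rotation and reflection symmetry, orient the picture so $S$ points up-left. Let $c$ be the corner particle of $S$ (the site at distance $r$ along $S$). Because $S$ has length exactly $r$, the definition of \anchorparticle{} forces $c$ to have no occupied non-\spine{} neighbour; hence the two sites flanking $c$ on the hexagon of radius $r$ are empty, and (as $S$ carries no tail) the site at distance $r+1$ along $S$ is empty too. For $r\ge 2$ this makes $c$ a leaf attached only to the distance-$(r-1)$ site of $S$, and each empty flanking site is adjacent to that distance-$(r-1)$ site; for $r=1$ the only non-\spine{} neighbours of $c$ are the three (empty) sites at distance $2$ from the food.

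Then I would do the move. In the generic case I simply move $c$ onto one of its empty flanking sites $F$: this is a legal compression move because the particle remains attached through the distance-$(r-1)$ site of $S$ (so connectivity survives) and no hole forms at the vacated $c$ (it has an empty neighbour outside the hexagon of radius $r$). Afterwards the distance-$r$ site of $S$ is empty, so $S$ has length at most $r-1$ and we are done. The generic move can fail only in the configurations of Figure~\ref{fig:hexagon_tail}: when the hexagon side next to $c$ is locally full, moving $c$ onto $F$ would seal off the site just inside $F$ and create a hole --- possibly at both flanking sites at once --- and $r=1$ is degenerate because there the flanking sites coincide with neighbouring corners. In the first exceptional case I would instead move the side particle adjacent to $F$ one step outward onto $F$ itself (Figure~\ref{fig:hexagon_tail3}): this places an occupied non-\spine{} particle next to $c$, so $c$ is no longer an \anchorparticle{} and $S$ drops below length $r$, while the vacated side site keeps an empty neighbour outside the hexagon so no hole appears. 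When that too is obstructed I would first move an adjacent side particle one step inward (Figure~\ref{fig:hexagon_tail2}) to open room, and then either move $c$ or shift the corner particle outward along $S$ into a \tailparticle{} position (Figure~\ref{fig:hexagon_tail1}), dragging the side particles along so connectivity is never broken; in each variant the distance-$r$ site of $S$ ends up empty.

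The hard part will be this last case analysis: for each local appearance of the configuration at the partially filled hexagon boundary I must verify that the chosen short move sequence really consists of valid compression moves, i.e.\ the configuration stays connected and acquires no hole at any step. The two things that keep this finite and tractable are the \anchorparticle{} condition on $S$ (which pins down that the flanking sites and the next site out along $S$ are empty) and the freedom to choose $S$ among the at least five tailless \spines{} --- together these reduce everything to the three cases of Figure~\ref{fig:hexagon_tail}.
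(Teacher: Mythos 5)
Your argument hinges on the claim that, for a tailless \spine{} $S$ of length exactly $r$, the corner particle $c$ has no occupied non-\spine{} neighbour, so that both sites flanking $c$ on the hexagon of radius $r$ are empty and $c$ is a leaf attached only to the distance-$(r-1)$ site of $S$. That reading of the \anchorparticle{} definition is not the one the paper actually operates with: in Figure~\ref{fig:spines} the \anchorparticle{} $p_i$ of the length-$4$ \spine{} $S_i$ has an occupied non-\spine{} neighbour (while the \spine{} sites at distances $1$ and $2$ are empty), and the paper's own proof of this lemma explicitly treats the case in which the hexagon $H$ of radius $r$ is completely full --- impossible if corners had to have empty flanking sites. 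The operative meaning is that the \tailparticles{} are the maximal bare suffix of the \spine{} and the \anchorparticle{} is the outermost \spine{} particle still attached to the rest of the configuration through a non-\spine{} neighbour; so for $r\ge 2$ at least one flanking site of $c$ is occupied, and in the full-$H$ case both are. Your ``generic case'' (slide $c$ onto an empty flanking site) therefore does not exist in general, and the local picture on which your entire case analysis rests is wrong.

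Two further problems would remain even after repairing the setup. First, the interior of the hexagon can be sparse: the \spine{} sites at distances $1,\dots,r-1$ need not be occupied, so $c$ need not be connected inward along $S$, and sliding $c$ or a neighbouring side particle can strand the moved particle or disconnect the configuration. This is precisely why the paper first uses connectedness to locate a particle $v_{-1}$ at distance $r-1$ adjacent to a particle $v_{-2}$ at distance $r-2$, and only moves a boundary particle $v_0$ adjacent to $v_{-1}$ (inward if the shared neighbour $u_{-1}$ is free, outward otherwise), or slides a corner outward when $v_{-1}$ sits next to a corner; it then finishes not by hand but by observing that the site below the newly created side vacancy is combable and invoking Lemma~\ref{lem:combingspinelength}. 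Second, you yourself flag that verifying each ``short move sequence'' is a valid, connectivity- and hole-preserving move in every local configuration is the hard part, and you leave that analysis undone; since the corrected picture of the corner changes which configurations can occur, that case analysis would have to be rebuilt from scratch. The gap is therefore essential, not cosmetic.
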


\begin{proof}
Consider the set $H$ of positions of distance exactly $r$ from the center. This set of positions is in the shape of a hexagon. If one of these sites is unfilled, without loss of generality assume this site $(r,d)$ is on the left side of the hexagon (it is not on a corner as all \spines{} have length $r$). The site $(r,d+1)$ is combable, which by Lemma~\ref{lem:combingspinelength} gives us a way to reduce the length of the \spine{} going down-left to at most $r-1$.

If no such gap in $H$ currently exists, we show that we can create such a gap.
If $r=1$, pick any particle on the hexagon $H$ aside from the one on the \spine{} the tail is on. This particle can be moved to a vacant spot between two \spines{}, reducing the minimum \spine{} length to $0$.

% NOTE: This proof is hard to visualize without diagrams
Otherwise, as the configuration is assumed to be connected, there is a path of particles from the center (food) particle to a particle on $H$. This implies that there is a particle of $v_{-2}$ distance $r-2$ from the center adjacent to a particle $v_{-1}$ of distance $r-1$ from the center. Note that if $r = 2$, $v_{-2}$ will be the food particle. If $v_{-1}$ is adjacent to a corner particle of the hexagon $H$, assuming without loss of generality that this corner is on the \spine{} going up-left, we can move this corner particle one step down-left, and if there are any tail particles attached to this corner particle, they can then subsequently be moved one-by-one one step down-left as well (Figure~\ref{fig:hexagon_tail1}). This reduces the minimum \spine{} length to at most $r-1$.

If $v_{-1}$ is not adjacent to a corner particle of $H$, we note that $v_{-1}$ and $v_{-2}$ will share a neighboring site $u_{-1}$ of distance $r-1$ from the center. The sites $u_{-1}$ and $v_{-1}$ share a neighbor particle $v_0$ on $H$. If site $u_{-1}$ is unoccupied, particle $v_0$ can be moved into site $u_{-1}$, creating a gap in the hexagon $H$ (Figure~\ref{fig:hexagon_tail2}). If $u_{-1}$ is occupied, $v_0$ can be moved in the opposite direction of $u_{-1}$, to a position $u_{+1}$ of distance $r+1$ from the center, creating a gap in the hexagon $H$ (Figure~\ref{fig:hexagon_tail3}).

In both of these cases, by reflection and rotational symmetry, without loss of generality, this newly created gap $v_0$ is on the left wall of the hexagon $H$, and if the particle was moved to $u_{+1}$, $u_{+1}$ is directly up-left of $v_0$. The site directly below $v_0$ is thus combable, and by Lemma~\ref{lem:combingspinelength}, combing this reduces the minimum \spine{} length to at most $r-1$.
\end{proof}

%\subsection{Forming a Straight Line of Particles}
Finally, we show that we can reach a straight line of particles, thus showing ergodicity of the chain since the chain is reversible.
\begin{lemma}
From any connected configuration of particles with one single immobile (food) particle, there exists a sequence of valid moves to transform this configuration into a straight line of particles with the food particle at one end.
\end{lemma}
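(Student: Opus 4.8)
The plan is to reduce the claim to reaching a single canonical configuration, and then drive the argument by the \emph{minimum spine length}, viewed as a monovariant. By the observations preceding Lemma~\ref{lem:reducespinelength}, valid moves never create holes and are reversible between hole-free configurations; consequently, once we exhibit \emph{one} sequence of valid moves from an arbitrary connected configuration $\sigma$ to the straight line $\lambda$ with the food at one end, we are done — and, running such sequences forward and backward, every hole-free configuration then communicates with $\lambda$, which is exactly Lemma~\ref{lemma:irreducible}. So it suffices to reach $\lambda$ from $\sigma$.

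First I would iterate Lemma~\ref{lem:reducespinelength}: as long as the minimum spine length $m$ of the current configuration is at least $1$, that lemma supplies a sequence of valid moves strictly decreasing $m$. Since valid moves preserve connectivity, hole-freeness, and the single immobile food particle, the resulting configuration again meets the hypotheses of the lemma, so the iteration continues; as $m$ is a non-negative integer bounded by the number of non-food particles (a spine of length $\ell$ carries $\ell$ particles), after finitely many rounds we reach a configuration with $m = 0$. This is not yet a straight line (a small cluster of particles around the food also has $m = 0$), so a final step is needed: I would run the seven-spinecomb procedure of Lemma~\ref{lem:reachinghexagon} once more, now with $r = 0$, starting from any length-$0$ spine. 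Both side-conditions of that lemma are automatic here. No spine can drop below length $r = 0$ since spine lengths are non-negative; and no gap can ever form in the line, because if one did, Lemma~\ref{lem:reducingusinggap} — whose hypothesis that the source spine has minimum length is maintained at every step by Lemma~\ref{lem:nogapresult} — would yield a sequence of moves decreasing $m$ below $0$, which is impossible. Hence Lemma~\ref{lem:reachinghexagon} applies and produces a hexagon with a tail of radius $0$; by definition this means all particles extend in a straight line from the food, i.e.\ we have reached $\lambda$, and ergodicity of $\mathcal{M}_{\rm COM}$ follows by reversibility as in Lemma~\ref{lemma:irreducible}.

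The step I expect to be the main obstacle is the $r = 0$ endpoint rather than any individual move. One must be careful that ``hexagon with a tail of radius $0$'' really is the target configuration (it coincides with the food plus a single straight tail), that the ``no gap'' hypothesis merely \emph{assumed} throughout Section~\ref{apx:irreducibilityproof} is in fact \emph{forced} once $m = 0$ (via Lemma~\ref{lem:reducingusinggap} and the non-negativity of spine lengths), and that Lemma~\ref{lem:reachinghexagon} together with its supporting lemmas (Lemmas~\ref{lem:aboveunaffected}, \ref{lem:rightmostextent}, \ref{lem:unenterableregion}, \ref{lem:nogapresult}) remain valid in the degenerate case $r = 0$. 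A secondary, routine point is that the iteration of Lemma~\ref{lem:reducespinelength} terminates, which is immediate from the boundedness of $m$.
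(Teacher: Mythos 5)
Your proposal is correct and follows essentially the same route as the paper: iterate Lemma~\ref{lem:reducespinelength} until the minimum spine length reaches $0$, then invoke Lemma~\ref{lem:reachinghexagon} to obtain a hexagon with a tail of radius $0$, which is the desired straight line. The extra care you take with the $r=0$ side conditions is a reasonable elaboration of a step the paper leaves implicit, not a different argument.
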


\begin{proof}
Applying Lemma~\ref{lem:reducespinelength} repeatedly allows us to arrive at a configuration with minimum \spine{} length $0$. 
Applying Lemma~\ref{lem:reachinghexagon} from here gives us a hexagon with a tail arrangement of radius $0$, which is a straight line of particles with the food particle at one end.
\end{proof}

We observe that the direction in which the final tail faces is irrelevant, as there is a simple sequence of moves to change the direction of the tail, by moving the particles one by one to the location of the new tail, starting from the particle at the very end of the initial tail. This thus allows us to conclude Lemma~\ref{lemma:irreducible}, which also implies that the Markov chain is irreducible.

%\end{document}

\section{Simulation of the Adaptive $\alpha$-Compression Algorithm with Multiple Food Particles}
\label{apx:multiplefood}
In this appendix we showcase a simulation of the Adaptive $\alpha$-Compression algorithm with $2560$ particles in a $80\times 80$ triangular lattice with periodic boundary conditions. Six food particles are placed and moved around to illustrate the gather and search phases. This simulation is shown as a sequence of images in chronological order, spanning Figures~\ref{fig:simulation1} and~\ref{fig:simulation2}.
\begin{figure}[!ht]
\begin{subfigure}[b]{.5\linewidth}
  \begin{center}
  \includegraphics[width=\linewidth]{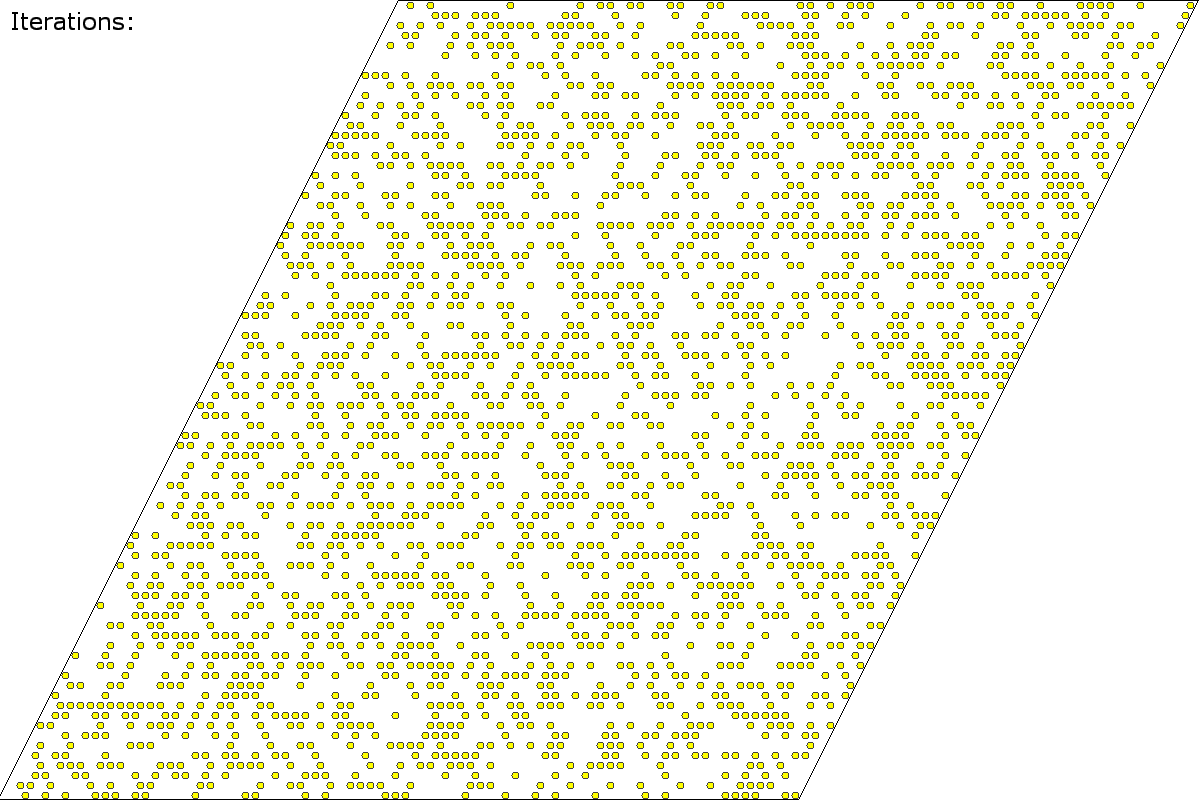}
  \end{center}
  \caption{All particles in dispersion state.}
\end{subfigure}%
\begin{subfigure}[b]{.5\linewidth}
  \begin{center}
  \includegraphics[width=\linewidth]{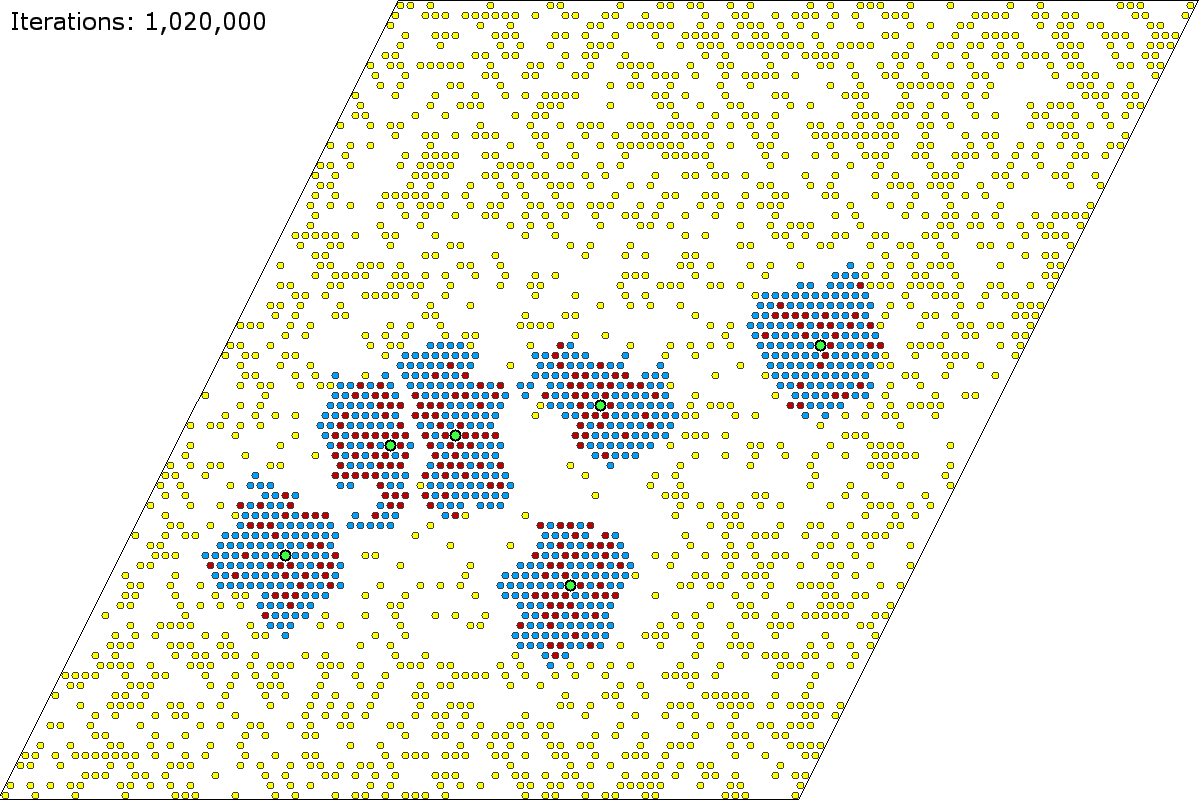}
  \end{center}
  \caption{Six food particles placed.}
\end{subfigure}\\
~\\
\begin{subfigure}[b]{.5\linewidth}
  \begin{center}
  \includegraphics[width=\linewidth]{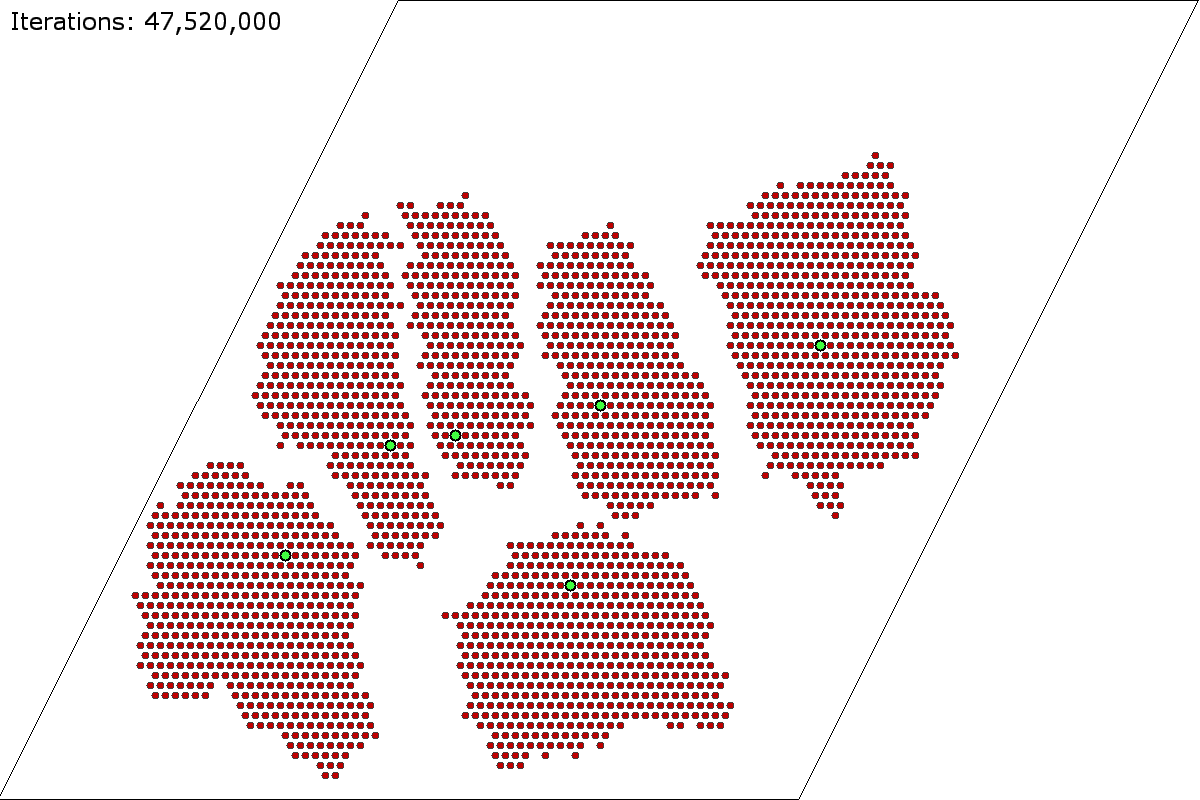}
  \end{center}
  \caption{All particles in compression state.}
\end{subfigure}%
\begin{subfigure}[b]{.5\linewidth}
  \begin{center}
  \includegraphics[width=\linewidth]{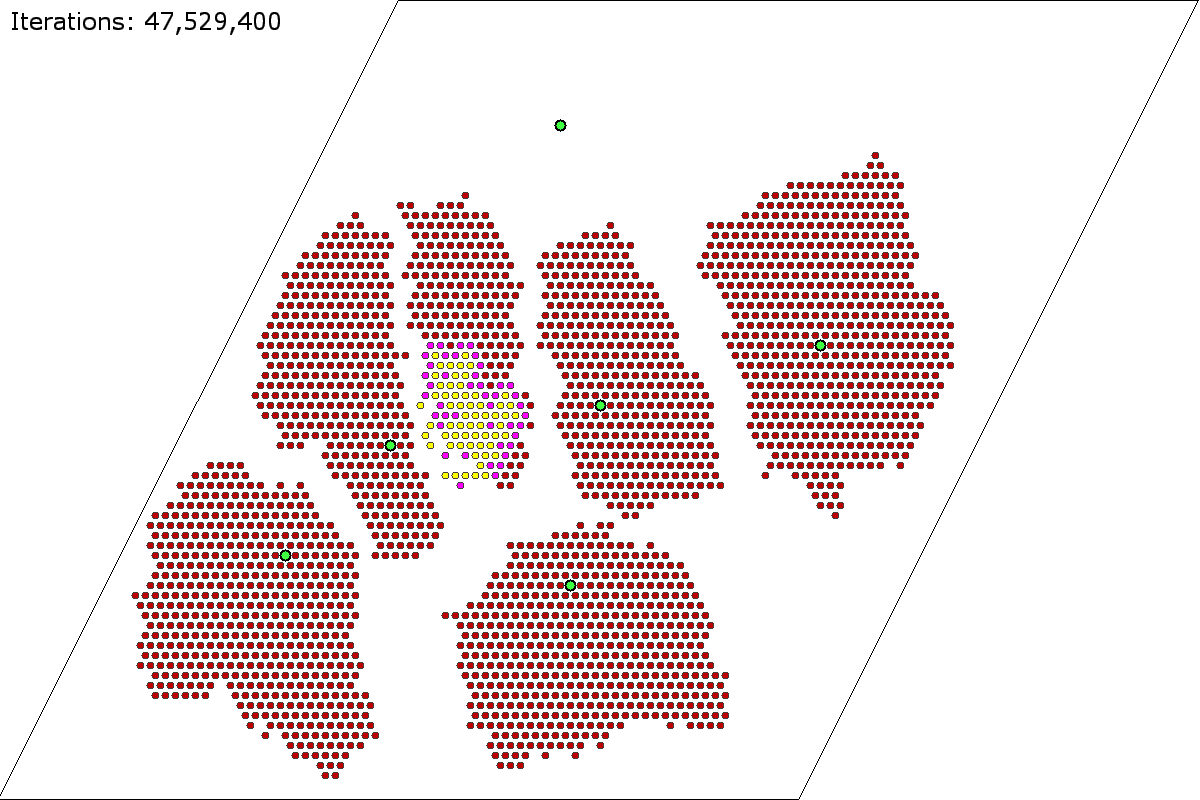}
  \end{center}
  \caption{One food particle moved, dispersion starts.}
\end{subfigure}\\
~\\
\begin{subfigure}[b]{.5\linewidth}
  \begin{center}
  \includegraphics[width=\linewidth]{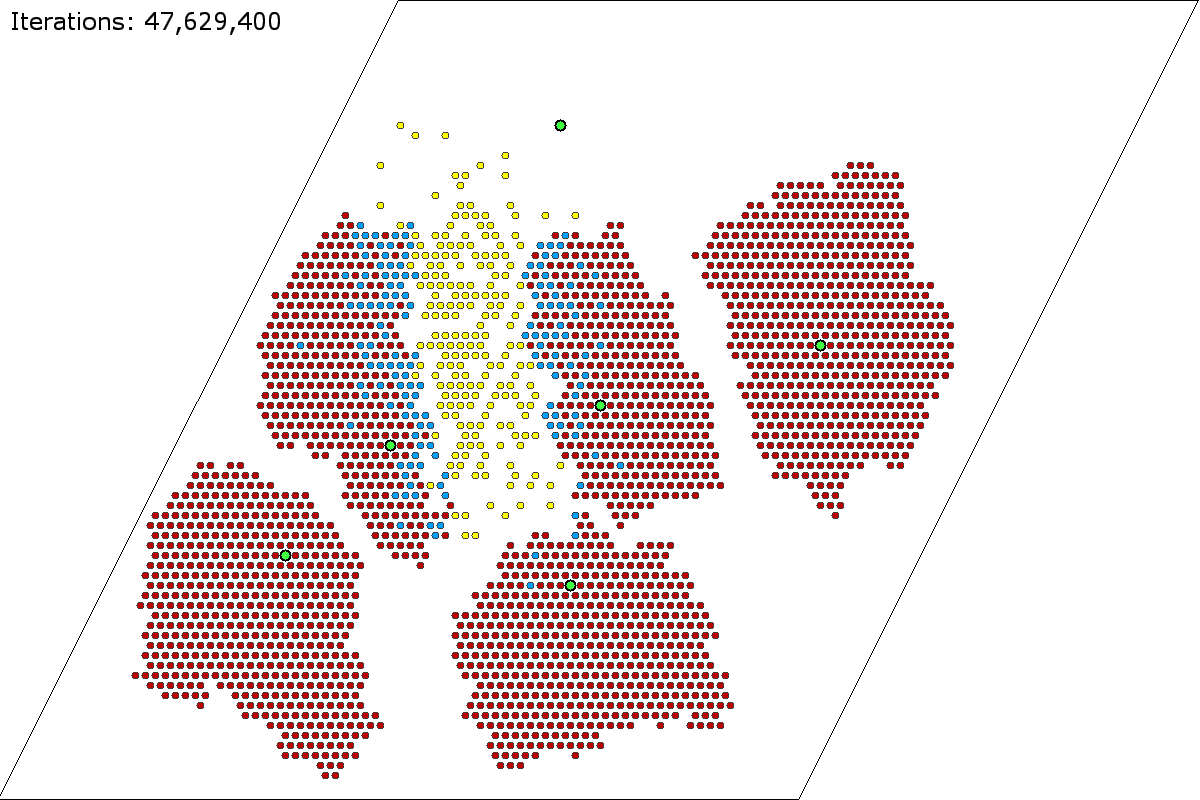}
  \end{center}
  \caption{Cluster fully dispersed.}
\end{subfigure}%
\begin{subfigure}[b]{.5\linewidth}
  \begin{center}
  \includegraphics[width=\linewidth]{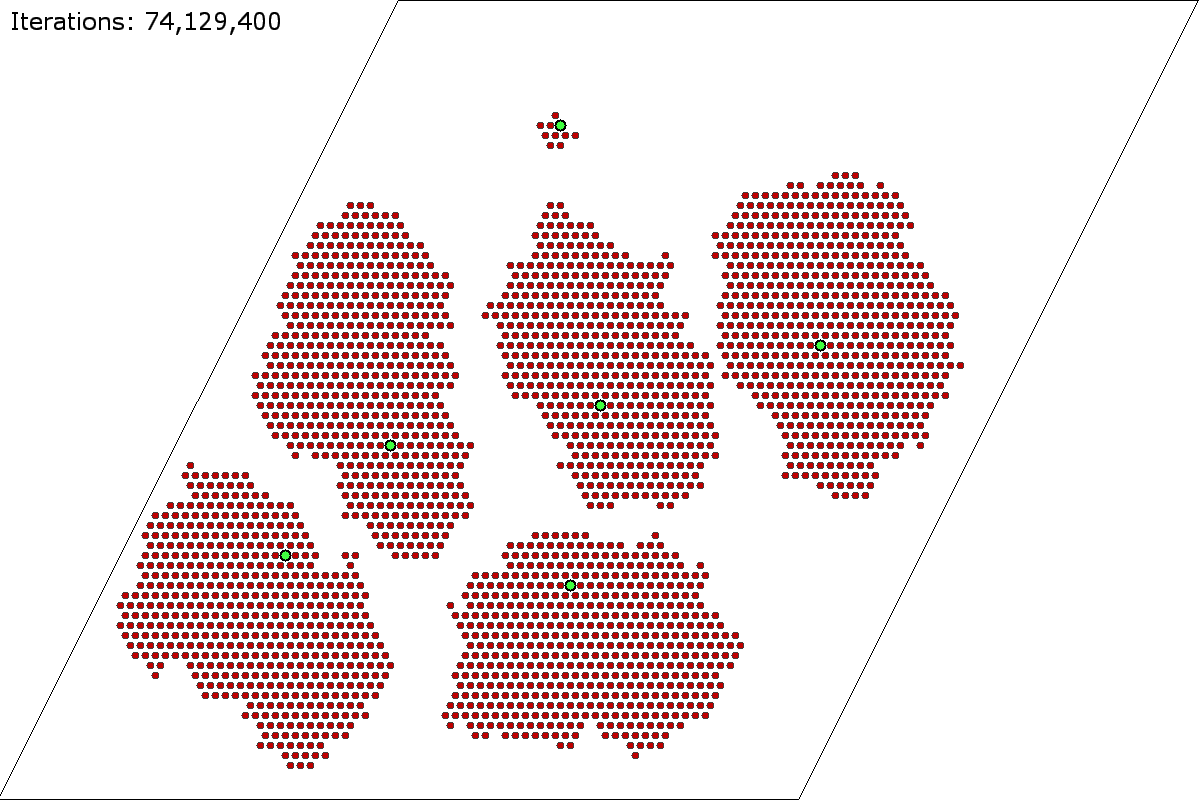}
  \end{center}
  \caption{All particles have rejoined other clusters.}
\end{subfigure}
\caption{Simulation of Adaptive $\alpha$-Compression with multiple food particles. The images are in chronological order and the simulation continues in Figure~\ref{fig:simulation2}. Dispersion state particles are in yellow, compression state particles are in red with a growth ticket and in blue without, and dispersion token state particles are in purple. The food particle is drawn slightly larger and is in green.}
\label{fig:simulation1}
\end{figure}
\begin{figure}[!ht]
\begin{subfigure}[b]{.5\linewidth}
  \begin{center}
  \includegraphics[width=\linewidth]{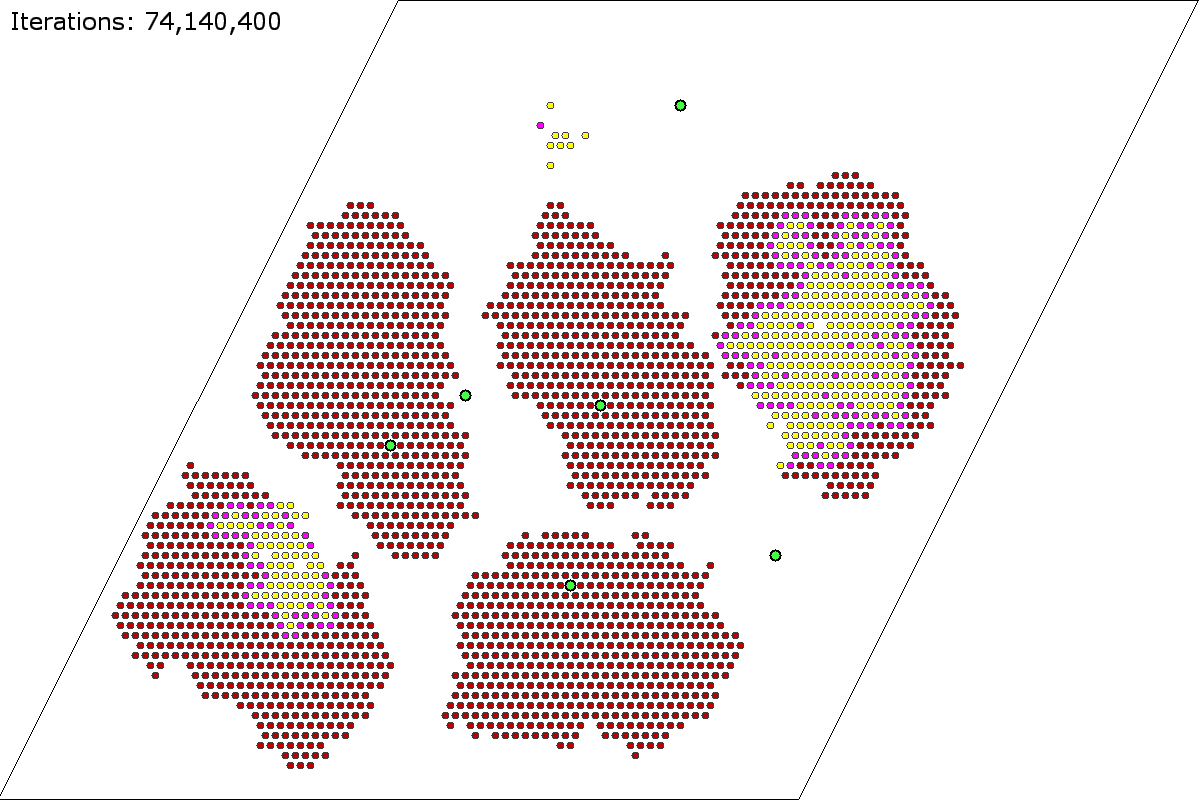}
  \end{center}
  \caption{Three food particles moved, dispersion starts.}
\end{subfigure}%
\begin{subfigure}[b]{.5\linewidth}
  \begin{center}
  \includegraphics[width=\linewidth]{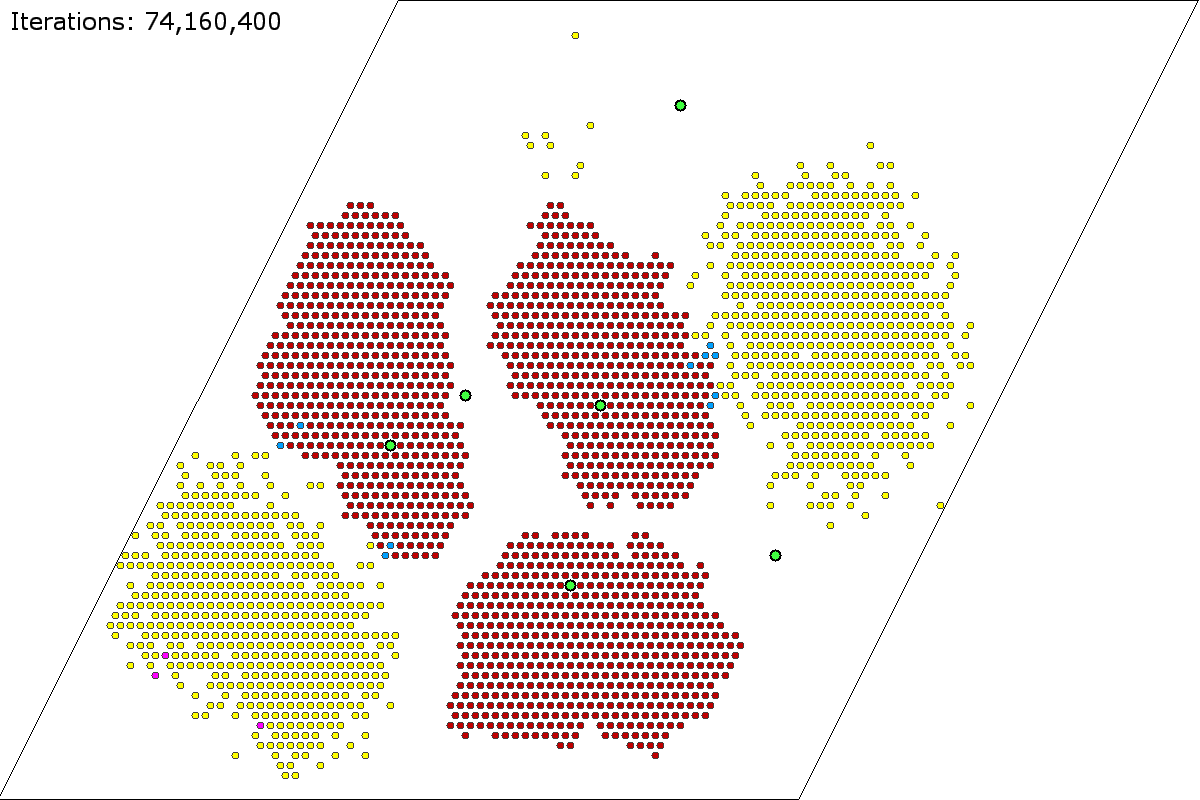}
  \end{center}
  \caption{Three clusters fully dispersed.}
\end{subfigure}\\
~\\
\begin{subfigure}[b]{.5\linewidth}
  \begin{center}
  \includegraphics[width=\linewidth]{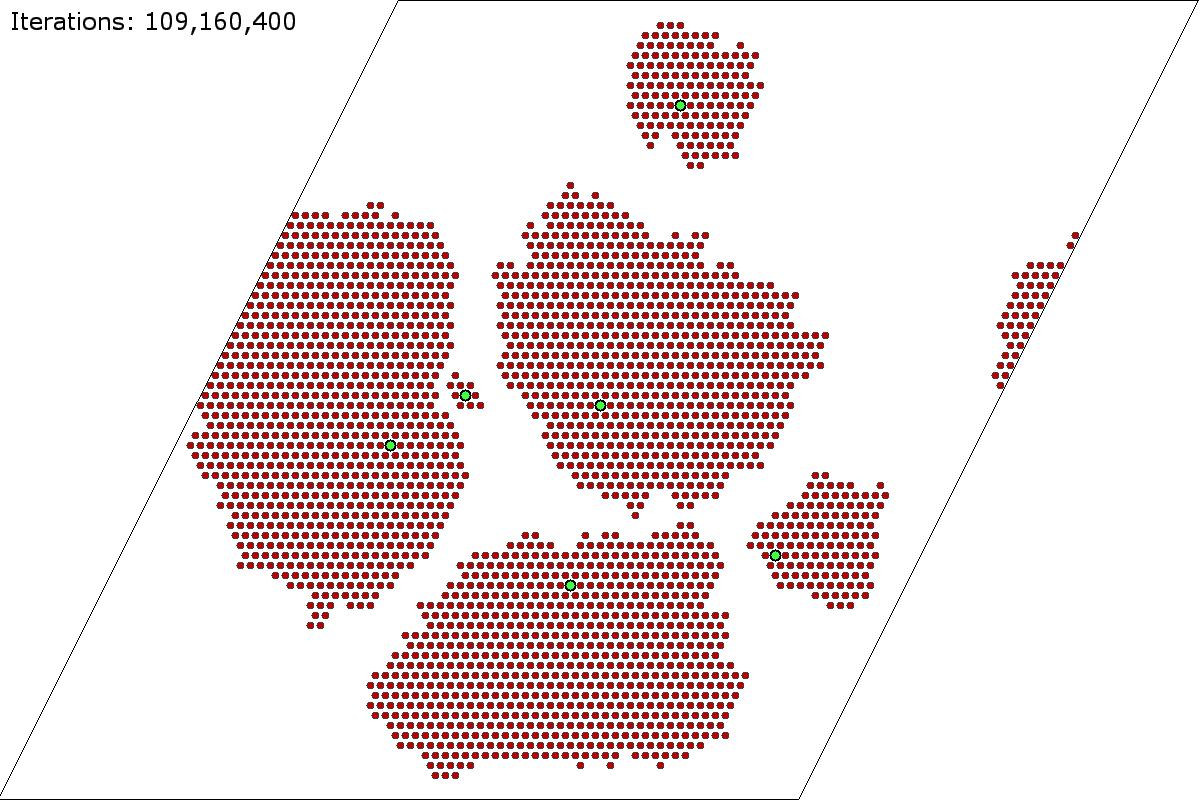}
  \end{center}
  \caption{All particles have rejoined other clusters.}
\end{subfigure}%
\begin{subfigure}[b]{.5\linewidth}
  \begin{center}
  \includegraphics[width=\linewidth]{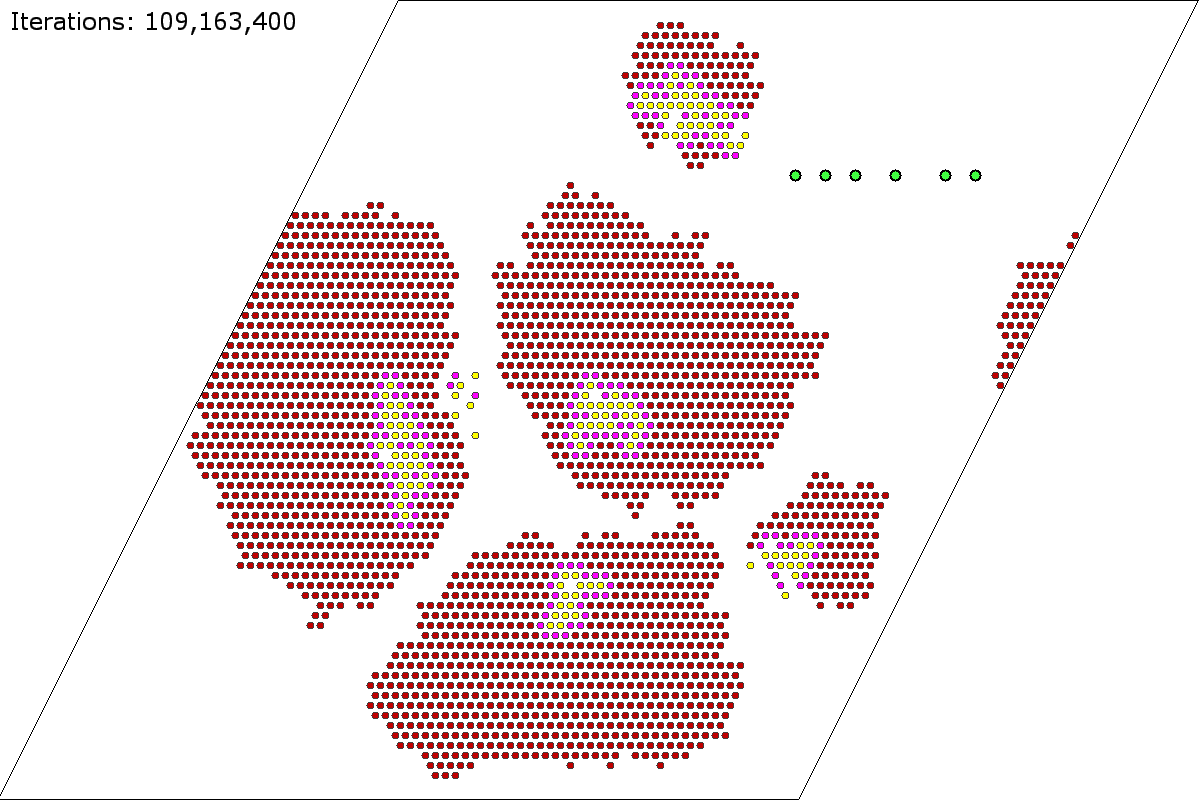}
  \end{center}
  \caption{All food particles moved, dispersion starts.}
\end{subfigure}\\
~\\
\begin{subfigure}[b]{.5\linewidth}
  \begin{center}
  \includegraphics[width=\linewidth]{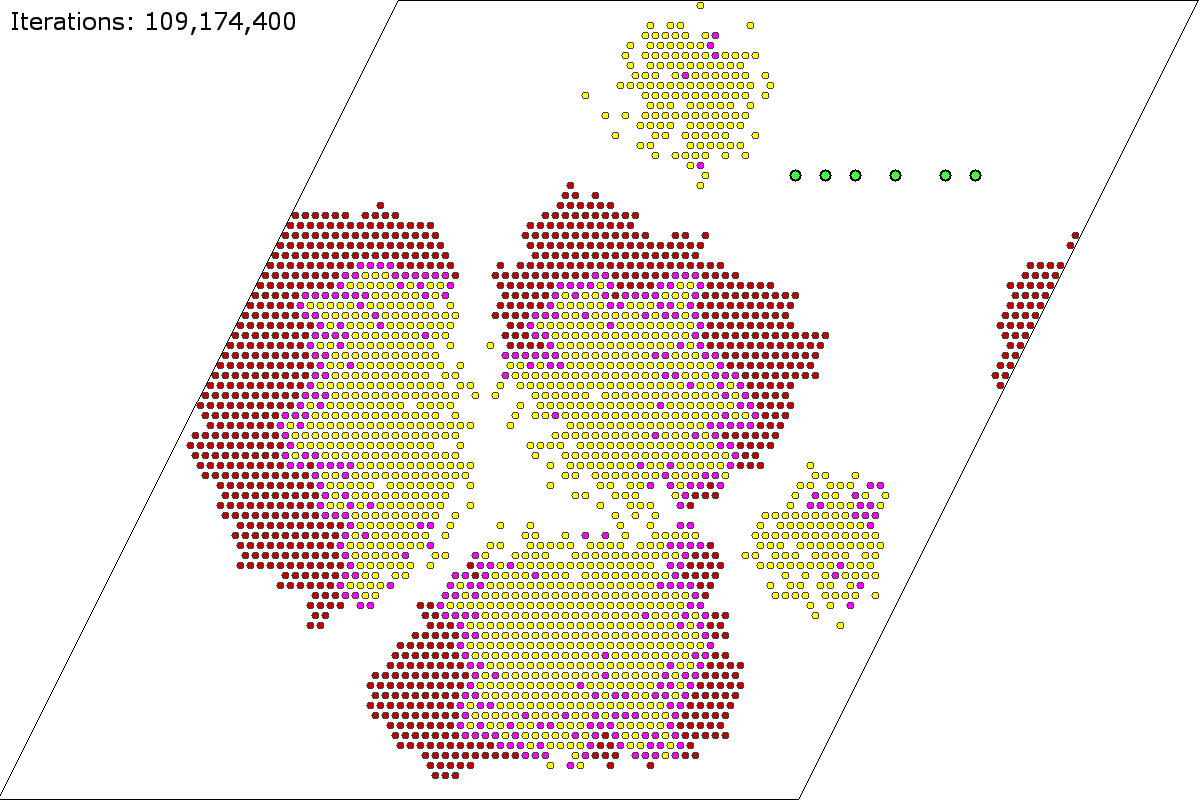}
  \end{center}
  \caption{Dispersion in progress.}
\end{subfigure}%
\begin{subfigure}[b]{.5\linewidth}
  \begin{center}
  \includegraphics[width=\linewidth]{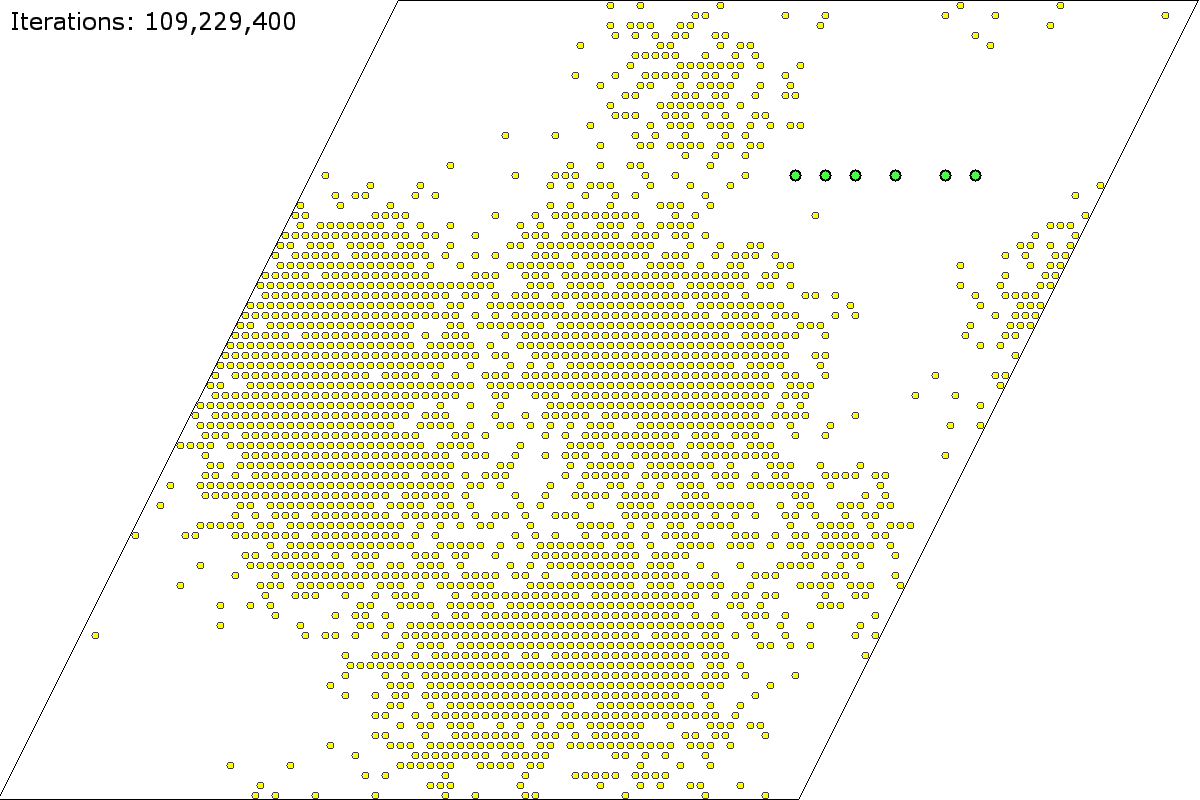}
  \end{center}
  \caption{All particles fully dispersed.}
\end{subfigure}
\caption{Continuation of the simulation of the Adaptive $\alpha$-Compression algorithm in Figure~\ref{fig:simulation1}.}
\label{fig:simulation2}
\end{figure}

\end{appendices}

\end{document}